\documentclass{article}

\newcommand{\hi}{\mathcal{H}}

\newcommand{\his}{\mathcal{H}_{\mathcal{S}}}
\newcommand{\hisx}{\mathcal{H}_{\mathcal{S}}^x}
\newcommand{\scrhi}{\mathscr{H}}
\newcommand{\scrhis}{\mathscr{H}_{\mathcal{S}}}

\newcommand{\hir}{\mathcal{H}_{\mathcal{R}}}
\newcommand{\hix}{\mathcal{H}^x}
\newcommand{\hirx}{\mathcal{H}_{\mathcal{R}}^x}
\newcommand{\scrhir}{\mathscr{H}_{\mathcal{R}}}

\newcommand{\hisr}{\mathcal{H}_{\mathcal{S} \R
}}

\newcommand{\Poinc}{\mathcal{P}}
\newcommand{\Poincup}{\mathcal{P}_+^\uparrow}

\newcommand{\mink}{\mathbb{M}}

\newcommand{\Lup}{SO_+^\uparrow(1,d-1)}

\newcommand{\supp}{\text{supp }}

\newcommand{\At}{\text{Atiyah}}

\newcommand\indep{\protect\mathpalette{\protect\independenT}{\perp}}
\def\independenT#1#2{\mathrel{\rlap{$#1#2$}\mkern2mu{#1#2}}}

\newcommand\indepER{\protect\mathpalette{\protect\independenTER}{\perp}}
\def\independenTER#1#2{\mathrel{\rlap{$#1#2$}\mkern2mu{#1#2}^{\R,\pi}}}

\newcommand\indepERsigma{\protect\mathpalette{\protect\independenTERsigma}{\perp}}
\def\independenTERsigma#1#2{\mathrel{\rlap{$#1#2$}\mkern2mu{#1#2}^{\R,\pi}_{\sigma}}}

\newcommand\indepsigma{\protect\mathpalette{\protect\independenTsigma}{\perp}}
\def\independenTsigma#1#2{\mathrel{\rlap{$#1#2$}\mkern2mu{#1#2}_{\sigma}}}

\newcommand{\inner}[2]{\langle #1, #2 \rangle}

\newcommand{\Bor}{\text{Bor}}

\def\framestate{\mathscr{S}(\scrhir)}
\def\framestategauge{\mathscr{S}(\scrhi_{\mathcal{R}_G})}
\def\framestatex{\mathscr{D}(\mathcal{H}_\mathcal{R}^x)}

\def\state{\mathscr{S}(\scrhi)}
\def\normalstate{\mathscr{D}(\scrhi)}
\def\normalsystemstate{\mathscr{D}(\scrhis)}
\def\normalsystemframestate{\mathscr{D}(\scrhis\boxtimes \scrhir)}
\def\normalframestate{\mathscr{D}(\scrhir)}
\def\statex{\mathscr{D}(\mathcal{H}^x)}
\def\systemstate{\mathscr{S}(\scrhis)}
\def\systemstatex{\mathscr{D}(\mathcal{H}_\mathcal{S}^x)}
\def\homstate{\mathscr{D}(\hi)}
\def\homsystemstate{\mathscr{D}(\his)}
\def\homframestate{\mathscr{D}(\hir)}

\def\operationalstate{\mathfrak{S}_\R}

\newcommand{\bh}{\mathcal{B}(\hi)}

\newcommand{\bhs}{\mathcal{B}(\his)}
\newcommand{\bhsx}{\mathcal{B}(\hisx)}
\newcommand{\bhsr}{\mathcal{B}(\his \otimes \hir)}
\newcommand{\bhsrx}{\mathcal{B}(\hisx \otimes \hirx)}
\newcommand{\bhr}{\mathcal{B}(\hir)}
\newcommand{\bhrx}{\mathcal{B}(\hir^x)}
\newcommand{\bhx}{\mathcal{B}(\hi^x)}

\newcommand{\thi}{\mathcal{T}(\hi)}
\newcommand{\thr}{\mathcal{T}(\hir)}

\newcommand{\thx}{\mathcal{T}(\hi^x)}

\newcommand{\thsrx}{\mathcal{T}(\hisx \otimes \hirx)}
\newcommand{\thsx}{\mathcal{T}(\hisx)}

\newcommand{\eh}{\mathcal{E}(\hi)} 
\newcommand{\Eff}{\mathcal{E}}

\newcommand{\M}{\mathcal{M}}
\newcommand{\R}{\mathcal{R}}
\renewcommand{\S}{\mathcal{S}}

\newcommand{\A}{\mathcal{A}}
\newcommand{\B}{\mathcal{B}}
\newcommand{\C}{\mathcal{C}}
\newcommand{\D}{\mathcal{D}}

\newcommand{\Nn}{\mathbb{N}}
\newcommand{\id}{\mathbb{1}}

\newcommand{\E}{\mathsf{E}}
\newcommand{\F}{\mathsf{F}}

\newcommand{\esssup}{\mathrm{ess\,sup}}

\usepackage[page,title]{appendix}

\usepackage{tocloft}
\usepackage{multicol}
\usepackage{physics}
\usepackage{soul}
\usepackage{relsize}
\usepackage[T1]{fontenc}
\usepackage{lmodern}
\usepackage{slantsc}
\usepackage{bbm}
\usepackage{graphicx}
\usepackage{amsmath}
\usepackage{bbold}
\usepackage{amsfonts}
\usepackage{amssymb}
\usepackage{amsthm}
\usepackage{amsbsy}
\usepackage{mathrsfs}
\usepackage{varioref}
\usepackage{dsfont}
\usepackage{bm}
\usepackage{color}
\usepackage[nopar]{lipsum}
\usepackage{lipsum}
\usepackage{multicol}
\usepackage{tikz-cd}
\usepackage[T1]{fontenc}
\usepackage{authblk}
\usepackage{xcolor}
\usepackage[utf8]{inputenc} 
\usepackage{csquotes}
\usepackage{mathtools}
\usepackage{slashed}
\usepackage{subcaption}
\usepackage{makecell}
\usepackage{tabu}
\usepackage{textcomp}

\usepackage[compat=1.1.0]{tikz-feynman}
\usepackage{tikz-cd}
\usetikzlibrary{arrows}
\usepackage{pgfplots}
\usepackage{tikz-cd}
\usepackage{quiver}

\newtheorem{theorem}{Theorem}[section] 
\newtheorem{lemma}[theorem]{Lemma} 
\newtheorem{corollary}[theorem]{Corollary}
\newtheorem{proposition}[theorem]{Proposition}
\newtheorem{definition}[theorem]{Definition}
\newtheorem{example}[theorem]{Example}

\newtheorem{remark}[theorem]{Remark}

\newtheorem{notation}[theorem]{Notation}

\newcommand{\U}{\mathcal{U}} 

\newcommand{\V}{\mathcal{V}}

\DeclareMathOperator*{\esssupp}{ess\,supp}

\tikzset{
pattern size/.store in=\mcSize, 
pattern size = 5pt,
pattern thickness/.store in=\mcThickness, 
pattern thickness = 0.3pt,
pattern radius/.store in=\mcRadius, 
pattern radius = 1pt}
\makeatletter
\pgfutil@ifundefined{pgf@pattern@name@_nikqpx0v0}{
\pgfdeclarepatternformonly[\mcThickness,\mcSize]{_nikqpx0v0}
{\pgfqpoint{0pt}{0pt}}
{\pgfpoint{\mcSize+\mcThickness}{\mcSize+\mcThickness}}
{\pgfpoint{\mcSize}{\mcSize}}
{
\pgfsetcolor{\tikz@pattern@color}
\pgfsetlinewidth{\mcThickness}
\pgfpathmoveto{\pgfqpoint{0pt}{0pt}}
\pgfpathlineto{\pgfpoint{\mcSize+\mcThickness}{\mcSize+\mcThickness}}
\pgfusepath{stroke}
}}
\makeatother

\tikzset{
pattern size/.store in=\mcSize, 
pattern size = 5pt,
pattern thickness/.store in=\mcThickness, 
pattern thickness = 0.3pt,
pattern radius/.store in=\mcRadius, 
pattern radius = 1pt}
\makeatletter
\pgfutil@ifundefined{pgf@pattern@name@_ng5lqn19c}{
\pgfdeclarepatternformonly[\mcThickness,\mcSize]{_ng5lqn19c}
{\pgfqpoint{0pt}{-\mcThickness}}
{\pgfpoint{\mcSize}{\mcSize}}
{\pgfpoint{\mcSize}{\mcSize}}
{
\pgfsetcolor{\tikz@pattern@color}
\pgfsetlinewidth{\mcThickness}
\pgfpathmoveto{\pgfqpoint{0pt}{\mcSize}}
\pgfpathlineto{\pgfpoint{\mcSize+\mcThickness}{-\mcThickness}}
\pgfusepath{stroke}
}}
\makeatother
\tikzset{every picture/.style={line width=0.75pt}} 

\tikzset{
pattern size/.store in=\mcSize, 
pattern size = 5pt,
pattern thickness/.store in=\mcThickness, 
pattern thickness = 0.3pt,
pattern radius/.store in=\mcRadius, 
pattern radius = 1pt}
\makeatletter
\pgfutil@ifundefined{pgf@pattern@name@_5n7rcgqhm}{
\makeatletter
\pgfdeclarepatternformonly[\mcRadius,\mcThickness,\mcSize]{_5n7rcgqhm}
{\pgfpoint{-0.5*\mcSize}{-0.5*\mcSize}}
{\pgfpoint{0.5*\mcSize}{0.5*\mcSize}}
{\pgfpoint{\mcSize}{\mcSize}}
{
\pgfsetcolor{\tikz@pattern@color}
\pgfsetlinewidth{\mcThickness}
\pgfpathcircle\pgfpointorigin{\mcRadius}
\pgfusepath{stroke}
}}
\makeatother

\tikzset{
pattern size/.store in=\mcSize, 
pattern size = 5pt,
pattern thickness/.store in=\mcThickness, 
pattern thickness = 0.3pt,
pattern radius/.store in=\mcRadius, 
pattern radius = 1pt}
\makeatletter
\pgfutil@ifundefined{pgf@pattern@name@_7o1tsuk3p}{
\makeatletter
\pgfdeclarepatternformonly[\mcRadius,\mcThickness,\mcSize]{_7o1tsuk3p}
{\pgfpoint{-0.5*\mcSize}{-0.5*\mcSize}}
{\pgfpoint{0.5*\mcSize}{0.5*\mcSize}}
{\pgfpoint{\mcSize}{\mcSize}}
{
\pgfsetcolor{\tikz@pattern@color}
\pgfsetlinewidth{\mcThickness}
\pgfpathcircle\pgfpointorigin{\mcRadius}
\pgfusepath{stroke}
}}
\makeatother

\tikzset{
pattern size/.store in=\mcSize, 
pattern size = 5pt,
pattern thickness/.store in=\mcThickness, 
pattern thickness = 0.3pt,
pattern radius/.store in=\mcRadius, 
pattern radius = 1pt}
\makeatletter
\pgfutil@ifundefined{pgf@pattern@name@_zubggx6qk}{
\makeatletter
\pgfdeclarepatternformonly[\mcRadius,\mcThickness,\mcSize]{_zubggx6qk}
{\pgfpoint{-0.5*\mcSize}{-0.5*\mcSize}}
{\pgfpoint{0.5*\mcSize}{0.5*\mcSize}}
{\pgfpoint{\mcSize}{\mcSize}}
{
\pgfsetcolor{\tikz@pattern@color}
\pgfsetlinewidth{\mcThickness}
\pgfpathcircle\pgfpointorigin{\mcRadius}
\pgfusepath{stroke}
}}
\makeatother

\tikzset{
pattern size/.store in=\mcSize, 
pattern size = 5pt,
pattern thickness/.store in=\mcThickness, 
pattern thickness = 0.3pt,
pattern radius/.store in=\mcRadius, 
pattern radius = 1pt}
\makeatletter
\pgfutil@ifundefined{pgf@pattern@name@_mcbkrmesn}{
\makeatletter
\pgfdeclarepatternformonly[\mcRadius,\mcThickness,\mcSize]{_mcbkrmesn}
{\pgfpoint{-0.5*\mcSize}{-0.5*\mcSize}}
{\pgfpoint{0.5*\mcSize}{0.5*\mcSize}}
{\pgfpoint{\mcSize}{\mcSize}}
{
\pgfsetcolor{\tikz@pattern@color}
\pgfsetlinewidth{\mcThickness}
\pgfpathcircle\pgfpointorigin{\mcRadius}
\pgfusepath{stroke}
}}
\makeatother
\tikzset{every picture/.style={line width=0.75pt}} 

\tikzset{
pattern size/.store in=\mcSize, 
pattern size = 5pt,
pattern thickness/.store in=\mcThickness, 
pattern thickness = 0.3pt,
pattern radius/.store in=\mcRadius, 
pattern radius = 1pt}
\makeatletter
\pgfutil@ifundefined{pgf@pattern@name@_72x0qoqws}{
\makeatletter
\pgfdeclarepatternformonly[\mcRadius,\mcThickness,\mcSize]{_72x0qoqws}
{\pgfpoint{-0.5*\mcSize}{-0.5*\mcSize}}
{\pgfpoint{0.5*\mcSize}{0.5*\mcSize}}
{\pgfpoint{\mcSize}{\mcSize}}
{
\pgfsetcolor{\tikz@pattern@color}
\pgfsetlinewidth{\mcThickness}
\pgfpathcircle\pgfpointorigin{\mcRadius}
\pgfusepath{stroke}
}}
\makeatother

\tikzset{
pattern size/.store in=\mcSize, 
pattern size = 5pt,
pattern thickness/.store in=\mcThickness, 
pattern thickness = 0.3pt,
pattern radius/.store in=\mcRadius, 
pattern radius = 1pt}
\makeatletter
\pgfutil@ifundefined{pgf@pattern@name@_f1y3jgxey}{
\makeatletter
\pgfdeclarepatternformonly[\mcRadius,\mcThickness,\mcSize]{_f1y3jgxey}
{\pgfpoint{-0.5*\mcSize}{-0.5*\mcSize}}
{\pgfpoint{0.5*\mcSize}{0.5*\mcSize}}
{\pgfpoint{\mcSize}{\mcSize}}
{
\pgfsetcolor{\tikz@pattern@color}
\pgfsetlinewidth{\mcThickness}
\pgfpathcircle\pgfpointorigin{\mcRadius}
\pgfusepath{stroke}
}}
\makeatother

\tikzset{
pattern size/.store in=\mcSize, 
pattern size = 5pt,
pattern thickness/.store in=\mcThickness, 
pattern thickness = 0.3pt,
pattern radius/.store in=\mcRadius, 
pattern radius = 1pt}
\makeatletter
\pgfutil@ifundefined{pgf@pattern@name@_h3f2g35xw}{
\pgfdeclarepatternformonly[\mcThickness,\mcSize]{_h3f2g35xw}
{\pgfqpoint{0pt}{0pt}}
{\pgfpoint{\mcSize+\mcThickness}{\mcSize+\mcThickness}}
{\pgfpoint{\mcSize}{\mcSize}}
{
\pgfsetcolor{\tikz@pattern@color}
\pgfsetlinewidth{\mcThickness}
\pgfpathmoveto{\pgfqpoint{0pt}{0pt}}
\pgfpathlineto{\pgfpoint{\mcSize+\mcThickness}{\mcSize+\mcThickness}}
\pgfusepath{stroke}
}}
\makeatother

\tikzset{
pattern size/.store in=\mcSize, 
pattern size = 5pt,
pattern thickness/.store in=\mcThickness, 
pattern thickness = 0.3pt,
pattern radius/.store in=\mcRadius, 
pattern radius = 1pt}
\makeatletter
\pgfutil@ifundefined{pgf@pattern@name@_cebealfc3}{
\pgfdeclarepatternformonly[\mcThickness,\mcSize]{_cebealfc3}
{\pgfqpoint{0pt}{0pt}}
{\pgfpoint{\mcSize+\mcThickness}{\mcSize+\mcThickness}}
{\pgfpoint{\mcSize}{\mcSize}}
{
\pgfsetcolor{\tikz@pattern@color}
\pgfsetlinewidth{\mcThickness}
\pgfpathmoveto{\pgfqpoint{0pt}{0pt}}
\pgfpathlineto{\pgfpoint{\mcSize+\mcThickness}{\mcSize+\mcThickness}}
\pgfusepath{stroke}
}}
\makeatother

\tikzset{
pattern size/.store in=\mcSize, 
pattern size = 5pt,
pattern thickness/.store in=\mcThickness, 
pattern thickness = 0.3pt,
pattern radius/.store in=\mcRadius, 
pattern radius = 1pt}
\makeatletter
\pgfutil@ifundefined{pgf@pattern@name@_cl4zvaxox}{
\pgfdeclarepatternformonly[\mcThickness,\mcSize]{_cl4zvaxox}
{\pgfqpoint{0pt}{-\mcThickness}}
{\pgfpoint{\mcSize}{\mcSize}}
{\pgfpoint{\mcSize}{\mcSize}}
{
\pgfsetcolor{\tikz@pattern@color}
\pgfsetlinewidth{\mcThickness}
\pgfpathmoveto{\pgfqpoint{0pt}{\mcSize}}
\pgfpathlineto{\pgfpoint{\mcSize+\mcThickness}{-\mcThickness}}
\pgfusepath{stroke}
}}
\makeatother

\tikzset{
pattern size/.store in=\mcSize, 
pattern size = 5pt,
pattern thickness/.store in=\mcThickness, 
pattern thickness = 0.3pt,
pattern radius/.store in=\mcRadius, 
pattern radius = 1pt}
\makeatletter
\pgfutil@ifundefined{pgf@pattern@name@_yh1bf9qsy}{
\pgfdeclarepatternformonly[\mcThickness,\mcSize]{_yh1bf9qsy}
{\pgfqpoint{0pt}{-\mcThickness}}
{\pgfpoint{\mcSize}{\mcSize}}
{\pgfpoint{\mcSize}{\mcSize}}
{
\pgfsetcolor{\tikz@pattern@color}
\pgfsetlinewidth{\mcThickness}
\pgfpathmoveto{\pgfqpoint{0pt}{\mcSize}}
\pgfpathlineto{\pgfpoint{\mcSize+\mcThickness}{-\mcThickness}}
\pgfusepath{stroke}
}}
\makeatother

\newcommand{\euro}{\text{€}}

\tikzset{every picture/.style={line width=0.75pt}} 

\usepackage{amsmath}
\usepackage[makeroom]{cancel}

\usepackage{enumitem}

\usepackage{fancyhdr} 
\usepackage[lmargin=0.9in, rmargin=0.9in, tmargin=1in, bmargin=1in]{geometry} 
\usepackage{url} 
\usepackage{listings} 
\usepackage{amsmath,amssymb} 
\usepackage{array, booktabs} 
\usepackage{graphicx} 
\usepackage{float} 
\usepackage[export]{adjustbox} 

\usepackage[backend = bibtex,
            style = numeric,
            date = long,     
            sorting = none,
            maxcitenames = 3,   
            citestyle=numeric-comp,
            url=false]{biblatex} 
\usepackage{comment} 
\usepackage{afterpage} 

\usepackage{adjustbox}

\usepackage{hyperref}
\usepackage{cleveref}

\usepackage{pifont}

\author[1]{Samuel Fedida\thanks{sylf2@cam.ac.uk}}
\affil[1]{\emph{Centre for Quantum Information and Foundations, DAMTP, Centre for Mathematical Sciences, University of Cambridge, Wilberforce Road, Cambridge CB3 0WA, UK}}
\author[2]{Alberto Ibort\thanks{albertoi@math.uc3m.es}}
\author[2]{Arnau Mas-Dorca\thanks{arnau.mas@icmat.es}}
\affil[2]{\emph{Department of Mathematics, Univ. Carlos III de Madrid, Avda. de la Universidad 30, 28911, Legan\'es, Madrid;  ICMAT, Calle Nicolas Cabrera, 13-15. Campus de Cantoblanco, 28049 Madrid,  Spain}}

\title{\textbf{A groupoidal approach to quantum reference frames}}

\date{\today}

\begin{document}
\maketitle
\thispagestyle{empty}

\begin{abstract}
We develop the kinematical and operator-algebraic foundations of a groupoid-based relational quantum field theory on curved spacetimes. The starting point is the observation that the usual group-based quantum reference frame formalism is not directly suited to generic curved Lorentzian backgrounds as global symmetry groups are typically absent or too small. We formulate a notion of quantum reference frame for a continuous groupoid. This yields a groupoid relativization map and relational observables. We show that a localization limit recovers the ordinary non-relational description. 

We construct canonical sharp groupoid quantum reference frames, which form the groupoidal counterpart of the ideal quantum reference frames based on $L^2(G)$ in the group-based theory. We further prove that the groupoid quantum reference frame construction reduces to the standard operational quantum reference frame formalism for locally compact groups. The action groupoid quantum reference frames are torsor quantum reference frames only for specific classes of fields of positive operator-valued measures, and the torsor relativization map only applies to constant operator fields of system observables.

We review the foundations of relational quantum field theory in Minkowski spacetime. We prove new results that further link relational quantum field theory to Wightman quantum field theory. We show that covariant positive operator-valued measures are $\mu$-continuous with respect to quasi-invariant $\sigma$-finite positive Borel measures $\mu$, hence proving that relational quantum fields can indeed be understood as the smearing of pointwise-defined kernels with respect to the quantum reference frame's statistics. 

We develop relational quantum field theory in curved spacetime, where we argue that the correct replacement for the Poincar\'e group is the Poincar\'e groupoid of the spacetime, i.e. the gauge groupoid of its orthonormal frame bundle. We show that the groupoid framework suitably extends the formalism when curved spacetimes are considered. We also indicate how the framework extends further to internal gauge symmetry and relational gauge-covariant quantum field theory via Atiyah groupoids, providing a first step towards formulating a relational quantum Yang-Mills field theory. These results suggest that groupoids provide the natural kinematical language for operational quantum reference frames and relational quantum field theory beyond the homogeneous space setting.

\end{abstract}

\newpage

\tableofcontents

\newpage


\section{Introduction}
\label{sec:introduction}

In recent years, quantum reference frames (QRFs) have emerged as a mathematically precise way of implementing the basic relational insight that the description of a physical system should be given relative to another physical system \cite{rovelli_relational_1996,timpson_quantum_2008,gambini_montevideo_2009}, itself treated quantum mechanically \cite{bartlett_reference_2007,loveridge_quantum_2012,belenchia_quantum_2018,loveridge_symmetry_2018,giacomini_quantum_2019,loveridge_relative_2019,vanrietvelde_change_2020,de_la_hamette_quantum_2020,de_la_hamette_perspective-neutral_2021,kabel_quantum_2023,hoehn_quantum_2023,glowacki_operational_2023,castro-ruiz_relative_2025,kabel_quantum_2025,de_vuyst_gravitational_2025,carette_operational_2025}. In the operational approach \cite{loveridge_symmetry_2018,loveridge_relativity_2017,glowacki_quantum_2023}, a quantum reference frame is described by a unitary representation of a locally compact group together with a covariant positive operator-valued measure (POVM) on a homogeneous space of that group. This framework has proved flexible enough to describe relative observables, relative states, and transformations between quantum frames \cite{glowacki_quantum_2023,carette_operational_2025}, and it provides a natural arena in which to formulate symmetry principles operationally.

In the relativistic setting, Minkowski spacetime fits naturally into this picture. The proper orthochronous Poincaré group acts transitively on the relevant space of inertial frames, and the operational formalism of QRFs may be formulated in terms of systems of covariance for that group. This line of thought has led to a relational approach to quantum field theory in Minkowski spacetime \cite{fedida_foundations_2025}, in which relational observables and quantum fields arise as derived notions from the operational QRF formalism. In that flat spacetime setting, the relevant space of classical inertial reference frames may be identified with the trivial Lorentz bundle over Minkowski spacetime, and relational observables are obtained by averaging absolute fields against the localization statistics of a relativistic quantum frame in that frame bundle. The idea of ``quantising" the frame bundle has also been approached by Vanzella and Butterfield \cite{vanzella_frame-bundle_2024}, who assigned complex amplitudes (``wavefunctions") to tetrads. 

The situation changes drastically on a generic curved spacetime $(\mathcal{M},g)$. In general, the isometry group of $(\mathcal{M},g)$ is too small, and typically trivial, to support a useful homogeneous-space description of frames. Thus the standard group-based QRF formalism no longer applies.\footnote{We mention here an ongoing project by G{\l}owacki who is exploring a group-based extension of the QRF formalism to principal bundles \cite{glowacki_towards_2024,glowacki_approximate_2026}. The relativization map in that group-based approach to quantum reference frames applies to operator-valued functions on the bundle rather than to operators. Understanding the links between the approach of the present paper that deals with groupoids and observables (and recovers a notion of quantum field as a necessary emergent feature of the relativistic structure of the theory), and that explored by G{\l}owacki that uses groups and operator-valued functions (where one is relativising already-existing quantum fields rather than observables), is kept for future work.} Nevertheless, curved Lorentzian geometry still possesses a canonical kinematical object: the bundle
\begin{equation}
O_+^\uparrow(\mathcal{M},g)\to \mathcal{M}
\end{equation}
of oriented orthochronous orthonormal frames, and therefore its canonical gauge groupoid
\begin{equation}
O_+^\uparrow(\mathcal{M},g)\times_{SO_+^\uparrow(1,d-1)} O_+^\uparrow(\mathcal{M},g)\rightrightarrows \mathcal{M}.
\end{equation}
Equivalently, this groupoid may be described intrinsically as
\begin{equation}
\mathrm{Poin}(\mathcal{M},g)
=
\left\{
(y,T_{yx},x)\ \middle|\ x,y\in \mathcal{M},\;
T_{yx}:T_x\mathcal{M}\to T_y\mathcal{M},\;
T_{yx}^{*}g_y=g_x
\right\},
\end{equation}
whose arrows are linear isometries between tangent spaces at different spacetime points. Following \cite{ibort_notion_2025}, we refer to this Lie groupoid as the \emph{Poincar\'e groupoid} of the spacetime $(\mathcal{M},g)$.

The central thesis of this paper is that the Poincar\'e groupoid is the correct replacement for the Poincar\'e group in the formulation of operational quantum reference frames on curved spacetimes. More generally, we argue that when global symmetry is no longer available, one should not abandon covariance, but rather pass from \emph{groups} to \emph{groupoids}. This is a natural extension of relational quantum field theory on Minkowski spacetime for at least three reasons. First, the Poincar\'e groupoid exists on every Lorentzian spacetime, regardless of the existence of nontrivial global isometries. Second, in the flat case it reduces to the familiar Poincar\'e-group picture. Third, groupoids come equipped with a well-developed measure-theoretic and operator-algebraic formalism, making them suitable kinematical structures for quantum theory.

Our aim is therefore twofold. On the one hand, we extend the operational QRF formalism from transitive locally compact group actions to continuous groupoids. On the other, we apply that extension to the Poincar\'e groupoid of a Lorentzian spacetime, thereby obtaining a groupoid-based framework for relational quantum field theory beyond the homogeneous space setting. The resulting framework is broad enough to accommodate the notion of operational quantum reference frames on curved spacetimes; a relativization and relative observables without global symmetry groups; relational observables and fields on curved Lorentzian backgrounds and, on top of that, further extensions to internal gauge symmetry via Atiyah groupoids.

From a broader perspective, this paper should also be viewed as part of a larger groupoid-based programme in the foundations of quantum theory \cite{ciaglia_groupoidal_2024,connes_noncommutative_1994,landsman_mathematical_1998}. The Poincar\'e and Wigner groupoids have recently been proposed as natural kinematical structures for the description of particles on curved spacetimes \cite{ibort_notion_2025}. Here we show that they also provide a natural geometric setting for quantum reference frames and relational observables. This suggests that particles, frames, and fields may all admit a common treatment in a unified groupoid language.

This extension of QRFs from locally compact groups to groupoids should not be seen merely as a mathematical exercise. We argue that groupoids are fundamentally the correct object to consider in the context of quantum reference frames when one is concerned with physical symmetries. Some arguments, both conceptual and technical, to support this claim are as follows.
\begin{enumerate}
    \item Only \emph{local} Poincaré symmetry is well-tested. However, the Poincaré group, and more generally the isometry group of a spacetime, restricted to a region $U \subset \mathcal{M}$ (Minkowski for the Poincaré group) is not a group. Thus, we have, strictly speaking, never experimentally tested Poincaré (nor isometry) \emph{group} symmetry in a lab. However, the Poincaré groupoid is still a groupoid when restricted to such regions (and indeed, also works on curved spacetimes).
    \item The notion of symmetry is not always global: local symmetries, captured by groupoids, are just as important (if not more) than global symmetries, captured by groups but also by groupoids. Working with groupoids thus allows one to discuss both global and local symmetries.
    \item When one is interested in the symmetries of a curved spacetime, the isometry group of a generic spacetime is trivial. If, however, one wants to talk about the diffeomorphisms of that spacetime, then this becomes technically arduous: the diffeomorphism group of a spacetime is not locally compact and typically a very complicated object which cannot be straightforwardly integrated into the operational QRF formalism.
    \item To talk about the mass and spin of a quantum field in flat spacetime we typically use the Wigner classification of (projective) unitary representations of the Poincaré group with respect to which the quantum field is covariant. It was argued in \cite{ibort_notion_2025} that the same can and indeed should be done in curved spacetimes.
    \item The idea, in relational quantum physics, that ``true" observables of a system and a frame are those which are invariant under symmetry transformations has been argued from the point of view of the Wigner-Araki-Yanase (WAY) theorem \cite{busch_translation_2010,araki_measurement_1960,yanase_optimal_1961,ozawa_conservation_2002}. This result is often understood in the context of group symmetries (e.g. see \cite{loveridge_relational_2020} for a review of those in the context of QRFs), but really is more general than that. It involves the commutation of observables with unitaries, which can be taken to be unitary representations of groups. However, one can also consider invariance and unitary representations of more general symmetry structures, and in particular of groupoid symmetries.
    \item To discuss gauge field theories, one may be tempted to work with gauge groups of the form $\mathcal{G}=C^\infty(\mink,G)$. However, these are typically very difficult objects to work with: gauge groups are not locally compact and so cannot generically be treated in the standard QRF formalism (see \cite{fewster_semi-local_2025} for a recent attempt at studying gauge group symmetries in the context of QRFs).
    \item More generally, gauge theories are not fully described by $\mathcal{G}=C^\infty(\M,G)$, except in a chosen trivialization of a trivial principal bundle. In general, the gauge data consist of a principal $G$-bundle $P \to \M$, its associated bundles, and its gauge group $\text{Gau}(P)$ containing vertical bundle automorphisms. Nontrivial bundles encode topological sectors and holonomy data that are invisible in a global $C^\infty(\M,G)$ description. The Atiyah groupoid $(P \times P)/G \rightrightarrows \M$ is therefore the natural object for working with gauge theory and local gauge transformations.
\end{enumerate}

\bigskip

\subsection{Organisation of the paper}

The paper is organized as follows. In Sec.~\ref{sec:lie-groupoids}, we review the geometric and representation-theoretic background of measure groupoids, with special emphasis on action groupoids, Atiyah groupoids, and the Poincar\'e and Wigner groupoids associated with a Lorentzian spacetime.

In Sec.~\ref{sec:operational-qrf-groups}, we recall the operational theory of quantum reference frames for locally compact groups and torsors in the form most convenient for comparison with the groupoid setting.

Sec.~\ref{sec:groupoid-qrf} contains the main conceptual development of the paper: the definition of groupoid quantum reference frames, the construction of the corresponding relativization maps, and the proof that the formalism reduces to the usual one in the action-groupoid case.

In Sec.~\ref{sec:RQFT Minkowski}, we apply the formalism to relational quantum field theory, first in Minkowski spacetime. 

We then pass to generic curved spacetimes in Sec.~\ref{sec:RQFT curved}, where the relational field is defined intrinsically on the Poincar\'e groupoid without any global trivialization of the frame bundle. We show that, in the case where the spacetime is flat, both pictures concur.

Finally, in Sec.~\ref{sec:towards RQGFT}, we indicate how the same logic extends further to internal gauge symmetry via Atiyah groupoids, leading toward a groupoid-based formulation of relational gauge-covariant quantum field theory.

Sec.~\ref{sec:conclusions} concludes with a discussion of the main conceptual consequences of the paper and several open questions.

\subsection{Notation and conventions}

All Hilbert spaces under consideration are complex and separable. If $\hi$ is a Hilbert space, we denote by $\bh$ the von Neumann algebra of bounded operators on $\hi$, by $\bh^{sa} \subset \bh$ the self-adjoint bounded operators, by $\thi$ the trace-class operators, by $\homstate$ the density operators, and by $\eh \subset \bh^{sa}$ the convex set of effects. For a measurable space $X$, $\Bor(X)$ denotes its Borel $\sigma$-algebra.

Given a groupoid $\Gamma\rightrightarrows M$, the source and target maps are denoted by
\begin{equation}
s,t:\Gamma\to M.
\end{equation}
The identity arrow at $x\in M$ is denoted by $1_x$, and the inverse of $\alpha\in\Gamma$ by $\alpha^{-1}$. We write $\Gamma^{(2)}\subseteq \Gamma\times\Gamma$ for the set of composable pairs. If $\alpha:x\to y$, the left translation map is written
\begin{equation}
L_\alpha:\Gamma^{x}\to\Gamma^{y},
\qquad
\beta\mapsto \alpha\circ\beta.
\end{equation}

For a $d$-dimensional Lorentzian spacetime $(\M,g)$, the bundle of oriented orthochronous orthonormal frames is denoted by $O_+^\uparrow(\M,g)\to \M$, and its Poincar\'e groupoid by $\mathrm{Poin}(\mathcal{M},g)\rightrightarrows \M$. In the flat case, Minkowski spacetime is denoted by $\mink$ and equipped with metric $\eta$ of mostly-minus signature. The spacetime translation group is denoted by $T(1,d-1)$, the proper orthochronous Lorentz group by $SO_+^\uparrow(1,d-1)$, and the proper orthochronous Poincar\'e group by
\begin{equation}
\Poincup
=
T(1,d-1)\rtimes SO_+^\uparrow(1,d-1).
\end{equation}

Whenever direct integrals are used, measurability is understood in the standard sense of measurable fields of Hilbert spaces. All operator-valued integrals are understood ultraweakly in the relevant operator algebras. 

\section{Continuous groupoids and kinematical representations}
\label{sec:lie-groupoids}

In this section we collect the geometric and representation-theoretic background needed in the rest of the paper. Our main kinematical object is a continuous groupoid $\Gamma\rightrightarrows M$, to be interpreted as encoding admissible changes of local frame on a spacetime or, more generally, local symmetry data. Note that, for most examples of physical interest, one is interested in Lie groupoids, which are a special (smooth) type of continuous groupoids. The fundamental example for us is the Poincar\'e groupoid of a Lorentzian spacetime, together with its phase-space counterpart, the Wigner groupoid. We also recall the notion of a measurable field of Hilbert spaces and of a unitary representation of a continuous groupoid, since these will replace the familiar single-Hilbert-space representation theory of symmetry groups in the groupoid-based formulation of quantum reference frames.

\subsection{Lie groupoids, action groupoids and gauge groupoids}
\label{subsec:lie-groupoids-basic}

We begin by recalling some standard terminology; see for instance \cite{mackenzie_general_2005,landsman_mathematical_1998,ibort_introduction_2021}.

\begin{definition}
    A \emph{continuous groupoid} $\Gamma\rightrightarrows M$ consists of a topological space of arrows $\Gamma$, a topological space of objects $M$, continuous maps
\begin{equation}
s,t:\Gamma\to M
\end{equation}
called the \emph{source} and \emph{target} maps, a continuous unit map
\begin{equation}
u:M\to \Gamma,\qquad x\mapsto 1_x,
\end{equation}
a continuous inversion map
\begin{equation}
\iota:\Gamma\to\Gamma,\qquad \alpha\mapsto \alpha^{-1},
\end{equation}
and a continuous partially defined multiplication
\begin{equation}
m:\Gamma^{(2)}\to \Gamma,\qquad (\alpha,\beta)\mapsto \alpha\circ\beta,
\end{equation}
where
\begin{equation}
\Gamma^{(2)}:=\{(\alpha,\beta)\in \Gamma\times\Gamma\mid s(\alpha)=t(\beta)\},
\end{equation}
satisfying the usual associativity, identity and inverse axioms for groupoids. 
\end{definition}

If $\alpha\in\Gamma$ satisfies $s(\alpha)=x$ and $t(\alpha)=y$, we write $\alpha:x\to y$. For $x\in M$ we denote
\begin{equation}
\Gamma_x:=s^{-1}(x),\qquad \Gamma^x:=t^{-1}(x),\qquad \Gamma_x^x:=\Gamma_x\cap \Gamma^x.
\end{equation}
$\Gamma_x^x$, also denoted $\Gamma(x)$, is the \emph{isotropy group} at $x$. We write
\begin{equation}
    \Gamma^y_x := \Gamma_x \cap \Gamma^y =  s^{-1}(x) \cap t^{-1}(y) = \{\alpha \in \Gamma \mid s(\alpha) = x, \, t(\alpha) = y\} \subset \Gamma.
\end{equation}

\begin{figure}[t]
\centering
\begin{tikzpicture}[>=stealth,scale=0.95]
  \node[circle,fill=black,inner sep=1.6pt,label=below:$x$] (x) at (0,0) {};
  \node[circle,fill=black,inner sep=1.6pt,label=below:$y_1$] (y1) at (3,1.2) {};
  \node[circle,fill=black,inner sep=1.6pt,label=below:$y_2$] (y2) at (3,-1.2) {};
  \node[circle,fill=black,inner sep=1.6pt,label=below:$z_1$] (z1) at (-3,1.2) {};
  \node[circle,fill=black,inner sep=1.6pt,label=below:$z_2$] (z2) at (-3,-1.2) {};

  \draw[->,thick] (x) to[bend left=10] node[above] {$\alpha_1$} (y1);
  \draw[->,thick] (x) to[bend right=10] node[below] {$\alpha_2$} (y2);
  \draw[->,thick,dashed] (z1) to[bend right=10] node[above] {$\beta_1$} (x);
  \draw[->,thick,dashed] (z2) to[bend left=10] node[below] {$\beta_2$} (x);
  \draw[->,thick] (x) to[loop above] node {$\gamma\in\Gamma_x^x$} (x);

  \node[align=center] at (3.9,0) {$\Gamma_x=s^{-1}(x)$\\arrows with source $x$};
  \node[align=center] at (-3.9,0) {$\Gamma^x=t^{-1}(x)$\\arrows with target $x$};
\end{tikzpicture}
\caption{The source fiber $\Gamma_x$, the target fiber $\Gamma^x$, and their
intersection $\Gamma_x^x=\Gamma_x\cap\Gamma^x$, the isotropy group at $x$.}
\label{fig:source-target-fibers}
\end{figure}

The orbit of $x$ is
\begin{equation}
\mathcal O_x:=t(\Gamma_x)=\{y\in M\mid \exists\,\alpha:x\to y\} \subset M.
\end{equation}
The groupoid is said to be \emph{transitive} if it has a single orbit. The isotropy groups $\Gamma(x)$ and $\Gamma(y)$ of points $x,y$ lying in the same orbit are isomorphic; for a transitive groupoid, we write $H$ for the common isomorphism class and refer to it as the isotropy group of $\Gamma$. These notions are represented pictorially in Figs.~\ref{fig:source-target-fibers} and \ref{fig:groupoids}.

An important example of continuous groupoids is that of Lie groupoids.

\begin{definition}
\label{def:lie-groupoid}
A \emph{Lie groupoid} is a continuous groupoid in which $\Gamma$ and $M$ are smooth manifolds, $s$ and $t$ are smooth surjective submersions, and all structure maps are smooth.
\end{definition}

\begin{example}[Action groupoid]
\label{ex:action-groupoid}
Let a Lie group $G$ act smoothly on a manifold $M$. The associated \emph{action groupoid}
\begin{equation}
G\ltimes M \rightrightarrows M
\end{equation}
has arrows $G\times M$, source and target maps
\begin{equation}
s(g,x)=x,\qquad t(g,x)=g\cdot x,
\end{equation}
unit map $x\mapsto (e,x)$, inverse
\begin{equation}
(g,x)^{-1}=(g^{-1},g\cdot x),
\end{equation}
and composition
\begin{equation}
(g',g\cdot x)\circ (g,x)=(g'g,x).
\end{equation}
This is the groupoid avatar of an ordinary global symmetry action on a manifold $M$.
\end{example}

\begin{example}[Atiyah groupoid]
\label{ex:gauge-groupoid}
Let $\pi:P\to M$ be a right principal $H$-bundle.  
The \emph{Atiyah groupoid} (sometimes called gauge groupoid) of $P$ is the Lie groupoid
\begin{equation}
\At(P):=(P\times P)/H \rightrightarrows M,
\end{equation}
whose arrows are classes
\begin{equation}
[p_x,p_y],\qquad p_x\in P_x,\ p_y\in P_y,
\end{equation}
under the diagonal right action
\begin{equation}
(p_x,p_y)\cdot k := (p_x\cdot k,p_y\cdot k),\qquad k\in H.
\end{equation}
The source and target maps are
\begin{equation}
s([p_x,p_y])=y,\qquad t([p_x,p_y])=x,
\end{equation}
the unit at $x$ is
\begin{equation}
1_x=[p_x,p_x],
\end{equation}
the inverse is
\begin{equation}
[p_x,p_y]^{-1}=[p_y,p_x],
\end{equation}
and the composition law is
\begin{equation}
[p_z,p_y]\circ [p_y,p_x]=[p_z,p_x].
\end{equation}

Equivalently, an arrow of $\At(P)$ may be described as a $H$-equivariant map
\begin{equation}
\alpha_{yx}:P_x\to P_y \, .
\end{equation}
Indeed, given the class $[p_y,p_x]$, we define the map $\alpha_{yx}:P_x\to P_y$ as the assignment $\alpha_{yx}(p) = q$, with $p = p_x h$, and $q = p_y h$, for all $p\in P_x$.  This map is well defined because if we change the representative $[p'_y,p_x'] = [p_y,p_x]$, then there is $k\in H$ such that $p_x' = p_x k$ and $p_y' = p_y k$. Then given $p \in P_x$, $p = p_xh = p_x k k^{-1}h = p_x' (k^{-1}h)$, and $\alpha'_{yx}(p) = p_y' (k^{-1}h) = p_y h = q$.

\end{example}

The two examples above are the basic geometric models that will reappear throughout the paper. Action groupoids encode ordinary global symmetries, while Atiyah groupoids encode local frame changes associated with a principal bundle even in the absence of a nontrivial global symmetry group. A useful concept we will use in the context of Atiyah groupoids is that of bisections.

\begin{definition}
    A \emph{bisection} of a Lie groupoid $\Gamma \rightrightarrows M$ is a section $b : M \to \Gamma$ such that $s \circ b = \text{id}_M$ and $\varphi_b := t \circ b \in \mathrm{Diff}(M)$. The set of bisections forms a group \cite{mackenzie_lie_2005} denoted $\text{Bis}(\Gamma)$ that acts in a natural way on $M$ by means of $(b,x) \mapsto \varphi_b(x)$ .   
\end{definition}

\begin{remark}
    For the Atiyah groupoid of a principal bundle $\pi : P \to M$, $\text{Bis}(\At(P)) \cong \text{Aut}(P)$ (the group of bundle automorphisms $\varphi \colon P \to P$), and the subgroup $\text{Bis}_{\id}(\At(P))$ of bisections $b$ covering the identity, that is $\varphi_b = \mathrm{id}_M$, is isomorphic to the subgroup of gauge transformations  $\text{Gau}(P)$, that is the subgroup of $\text{Aut}(P)$ covering the identity.
\end{remark}

\begin{remark}\label{rem:bisections_action_groupoid}
For the action groupoid $\Gamma = G \times M$ corresponding to the action of the Lie group $G$ on a smooth manifold $M$, Example \ref{ex:action-groupoid}, the group $G$ is a canonical subgroup of the group of bisections $\text{Bis}(G \times M)$.  Indeed, for any $g\in G$ we define the bisection $b_g (x) = (g,x)$, for all $x\in M$.   Then, clearly, $\varphi_{b_g}(x) =gx$, and the original action of $G$ on $M$ is recovered as the action of the subgroup $G$ of the group of bisections. 
\end{remark}

\subsection{The Poincaré groupoid of a Lorentzian spacetime}
\label{subsec:poincare-groupoid}

Let $(\mathcal{M},g)$ be a $d$-dimensional Lorentzian spacetime. To simplify notation, we shall assume that $(\mathcal{M},g)$ is oriented and time-oriented, so that its bundle of oriented orthochronous orthonormal frames
\begin{equation}
O_+^\uparrow(\mathcal{M},g)\to \mathcal{M}
\end{equation}
is a principal $SO_+^\uparrow(1,d-1)$-bundle. All constructions below admit an obvious variant using the full orthonormal frame bundle and the full Lorentz group.

\begin{definition}[\cite{ibort_notion_2025}]
\label{def:poincare-groupoid-gauge}
The \emph{Poincar\'e groupoid} of $(\mathcal{M},g)$ is the gauge groupoid
\begin{equation}
\mathrm{Poin}(\mathcal{M},g):=
O_+^\uparrow(\mathcal{M},g)\times_{SO_+^\uparrow(1,d-1)} O_+^\uparrow(\mathcal{M},g)
\rightrightarrows \mathcal{M}.
\end{equation}
\end{definition}

There is an equivalent intrinsic description which is often more convenient.

\begin{proposition}[\cite{ibort_notion_2025}]
\label{prop:poincare-groupoid-intrinsic}
The Poincar\'e groupoid $\mathrm{Poin}(\mathcal{M},g)$ is canonically isomorphic to the Lie groupoid whose arrows are triples
\begin{equation}
(y,T_{yx},x),
\end{equation}
where $x,y\in \mathcal{M}$ and
\begin{equation}
T_{yx}:T_x\mathcal{M}\to T_y\mathcal{M}
\end{equation}
is an orientation- and time-orientation-preserving linear isometry, i.e.
\begin{equation}
T_{yx}^{*}g_y=g_x.
\end{equation}
and $T_{yx}$ maps the future-component of the light cone at $x$ into the future-component of the light cone at $y$.
The source, target, identity, inverse and composition maps are given by
\begin{equation}
s(y,T_{yx},x)=x,\qquad t(y,T_{yx},x)=y,
\end{equation}
\begin{equation}
1_x=(x,\id_{T_x\mathcal{M}},x),
\end{equation}
\begin{equation}
(y,T_{yx},x)^{-1}=(x,T_{yx}^{-1},y),
\end{equation}
and
\begin{equation}
(z,T_{zy},y)\circ (y,T_{yx},x)=(z,T_{zy}\circ T_{yx},x).
\end{equation}
\end{proposition}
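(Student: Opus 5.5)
We will construct an explicit map $\Phi$ from the gauge groupoid $O_+^\uparrow(\M,g)\times_{\Lup} O_+^\uparrow(\M,g)$ to the groupoid of triples, and show that it is a well-defined isomorphism of Lie groupoids. Write a frame at $x$ as a linear isometry $e:\mathbb{R}^{1,d-1}\to T_x\M$ that preserves orientation and time orientation. The right $\Lup$-action is $e\cdot\Lambda=e\circ\Lambda$. Following the convention of Example~\ref{ex:gauge-groupoid}, where $[p_y,p_x]$ has source $x$ and target $y$, we set
\begin{equation}
\Phi([e_y,e_x]) := (y,\, e_y\circ e_x^{-1},\, x).
\end{equation}

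The first step is well-definedness. Replacing $(e_y,e_x)$ by $(e_y\Lambda,e_x\Lambda)$ gives $e_y\Lambda\Lambda^{-1}e_x^{-1}=e_ye_x^{-1}$, so $\Phi$ is constant on classes. The map $T_{yx}=e_ye_x^{-1}$ is a composition of isometries, so $T_{yx}^*g_y=g_x$. It also preserves orientation and maps the future cone at $x$ into the future cone at $y$, because both $e_x$ and $e_y$ carry the standard orientation and future cone of $\mathbb{R}^{1,d-1}$ to those at the respective points.

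The second step is bijectivity. For injectivity, suppose $e_ye_x^{-1}=e'_y{e'_x}^{-1}$ with the same endpoints. Put $\Lambda:=e_x^{-1}e'_x\in\Lup$; then $e'_y=e_y\Lambda$, so the two classes coincide. For surjectivity, take any admissible $T_{yx}$ and any frame $e_x$. Then $T_{yx}\circ e_x$ is an oriented, time-oriented orthonormal frame at $y$, and it maps to $T_{yx}$. The groupoid structure is preserved. Indeed, $s,t$ match by construction, and $\Phi([e_x,e_x])=(x,\id,x)$. Moreover $\Phi([e_x,e_y])=(x,(e_ye_x^{-1})^{-1},y)$, and
\begin{equation}
\Phi([e_z,e_y])\circ\Phi([e_y,e_x]) = (z,\, e_ze_y^{-1}e_ye_x^{-1},\, x) = \Phi([e_z,e_x]),
\end{equation}
using that any two composable classes can be represented with a common middle frame $e_y$.

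The main obstacle is the smooth, i.e.\ Lie groupoid, part of the statement. The set of triples must first be given a manifold structure, and we take the following route. The triples form the subbundle $\mathrm{Iso}^\uparrow_+(\mathrm{pr}_1^*T\M,\mathrm{pr}_2^*T\M)$ of the vector bundle $\mathrm{Hom}(\mathrm{pr}_2^*T\M,\mathrm{pr}_1^*T\M)\to\M\times\M$. In local orthonormal frames $e^{(U)}$, $e^{(V)}$ over open sets $U\ni x$ and $V\ni y$, it is locally trivialized as $V\times\Lup\times U$ via $T\mapsto (y,\,(e^{(V)}_y)^{-1}Te^{(U)}_x,\,x)$. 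The same local sections trivialize $P\times P$. In these charts $\Phi$ reads
\begin{equation}
(y,\Lambda_2,\Lambda_1,x)\mapsto (y,\Lambda_2\Lambda_1^{-1},x),
\end{equation}
which is a smooth submersion whose fibres are exactly the $\Lup$-orbits. Hence $\Phi$ descends to a diffeomorphism of the quotient. The smooth structure on the triples is thus the one transported from the gauge groupoid, and this structure coincides with the subbundle structure. Since the structure maps correspond under $\Phi$, the triples form a Lie groupoid canonically isomorphic to $\mathrm{Poin}(\M,g)$. The construction is canonical because no choice of frame enters $\Phi$.
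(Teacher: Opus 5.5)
Your proof is correct and follows the same route as the paper: you use the map $[e_y,e_x]\mapsto (y,\,e_y\circ e_x^{-1},\,x)$, which is well defined because the diagonal $\Lup$-action cancels, and invert it via $T_{yx}\circ e_x$ (your injectivity and surjectivity checks amount to the same thing). Your local-trivialization argument, showing that this map lifts to the submersion $(y,\Lambda_2,\Lambda_1,x)\mapsto(y,\Lambda_2\Lambda_1^{-1},x)$ whose fibres are the diagonal orbits, goes beyond the paper: it supplies the Lie-groupoid (smoothness) part that the paper leaves as ``immediate''.
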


\begin{proof}
Let $[u_y,u_x]\in O_+^\uparrow(\mathcal{M},g)\times_{SO_+^\uparrow(1,d-1)} O_+^\uparrow(\mathcal{M},g)$, where
\begin{equation}
u_x:\mathbb R^{1,d-1}\to T_x\mathcal{M},\qquad u_y:\mathbb R^{1,d-1}\to T_y\mathcal{M}
\end{equation}
are oriented orthochronous orthonormal frames. The map
\begin{equation}
[u_y,u_x]\longmapsto u_y\circ u_x^{-1}:T_x\mathcal{M}\to T_y\mathcal{M}
\end{equation}
is a well-defined linear isometry: replacing $(u_y,u_x)$ by $(u_y h, u_x h)$ for $h\in SO_+^\uparrow(1,d-1)$ gives $u_y h\circ (u_x h)^{-1} = u_y \circ u_x^{-1}$, so the map depends only on the equivalence class. Conversely, given a linear isometry $T_{yx}:T_x\mathcal{M}\to T_y\mathcal{M}$ and any frame $u_x$ at $x$, the frame
\begin{equation}
u_y:=T_{yx}\circ u_x
\end{equation}
is orthonormal at $y$, and the class $[u_y,u_x]$ depends only on $T_{yx}$. The remaining structure maps are immediate.
\end{proof}

\begin{remark}
\label{rem:poincare-no-connection}
No connection is required in Definition \ref{def:poincare-groupoid-gauge}. The Levi-Civita connection determines a distinguished subgroupoid, namely the holonomy groupoid obtained by parallel transport along curves, but for the purposes of quantum reference frames we shall work with the full Poincar\'e groupoid. This reflects the fact that the kinematical comparison of local frames is more primitive than any particular dynamical rule selecting preferred transports.
\end{remark}

\begin{proposition}[\cite{ibort_notion_2025}]
\label{prop:poincare-transitive}
The Lie groupoid $\mathrm{Poin}(\mathcal{M},g)\rightrightarrows \mathcal{M}$ is transitive. For every $x\in \mathcal{M}$, the isotropy group $\mathrm{Poin}(\mathcal{M},g)_x^x$
is isomorphic to $SO_+^\uparrow(1,d-1)$ after choosing an oriented orthochronous orthonormal frame at $x$.
\end{proposition}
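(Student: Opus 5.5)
I will work throughout in the intrinsic description of Proposition \ref{prop:poincare-groupoid-intrinsic}, where an arrow $x\to y$ is a linear isometry $T_{yx}:T_x\mathcal{M}\to T_y\mathcal{M}$ preserving orientation and time orientation. The proof then splits into transitivity and the identification of the isotropy group.

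\textbf{Transitivity.} Given $x,y\in\mathcal{M}$, I must exhibit at least one arrow $x\to y$. Since $O_+^\uparrow(\mathcal{M},g)\to\mathcal{M}$ is a principal $SO_+^\uparrow(1,d-1)$-bundle, each fibre is nonempty. This uses the standing assumption that $(\mathcal{M},g)$ is oriented and time-oriented, so that at every point the oriented orthochronous orthonormal frames exist (Gram--Schmidt adapted to signature). I then choose frames $u_x:\mathbb{R}^{1,d-1}\to T_x\mathcal{M}$ and $u_y:\mathbb{R}^{1,d-1}\to T_y\mathcal{M}$ and set $T_{yx}:=u_y\circ u_x^{-1}$. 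This map is a linear isometry because both frames are isometries from $(\mathbb{R}^{1,d-1},\eta)$. It preserves orientation and time orientation because both frames do. In gauge-groupoid language, this arrow is just the class $[u_y,u_x]$. Hence the orbit of any $x$ is all of $\mathcal{M}$.

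\textbf{Isotropy.} The isotropy group $\mathrm{Poin}(\mathcal{M},g)_x^x$ consists of the isometries $T:T_x\mathcal{M}\to T_x\mathcal{M}$ preserving orientation and time orientation, with composition as the group law. I fix a frame $u_x$ and define
\begin{equation}
\Phi_{u_x}:\mathrm{Poin}(\mathcal{M},g)_x^x\to SO_+^\uparrow(1,d-1),\qquad T\mapsto u_x^{-1}\circ T\circ u_x .
\end{equation}
\begin{itemize}
\item The image $\Phi_{u_x}(T)$ is an $\eta$-isometry of $\mathbb{R}^{1,d-1}$ that preserves orientation and the future cone, so it lies in $SO_+^\uparrow(1,d-1)$.
\item $\Phi_{u_x}$ is multiplicative, since $u_x^{-1}T T' u_x = (u_x^{-1}Tu_x)(u_x^{-1}T'u_x)$.
\item It is bijective, with inverse $\Lambda\mapsto u_x\Lambda u_x^{-1}$.
\item It is smooth in both directions, so it is a Lie group isomorphism.
\end{itemize}
Equivalently, in the gauge picture each class $[u_x h,u_x]$ corresponds to $h$.

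Changing the frame to $u_x k$ conjugates $\Phi$ by $k$. This explains the phrase "after choosing a frame": the isomorphism is canonical only up to inner automorphisms.

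\textbf{Main obstacle.} The only step needing care is checking that composing and inverting frames stays within the \emph{oriented orthochronous} class, i.e.\ that $u_y\circ u_x^{-1}$ maps the future light cone at $x$ to the future light cone at $y$. This holds because each $u$ maps the standard future cone onto the future cone at its base point, by the definition of a time-orientation-compatible frame.
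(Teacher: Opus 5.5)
Your proof is correct and follows essentially the same route as the paper: transitivity comes from the existence of an isometry $T_x\mathcal{M}\to T_y\mathcal{M}$, and the isotropy group is identified with $SO_+^\uparrow(1,d-1)$ through a chosen frame at $x$. Your construction of the arrow as $u_y\circ u_x^{-1}$ is actually a little more careful than the paper's appeal to the existence of some linear isometry between Lorentzian spaces of the same signature. That appeal leaves implicit that the isometry may need to be corrected by a reflection so that it preserves orientation and time orientation, whereas your frame-based arrow preserves both automatically.
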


\begin{proof}
For any $x,y\in \mathcal{M}$, the tangent spaces $(T_x\mathcal{M},g_x)$ and $(T_y\mathcal{M},g_y)$ are isometric Lorentzian vector spaces of the same signature, hence there exists a linear isometry $T_{yx}:T_x\mathcal{M}\to T_y\mathcal{M}$. This gives an arrow $x\to y$, so the groupoid is transitive. When $x=y$, the arrows are precisely the linear isometries of $(T_x\mathcal{M},g_x)$ preserving the chosen orientation and time-orientation, hence the isotropy group is $SO_+^\uparrow(1,d-1)$.
\end{proof}

Thus $\mathrm{Poin}(\mathcal{M},g)$ should be viewed as the curved-spacetime analogue of the usual Poincar\'e symmetry group: its arrows compare orthonormal frames at different points, while its isotropy groups recover the local Lorentz symmetry. This groupoid coincides with the kinematical groupoid $R(\mathcal{M})$ introduced in \cite{ciaglia_groupoidal_2024} in the context of the groupoidal picture of quantum mechanics.

\begin{remark}
    For $\mathrm{Poin}(\M,g)$, bisections are pairs $(\varphi,T)$ with $\varphi \in \text{Diff}(\M)$ and $T : T\M \to T\M$ is a metric-preserving bundle map over $\varphi$.
\end{remark}

\begin{definition}
    The subgroup of isometry-induced bisections of the Poincar\'e groupoid is
\begin{equation}
    \mathrm{Bis}_{\mathrm{Isom}}
    (\mathrm{Poin}(\M,g)) :=
    \left\{b_\varphi \in \mathrm{Bis}(\mathrm{Poin}(\M,g)) \mid   \varphi\in\operatorname{Isom}_+^\uparrow(\M,g),
    b_\varphi(x)=(\varphi(x),d\varphi_x,x)  
    \right\}.
\end{equation}
This group is distinct from the isotropy group
$\mathrm{Poin}(\M,g)_x^x$, which consists of Lorentz transformations based at one
fixed point.
\end{definition}

\subsection{Flat spacetime and recovery of the ordinary Poincar\'e picture}
\label{subsec:flat-recovery}

We now highlight that in Minkowski spacetime the Poincar\'e groupoid collapses to the familiar action-groupoid description.

\begin{example}[Minkowski spacetime]
\label{ex:minkowski-poincare-groupoid}
Let $(\mathbb M,\eta)$ be $d$-dimensional Minkowski spacetime. Since $T\mathbb M$ is globally trivial,
\begin{equation}
T\mathbb M \cong \mathbb M\times \mathbb R^{1,d-1},
\end{equation}
every arrow of $\mathrm{Poin}(\mathbb M,\eta)$ may be written uniquely as
\begin{equation}
(y,\Lambda,x),
\qquad
x,y\in \mathbb M,\quad \Lambda\in SO_+^\uparrow(1,d-1),
\end{equation}
where $\Lambda$ acts as the corresponding Lorentz transformation in the global orthonormal frame. Hence
\begin{equation}
\mathrm{Poin}(\mathbb M,\eta)\cong \mathbb M\times SO_+^\uparrow(1,d-1)\times \mathbb M
\end{equation}
as a smooth manifold.
\end{example}

Let $\mathcal P_+^\uparrow= T(1,d-1)\rtimes SO_+^\uparrow(1,d-1)$
be the proper orthochronous Poincar\'e group acting on Minkowski spacetime by $(a,\Lambda)\cdot x=\Lambda x+a$.
The associated action groupoid is
\begin{equation}
\mathcal P_+^\uparrow\ltimes \mathbb M \rightrightarrows \mathbb M.
\end{equation}

\begin{proposition}
\label{prop:minkowski-action-groupoid}
There is a canonical isomorphism of Lie groupoids
\begin{equation}\label{eq:Poincare_Minkowski}
\mathrm{Poin}(\mathbb M,\eta)\cong \mathcal P_+^\uparrow\ltimes \mathbb M.
\end{equation}
Explicitly, the isomorphism is given by
\begin{equation}
\Phi:\mathrm{Poin}(\mathbb M,\eta)\to \mathcal P_+^\uparrow\ltimes \mathbb M,
\qquad
(y,\Lambda,x)\mapsto ((a,\Lambda),x),
\end{equation}
where
\begin{equation}
a:=y-\Lambda x.
\end{equation}
Its inverse is
\begin{equation}
\Phi^{-1}((a,\Lambda),x)=(\Lambda x+a,\Lambda,x).
\end{equation}
\end{proposition}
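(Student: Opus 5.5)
We verify directly that $\Phi$ is a well-defined diffeomorphism that intertwines all groupoid structure maps. Throughout we use the global trivialization $T\mathbb M\cong\mathbb M\times\mathbb R^{1,d-1}$ of Example~\ref{ex:minkowski-poincare-groupoid}. In it, an arrow $(y,T_{yx},x)$ of Proposition~\ref{prop:poincare-groupoid-intrinsic} is identified with $(y,\Lambda,x)$, where $\Lambda\in SO_+^\uparrow(1,d-1)$ is the matrix of $T_{yx}$ with respect to the standard global orthonormal frame at $x$ and at $y$. This identification is canonical once the affine structure of $\mathbb M$ is used to identify every tangent space with $\mathbb R^{1,d-1}$. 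Under it, composition becomes $(z,\Lambda',y)\circ(y,\Lambda,x)=(z,\Lambda'\Lambda,x)$, and inversion becomes $(y,\Lambda,x)^{-1}=(x,\Lambda^{-1},y)$.

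\textbf{Step 1: bijectivity and smoothness.} The map $\Phi(y,\Lambda,x)=((y-\Lambda x,\Lambda),x)$ is smooth, since it is built from matrix multiplication and subtraction on $\mathbb M\times SO_+^\uparrow(1,d-1)\times\mathbb M$. The proposed inverse $((a,\Lambda),x)\mapsto(\Lambda x+a,\Lambda,x)$ is also smooth. Both composites are the identity:
\[
\Lambda x+(y-\Lambda x)=y,\qquad (\Lambda x+a)-\Lambda x=a .
\]
Hence $\Phi$ is a diffeomorphism of arrow manifolds, and on objects it is $\mathrm{id}_{\mathbb M}$.

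\textbf{Step 2: compatibility with source and target.} For the source, $s(\Phi(y,\Lambda,x))=x$ by Example~\ref{ex:action-groupoid}. For the target,
\[
t(\Phi(y,\Lambda,x))=(a,\Lambda)\cdot x=\Lambda x+y-\Lambda x=y .
\]
For units, $\Phi(x,\id,x)=((0,\id),x)$, which is the unit $(e,x)$ of the action groupoid.

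\textbf{Step 3: compatibility with composition and inversion.} This is the only real computation, and it is where the semidirect product law $(a',\Lambda')(a,\Lambda)=(a'+\Lambda' a,\Lambda'\Lambda)$ must match. Take composable arrows $(z,\Lambda',y)$ and $(y,\Lambda,x)$. Their images are $((z-\Lambda' y,\Lambda'),y)$ and $((y-\Lambda x,\Lambda),x)$. In the action groupoid their composite is
\[
\bigl((z-\Lambda'y)+\Lambda'(y-\Lambda x),\,\Lambda'\Lambda\bigr),x)=((z-\Lambda'\Lambda x,\Lambda'\Lambda),x),
\]
which is exactly $\Phi(z,\Lambda'\Lambda,x)$. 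Preservation of inverses then follows automatically from preservation of composition and units.

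\textbf{Step 4: canonicity.} The only choice made is the identification of tangent spaces with $\mathbb R^{1,d-1}$ via the affine structure of $\mathbb M$. The formulas above are therefore independent of any auxiliary data beyond the Minkowski structure itself, which is what justifies calling the isomorphism canonical.
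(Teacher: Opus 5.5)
Your proof is correct and follows essentially the same route as the paper: a direct check that $\Phi$ respects source, target and composition via the semidirect-product law $(a',\Lambda')(a,\Lambda)=(a'+\Lambda'a,\Lambda'\Lambda)$, with your explicit checks of smoothness, units and inverses filling in what the paper calls immediate. Your Step 4 is slightly loose, since the formula $a=y-\Lambda x$ also uses an origin and an orthonormal basis of $\mathbb M$. Canonicity still holds because the map is intrinsically $(y,T_{yx},x)\mapsto(p\mapsto y+T_{yx}(p-x),\,x)$, and the paper does not argue it at all.
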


\begin{proof}
The formula for $\Phi$ is well defined because for fixed $x,y,\Lambda$ there is a unique translation vector $a$ satisfying $y=\Lambda x+a$. A direct computation shows that source and target maps are preserved:
\begin{equation}
s((y,\Lambda,x))=x,\qquad t((y,\Lambda,x))=y=(a,\Lambda)\cdot x.
\end{equation}
For the composition, let $a:=y-\Lambda x$ and $a':=z-\Lambda' y$. Then
\begin{equation}
\Phi\bigl((z,\Lambda',y)\circ(y,\Lambda,x)\bigr)
=
\Phi(z,\Lambda'\Lambda,x)
=
((z-\Lambda'\Lambda x,\Lambda'\Lambda),x)
=
((a'+\Lambda'a,\Lambda'\Lambda),x),
\end{equation}
which is exactly the product
\begin{equation}
((a',\Lambda'),y)\circ ((a,\Lambda),x)
\end{equation}
in the action groupoid. The formula for the inverse is immediate.
\end{proof}

Because of \ref{rem:bisections_action_groupoid}, the Poincar\'e group $\mathcal P_+^\uparrow$ is now a subgroup of the group of bisections of the Poincar\'e groupoid $\mathrm{Poin}(\mathbb M,\eta)$, whose action on Minkowski spacetime $\mathbb{M}$ is induced from the action of the group of bisections on $\mathbb{M}$ recovering the complete action-groupoid structure of Minkowski spacetime from its Poincar\'e groupoid, hence Prop.~\ref{prop:minkowski-action-groupoid} is the geometric core of the flat-spacetime consistency check stated in Sec.~\ref{sec:introduction}: once the operational formalism is developed for action groupoids, the ordinary Poincar\'e-group description is recovered automatically from the groupoid picture. 

\subsection{The Wigner groupoid}
\label{subsec:wigner-groupoid}

Besides the Poincar\'e groupoid, there is a closely related phase-space groupoid which will be useful when discussing particle sectors and, later on, relational quantum fields.

\begin{definition}
\label{def:wigner-groupoid}
Let $(\mathcal{M},g)$ be a Lorentzian spacetime. The \emph{Wigner groupoid} of $(\mathcal{M},g)$ is the Lie groupoid
\begin{equation}
\mathrm{Wig}(\mathcal{M},g)\rightrightarrows T^*\mathcal{M}
\end{equation}
whose arrows are triples
\begin{equation}
(p_y,T_{yx},p_x),
\end{equation}
where $p_x\in T_x^*M$, $p_y\in T_y^*M$, and $T_{yx}:T_x\mathcal{M}\to T_y\mathcal{M}$
is a linear isometry satisfying the compatibility condition
\begin{equation}
T_{yx}^*p_y=p_x,
\end{equation}
where $T_{yx}^*:T_y^*M\to T_x^*M$ denotes the dual (pullback) map.
The source and target maps are
\begin{equation}
s(p_y,T_{yx},p_x)=(x,p_x),\qquad t(p_y,T_{yx},p_x)=(y,p_y),
\end{equation}
the identities are
\begin{equation}
1_{(x,p_x)}=(p_x,\id_{T_x\mathcal{M}},p_x),
\end{equation}
and the composition law is
\begin{equation}
(p_z,T_{zy},p_y)\circ (p_y,T_{yx},p_x)
=
(p_z,T_{zy}\circ T_{yx},p_x).
\end{equation}
\end{definition}

Intuitively, $\mathrm{Wig}(\mathcal{M},g)$ refines $\mathrm{Poin}(\mathcal{M},g)$ by keeping track not only of local frames but also of cotangent vectors, i.e. momenta. It is therefore the natural groupoid analogue of the usual Wigner picture of particle kinematics.

For the orbit list below we assume $d\geq 3$. In $1+1$ dimensions the
future and past null shells each split into two orbits (right- and left-moving
rays), and every spacelike shell splits into two orbits distinguished by the
sign of the spatial component. The massive future and past shells remain
single orbits.

\begin{proposition}[\cite{ibort_notion_2025}]
\label{prop:wigner-orbits}
Let $\dim(\mathcal{M}) \geq 3$, then the orbits of $\mathrm{Wig}(\mathcal{M},g)$ in $T^*\mathcal{M}$ are the mass shells determined by the Lorentzian norm of the covector and, if a time-orientation is fixed, by its time orientation. More precisely, the standard orbits are:
\begin{align*}
\mathcal O_{m,+} &= \{(x,p)\in T^*\mathcal{M} \mid g_x^{-1}(p,p)=m^2>0,\; p\text{ future-directed}\},\\
\mathcal O_{0,+} &= \{(x,p)\in T^*\mathcal{M} \mid g_x^{-1}(p,p)=0,\; p\text{ future-directed}\},\\
\mathcal O_{0} &= \{(x,0)\in T^*\mathcal{M}\},\\
\mathcal O_{\mathrm{tach},m} &= \{(x,p)\in T^*\mathcal{M} \mid g_x^{-1}(p,p)=-m^2<0\}.
\end{align*}
There are analogous past-directed components for the massive and massless sectors.
\end{proposition}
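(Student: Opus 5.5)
The plan is to reduce the orbit problem of $\mathrm{Wig}(\mathcal{M},g)$ on $T^*\mathcal{M}$ to the classical orbit problem of the Lorentz group $SO_+^\uparrow(1,d-1)$ acting on the dual Minkowski space $(\mathbb R^{1,d-1})^*$, using the transitivity of $\mathrm{Poin}(\mathcal{M},g)$ (Proposition~\ref{prop:poincare-transitive}).

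First, note that the compatibility condition $T_{yx}^*p_y=p_x$ says $p_y=(T_{yx}^{-1})^*p_x$. So an arrow of $\mathrm{Wig}(\mathcal{M},g)$ out of $(x,p_x)$ is the same as an arrow $T_{yx}$ of the Poincar\'e groupoid, with the target covector determined by it. Hence
\begin{equation*}
\mathcal{O}_{(x,p_x)}=\{(y,(T_{yx}^{-1})^*p_x)\mid (y,T_{yx},x)\in \mathrm{Poin}(\mathcal{M},g)\}.
\end{equation*}

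Second, I show that the orbit invariants are preserved. Since $T_{yx}$ is an isometry, so is its inverse dual map with respect to the inverse metrics. Therefore $g_y^{-1}(p_y,p_y)=g_x^{-1}(p_x,p_x)$. Since $T_{yx}$ maps the future cone to the future cone, the dual map preserves future-directedness of causal covectors; for $p=0$ the image is again $0$. Thus each orbit is contained in one of the listed sets.

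Third, I prove the converse inclusion. Given $(x,p_x)$ and $(y,p_y)$ with the same invariants, fix frames $u_x$ and $u_y$. Pull back to $q_x=u_x^*p_x$ and $q_y=u_y^*p_y$ in $(\mathbb R^{1,d-1})^*$. These have the same Minkowski norm and time orientation. The standard Wigner classification, valid for $d\ge 3$, gives $\Lambda\in SO_+^\uparrow(1,d-1)$ with $(\Lambda^{-1})^*q_x=q_y$. Then $T_{yx}:=u_y\Lambda u_x^{-1}$ is an arrow of the Poincar\'e groupoid, and it satisfies $T_{yx}^*p_y=p_x$. So $(y,p_y)$ lies in the orbit of $(x,p_x)$.

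The main obstacle is justifying this flat classification in the third step, i.e.\ the transitivity of $SO_+^\uparrow(1,d-1)$ on each level set:
\begin{enumerate}
\item For hyperboloids, boost to a rest frame $(m,0,\dots,0)$.
\item For the null cone, rotate to $(E,E,0,\dots)$ and then boost to rescale $E$.
\item For the one-sheeted spacelike hyperboloid, bring $p$ to $(0,m,0,\dots)$ by a rotation and a boost.
\end{enumerate}
The rotation steps need $SO(d-1)$ to act transitively on spheres $S^{d-2}$ of spatial directions, which requires $d-1\ge 2$. This is why $d\ge3$ is assumed, and it is exactly what fails in $1+1$ dimensions, as noted before the statement. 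The past-directed components are handled by the same argument.
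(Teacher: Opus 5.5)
Your proof is correct and follows essentially the same route as the paper, which takes an arbitrary isometry $T_0:T_x\mathcal{M}\to T_y\mathcal{M}$ from the transitivity of $\mathrm{Poin}(\mathcal{M},g)$ and corrects it by an isotropy element; this is exactly your $T_{yx}=u_y\Lambda u_x^{-1}$, with $T_0=u_yu_x^{-1}$ and isotropy element $u_x\Lambda u_x^{-1}$. Your explicit check that $SO_+^\uparrow(1,d-1)$ acts transitively on each shell, together with your identification of where $d\geq 3$ enters (transitivity of $SO(d-1)$ on $S^{d-2}$ for the null and spacelike shells), fills in a step the paper only asserts.
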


\begin{proof}
An arrow $(p_y,T_{yx},p_x)$ exists precisely when $p_x=T_{yx}^*p_y$. Since $T_{yx}$ is a Lorentz isometry, it preserves the value of $g^{-1}(p,p)$ and, in the oriented/time-oriented setting, also preserves the future/past decomposition. Conversely, suppose $(x,p_x)$ and $(y,p_y)$ have the same Lorentzian norm and the same time orientation. By Proposition \ref{prop:poincare-transitive}, there exists at least one Lorentz isometry $T_0:T_x\mathcal{M}\to T_y\mathcal{M}$. The covectors $T_0^*p_y$ and $p_x$ then lie in the same mass shell of $(T_x\mathcal{M},g_x)$. Since $SO_+^\uparrow(1,d-1)$ acts transitively on each mass shell, there exists $\Lambda\in SO_+^\uparrow(1,d-1)\cong\mathrm{Poin}(\mathcal{M},g)_x^x$ with $\Lambda^*T_0^*p_y=p_x$. Setting $T_{yx}:=T_0\circ\Lambda$ produces the required arrow.
\end{proof}

\begin{remark}
\label{rem:wigner-particle-sectors}
The orbit decomposition of $\mathrm{Wig}(\mathcal{M},g)$ is the groupoid counterpart of the familiar classification of one-particle kinematics by mass shell \cite{wigner_unitary_1939}.  It is therefore natural to expect that irreducible or projective representations of appropriate orbit restrictions of $\mathrm{Wig}(\mathcal{M},g)$ will capture the local particle sectors of the theory (see \cite{ibort_notion_2025}). We will not need the full representation-theoretic classification in the present paper, but this picture helps motivate the role of the Wigner groupoid in the background geometry of the theory.
\end{remark}

\begin{example}[Flat spacetime]
\label{ex:wigner-flat}
In Minkowski spacetime, the trivialization of $T^*\mathbb M$ identifies $\mathrm{Wig}(\mathbb M,\eta)$ with the action groupoid of the proper orthochronous Poincar\'e group acting on phase space by the cotangent-lifted affine action:
\begin{equation}
(a,\Lambda)\cdot (x,p)=\bigl(\Lambda x+a,(\Lambda^{-1})^{*}p\bigr).
\end{equation}
Thus, in the flat case, the Wigner groupoid reduces to the standard global phase-space kinematics.
\end{example}

\subsection{Hilbert bundles and unitary representations of Lie groupoids}
\label{subsec:groupoid-representations}

To quantize a groupoid kinematics one must replace the usual notion of a unitary group representation on a single Hilbert space by a representation on a field of Hilbert spaces over the object manifold.

\begin{definition}
\label{def:measurable-hilbert-bundle}
A \emph{measurable field of Hilbert spaces} over a standard Borel space $X$ is a family 
\begin{equation}
\scrhi=\bigsqcup_{x\in X}\mathcal H_x \xrightarrow{\;\pi\;} X
\end{equation}
of separable Hilbert spaces together with a distinguished set of measurable sections, $\psi \colon x \mapsto \psi(x) \in \mathcal{H}_x$, satisfying the usual axioms of direct-integral theory (see, for instance,  \cite{dixmier_champs_1963,bos_continuous_2007}). The associated direct-integral Hilbert space is denoted
\begin{equation}
\hi = \int_X^\oplus \mathcal H_x\,d\mu(x),
\end{equation}
for any chosen $\sigma$-finite measure $\mu$ on $X$.
\end{definition}

When continuous or smooth structure is relevant, one may work with continuous or smooth Hilbert bundles, respectively. In the following, we shall indeed work with continuous groupoids, although most examples of interest in this paper will be with respect to Lie groupoids. 

\begin{definition}
\label{def:unitary-groupoid-representation}
Let $M$ be a manifold, $\Gamma\rightrightarrows M$ be a measurable groupoid and let $\scrhi =\bigsqcup_{x\in M}\mathcal H_x\to M$ be a measurable field of Hilbert spaces over $M$. A \emph{unitary representation} of $\Gamma$ on $\scrhi$ is a measurable assignment
\begin{equation}
\alpha:x\to y \in \Gamma
\quad\longmapsto\quad
U_\alpha:\mathcal H_x\to \mathcal H_y
\end{equation}
such that:
\begin{enumerate}
    \item each $U_\alpha$ is unitary;
    \item for every composable pair $(\alpha,\beta)\in \Gamma^{(2)}$,
    \begin{equation}
    U_{\alpha\circ\beta}=U_\alpha U_\beta;
    \end{equation}
    \item for every $x\in M$,
    \begin{equation}
    U_{1_x}=\id_{\mathcal H_x}.
    \end{equation}
\end{enumerate}
\end{definition}

This is simply a functor $U \colon \Gamma \to \mathsf{Hilbert}$, from the groupoid $\Gamma$ (viewed as a small category) to the category $\mathsf{Hilbert}$ of Hilbert spaces and bounded operators, together with the appropriate measurability condition (see \cite{ibort_notion_2025} for details).

\begin{definition}
\label{def:projective-groupoid-representation}
Let $\mathbb P\scrhi=\bigsqcup_{x\in M}\mathbb P(\mathcal H_x)$ be the associated field of projective Hilbert spaces. A \emph{projective representation} of $\Gamma$ on $\scrhi$ is a measurable assignment
\begin{equation}
\alpha:x\to y
\quad\longmapsto\quad
\varphi_\alpha:\mathbb P(\mathcal H_x)\to \mathbb P(\mathcal H_y)
\end{equation}
such that $\varphi_\alpha$ preserves probability amplitudes and
\begin{equation}
\varphi_{\alpha\circ\beta}=\varphi_\alpha\circ \varphi_\beta,
\qquad
\varphi_{1_x}=\id_{\mathbb P(\mathcal H_x)}.
\end{equation}
\end{definition}

\begin{proposition}
\label{prop:unitary-induces-projective}
Every unitary representation $U$ of $\Gamma$ on $\scrhi$ induces a projective representation $\mathbb P U$ on $\mathbb P\scrhi$ by
\begin{equation}
(\mathbb P U)_\alpha([\psi])=[U_\alpha\psi].
\end{equation}
\end{proposition}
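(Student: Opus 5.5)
The plan is to check, in order, that $(\mathbb P U)_\alpha$ is well defined, that it preserves transition probabilities, that it is functorial, and that it is measurable. Fix an arrow $\alpha:x\to y$. A ray $[\psi]\in\mathbb P(\mathcal H_x)$ is the class of a nonzero $\psi\in\mathcal H_x$ modulo multiplication by $\lambda\in\mathbb C\setminus\{0\}$ (or by phases, if unit vectors are used).

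\textbf{Well-definedness.} Since $U_\alpha$ is linear, $U_\alpha(\lambda\psi)=\lambda U_\alpha\psi$. Since $U_\alpha$ is injective, $U_\alpha\psi\neq 0$ whenever $\psi\neq 0$. Hence $[U_\alpha\psi]\in\mathbb P(\mathcal H_y)$ depends only on $[\psi]$.

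\textbf{Preservation of probability amplitudes.} Use the standard transition probability
\begin{equation}
\bigl([\psi],[\phi]\bigr)\mapsto \frac{|\langle\psi,\phi\rangle|^2}{\|\psi\|^2\|\phi\|^2}.
\end{equation}
Unitarity of $U_\alpha$ gives $\langle U_\alpha\psi,U_\alpha\phi\rangle=\langle\psi,\phi\rangle$ and $\|U_\alpha\psi\|=\|\psi\|$. So the transition probability between $(\mathbb PU)_\alpha[\psi]$ and $(\mathbb PU)_\alpha[\phi]$ equals that between $[\psi]$ and $[\phi]$.

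\textbf{Functoriality.} Take a composable pair $(\alpha,\beta)\in\Gamma^{(2)}$. By axiom 2 of Definition~\ref{def:unitary-groupoid-representation},
\begin{equation}
(\mathbb PU)_{\alpha\circ\beta}[\psi]=[U_\alpha U_\beta\psi]=(\mathbb PU)_\alpha\bigl((\mathbb PU)_\beta[\psi]\bigr).
\end{equation}
Axiom 3 gives $U_{1_x}=\id_{\mathcal H_x}$, so $(\mathbb PU)_{1_x}=\id_{\mathbb P(\mathcal H_x)}$.

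\textbf{Measurability.} This is the only step that is not purely algebraic, and I expect it to be the main obstacle. The difficulty is that Definition~\ref{def:projective-groupoid-representation} does not specify a Borel structure on the field $\mathbb P\scrhi$. I will take the natural one: the structure generated by the functions
\begin{equation}
(x,[\psi])\mapsto \frac{|\langle\psi,\xi(x)\rangle|^2}{\|\psi\|^2},
\end{equation}
where $\xi$ ranges over the distinguished measurable sections of $\scrhi$.

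With this structure, measurability of $\mathbb PU$ reduces to the measurability of the matrix coefficients
\begin{equation}
\alpha\mapsto \langle \xi(t(\alpha)), U_\alpha\,\eta(s(\alpha))\rangle
\end{equation}
for all measurable sections $\xi,\eta$. This is exactly the measurability condition built into the notion of a unitary representation.

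To obtain the projective quantities, I will take absolute values squared of these coefficients and divide by norms, which are themselves measurable functions. I will then pass to arbitrary rays by approximating them with countably many measurable sections, using a fundamental sequence of sections that exists by separability in direct-integral theory.
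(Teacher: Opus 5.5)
Your proposal is correct and takes the paper's route: the paper's entire proof is that the claim is immediate from the functoriality and unitarity of $U$, and your well-definedness, transition-probability and functoriality steps spell exactly that out. Your measurability paragraph goes beyond the paper, which does not discuss measurability at all. Your reduction to measurable matrix coefficients is a sound way to fill this in; it can be completed by expanding in a measurable orthonormal frame $(e_n)$ and using polarization to see that the phase-invariant products $\overline{\langle\psi,e_n\rangle}\langle\psi,e_m\rangle/\|\psi\|^2$ are measurable.
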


\begin{proof}
This is immediate from the functoriality and unitarity of $U$.
\end{proof}

For the purposes of this paper, honest unitary representations will be the primary object. Projective representations are nevertheless important for physical reasons: they arise naturally from Wigner symmetry principles and, in many cases, can be encoded by central $U(1)$-extensions or multiplier cocycles of the underlying groupoid.

\begin{remark}
\label{rem:projective-vs-unitary}
The representation theory of the Poincar\'e and Wigner groupoids, including projective aspects and orbitwise classification results, provides the conceptual background for interpreting the fibers $\mathcal H_x$ as local quantum state spaces and the arrows $U_\alpha$ as coherent transport maps between them. In the present paper, however, we only require the functorial unitary framework of Definition \ref{def:unitary-groupoid-representation}; the more refined classification theory will be invoked only heuristically.
\end{remark}

\subsection{Continuous groupoids as measure groupoids}
\label{subsec:continuous-groupoid-preliminaries}

Let $\Gamma\rightrightarrows M$ be a second countable Hausdorff continuous groupoid with source and target maps
\begin{equation}
s,t:\Gamma\to M.
\end{equation}
We assume throughout that $\Gamma$ is endowed with a continuous left Haar system\footnote{This is an extra assumption for general continuous groupoids, though for Lie groupoids a smooth Haar system exists if $\Gamma$ and $M$ are second countable and Hausdorff \cite{renault_locally_1980}.}
\begin{equation}
\nu=\{\nu^x\}_{x\in M},
\end{equation}
that is, for each $x\in M$, $\nu^x : \Bor(\Gamma^x) \to [0,\infty]$ is a positive Radon measure on the target fiber, with support
\begin{equation}
    \supp \nu^x =\Gamma^x,
\end{equation}
depending continuously on $x$, and satisfying the left-invariance property
\begin{equation}
\int_{\Gamma^{s(\alpha)}} f(\alpha\circ \beta)\, d\nu^{s(\alpha)}(\beta)
=
\int_{\Gamma^{t(\alpha)}} f(\gamma)\, d\nu^{t(\alpha)}(\gamma),
\qquad
\alpha\in \Gamma,
\end{equation}
for every $f \in C_c(\Gamma)$ (also written as $\alpha \circ \nu^{s(\alpha)} = \nu^{t(\alpha)}$). This is shown pictorially in Fig.~\ref{fig:groupoids}.

\begin{figure*}
    \centering
    \begin{tikzpicture}[x=0.75pt,y=0.75pt,yscale=-1,xscale=1]

\draw    (32.27,258.53) .. controls (60.46,237.39) and (339.36,234.46) .. (519.06,242.98) .. controls (594.31,246.54) and (652.17,252.12) .. (666.93,259.2) ;
\draw  [fill={rgb, 255:red, 0; green, 0; blue, 0 }  ,fill opacity=1 ] (188.27,239.7) .. controls (188.27,238.17) and (189.51,236.93) .. (191.03,236.93) .. controls (192.56,236.93) and (193.8,238.17) .. (193.8,239.7) .. controls (193.8,241.23) and (192.56,242.47) .. (191.03,242.47) .. controls (189.51,242.47) and (188.27,241.23) .. (188.27,239.7) -- cycle ;
\draw  [fill={rgb, 255:red, 0; green, 0; blue, 0 }  ,fill opacity=1 ] (340.93,238.37) .. controls (340.93,236.84) and (342.17,235.6) .. (343.7,235.6) .. controls (345.23,235.6) and (346.47,236.84) .. (346.47,238.37) .. controls (346.47,239.89) and (345.23,241.13) .. (343.7,241.13) .. controls (342.17,241.13) and (340.93,239.89) .. (340.93,238.37) -- cycle ;
\draw  [fill={rgb, 255:red, 0; green, 0; blue, 0 }  ,fill opacity=1 ] (510.93,242.37) .. controls (510.93,240.84) and (512.17,239.6) .. (513.7,239.6) .. controls (515.23,239.6) and (516.47,240.84) .. (516.47,242.37) .. controls (516.47,243.89) and (515.23,245.13) .. (513.7,245.13) .. controls (512.17,245.13) and (510.93,243.89) .. (510.93,242.37) -- cycle ;
\draw    (49.33,150.47) -- (48.31,192.93) ;
\draw [shift={(48.27,194.93)}, rotate = 271.37] [color={rgb, 255:red, 0; green, 0; blue, 0 }  ][line width=0.75]    (10.93,-3.29) .. controls (6.95,-1.4) and (3.31,-0.3) .. (0,0) .. controls (3.31,0.3) and (6.95,1.4) .. (10.93,3.29)   ;
\draw    (57.33,151.13) -- (56.31,193.6) ;
\draw [shift={(56.27,195.6)}, rotate = 271.37] [color={rgb, 255:red, 0; green, 0; blue, 0 }  ][line width=0.75]    (10.93,-3.29) .. controls (6.95,-1.4) and (3.31,-0.3) .. (0,0) .. controls (3.31,0.3) and (6.95,1.4) .. (10.93,3.29)   ;
\draw    (40.93,80.27) .. controls (64.27,90.27) and (180.93,202.27) .. (191.03,239.7) ;
\draw    (141.6,23.6) .. controls (162.27,33.6) and (333.6,200.93) .. (343.7,238.37) ;
\draw    (311.6,27.6) .. controls (332.93,37.6) and (503.6,204.93) .. (513.7,242.37) ;
\draw    (191.03,239.7) .. controls (198.29,223.71) and (233.95,189) .. (277.24,152.14) .. controls (306.96,126.83) and (340.28,100.51) .. (370.5,78.53) .. controls (404.83,53.55) and (435.15,34.17) .. (451.6,28.27) ;
\draw    (343.7,238.37) .. controls (352.16,219.72) and (390.02,180.08) .. (433.07,141.49) .. controls (481.07,98.47) and (535.52,56.77) .. (562.93,46.93) ;
\draw    (513.7,242.37) .. controls (514.36,240.91) and (515.08,239.4) .. (515.84,237.86) .. controls (535.74,197.58) and (589.52,128.91) .. (639.6,110.93) ;
\draw    (343.7,238.37) .. controls (307.67,212.79) and (230.37,218.68) .. (193.25,238.47) ;
\draw [shift={(191.03,239.7)}, rotate = 330.14] [fill={rgb, 255:red, 0; green, 0; blue, 0 }  ][line width=0.08]  [draw opacity=0] (8.93,-4.29) -- (0,0) -- (8.93,4.29) -- cycle    ;
\draw    (513.7,242.37) .. controls (477.67,216.79) and (383.73,216.28) .. (345.94,237.07) ;
\draw [shift={(343.7,238.37)}, rotate = 328.57] [fill={rgb, 255:red, 0; green, 0; blue, 0 }  ][line width=0.08]  [draw opacity=0] (8.93,-4.29) -- (0,0) -- (8.93,4.29) -- cycle    ;
\draw  [fill={rgb, 255:red, 0; green, 0; blue, 0 }  ,fill opacity=1 ] (274.48,152.14) .. controls (274.48,150.61) and (275.72,149.37) .. (277.24,149.37) .. controls (278.77,149.37) and (280.01,150.61) .. (280.01,152.14) .. controls (280.01,153.66) and (278.77,154.9) .. (277.24,154.9) .. controls (275.72,154.9) and (274.48,153.66) .. (274.48,152.14) -- cycle ;
\draw  [fill={rgb, 255:red, 0; green, 0; blue, 0 }  ,fill opacity=1 ] (430.31,141.49) .. controls (430.31,139.96) and (431.54,138.72) .. (433.07,138.72) .. controls (434.6,138.72) and (435.84,139.96) .. (435.84,141.49) .. controls (435.84,143.02) and (434.6,144.26) .. (433.07,144.26) .. controls (431.54,144.26) and (430.31,143.02) .. (430.31,141.49) -- cycle ;
\draw  [fill={rgb, 255:red, 0; green, 0; blue, 0 }  ,fill opacity=1 ] (367.73,78.53) .. controls (367.73,77) and (368.97,75.76) .. (370.5,75.76) .. controls (372.02,75.76) and (373.26,77) .. (373.26,78.53) .. controls (373.26,80.06) and (372.02,81.3) .. (370.5,81.3) .. controls (368.97,81.3) and (367.73,80.06) .. (367.73,78.53) -- cycle ;
\draw    (483.87,98.27) -- (346.87,97.61) ;
\draw [shift={(343.87,97.6)}, rotate = 0.27] [fill={rgb, 255:red, 0; green, 0; blue, 0 }  ][line width=0.08]  [draw opacity=0] (8.93,-4.29) -- (0,0) -- (8.93,4.29) -- cycle    ;
\draw  [dash pattern={on 0.84pt off 2.51pt}]  (188.27,239.7) .. controls (219.87,200.93) and (165.2,190.93) .. (193.8,239.7) ;
\draw  [dash pattern={on 0.84pt off 2.51pt}]  (340.93,238.37) .. controls (372.53,199.6) and (317.87,189.6) .. (346.47,238.37) ;
\draw    (171.87,156.93) .. controls (153.76,165.99) and (112.44,204.52) .. (108.16,234.21) ;
\draw [shift={(107.87,236.93)}, rotate = 273.81] [fill={rgb, 255:red, 0; green, 0; blue, 0 }  ][line width=0.08]  [draw opacity=0] (8.93,-4.29) -- (0,0) -- (8.93,4.29) -- cycle    ;
\draw    (541.2,163.6) .. controls (564.76,166.15) and (609.59,214.9) .. (616.48,236.71) ;
\draw [shift={(617.2,239.6)}, rotate = 260.54] [fill={rgb, 255:red, 0; green, 0; blue, 0 }  ][line width=0.08]  [draw opacity=0] (8.93,-4.29) -- (0,0) -- (8.93,4.29) -- cycle    ;
\draw  [dash pattern={on 4.5pt off 4.5pt}] (331.23,240.35) .. controls (331.39,236) and (402.87,234.99) .. (490.9,238.11) .. controls (578.93,241.23) and (650.16,247.29) .. (650.01,251.65) .. controls (649.85,256.01) and (578.37,257.02) .. (490.34,253.9) .. controls (402.31,250.78) and (331.08,244.71) .. (331.23,240.35) -- cycle ;
\draw  [dash pattern={on 0.84pt off 2.51pt}]  (510.27,243.03) .. controls (541.87,204.27) and (487.2,194.27) .. (515.8,243.03) ;
\draw    (503.2,73.6) .. controls (500.53,82.27) and (510.53,90.27) .. (515.87,84.27) ;
\draw    (577.87,142.27) .. controls (575.2,150.93) and (585.2,158.93) .. (590.53,152.93) ;
\draw    (401.2,45.6) .. controls (398.53,54.27) and (408.53,62.27) .. (413.87,56.27) ;
\draw    (523.2,61.6) .. controls (531.87,54.93) and (540.53,62.93) .. (535.87,72.27) ;
\draw    (592.53,128.93) .. controls (601.2,122.27) and (609.87,130.27) .. (605.2,139.6) ;
\draw    (421.2,34.93) .. controls (429.87,28.27) and (438.53,36.27) .. (433.87,45.6) ;

\draw (43.33,198.87) node [anchor=north west][inner sep=0.75pt]    {$M$};
\draw (48.67,133.53) node [anchor=north west][inner sep=0.75pt]    {$\Gamma $};
\draw (40,166.53) node [anchor=north west][inner sep=0.75pt]  [font=\scriptsize]  {$s$};
\draw (59.33,167.87) node [anchor=north west][inner sep=0.75pt]  [font=\scriptsize]  {$t$};
\draw (509.33,245.07) node [anchor=north west][inner sep=0.75pt]  [font=\scriptsize]  {$x$};
\draw (340,243.73) node [anchor=north west][inner sep=0.75pt]  [font=\scriptsize]  {$y$};
\draw (188.27,242.1) node [anchor=north west][inner sep=0.75pt]  [font=\scriptsize]  {$z$};
\draw (262.93,206.1) node [anchor=north west][inner sep=0.75pt]  [font=\scriptsize]  {$\beta $};
\draw (430.27,207.43) node [anchor=north west][inner sep=0.75pt]  [font=\scriptsize]  {$\alpha $};
\draw (284.1,145.1) node [anchor=north west][inner sep=0.75pt]  [font=\scriptsize]  {$\Gamma _{y}^{z} \ni \beta $};
\draw (310.43,71.3) node [anchor=north west][inner sep=0.75pt]  [font=\scriptsize]  {$\beta \circ \alpha \in \Gamma _{x}^{z}$};
\draw (421.43,81.1) node [anchor=north west][inner sep=0.75pt]  [font=\scriptsize]  {$L_{\beta }$};
\draw (80.67,253.07) node [anchor=north west][inner sep=0.75pt]    {$M$};
\draw (34.67,64.2) node [anchor=north west][inner sep=0.75pt]  [font=\scriptsize]  {$\Gamma _{z} \equiv s^{-1}( z)$};
\draw (154,16.2) node [anchor=north west][inner sep=0.75pt]  [font=\scriptsize]  {$\Gamma _{y} \equiv s^{-1}( y)$};
\draw (317.33,16.07) node [anchor=north west][inner sep=0.75pt]  [font=\scriptsize]  {$\Gamma _{x} \equiv s^{-1}( x)$};
\draw (452.67,16.07) node [anchor=north west][inner sep=0.75pt]  [font=\scriptsize]  {$\Gamma ^{z} \equiv t^{-1}( z)$};
\draw (564,34.07) node [anchor=north west][inner sep=0.75pt]  [font=\scriptsize]  {$\Gamma ^{y} \equiv t^{-1}( y)$};
\draw (612.67,96.07) node [anchor=north west][inner sep=0.75pt]  [font=\scriptsize]  {$\Gamma ^{x} \equiv t^{-1}( x)$};
\draw (186.67,190.2) node [anchor=north west][inner sep=0.75pt]  [font=\scriptsize]  {$\Gamma _{z}^{z}$};
\draw (336,186.87) node [anchor=north west][inner sep=0.75pt]  [font=\scriptsize]  {$\Gamma _{y}^{y}$};
\draw (111.33,192.07) node [anchor=north west][inner sep=0.75pt]  [font=\scriptsize]  {$s$};
\draw (604,195.4) node [anchor=north west][inner sep=0.75pt]  [font=\scriptsize]  {$t$};
\draw (484.61,259.83) node [anchor=north west][inner sep=0.75pt]  [font=\scriptsize]  {$\mathcal{O}_{x}$};
\draw (505.33,191.53) node [anchor=north west][inner sep=0.75pt]  [font=\scriptsize]  {$\Gamma _{x}^{x}$};
\draw (442.1,134.43) node [anchor=north west][inner sep=0.75pt]  [font=\scriptsize]  {$\Gamma _{x}^{y} \ni \alpha $};
\draw (594.67,143.73) node [anchor=north west][inner sep=0.75pt]  [font=\scriptsize]  {$\nu ^{x}$};
\draw (522.67,73.73) node [anchor=north west][inner sep=0.75pt]  [font=\scriptsize]  {$\nu ^{y}$};
\draw (418,47.73) node [anchor=north west][inner sep=0.75pt]  [font=\scriptsize]  {$\nu ^{z}$};

\end{tikzpicture}
    \caption{The groupoid $\Gamma \rightrightarrows M$, its arrows and associated spaces. When endowed with a continuous left Haar system $\nu=\{\nu^x\}_{x \in M}$, we have that $\nu^x : \Bor(\Gamma^x) \to [0,\infty]$ takes Borel subsets of the target fiber $\Gamma^x$.}
\label{fig:groupoids}
\end{figure*}

Let $\mu$ be a $\sigma$-finite quasi-invariant Radon measure on the base $M$. Then $\nu$ and $\mu$ determine a measure $\lambda$ on $\Gamma$ by
\begin{equation}
\int_\Gamma f(\alpha)\, d\lambda(\alpha)
:=
\int_M \left( \int_{\Gamma^x} f(\alpha)\, d\nu^x(\alpha)\right) d\mu(x),
\qquad
f\in C_c(\Gamma).
\end{equation}
Quasi-invariance means that $\lambda$ is mutually absolutely continuous with its push-forward under inversion. Equivalently, there exists a measurable modular function
\begin{equation}
\delta:\Gamma\to \mathbb R_{>0}
\end{equation}
such that
\begin{equation}
\frac{d\lambda^{-1}}{d\lambda}(\alpha)=\delta(\alpha)^{-1},
\end{equation}
where $\lambda^{-1}$ denotes the image measure of $\lambda$ under $\alpha\mapsto \alpha^{-1}$, $\iota_* \lambda = \lambda^{-1}$.

\begin{definition}
\label{def:measure-groupoid-section4}
The data $(\Gamma,\nu,\delta,\mu,\lambda)$
will be called the \emph{measure groupoid} associated with $(\Gamma\rightrightarrows M,\nu,\mu)$.
\end{definition}

Associated with the measure groupoid is the Hilbert space $L^2(\Gamma,\lambda)$
and the corresponding groupoid von Neumann algebra $V_\lambda(\Gamma)$ obtained from the left-regular convolution representation (see \cite{connes_noncommutative_1994,landsman_mathematical_1998,ciaglia_groupoidal_2024} for the general construction). We shall not need the full construction here, but only the fact that it exists canonically once the measure-groupoid data are fixed.

\begin{remark}
\label{rem:measure-groupoid-operator}
The measure-groupoid formalism provides a natural candidate for the ambient algebra of observables associated with the kinematics of $\Gamma$. In the present paper we work instead with operator fields on direct-integral Hilbert spaces, since this is the most transparent setting in which to formulate the operational quantum reference frame formalism. The two viewpoints are expected to be closely related; in the case of action groupoids the groupoid von Neumann algebra reduces to the usual crossed product \cite{renault_c-algebra_1980}.
\end{remark}

\begin{definition}
    \label{def:nu source compatible mu}
    Let $(\Gamma \rightrightarrows M,\nu,\mu)$ be a second countable Hausdorff continuous groupoid endowed with a continuous left Haar system. Then $\nu$ is said to be \emph{source-compatible with $\mu$} if for $\mu$-almost every $x \in M$,
    \begin{equation}
        s_* \nu^x \ll \mu
    \end{equation}
    where $s_* \nu^x(N) := \nu^x(s^{-1}(N) \cap \Gamma^x)$.
\end{definition}

\begin{proposition}
    \label{prop:transitive groupoid implies nu compatible}
    Let $(\Gamma \rightrightarrows M,\nu,\mu)$ be a transitive second countable Hausdorff continuous groupoid endowed with a continuous left Haar system $\nu$ and suppose $\mu$ is a nonzero $\sigma$-finite quasi-invariant Radon measure. Then $\nu$ is source-compatible with $\mu$.
\end{proposition}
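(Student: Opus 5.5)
The plan is to transfer absolute continuity from $\lambda$ to each $s_*\nu^x$, using quasi-invariance of $\mu$ together with transitivity. Fix a Borel $N\subset M$ with $\mu(N)=0$. The aim is to show $\nu^x(s^{-1}(N)\cap\Gamma^x)=0$ for $\mu$-almost every $x$, and then to upgrade this to a single conull set of $x$ that works for all null $N$ at once.

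\emph{Step 1 (null set of arrows).} The set $A:=s^{-1}(N)\subset\Gamma$ is $\lambda$-null. To see this, note $\iota(A)=t^{-1}(N)$. By the definition of $\lambda$,
\begin{equation*}
\lambda(t^{-1}(N))=\int_N \nu^x(\Gamma^x)\,d\mu(x)=0,
\end{equation*}
where the integral over $N$ already vanishes since $\mu(N)=0$; the fibre masses $\nu^x(\Gamma^x)$ may be infinite, but the integral is still zero, and formally this is justified by monotone convergence over $C_c$ approximations using $\sigma$-compactness. Quasi-invariance, $\lambda^{-1}\sim\lambda$, then gives $\lambda(A)=\lambda^{-1}(t^{-1}(N))=0$.

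\emph{Step 2 (disintegration).} Again by the definition of $\lambda$,
\begin{equation*}
0=\lambda(A)=\int_M \nu^x(s^{-1}(N)\cap\Gamma^x)\,d\mu(x),
\end{equation*}
so $s_*\nu^x(N)=0$ for $\mu$-a.e.\ $x$.

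\emph{Step 3 (uniformity in $N$).} The exceptional set from Step 2 depends on $N$, so it must be removed. Transitivity together with left invariance does this. For $\alpha:x\to y$, left translation $L_\alpha$ maps $\Gamma^x$ onto $\Gamma^y$ and preserves sources, since $s(\alpha\circ\beta)=s(\beta)$. Left invariance gives $\alpha\circ\nu^x=\nu^y$, hence
\begin{equation*}
s_*\nu^y=s_*\nu^x .
\end{equation*}
By transitivity, every point is reached from every other, so $\sigma:=s_*\nu^x$ is a single measure independent of $x$. Step 2 then shows that for each $\mu$-null $N$, $\sigma(N)=0$ provided the good set of $x$ is nonempty. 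That set is $\mu$-conull, and $\mu\neq 0$, so it is nonempty. Hence $\sigma\ll\mu$, and this holds for every $x$, which in particular gives the $\mu$-a.e.\ statement required.

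\emph{Main obstacle.} The delicate point is Step 1: making rigorous the use of $\lambda$ on Borel sets rather than on $C_c$ functions, and the meaning of quasi-invariance for infinite measures. I would handle this by extending the defining formula of $\lambda$ to nonnegative Borel functions via monotone class arguments, using the $\sigma$-finiteness of $\mu$ and the Radon/$\sigma$-compactness of $\Gamma$ from second countability. Step 3 is clean once left invariance is applied to indicator functions, which requires the same extension from $C_c$ to Borel functions.
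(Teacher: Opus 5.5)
Your proof is correct and follows essentially the same route as the paper. It uses the same ingredients: $\lambda(t^{-1}(N))=0$, quasi-invariance to transfer this to $s^{-1}(N)$, the fibred formula for $\lambda$, and transitivity plus left invariance to make $s_*\nu^x$ independent of $x$. The only difference is the order: the paper establishes the constancy $s_*\nu^x=\kappa$ first and then reads off $\kappa(N)=0$ from $\int_M\kappa(N)\,d\mu=0$ with $\mu\neq0$, whereas you obtain $\mu$-a.e.\ vanishing first and then use constancy to extend it to every $x$.
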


\begin{proof}
    Let $x,y\in M$. By transitivity, there exists $\alpha:x\to y$. Left invariance of the Haar system gives
    \begin{equation}
        (L_\alpha)_*\nu^x=\nu^y,
    \end{equation}
    where $L_\alpha : \Gamma^{s(\alpha)} \to \Gamma^{t(\alpha)}$, $L_\alpha \beta = \alpha \circ \beta$, while $s\circ L_\alpha=s$ on $\Gamma^x$. Consequently,
\begin{equation}
    s_*\nu^y = s_*(L_\alpha)_*\nu^x = (s\circ L_\alpha)_*\nu^x = s_*\nu^x.
\end{equation}
    Thus there exists a measure $\kappa$ on $M$ such that $s_*\nu^x=\kappa$ for every $x\in M$. Let $N\in\Bor(M)$ satisfy $\mu(N)=0$. Then
    \begin{equation}
        \lambda(t^{-1}(N))= \int_M \nu^x(t^{-1}(N)\cap\Gamma^x)\,d\mu(x)=  \int_N\nu^x(\Gamma^x)\,d\mu(x)=0.
    \end{equation}
    But
    \begin{equation}
        \lambda^{-1}\bigl(s^{-1}(N)\bigr) =\lambda\bigl(t^{-1}(N)\bigr)=0.
    \end{equation}
    Mutual absolute continuity, and in particular $\lambda \ll \lambda^{-1}$ applied to $s^{-1}(N)$, therefore implies
    \begin{equation}
        0=\lambda\bigl(s^{-1}(N)\bigr)=\int_M s_*\nu^x(N)\,d\mu(x)=\int_M\kappa(N)\,d\mu(x).
    \end{equation}
    Since $\kappa(N)$ is a nonnegative constant and $\mu$ is nonzero, we obtain $\kappa(N)=0$. Thus $\kappa\ll\mu$.
\end{proof}

\subsection{Systems over a continuous groupoid}
\label{subsec:systems-over-groupoid}

Let
\begin{equation}
\pi_{\S} : \scrhis =\bigsqcup_{x\in M} \hisx \to M
\end{equation}
be a measurable field of separable Hilbert spaces over $M$, and let  $V:\Gamma \curvearrowright \scrhis$ be a unitary representation of $\Gamma$ on $\scrhis$, i.e. a measurable assignment
\begin{equation}
\alpha:x\to y \longmapsto V_\alpha:\hisx\to \his^y \, ,
\end{equation}
such that
\begin{equation}
V_{\alpha\circ \beta}=V_\alpha V_\beta,
\qquad
V_{1_x}=\id_{\hisx}.
\end{equation}
The associated direct-integral Hilbert space is
\begin{equation}
\his := \int_M^\oplus \hisx\, d\mu(x).
\end{equation}

Likewise, if
\begin{equation}
\pi_{\R}: \scrhir =\bigsqcup_{x\in M} \hirx\to M
\end{equation}
carries a unitary representation
\begin{equation}
U:\Gamma \curvearrowright \scrhir,
\end{equation}
we write
\begin{equation}
\hir := \int_M^\oplus \hirx\, d\mu(x).
\end{equation}

The joint system-frame object is represented by the fiber-wise tensor product bundle
\begin{equation}
\scrhis \boxtimes \scrhir
:=
\bigsqcup_{x\in M} (\hisx \otimes \hirx)\to M,
\end{equation}
with induced representation
\begin{equation}
W_\alpha:=V_\alpha\otimes U_\alpha:
\his^{s(\alpha)}\otimes \hir^{s(\alpha)}
\to
\his^{t(\alpha)}\otimes \hir^{t(\alpha)}.
\end{equation}
Its direct-integral Hilbert bundle is 
\begin{equation}
    \scrhis\boxtimes \scrhir : \bigsqcup_{x \in M} \hisx \otimes \hirx \to M
\end{equation}
with associated direct-integral Hilbert space
\begin{equation}
\hisr
=
\int_M^\oplus (\hisx\otimes \hirx)\, d\mu(x).
\end{equation}
Note that in general $\hisr \neq \his \otimes \hir$.

\begin{definition}
    Let $\pi : \scrhi =\bigsqcup_{x\in M} \hix \to M$ be a measurable field of separable Hilbert spaces over $M$. Let
\begin{equation}
    \A(\scrhi) := \int^\oplus_M \bhx d\mu(x)= \left\{A = \int^\oplus_M A_x \, d\mu(x) \mid x \mapsto A_x \text{ measurable}, \norm{A}_{\A(\scrhi)} := \esssup_x \norm{A_x}_{\bhx} < \infty \right\}.
\end{equation} This is a von Neumann algebra \cite{takesaki_tensor_1979} with a canonical semifinite normal trace $\Tr_{\hi}[\cdot] = \int_M \Tr_{\hix}[\cdot] d\mu$. 
\end{definition}
Given $\scrhis \boxtimes \scrhir$, we can similarly define
\begin{equation}
\A(\scrhis \boxtimes \scrhir) := \int_M^\oplus \B(\hisx\otimes \hirx)\, d\mu(x).
\end{equation}

Note that $\A(\scrhis) \subset \mathcal{B}(\his)$, $\A(\scrhir) \subset \mathcal{B}(\hir)$ and $\A(\scrhis\boxtimes\scrhir) \subset \mathcal{B}(\hisr)$, the latter two being typically independent of $\bhsr$. Kinematically, we have that some operators are ``special" as soon as we assume a groupoid symmetry structure: not every possible operator lies in these algebras.

\begin{definition}
    We call $ \omega =(\omega_x \in \statex) \in \A(\scrhi)$ \emph{measurable operator fields of states}. We write
    \begin{equation}
        \state := \left\{\int^\oplus_M \omega_x d\mu(x) \mid \omega_x \in \statex \; \text{ for } \mu-a.e. \, x \in M\right\} \subset \A(\scrhi) 
    \end{equation}
    as the set of such measurable operator fields of states.
\end{definition}

\begin{remark}
    Note that $A \omega \in \A(\scrhi)$ for all $A \in \A(\scrhi)$ and $\omega \in \state$.
\end{remark}

\begin{definition}
    Let $\pi : \scrhi =\bigsqcup_{x\in M} \hix \to M$ be a measurable field of separable Hilbert spaces over $M$. Let \cite{takesaki_tensor_1979}
    \begin{equation}
    \A(\scrhi)_* := \left\{\chi = \int^\oplus_M \chi_x \, d\mu(x) \mid x \mapsto \chi_x \in \thx \text{ measurable}, \norm{\chi}_1 := \int_M \norm{\chi_x}_1 \, d\mu(x) < \infty\right\}
\end{equation}
with trace norm $\norm{\chi}_1 \equiv \Tr_{\hi}[\abs{\chi}] = \int_M \Tr[\abs{\chi_x}] \, d\mu(x)$ and $\abs{\chi} = \int^\oplus \abs{\chi_x} \, d\mu(x)$, and
\begin{equation}
    \inner{A}{\chi} := \Tr[A \chi] = \int_M \Tr[A_x \chi_x] d\mu(x).
\end{equation}
    $\A(\scrhi)_*$ is the predual of $\A(\scrhi)$ \cite{takesaki_tensor_1979}.
\end{definition}

\begin{definition}
    \emph{Normal states of $\A(\scrhi)$} can then be defined as positive decomposable $\rho = \int^\oplus_M \rho_x d\mu(x) \in \A(\scrhi)_*$ with $\Tr_\hi[\rho]=1$. Write
    \begin{equation}
        \normalstate := \left\{\eta \in \A(\scrhi)_* \mid \norm{\eta}_1 = 1, \eta \geq 0\right\} \subset \A(\scrhi)_*
    \end{equation}
    for the set of such normal states.
\end{definition}

\begin{remark}
    Note that $\state \neq \normalstate$ in general; in particular, $\normalstate \subset \A(\scrhi)_* \not\subset \bh$ in general while $\state \subset \A(\scrhi) \subset \bh$. Moreover, for $\rho = (\rho_x) \in \normalstate$, we have, for $\mu$-almost every $x \in M$, $\rho_x \geq 0$, but if  $\Tr_{\hix}[\rho_x] = 1$ for $\mu$-almost every $x \in M$ then $\mu(M) = 1$. If $\mu(M) = \infty$ however (as is the case for the volume measure on Minkowski for example), a constant field of fiber-states $\rho = (\rho_x)$ with each $\rho_x \in \statex$ is not in $\A(\scrhi)_*$.
\end{remark}

\begin{remark}
    We will interpret states in $\state$ as characterising \emph{preparation states} (of the frame) while states in $\normalstate$ as characterising \emph{evaluation states} (of the system) extracting probabilities from (relational local) observables.
\end{remark}

Continuous groupoids become interesting when we want to control the continuity of $x \mapsto A_x$ of operator fields.

\begin{definition}
Let $\pi:\scrhi\to M$ be a continuous field of separable Hilbert spaces.
A bounded operator field $A \in \A(\scrhi)$ is called \emph{ultraweakly continuous} if, for every trace-norm continuous section
\begin{equation}
\chi:M\to\bigsqcup_{x\in M}\thx, \qquad x\mapsto\chi_x,
\end{equation}
the scalar function $x\mapsto \Tr_{\hix}[\chi_x A_x]$ is continuous.
We write $\A(\scrhi)_{\mathrm{uw}}$ for the space of such fields.
\end{definition}
Again, given $\scrhis \boxtimes \scrhir$, we can similarly define $\A(\scrhis \boxtimes \scrhir)_\mathrm{uw}$ as the space of ultraweakly continuous operator fields in $\A(\scrhis \boxtimes \scrhir)$.

\begin{definition}
\label{def:groupoid-invariant-operator-field}
An operator field $B=(B_x)_{x\in M}\in \A(\scrhis\boxtimes \scrhir)$ (resp. $\A(\scrhis \boxtimes \scrhir)_\mathrm{uw}$) is called \emph{$\Gamma$-invariant} if
\begin{equation}
W_\alpha\, B_{s(\alpha)}\, W_\alpha^\dagger = B_{t(\alpha)}
\qquad
\text{for all }\alpha\in \Gamma.
\end{equation}
The set of all such invariant fields will be denoted by $\A(\scrhis \boxtimes \scrhir)^\Gamma$ (resp. $\A(\scrhis \boxtimes \scrhir)_{\mathrm{uw}}^\Gamma$).
\end{definition}

This is the groupoid analogue of the invariant algebra $\bhsr^G$ in the group-based theory.

\begin{definition}
\label{def:isotropy-invariant-fields}
An operator field $A=(A_x)_{x\in M}\in \A(\scrhi)$ (resp. $\A(\scrhi)_{\mathrm{uw}}$) is called \emph{isotropy-invariant} if
\begin{equation}
V_\gamma\, A_x\, V_\gamma^\dagger = A_x
\qquad
\text{for every }\gamma\in \Gamma_x^x,\ x\in M.
\end{equation}
The set of isotropy-invariant fields is denoted by $\A(\scrhi)^{\mathrm{iso}}$ (resp. $\A(\scrhi)_{\mathrm{uw}}^{\mathrm{iso}}$).
\end{definition}
This is the groupoid analogue of the invariant algebra $\bh^H$ for $H$ a subgroup of $G$ in the group-based theory of homogeneous spaces $G/H$.
The point of Definition \ref{def:isotropy-invariant-fields} is that if $A$ is isotropy-invariant and $\alpha,\beta:x\to y$, then
\begin{equation}
V_\alpha A_x V_\alpha^\dagger = V_\beta A_x V_\beta^\dagger.
\end{equation}
Indeed, $\gamma:=\beta^{-1}\circ\alpha\in\Gamma_x^x$, so $V_\gamma A_x V_\gamma^\dagger=A_x$ by isotropy-invariance, which gives $V_\alpha A_x V_\alpha^\dagger = V_\beta V_\gamma A_x V_\gamma^\dagger V_\beta^\dagger = V_\beta A_x V_\beta^\dagger$. Thus such an operator may be transported along arrows without ambiguity, just as an $H$-invariant operator on a homogeneous space $G/H$ may be transported by the group action.

\subsection{Bisections}

Bisections can be used to define actions over the whole direct integral algebra, rather than fiber-wise, as follows.

\begin{definition}
    Let $(\Gamma\rightrightarrows M,\nu,\mu)$ be a Lie groupoid, and let $U : \Gamma \curvearrowright \scrhi$ be a unitary representation of $\Gamma$ with $\mathcal{H} = \int_M^\oplus \mathcal{H}_x d\mu (x)$. For all $b \in \mathrm{Bis}(\Gamma)$ such that $(\varphi_b)_* \mu \sim \mu$\footnote{That is, $(\varphi_b)_* \mu$ and $\mu$ are mutually absolutely continuous: $(\varphi_b)_* \mu(N) = 0 \Leftrightarrow \mu(N)=0$ for all $N \in \Bor(M)$.} where $\varphi := t \circ b$, we define \cite{kaniuth_induced_2012}
    \begin{equation}
        (\mathbf{U}_{b} \Psi)_x := \left(\frac{d((\varphi_b)_*\mu)}{d\mu}(x) \right)^{1/2} U_{b(\varphi^{-1}_b(x))} \Psi(\varphi_b^{-1}(x)) \in \hix, \qquad \Psi \in \hi.
    \end{equation}
    Then $\mathbf{U}_b \in \U(\hi)$ and $b \mapsto \mathbf{U}_b$ is a unitary representation of $\mathrm{Bis}(\Gamma)$. In particular, $\mathbf{U}_b^\dagger = \mathbf{U}_{b^{-1}}$, i.e.
    \begin{equation}
        (\mathbf{U}^\dagger_{b} \Psi)_x = \left(\frac{d((\varphi^{-1}_b)_*\mu)}{d\mu}(x) \right)^{1/2} U^\dagger_{b(x)} \Psi(\varphi_b(x)) \in \hix, \qquad \Psi \in \hi
    \end{equation}
    where we used the fact that $\varphi_{b^{-1}}=\varphi_b^{-1}$ and $b^{-1}(\varphi_b(x)) = b(x)^{-1}$ and $U_{b(x)^{-1}}=U_{b(x)}^\dagger$.
    If $\mathbf{U}_b \in \U(\hir)$ and $\mathbf{V}_b \in \U(\his)$, we define $\mathbf{W}_b \in \U(\hisr)$ as
    \begin{align}
        (\mathbf{W}_b \Xi)_x &:= \left(\frac{d((\varphi_b)_*\mu)}{d\mu}(x) \right)^{1/2} (V_{b(\varphi^{-1}_b(x))} \otimes U_{b(\varphi^{-1}_b(x))}) \Xi(\varphi_b^{-1}(x)) \\ &= \left(\frac{d((\varphi_b)_*\mu)}{d\mu}(x) \right)^{1/2} W_{b(\varphi^{-1}_b(x))} \Xi(\varphi_b^{-1}(x)) \in \hisx\otimes\hirx
    \end{align}
    for all $\Xi \in \hisr$.
\end{definition}

\begin{definition}
    A Lie groupoid $(\Gamma \rightrightarrows M,\nu,\mu)$ is called \emph{bisection-admissible} if $(\varphi_b)_* \mu \sim \mu$ for all $b \in \mathrm{Bis}(\Gamma)$.
\end{definition}

\begin{definition}
    An operator field $A \in \A(\scrhi)$ (resp. $\A(\scrhi)_{\mathrm{uw}}$) is called \emph{bisection-invariant} if
    \begin{equation}
        \mathbf{V}_b A \mathbf{V}_b^\dagger = A \qquad \text{for every } b \in \mathrm{Bis}(\Gamma).
    \end{equation}
    The set of bisection-invariant fields is denoted by $\A(\scrhi)^{\mathrm{Bis}(\Gamma)}$. Likewise, the set of ultraweakly continuous bisection-invariant fields is denoted by $\A(\scrhi)^{\mathrm{Bis}(\Gamma)}_\mathrm{uw}$. 
\end{definition}

\begin{lemma}
    \label{lem:Gamma invariance implies bisection invariance}
    Let $(\Gamma\rightrightarrows M,\nu,\mu)$ be a bisection-admissible Lie groupoid, and let $V : \Gamma \curvearrowright \scrhi$ be a unitary representation of $\Gamma$. Then $\A(\scrhi)^\Gamma \subset \A(\scrhi)^{\mathrm{Bis}(\Gamma)}$, i.e. $A \in \A(\scrhi)^\Gamma \Rightarrow A \in \A(\scrhi)^{\mathrm{Bis}(\Gamma)}$.
\end{lemma}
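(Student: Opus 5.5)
The proof is a direct fibrewise computation. Fix $A = \int^\oplus_M A_x\,d\mu(x) \in \A(\scrhi)^\Gamma$, so that $V_\alpha A_{s(\alpha)} V_\alpha^\dagger = A_{t(\alpha)}$ for every $\alpha \in \Gamma$. Fix also a bisection $b \in \mathrm{Bis}(\Gamma)$ and write $\varphi = \varphi_b = t\circ b$. Bisection-admissibility guarantees $(\varphi_b)_*\mu \sim \mu$, so $\mathbf{V}_b$ is a well-defined unitary with $\mathbf{V}_b^\dagger = \mathbf{V}_{b^{-1}}$, as stated in the definition of $\mathbf{U}_b$. It then suffices to show $\mathbf{V}_b A = A \mathbf{V}_b$ on $\hi$, i.e. $(\mathbf{V}_b A\Psi)_x = (A\mathbf{V}_b\Psi)_x$ for $\mu$-a.e. $x$ and every $\Psi \in \hi$. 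This is equivalent to $\mathbf{V}_b A \mathbf{V}_b^\dagger = A$ because $\mathbf{V}_b$ is unitary.

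For the left-hand side, set $y := \varphi^{-1}(x)$. Since $(A\Psi)_y = A_y\Psi(y)$, the definition of $\mathbf{V}_b$ gives
\begin{equation}
(\mathbf{V}_b A\Psi)_x = \left(\frac{d((\varphi_b)_*\mu)}{d\mu}(x)\right)^{1/2} V_{b(y)} A_y \Psi(y).
\end{equation}
For the right-hand side,
\begin{equation}
(A\mathbf{V}_b\Psi)_x = A_x (\mathbf{V}_b\Psi)_x = \left(\frac{d((\varphi_b)_*\mu)}{d\mu}(x)\right)^{1/2} A_x V_{b(y)} \Psi(y).
\end{equation}
The arrow $\alpha := b(y)$ has $s(\alpha) = y$ because $s\circ b = \mathrm{id}_M$, and $t(\alpha) = \varphi(y) = x$. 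Applying $\Gamma$-invariance to this arrow gives $V_{b(y)} A_y = A_x V_{b(y)}$, which is the intertwining identity obtained from $V_\alpha A_{s(\alpha)} V_\alpha^\dagger = A_{t(\alpha)}$ by multiplying on the right by $V_\alpha$. Hence the two expressions agree.

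The one point needing care is that the invariance identity and the Radon--Nikodym factor are only used almost everywhere. The invariance hypothesis holds for every arrow, hence pointwise for every $x$, so there is no null-set issue on that side. The Radon--Nikodym derivative appears identically on both sides and cancels. The only requirement is that statements holding for $\mu$-a.e. $y$ transfer to $\mu$-a.e. $x = \varphi(y)$, which is exactly what quasi-invariance $(\varphi_b)_*\mu \sim \mu$ provides. I would also check that $A\mathbf{V}_b\Psi$ and $\mathbf{V}_bA\Psi$ are measurable sections, but this follows from $A \in \A(\scrhi)$ and $\mathbf{V}_b$ being a bounded operator on $\hi$. Since $b$ was arbitrary, $A \in \A(\scrhi)^{\mathrm{Bis}(\Gamma)}$. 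I do not expect any real obstacle; the only step to be careful with is the correct identification of the source and target of $b(\varphi^{-1}(x))$.
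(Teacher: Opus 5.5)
Your proposal is correct and follows essentially the same argument as the paper: a fibrewise computation in which $\Gamma$-invariance is applied to the arrow $b(\varphi_b^{-1}(x))\colon \varphi_b^{-1}(x)\to x$. The only difference is that you verify the intertwining relation $\mathbf V_b A = A\mathbf V_b$ instead of computing $\mathbf V_b A\mathbf V_b^\dagger$ directly. As a result, the Radon--Nikodym factors cancel trivially, rather than through the chain-rule identity $\frac{d((\varphi_b)_*\mu)}{d\mu}(x)\,\frac{d((\varphi_b^{-1})_*\mu)}{d\mu}(\varphi_b^{-1}(x))=1$ that the paper invokes.
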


\begin{proof}
    Fix $A=(A_x)\in\A(\scrhi)^\Gamma$, so $V_\alpha A_{s(\alpha)}V_\alpha^\dagger=A_{t(\alpha)}$
for all $\alpha$. Let $b\in\mathrm{Bis}(\Gamma)$. For $\mu$-almost every $x$ and $y:=\varphi_b^{-1}(x)$,
\begin{equation}
(\mathbf V_b A\mathbf V_b^\dagger)_x
=\left(\frac{d((\varphi_b)_*\mu)}{d\mu}(x)\right)^{1/2}\left(\frac{d((\varphi_b^{-1})_*\mu)}{d\mu}(\varphi_b^{-1}(x))\right)^{1/2} V_{b(y)}\,A_{y}\,V_{b(y)}^\dagger
=V_{b(y)}\,A_{y}\,V_{b(y)}^\dagger,
\end{equation}
since $\frac{d((\varphi_b)_*\mu)}{d\mu}(x)\cdot \frac{d((\varphi_b^{-1})_*\mu)}{d\mu}(\varphi_b^{-1}(x))=1$. The arrow $b(y):y\to\varphi_b(y)=x$ then gives, by $\Gamma$-invariance, $V_{b(y)}A_yV_{b(y)}^\dagger=A_{t(b(y))}=A_x$. Hence $\mathbf V_b A\mathbf V_b^\dagger=A$ so $A\in\A(\scrhi)^{\mathrm{Bis}(\Gamma)}$.
\end{proof}

\subsection{States and continuity}

In the context of infinite-dimensional Hilbert spaces, topologies and different notions of continuity become important to distinguish. Just as strong, weak and ultraweak continuity of (unitary) representations can be defined for group representations, similar notions extend to groupoids.

In \cite{bos_continuous_2007}, weak and strong continuity of representations of continuous groupoids were analysed. However, although commonly discussed, the weak and strong topologies are not operational: they refer to convergence in inner products and in vectors, respectively. Instead, one should look at notions of continuity which have to do with convergence in probabilities, i.e. the ultraweak topology. To the best of our knowledge, this has not been discussed in the literature in the context of continuous groupoids. Here, we provide a sufficient exposition for the ensuing discussion, but encourage further work on the topic.

The ultraweak topology is arguably the relevant one in operational (quantum) physics. A sequence of operators $(A_n)$ in $\bh$ converge ultraweakly to $A \in \bh$ if $\Tr[\rho A_n] \to \Tr[\rho A]$ for all density operators $\rho \in \normalstate$. That is, ultraweak convergence is the convergence of expectation values.

Ultraweak continuity has to do with probabilities when evaluated (through the trace) on elements of the predual of the space of operators, i.e. typically $\thi$ as the predual of $\bh$. However, as we have seen, there are several spaces of preduals to consider here.

\begin{remark}
    There are at least three possible notions of ultraweak continuity that one may wish to define: a global one given by the topology $\sigma(\A(\scrhi),\A(\scrhi)_*)$, a fiber-wise one given by the topologies $\sigma(\mathcal{B}(\hix,\hi^y),\mathcal{T}(\hi^y,\hi^x))$ (i.e. linking different fibers together), and a fiber-wise one given by the topologies $\sigma(\bhx,\thx)$ (i.e. at the isotropy group level). 
\end{remark}

\begin{definition}
     Let $\pi : \scrhi =\bigsqcup_{x\in M} \hix \to M$ be a continuous field of separable Hilbert spaces over $M$ and let $(\Gamma,\nu,\delta,\mu,\lambda)$ be the continuous measure groupoid associated with $(\Gamma \rightrightarrows M,\nu,\mu)$. Let $U : \Gamma \curvearrowright \scrhi$ be a unitary representation on $\scrhi$. $U$ is said to be 
     \begin{enumerate}
         \item \emph{fiber-wise ultraweakly continuous} if for every normal predual test section $\chi$,\footnote{The predual of $\mathcal{B}(\hi^{s(\alpha)},\hi^{t(\alpha)})$ is $\mathcal{T}(\hi^{t(\alpha)},\hi^{s(\alpha)})$. A \emph{normal predual test section} is a field $\chi:\Gamma\to\bigsqcup_{\alpha\in\Gamma}
\mathcal T(\hi^{t(\alpha)},\hi^{s(\alpha)})$, with $\chi_\alpha\in\mathcal T(\hi^{t(\alpha)},\hi^{s(\alpha)})$, which is locally the trace-norm uniform limit of finite-rank fields of the form $\chi^{(n)}_\alpha = \sum_{j=1}^{N_n} \ket{\xi^{(j,n)}_{s(\alpha)}}\bra{\eta^{(j,n)}_{t(\alpha)}}$.}$\alpha \mapsto \Tr_{\hi^{t(\alpha)}}[U_\alpha \chi_{\alpha}]$ is continuous,\footnote{This reduces to ultraweak continuity for the group case ($M=\{*\}$). In such cases, $U : G \to U(\hi)$ is ultraweakly continuous iff for every trace-norm continuous test map $\chi : G \to \thi$, the map $g \mapsto \Tr[U_g \chi_g]$ is continuous. Since $\chi_\cdot$ is trace-norm continuous, this is easily shown to be equivalent to the continuity of the map $g \mapsto \Tr[U_g \chi]$ for a fixed $\chi \in \thi$.}
        \item \emph{isotropy ultraweakly continuous} if for every $x \in M$ and every $\chi \in \thx$, the map $\Gamma^x_x \ni \gamma  \mapsto \Tr_{\hix}[U_\gamma \chi]$ is continuous,
         \item \emph{weakly continuous} if for all continuous sections $\xi,\eta : M \to \scrhi$, $\alpha \mapsto \braket{U_\alpha \xi_{s(\alpha)}}{\eta_{t(\alpha)}}$ is continuous,
         \item \emph{strongly continuous} if for all continuous sections $\xi : M \to \scrhi$, $\alpha \mapsto U_\alpha \ket{\xi_{s(\alpha)}}$ is continuous.
     \end{enumerate}
\end{definition}

\begin{remark}
    Note that the arrow giving the unitary $U_\alpha\in\mathcal B(\hi^{s(\alpha)},\hi^{t(\alpha)})$ pairs with the trace class $\mathcal T(\hi^{t(\alpha)},\hi^{s(\alpha)})$, which reduces to the fiber predual $\thx$ only on the isotropy diagonal. It is never an element of the global algebra $\A(\scrhi)=\int^\oplus_M\bhx\,d\mu$
    for $\alpha$ off the isotropy diagonal. 
\end{remark}

\begin{theorem}
\label{thm:equivalences continuity}
    Let $\pi : \scrhi =\bigsqcup_{x\in M} \hix \to M$ be a continuous field of separable Hilbert spaces over $M$ and let $(\Gamma,\nu,\delta,\mu,\lambda)$ be the measure groupoid associated with $(\Gamma \rightrightarrows M,\nu,\mu)$. Let $U : \Gamma \curvearrowright \scrhi$ be a unitary representation on $\scrhi$. Then 
    \begin{enumerate}
        \item $U$ is strongly continuous iff $U$ is weakly continuous iff $U$ is fiber-wise ultraweakly continuous,
        \item if $U$ is fiber-wise ultraweakly continuous then it is isotropy ultraweakly continuous.
    \end{enumerate}
\end{theorem}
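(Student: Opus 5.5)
I would prove part 1 as the cycle strong $\Rightarrow$ weak $\Rightarrow$ fiber-wise ultraweak $\Rightarrow$ strong, and then obtain part 2 by restricting test sections to an isotropy group. Throughout I use that the continuous field $\scrhi$ has a total family of continuous sections. Locally, after a Gram--Schmidt procedure, it admits continuous orthonormal frames $\{e^{(k)}\}$ spanning each fiber. This is the standard Dixmier--Douady structure, and I will invoke it freely.

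\emph{Strong $\Rightarrow$ weak.} Fix continuous sections $\xi,\eta$. The map $\alpha\mapsto U_\alpha\xi_{s(\alpha)}$ is a continuous section of the pulled-back field $t^*\scrhi$ over $\Gamma$, and $\alpha\mapsto\eta_{t(\alpha)}$ is another such section. The inner product of two continuous sections of a continuous field is continuous, which is one of the field axioms. Hence $\alpha\mapsto\braket{U_\alpha\xi_{s(\alpha)}}{\eta_{t(\alpha)}}$ is continuous.

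\emph{Weak $\Rightarrow$ fiber-wise ultraweak.} For a finite-rank test field $\chi^{(n)}_\alpha=\sum_j\ket{\xi^{(j,n)}_{s(\alpha)}}\bra{\eta^{(j,n)}_{t(\alpha)}}$ we have
\begin{equation}
\Tr_{\hi^{t(\alpha)}}\bigl[U_\alpha\chi^{(n)}_\alpha\bigr]=\sum_j\braket{\eta^{(j,n)}_{t(\alpha)}}{U_\alpha\xi^{(j,n)}_{s(\alpha)}},
\end{equation}
which is continuous by weak continuity. For a general normal predual test section, unitarity gives
\begin{equation}
\bigl|\Tr[U_\alpha(\chi_\alpha-\chi^{(n)}_\alpha)]\bigr|\le\norm{\chi_\alpha-\chi^{(n)}_\alpha}_1 .
\end{equation}
So locally uniform trace-norm convergence yields locally uniform convergence of continuous functions, and the limit is continuous.

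\emph{Fiber-wise ultraweak $\Rightarrow$ strong.} This is the step I expect to be the main obstacle. Fix a continuous section $\xi$ and a point $\alpha_0$; it suffices to show $\norm{U_\alpha\xi_{s(\alpha)}-\zeta_{t(\alpha)}}\to 0$ as $\alpha\to\alpha_0$ for a suitable continuous section $\zeta$ with $\zeta_{t(\alpha_0)}=U_{\alpha_0}\xi_{s(\alpha_0)}$. Approximating $U_{\alpha_0}\xi_{s(\alpha_0)}$ by the value of some continuous section is allowed by the density axiom, at the cost of an $\varepsilon$. Expanding the norm gives
\begin{equation}
\norm{\xi_{s(\alpha)}}^2+\norm{\zeta_{t(\alpha)}}^2-2\,\mathrm{Re}\braket{\zeta_{t(\alpha)}}{U_\alpha\xi_{s(\alpha)}} .
\end{equation}
Here I use unitarity essentially: $\norm{U_\alpha\xi_{s(\alpha)}}=\norm{\xi_{s(\alpha)}}$. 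The first two terms are continuous by the field axioms. The cross term equals $\Tr[U_\alpha\chi_\alpha]$ for the rank-one test section $\chi_\alpha=\ket{\xi_{s(\alpha)}}\bra{\zeta_{t(\alpha)}}$, so it is continuous by hypothesis. The expression therefore tends to its value at $\alpha_0$, which is $\le\varepsilon$-small. This is the groupoid version of the classical fact that weak and strong continuity coincide for unitary representations. The delicate points are the $\varepsilon$-approximation by continuous sections and making precise what ``continuity of $\alpha\mapsto U_\alpha\xi_{s(\alpha)}$'' means in $t^*\scrhi$. I will define it as exactly this local approximability and check that it matches the topology of the pulled-back field.

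For part 2, fix $x$ and $\chi\in\thx$. Write $\chi=\sum_k c_k\ket{a_k}\bra{b_k}$ with $\sum|c_k|<\infty$ and unit vectors $a_k,b_k$. Extend each $a_k,b_k$ to continuous sections $\tilde a_k,\tilde b_k$ with bounded norms near $x$, multiplying by a cutoff and renormalizing if needed. Then the truncations of $\chi_\alpha=\sum_k c_k\ket{\tilde a_{k,s(\alpha)}}\bra{\tilde b_{k,t(\alpha)}}$ converge locally uniformly in trace norm, so $\chi_\alpha$ is a normal predual test section. Restricting $\alpha\mapsto\Tr[U_\alpha\chi_\alpha]$ to $\Gamma^x_x$, on which $\chi_\gamma=\chi$, gives continuity of $\gamma\mapsto\Tr_{\hix}[U_\gamma\chi]$.
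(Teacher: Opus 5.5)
Your proof is correct, but it is organised differently from the paper's. The paper does not prove weak $\Leftrightarrow$ strong itself. It cites Lemma~2.6 of Bos et al.\ for that equivalence and only proves weak $\Leftrightarrow$ fiber-wise ultraweak. For one direction it uses the rank-one sections $\ket{\xi_{s(\alpha)}}\bra{\eta_{t(\alpha)}}$; for the other it uses locally uniform finite-rank approximation with $\norm{U_\alpha}=1$, which is exactly your second step.

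You instead close the cycle yourself. Strong $\Rightarrow$ weak comes from continuity of inner products of continuous sections of $t^*\scrhi$. Fiber-wise ultraweak $\Rightarrow$ strong comes from the classical norm expansion: unitarity makes $\norm{U_\alpha\xi_{s(\alpha)}}$ continuous, the cross term is a rank-one test section, and the local-approximation axiom finishes. Since only rank-one sections enter there, that step is really weak $\Rightarrow$ strong, so you are re-proving the cited lemma. This makes the theorem self-contained and shows that unitarity is used both for the uniform bound and for norm preservation, at the cost of length.

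For part 2 the paper just says isotropy continuity is the case $s(\alpha)=t(\alpha)$. That tacitly requires a normal predual test section on $\Gamma$ whose restriction to $\Gamma^x_x$ is the given $\chi\in\mathcal{T}(\mathcal{H}^x)$. Your singular-value construction, with norm-truncated continuous extensions of the vectors, supplies exactly that, so your version is the more careful one.

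One correction: your opening claim that a continuous field locally admits continuous orthonormal frames by Gram--Schmidt is false in general. Fiber dimensions can jump, and Dixmier--Douady triviality needs extra hypotheses. Nothing later in your argument uses it, so simply drop it.
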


\begin{proof}
    \begin{enumerate}
        \item It is known \cite[Lemma 2.6]{bos_continuous_2007} that weak and strong continuity of unitary representations are equivalent. Thus, it is sufficient to show their equivalence with fiber-wise ultraweak continuity of unitary representations.

    Assume first that $U$ is fiber-wise ultraweakly continuous. For continuous sections $\xi,\eta$ of $\scrhi$, define the rank-one normal predual test section
\begin{equation}
\chi^{\xi,\eta}_\alpha
:=
\ket{\xi_{s(\alpha)}}\bra{\eta_{t(\alpha)}}
\in
\mathcal T(\hi^{t(\alpha)},\hi^{s(\alpha)}).
\end{equation}
Then
\begin{equation}
\Tr_{\hi^{t(\alpha)}}\!\bigl(U_\alpha \chi^{\xi,\eta}_\alpha\bigr)
=
\braket{\eta_{t(\alpha)}}{U_\alpha\xi_{s(\alpha)}}.
\end{equation}
Thus fiber-wise ultraweak continuity implies weak continuity.

Conversely assume that $U$ is weakly continuous. Let $\chi$ be a normal predual
test section and fix $\alpha_0\in\Gamma$. On some neighbourhood $K$ of $\alpha_0$, there are finite-rank continuous fields
\begin{equation}
\chi^{(n)}_\alpha
=
\sum_{j=1}^{N_n}
\ket{\xi_{j,n}(s(\alpha))}\bra{\eta_{j,n}(t(\alpha))},
\end{equation}
where each $\xi^{(j,n)},\eta^{(j,n)}$ is a continuous section of $\scrhi$, such that $\sup_{\alpha\in K}\|\chi_\alpha-\chi^{(n)}_\alpha\|_1\to 0$. Since $U$ is unitary, $\|U_\alpha\|=1$ for all $\alpha$, hence
\begin{equation}
\sup_{\alpha\in K}
\left|
\Tr_{\hi^{t(\alpha)}}\!\bigl(U_\alpha \chi_\alpha\bigr)
-
\Tr_{\hi^{t(\alpha)}}\!\bigl(U_\alpha \chi^{(n)}_\alpha\bigr)
\right|
\leq
\sup_{\alpha\in K}\|\chi_\alpha-\chi^{(n)}_\alpha\|_1
\to 0.
\end{equation}
For each $n$,
\begin{equation}
\Tr_{\hi^{t(\alpha)}}\!\bigl(U_\alpha \chi^{(n)}_\alpha\bigr)
=
\sum_{j=1}^{N_n}
\braket{
U_\alpha\xi_{j,n}(s(\alpha))
}{
\eta_{j,n}(t(\alpha))
},
\end{equation}
which is continuous by weak continuity. Therefore
\begin{equation}
\alpha\longmapsto
\Tr_{\hi^{t(\alpha)}}\!\bigl(U_\alpha \chi_\alpha\bigr)
\end{equation}
is a locally uniform limit of continuous functions, hence continuous. Thus
$U$ is fiber-wise ultraweakly continuous.
\item Isotropy ultraweak continuity is just fiber-wise ultraweak continuity for the cases where $s(\alpha) = t(\alpha)$.
    \end{enumerate}
\end{proof}

\begin{remark}
    Note that in the group case $M=\{*\}$, fiber-wise and isotropy ultraweak continuity reduce to the same notion of ultraweak continuity, and the three topologies discussed above reduce to the same one. In an action-groupoid model of a torsor, isotropy continuity is vacuous (since stabilisers are trivial so isotropy groups are trivial), while fiber-wise ultraweak continuity recovers the usual ultraweak continuity of the group representation.
\end{remark}

Thus, we can safely work interchangeably with weakly, strongly and fiber-wise ultraweakly continuous unitary representations of continuous (in particular Lie) groupoids. 

In the rest of this paper, we will sometimes be interested in sets of operator fields (respectively, of states) that form subsets of $\A(\scrhi)$ (respectively, of $\state$). We thus use the following notation.

\begin{notation}
    We write $\mathcal{O} \subset \A(\scrhi)$ for a set of measurable operator fields in $\A(\scrhi)$ and $\mathfrak{S} \subset \state$ for a (typically convex) set of measurable operator fields of states.
\end{notation}

\noindent\textit{Outlook toward the operational formalism.}
The geometric picture developed above suggests the following replacement of the standard symmetry-based quantum kinematics.

\begin{center}
\begin{tabular}{ccl}
locally compact group action & $\rightsquigarrow$ & continuous groupoids,\\
homogeneous space & $\rightsquigarrow$ & principal bundle, \\
single Hilbert space representation & $\rightsquigarrow$ & Hilbert bundle,\\ 
unitary group representation & $\rightsquigarrow$ & unitary groupoid representation,\\ 
global covariant POVM & $\rightsquigarrow$ & fiber-wise covariant POVM.
\end{tabular}
\end{center}

The first four entries have now been introduced. The fifth will be the starting point of the operational formalism developed later. In particular, Sec.~\ref{sec:operational-qrf-groups} recalls the usual group-based theory of operational quantum reference frames, and Sec.~\ref{sec:groupoid-qrf} reformulates it for Lie groupoids. The flat-space reduction provided by Proposition \ref{prop:minkowski-action-groupoid} will then imply that the groupoid formalism recovers the standard relativistic QRF framework on Minkowski spacetime.

\section{Operational quantum reference frames for locally compact groups}
\label{sec:operational-qrf-groups}

In this section we recall the operational formulation of quantum reference frames for locally compact groups \cite{loveridge_symmetry_2018,carette_operational_2025,glowacki_quantum_2023}, in the form needed later to compare it with the Lie-groupoid framework. We keep only those parts of the formalism that are essential for the passage from groups to action groupoids: systems of covariance, relativization maps, relative observables and states, and the localization limit.

Throughout this section, $G$ denotes a locally compact second countable Hausdorff topological group and $\Sigma$ a principal homogeneous left $G$-space, i.e. $\Sigma \cong G$.

\subsection{Systems, invariance and relativization}

Let $\mathcal{S}$ be a quantum system carrying an ultraweakly continuous unitary representation
\begin{equation}
U_\mathcal{S}:G\to \mathcal U(\mathcal H_\mathcal{S}).
\end{equation}
The compound system ``system + frame'' is described on the tensor product Hilbert space $\mathcal H_\mathcal{S}\otimes \mathcal H_\mathcal{R}$ with diagonal representation $U_\mathcal{S}(g)\otimes U_\mathcal{R}(g)$. The algebra of invariant bounded operators is
\begin{equation}
\bhsr^G
:=
\{A\in \bhsr \mid U_\mathcal{S}(g)\otimes U_\mathcal{R}(g)AU_\mathcal{S}(g)^\dagger\otimes U_\mathcal{R}(g)^\dagger=A,\ \forall g\in G\}.
\end{equation}

\paragraph*{Relativization for locally compact groups.} We first focus on the notion of relativization on groups themselves. We begin with the well-established \cite{loveridge_symmetry_2018,carette_operational_2025} definition for operational quantum reference frames.

\begin{definition}
\label{def:qrf-group}
A \emph{quantum reference frame} on the locally compact second countable
topological group $G$ is a triple $\mathcal{R}=(U_\mathcal{R},\E_\R,\mathcal H_\mathcal{R})$,
where:
\begin{enumerate}
    \item $\mathcal H_\mathcal{R}$ is a separable complex Hilbert space;
    \item $U_\mathcal{R}:G\to \mathcal U(\mathcal H_R)$ is an ultraweakly continuous unitary representation;
    \item $\E_\mathcal{R}:\Bor(G)\to \mathcal E(\mathcal H_R)$ is a normalized POVM satisfying the covariance condition
    \begin{equation}
    U_\mathcal{R}(g)\,\E_\R(\Delta)\,U_\mathcal{R}(g)^\dagger = \E_\R(g \cdot \Delta),
    \qquad g\in G,\ \Delta\in \Bor(G).
    \end{equation}
\end{enumerate}
The POVM $\E_\R$ will be called the \emph{frame observable}.
\end{definition}

\begin{remark}
\label{rem:projective-group-case}
In much of the QRF literature one allows projective unitary representations. For the purposes of the present paper, and in order to match the groupoid definitions of Sec.~\ref{sec:introduction}, we work with honest unitary representations. The projective variant can be incorporated later by passing to suitable central extensions or multiplier cocycles.
\end{remark}

\begin{definition}
\label{def:relativization-group}
Let $\R=(U_\mathcal{R},\E_\R,\mathcal H_\mathcal{R})$ be a quantum reference frame on $\Sigma = G$, and let $\mathcal{S}$ be a system with representation $U_\mathcal{S}$. The associated \emph{relativization map} is
\begin{equation}
\yen^\R: \bhs \to \bhsr,
\end{equation}
defined by the operator-valued integral \cite{glowacki_w-algebraic_2026}
\begin{equation}
\yen^\R(A)
:=
\int_{G} U_\S(g) A U_\S(g)^\dagger\otimes dE_\mathcal{R}(g).
\end{equation}
\end{definition}

\begin{remark}
    Note that $\overline{\yen^\R(\bhs)}^{uw} \subset \bhsr^G$. Thus relative observables are not merely invariant: they are those invariant observables on the compound system that arise from the chosen frame observable $\E_\R$ via relativization.
\end{remark}

A particularly useful special case arises when one conditions on a chosen state of the frame.
\begin{definition}
\label{def:restriction-group}
Let $\omega\in \mathscr{D}(\hir)$ be a state (density operator) of the frame. The \emph{restriction map}
\begin{equation}
\Gamma_\omega: \bhsr \to \bhs
\end{equation}
is defined by
\begin{equation}
\Tr[\rho\,\Gamma_\omega(B)] = \Tr[(\rho\otimes \omega)B],
\qquad
\rho\in \mathscr{D}(\his),\ B\in \bhsr.
\end{equation}
This gives rise to the \emph{restricted relativization map}
\begin{align}
    \yen^\R_\omega := \Gamma_\omega\circ \yen^\R : \bhs &\to \bhs \\  A &\mapsto \int_{G} U_\S(g) A U_\S(g)^\dagger \, d\mu^{\E_\R}_{\omega}(g),
\end{align}
where we write the Born probability measure
\begin{equation}
\mu^{E_\mathcal{R}}_\omega(X):=\Tr[\omega E_\mathcal{R}(X)].
\end{equation}
\end{definition}

\paragraph*{Relativization for torsors of locally compact groups.} We now examine relativization for torsors (i.e. principal homogeneous spaces) of locally compact groups. Any space $\Sigma$ with a continuous transitive $G$-action $\alpha : G \times \Sigma \to \Sigma$ can be written as $G/H_x$, where $H_x$ is the stabiliser of some point $x \in \Sigma$. This holds even in the case of principal homogeneous spaces i.e. for $G$-torsors, where $H_x = \{e\}$; colloquially, we might say that a $G$-torsor is like a group ``+ a point". Indeed, the equality $\Sigma = G$ does not hold anymore, although $\Sigma \cong G$.

\begin{remark}
    Note that for any normalized $G$-covariant POVM $\E : \Bor(\Sigma) \to \eh$ on a homogeneous $G$-space $\Sigma$, $\supp \E = \Sigma$. This is because $G$ acts transitively on $\Sigma$. 
\end{remark}

\begin{definition}
    Let $G$ be a locally compact second countable
topological group, $\Sigma$ be a principal $G$-space and $o \in \Sigma$. A principal\footnote{Since $\Sigma$ is principal, such quantum reference frames are called \emph{principal}.} \emph{quantum reference frame} on $\Sigma$ is a tuple $\R_o = (U_{\R_o},\E_{\R_o},\hi_{\R_o})$ where
\begin{enumerate}
    \item $\hi_{\R_o}$ is a separable complex Hilbert space;
    \item $U_{\R_o}: G \to \mathcal U(\hi_{\R_o})$ is an ultraweakly continuous unitary representation;
    \item $\E_{\R_o}:\Bor(\Sigma)\to \Eff(\hi_{\R_o})$ is a normalized POVM satisfying the covariance condition
    \begin{equation}
    U_{\R_o}(g)\,\E_{\R_o}(\Delta)\,U_{\R_o}(g)^\dagger = \E_{\R_o}(g \cdot \Delta),
    \qquad g\in G,\ \Delta\in \Bor(\Sigma).
    \end{equation}
\end{enumerate}
The POVM $\E_{\R_o}$ will be called the \emph{frame observable}.
\end{definition}

\begin{definition}
\label{def:relativization-torsor}
Let $\R_o=(U_{\R_o},\E_{\R_o},\hi_{\R_o})$ be a principal quantum reference frame on $\Sigma \cong G$, and let $\mathcal{S}$ be a system with representation $U_\mathcal{S}$. For $p \in \Sigma$, let $g_p \in G$ be the unique element with $p = g_p \cdot o$. The associated \emph{relativization map} is
\begin{equation}
\yen^{\R_o}:\bhs \to \bhsr^G,
\end{equation}
defined by the operator-valued integral \cite{glowacki_w-algebraic_2026}
\begin{equation}
\yen^{\R_o}(A)
:=
\int_{\Sigma} U_\S(g_p) A U_\S(g_p)^\dagger\otimes dE_{\mathcal{R}_o}(p).
\end{equation}
Given $\omega \in \mathscr{D}(\hi_{\R_o})$, the \emph{restricted relativization map} is given by
\begin{align}
    \yen^{\R_o}_{\omega} := \Gamma_{\omega}\circ \yen^{\R_o} : \bhs &\to \bhs \\  A &\mapsto \int_{\Sigma} U_\S(g_p) A U_\S(g_p)^\dagger \, d\mu^{\E_{\R_o}}_{\omega}(p),
\end{align}
\end{definition}

\begin{remark}
    For most purposes, we can, at least in the principal case considered here, drop the $o \in \Sigma$ in the above. We only included it here to contrast the locally compact group case and the homogeneous space case, to highlight that there is here a choice of $o \in \Sigma$ that is usually implicit, and to mirror with the construction of quantum reference frames for groupoids of Sec.~ \ref{sec:groupoid-qrf} where such considerations do become important. Indeed, a groupoid reduces to a group when $M=\{*\}$, but a group action on a principal homogeneous space is instead an action groupoid, which is different. We can however consider the case where $\Sigma=G$ and where $\Sigma \cong G$ without distinction for now.
\end{remark}

As mentioned, this construction can be generalised beyond torsors \cite{glowacki_quantum_2023,glowacki_w-algebraic_2026} to e.g. all homogeneous $G$-spaces, though we are interested in the principal ones here.

\begin{definition}
\label{def:special-frames}
Let $\R$ be a quantum reference frame on $\Sigma$.
\begin{enumerate}
    \item $\R$ is called \emph{sharp}\footnote{In particular, if $\Sigma$ is principal, these quantum reference frames are called ideal.} if $\E_\R$ is projection-valued.
    \item $\R$ is called \emph{localizable} if $\E_\R$ has the norm-$1$ property, i.e.
    \begin{equation}
    \|E_\mathcal{R}(\mathcal O)\|=1
    \end{equation}
    for every $\mathcal O \in \Bor(\Sigma)$ such that $\E_\R(\mathcal{O}) \neq 0$.
\end{enumerate}
\end{definition}

We will come back to the localizability property in Sec.~ \ref{subsec:localization-limit-groups}. In this paper, we will focus on principal QRFs. 

\begin{proposition}[\cite{carette_operational_2025}]
\label{prop:relativization-basic}
The relativization map $\yen^\R$ is linear, normal, unital and completely positive. If $\mathcal{R}$ is sharp, then $\yen^\R$ is multiplicative, hence a normal $*$-homomorphism. If $\R$ is principal and localizable, then $\yen^\R$ is injective.
\end{proposition}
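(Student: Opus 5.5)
The plan is to treat the operator-valued integral $\yen^\R(A)=\int_G U_\S(g)AU_\S(g)^\dagger\otimes d\E_\R(g)$ as an ultraweakly defined map. It is fixed by the scalar identities
\begin{equation}
\Tr\bigl[(\rho\otimes\sigma)\,\yen^\R(A)\bigr]=\int_G \Tr\bigl[\rho\, U_\S(g)AU_\S(g)^\dagger\bigr]\,d\mu^{\E_\R}_\sigma(g),
\end{equation}
extended to all of $\mathcal{T}(\his\otimes\hir)$. Linearity is immediate from this formula. Unitality follows from $U_\S(g)\id U_\S(g)^\dagger=\id$ together with $\E_\R(G)=\id$.

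For complete positivity and normality I would use a Naimark dilation $\E_\R(\Delta)=J^\dagger P(\Delta)J$, with $P$ a PVM on a larger space $\mathcal{K}$ and $J:\hir\to\mathcal{K}$ an isometry. Let $\Phi(A)=\int_G U_\S(g)AU_\S(g)^\dagger\otimes dP(g)$. This is a direct-integral (spectral) construction: on each spectral component it is conjugation by a unitary, so it is a normal $*$-homomorphism into $\B(\his\otimes\mathcal{K})$. Then $\yen^\R(A)=(\id\otimes J)^\dagger\Phi(A)(\id\otimes J)$, which is a compression of a normal $*$-homomorphism. It is therefore completely positive and normal by Stinespring. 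Ultraweak continuity of $U_\S$ supplies the measurability needed to make $\Phi$ well defined.

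If $\R$ is sharp, take $P=\E_\R$ and $J=\id$, so $\yen^\R=\Phi$ is itself multiplicative. As a check, on simple functions $\sum_i A_i\otimes\E(\Delta_i)$, orthogonality $\E(\Delta_i)\E(\Delta_j)=\delta_{ij}\E(\Delta_i)$ gives multiplicativity directly. One then passes to the limit by normality and by approximating $g\mapsto U_\S(g)AU_\S(g)^\dagger$ by step functions in the ultraweak topology.

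Injectivity in the principal, localizable case is the main obstacle. Suppose $\yen^\R(A)=0$. For every frame state $\omega$ the restricted map $\yen^\R_\omega(A)=\int_G U_\S(g)AU_\S(g)^\dagger\,d\mu^{\E_\R}_\omega(g)$ then vanishes. The norm-$1$ property implies that for any neighbourhood $\mathcal O$ of $e$ with $\E_\R(\mathcal O)\neq0$ (and $\supp\E_\R=G$) we can choose unit vectors $\omega_n$ with $\mu^{\E_\R}_{\omega_n}(\mathcal O_n)\to1$ along a shrinking neighbourhood basis $\mathcal O_n\downarrow\{e\}$. Hence $\mu^{\E_\R}_{\omega_n}\to\delta_e$ weakly. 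The difficulty is then to show that $\int_G \Tr[\rho\,U_\S(g)AU_\S(g)^\dagger]\,d\mu_{\omega_n}(g)\to\Tr[\rho A]$ for all $\rho\in\mathscr{D}(\his)$. This needs continuity of $g\mapsto\Tr[\rho U_\S(g)AU_\S(g)^\dagger]$, which follows from ultraweak (equivalently strong) continuity of $U_\S$ on the bounded function. The limit gives $\Tr[\rho A]=0$ for all $\rho$, hence $A=0$. I would write this step as the localization-limit lemma of Sec.~\ref{subsec:localization-limit-groups}, handling tightness of the measures carefully, since $G$ may be noncompact.
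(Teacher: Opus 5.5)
Your argument is correct. The paper only cites this proposition, but its proof of the groupoid analogue (items 2, 9 and 10 of Thm.~\ref{thm:groupoid-relativization-invariant}) shows the route it has in mind. There, complete positivity, normality and sharp-case multiplicativity are imported from G{\l}owacki's general theory of operator-valued integration. Injectivity is obtained from an isometry bound along a localizing sequence.

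Your injectivity step is that same localization limit. It upgrades to isometry in one line if you pair with arbitrary $\chi\in\mathcal T(\his)$ rather than only with states: $|\Tr[\chi A]|=\lim_n|\Tr[(\chi\otimes\omega_n)\yen^\R(A)]|\le\|\chi\|_1\|\yen^\R(A)\|$. Using only states would not suffice, since for non-normal $A$ the supremum over states is the numerical radius, not the norm. No tightness issue arises for noncompact $G$ either. Indeed, $\mu^{\E_\R}_{\omega_n}(\mathcal O_n)\to1$ with $\mathcal O_n$ shrinking already gives $|\int f\,d\mu^{\E_\R}_{\omega_n}-f(e)|\le\sup_{\mathcal O_n}|f-f(e)|+2\|f\|_\infty\bigl(1-\mu^{\E_\R}_{\omega_n}(\mathcal O_n)\bigr)\to0$ for bounded continuous $f$.

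Where you genuinely differ is the Naimark dilation, which replaces the cited integration theory with an elementary structural fact. Set $\mathcal U:=\int_G U_\S(g)\otimes dP(g)$, which is unitary because $P$ is projection-valued. Then $\Phi(A)=\mathcal U(A\otimes\id_{\mathcal K})\mathcal U^\dagger$, and this is the precise form of your ``conjugation by a unitary on each spectral component''. Compressing by $\id\otimes J$ also gives existence of $\yen^\R(A)$ as a bounded operator. The scalar identities on product states do not by themselves provide that.

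This buys transparency. For sharp $\R$, $\yen^\R=\mathrm{Ad}\,\mathcal U\circ(\,\cdot\otimes\id_{\hir})$ is visibly an injective normal $*$-homomorphism. So your simple-function check is superfluous. Its limit step would in any case need strong$^*$ rather than ultraweak approximation, since multiplication is not jointly ultraweakly continuous.

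The abstract integration theory, in turn, is what the paper can apply fibrewise in the groupoid setting. A dilation approach there would require choosing Naimark dilations measurably in the base point.
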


\begin{remark}
    There is a dual picture of relativization that applies to states rather than operators \cite{carette_operational_2025}. We shall not expand on this viewpoint in this paper.
\end{remark}

\begin{remark}
    Note that for either the locally compact group or the torsor case, $G$-invariant operators are invariant under relativization: if $A \in \bhs^G$ then $\yen^\R(A) = A \otimes \id_{\bhr}$ and for all $\omega \in \homframestate$, $\yen^\R_\omega(A) = A$.
\end{remark}

\subsection{Localization limit and recovery of the absolute description}
\label{subsec:localization-limit-groups}

The localizability property is operationally crucial: it guarantees the existence of frame states producing probability measures that can be concentrated around arbitrary frame values with arbitrarily high precision \cite{glowacki_operational_2023,carette_operational_2025}. 

\begin{proposition}[\cite{heinonen_norm-1-property_2003}]
    \label{prop:norm 1 equivalences}
    Let $\Sigma$ be a principal homogeneous $G$-space and $\E : \Bor(\Sigma) \to \eh$ be a normalized POVM. The following are equivalent:
    \begin{enumerate}
        \item $\E$ has the norm-1 property.
        \item For every $X \in \Bor(\Sigma)$ for which $\E(X) \neq 0$, there is a sequence of unit vectors $\ket{\phi_n} \subset \hi$ for which
        \begin{equation}
            \lim_{n \to \infty} \expval{\E(X)}{\phi_n} = 1 \, .
        \end{equation}
        \item ($\epsilon$-decidability property) For every $X \in \Bor(\Sigma)$ such that $\E(X) \neq 0$ and for any $\epsilon > 0$ there exists a unit vector $\ket{\phi_\epsilon} \in \hi$ for which
        \begin{equation}
            \expval{\E(X)}{\phi_\epsilon} > 1-\epsilon \, .
        \end{equation}
    \end{enumerate}
\end{proposition}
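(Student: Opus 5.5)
We will prove the cycle of implications $1\Rightarrow 3\Rightarrow 2\Rightarrow 1$. Throughout, fix $X\in\Bor(\Sigma)$ with $\E(X)\neq 0$. The only facts needed are that $\E(X)$ is a positive effect, so $0\le \E(X)\le \id$, and the variational formula for the norm of a positive operator, $\|\E(X)\| = \sup_{\|\phi\|=1}\expval{\E(X)}{\phi}$. Neither the group structure nor principality of $\Sigma$ plays any role: the equivalence is a statement about single effects, applied effect-by-effect.

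For $1\Rightarrow 3$: by the norm-1 property, $\|\E(X)\|=1$. The variational formula then gives, for each $\epsilon>0$, a unit vector $\ket{\phi_\epsilon}$ with $\expval{\E(X)}{\phi_\epsilon}>1-\epsilon$. This is exactly $\epsilon$-decidability.

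For $3\Rightarrow 2$: take $\epsilon=1/n$ and set $\ket{\phi_n}:=\ket{\phi_{1/n}}$. Then $1-1/n<\expval{\E(X)}{\phi_n}\le 1$, where the upper bound uses $\E(X)\le\id$. Squeezing gives $\lim_{n\to\infty}\expval{\E(X)}{\phi_n}=1$.

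For $2\Rightarrow 1$: since $\E(X)\le\id$, we have $\|\E(X)\|\le 1$. Conversely, the variational formula gives $\|\E(X)\|\ge \expval{\E(X)}{\phi_n}$ for every $n$, and this tends to $1$, so $\|\E(X)\|\ge 1$. Hence $\|\E(X)\|=1$. Since $X$ was an arbitrary set with $\E(X)\neq 0$, $\E$ has the norm-1 property.

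There is no real obstacle here. The only point to state carefully is the identity $\|A\|=\sup_{\|\phi\|=1}\langle\phi,A\phi\rangle$ for $A\ge 0$. This is standard: for self-adjoint $A$, the norm equals the numerical radius, and positivity removes the absolute value.
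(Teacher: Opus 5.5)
Your proof is correct, and it is the standard argument. The paper gives no proof of its own and simply imports the result from Heinonen et al.; that reference rests on the same fact, namely $\|A\|=\sup_{\|\phi\|=1}\langle\phi,A\phi\rangle$ for $A\geq 0$ together with $0\leq \E(X)\leq \id$.

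You are also right that the principal-homogeneous-space hypothesis is never used. This is what justifies the paper's later application of the proposition to the fibre POVMs $\E_\R^x$ on $\Gamma^x$ in Prop.~\ref{prop:norm 1 property povm}, where no transitive group action is available.
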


An important consequence is the following.

\begin{corollary}[\cite{carette_operational_2025}]
    \label{cor:norm 1 sequence}
    If a covariant normalized POVM $\E : \Bor(\Sigma) \to \eh$ has the norm-1 property and $\Sigma$ is metrizable, then for any $x \in \Sigma$ there exists a sequence of pure states $(\omega_n) \subset \mathscr{D}(\hir)$ such that the sequence of probability measures $(\mu^\E_{\omega_n})$ converges weakly to the Dirac measure $\delta_x$.
\end{corollary}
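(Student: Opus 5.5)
The plan is to combine the $\epsilon$-decidability characterization of Proposition~\ref{prop:norm 1 equivalences} with a standard criterion for weak convergence to a Dirac measure. Fix $x \in \Sigma$ and a compatible metric $d$ on $\Sigma$, and let $B_n := B(x,1/n)$ be the open ball of radius $1/n$ around $x$. Each $B_n$ is a nonempty open set, hence Borel. Since $\E$ is covariant and $G$ acts transitively, the remark preceding Definition~\ref{def:special-frames} gives $\supp \E = \Sigma$. Consequently $\E(B_n) \neq 0$ for every $n$, because a nonempty open set disjoint from the support would contradict $x \in \supp\E$.

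Next I apply Proposition~\ref{prop:norm 1 equivalences}, in its $\epsilon$-decidability form (item 3), to $X = B_n$ with $\epsilon = 1/n$. This yields a unit vector $\phi_n$ with $\expval{\E(B_n)}{\phi_n} > 1 - 1/n$. Set $\omega_n := \ket{\phi_n}\bra{\phi_n}$, which is a pure state in $\mathscr{D}(\hir)$. The associated Born measure then satisfies $\mu^\E_{\omega_n}(B_n) > 1 - 1/n$, and each $\mu^\E_{\omega_n}$ is a probability measure because $\E$ is normalized.

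It remains to show $\mu^\E_{\omega_n} \to \delta_x$ weakly. Take a bounded continuous $f$ on $\Sigma$ and $\eta>0$. By continuity at $x$ there is $r>0$ with $|f(y)-f(x)|<\eta$ on $B(x,r)$. For $n > 1/r$ we have $B_n \subset B(x,r)$, so
\begin{equation}
\Bigl|\int f\, d\mu^\E_{\omega_n} - f(x)\Bigr| \le \int_{B_n}|f-f(x)|\,d\mu^\E_{\omega_n} + 2\|f\|_\infty\,\mu^\E_{\omega_n}(\Sigma\setminus B_n) \le \eta + 2\|f\|_\infty/n .
\end{equation}
This bound tends to $\eta$ as $n \to \infty$, and since $\eta$ was arbitrary the integrals converge to $f(x)$. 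Equivalently, one may invoke the portmanteau theorem, since concentration on shrinking balls is exactly weak convergence to a point mass.

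The only step needing care is the nonvanishing $\E(B_n)\neq 0$, which is where covariance and transitivity enter; metrizability is used solely to have a countable neighbourhood base of balls. The rest is routine.
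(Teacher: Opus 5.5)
Your proof is correct and follows essentially the same route as the paper's argument. The paper proves the groupoid analogue, Prop.~\ref{prop:norm 1 groupoid sequence}, adapted from the cited source. Like you, it takes shrinking balls $B_n$ and gets non-nullity of $\E(B_n)$ from $x\in\supp\E$, which for a covariant POVM on a homogeneous space is guaranteed by $\supp\E=\Sigma$. It then uses the norm-$1$ property to pick pure states with $\mu^\E_{\omega_n}(B_n)>1-1/n$, and finally proves weak convergence to $\delta_x$. The only cosmetic difference is that the paper closes with the portmanteau criterion on sets $X$ with $x\notin\partial X$, whereas you estimate $\int f\,d\mu^\E_{\omega_n}$ directly for bounded continuous $f$; both are equally valid.
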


We call such a sequence $(\omega_n) \subset \mathscr{D}(\hir)$ a \emph{localizing sequence centred at the identity}. The physical meaning of this for QRFs is that, once the frame is prepared in states increasingly localized around a classical frame value, the relative description converges to the ordinary non-relational one. We review this result in the case of localizable QRFs on principal homogeneous spaces.

\begin{theorem}[Fundamental theorem of operational quantum reference frames \cite{carette_operational_2025}]
\label{prop:localization-limit-groups}
Let $\mathcal{R}=(U_\mathcal{R},\E_\R,\mathcal H_\mathcal{R})$ be a localizable principal frame, and let $(\omega_n) \subset \mathscr{D}(\hir)$ be a localizing sequence centred at the identity. Then for every $A\in \mathcal B(\mathcal H_\mathcal{S})$,
\begin{equation}
\lim_{n\to\infty}\yen^\R_{\omega_n}(A)=A
\end{equation}
in the ultraweak topology. 
\end{theorem}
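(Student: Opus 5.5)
To prove this, I reduce the ultraweak convergence to convergence of scalar integrals against the Born measures $\mu^{\E_\R}_{\omega_n}$, and then use weak convergence of these measures to $\delta_e$ (Corollary~\ref{cor:norm 1 sequence}). Fix $A\in\bhs$ and a density operator $\rho\in\mathscr{D}(\his)$. By Definition~\ref{def:restriction-group} and the definition of the operator-valued integral as an ultraweak integral,
\begin{equation}
\Tr\bigl[\rho\,\yen^\R_{\omega_n}(A)\bigr]=\int_G f_{\rho,A}(g)\,d\mu^{\E_\R}_{\omega_n}(g),\qquad f_{\rho,A}(g):=\Tr\bigl[\rho\,U_\S(g)AU_\S(g)^\dagger\bigr].
\end{equation}
Ultraweak convergence $\yen^\R_{\omega_n}(A)\to A$ is, by definition, the statement $\Tr[\rho\,\yen^\R_{\omega_n}(A)]\to\Tr[\rho A]$ for every $\rho$. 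Extending from density operators to all trace-class operators is then immediate by linearity, since every trace-class operator is a finite linear combination of positive trace-class operators.

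The key step is to show that $f_{\rho,A}$ is a \emph{bounded continuous} function on $G$. Boundedness is clear: $|f_{\rho,A}(g)|\le\|A\|\,\|\rho\|_1=\|A\|$. For continuity, I rewrite
\begin{equation}
f_{\rho,A}(g)=\Tr\bigl[U_\S(g)^\dagger\rho\, U_\S(g)\,A\bigr],
\end{equation}
so it suffices that $g\mapsto U_\S(g)^\dagger\rho\,U_\S(g)$ is trace-norm continuous. Ultraweak continuity of $U_\S$ gives weak continuity, and for unitary representations this is equivalent to strong continuity; this is the group case of Theorem~\ref{thm:equivalences continuity}, i.e.\ $M=\{*\}$. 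Writing $\rho=\sum_k p_k\ket{\psi_k}\bra{\psi_k}$, each rank-one term $U_\S(g)^\dagger\ket{\psi_k}\bra{\psi_k}U_\S(g)$ is trace-norm continuous by strong continuity, since
\begin{equation}
\bigl\|\,\ket{u}\bra{u}-\ket{v}\bra{v}\,\bigr\|_1\le 2\|u-v\|\quad\text{for unit vectors }u,v.
\end{equation}
The series converges uniformly in trace norm because $\sum_k p_k=1$, so the full map is trace-norm continuous. Hence $f_{\rho,A}\in C_b(G)$.

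With this in hand, weak convergence $\mu^{\E_\R}_{\omega_n}\to\delta_e$ (Corollary~\ref{cor:norm 1 sequence}, using that $G$ is metrizable as a second countable locally compact Hausdorff group) gives
\begin{equation}
\int_G f_{\rho,A}\,d\mu^{\E_\R}_{\omega_n}\;\to\; f_{\rho,A}(e)=\Tr[\rho A],
\end{equation}
which is exactly the claim. For a principal frame on a torsor $\Sigma\cong G$, the same argument works with the homeomorphism $p\mapsto g_p$ and localization centred at $o$.

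The main obstacle is the continuity of $f_{\rho,A}$, in particular the passage from weak or ultraweak to trace-norm continuity of the conjugated state. A secondary subtlety is justifying that the scalar pairing with the operator-valued integral $\int U_\S(g)AU_\S(g)^\dagger\otimes d\E_\R(g)$ reduces to the scalar integral against $\mu^{\E_\R}_{\rho\otimes\omega}$. I would handle this by appealing to the defining property of the integral in \cite{glowacki_w-algebraic_2026}: the integral is characterized by these scalar pairings for product states $\rho\otimes\omega$.
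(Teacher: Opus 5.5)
Your proof is correct and follows essentially the same route as the paper. The paper cites this result rather than proving it, but its proof of the groupoid analogue, Thm.~\ref{thm:euro localisation limit}, reduces to this statement when $M=\{*\}$ and runs the same way: pair with a state, reduce to a bounded continuous scalar integrand against $\mu^{\E_\R}_{\omega_n}$, use weak convergence to $\delta_e$, and extend to all trace-class operators by linearity. Your trace-norm continuity argument for $g\mapsto U_\S(g)^\dagger\rho\,U_\S(g)$ fills in a step the paper only asserts; note also that $\mu^{\E_\R}_{\omega_n}\to\delta_e$ holds by definition of a localizing sequence, so Corollary~\ref{cor:norm 1 sequence} is needed only for existence, not here.
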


In this operational sense, the ordinary kinematics of the system is recovered as a localization limit of the relational description. This result is the reason that the standard, apparently non-relational description of a quantum system may be understood as an effective description relative to a sufficiently good external frame.

\subsection{The continuity of frame observables}
\label{subsec:localization and continuity}

The notion of localizability comes into contact with various notions of continuity of the frame observables, as we now explore. Let $\F : \Bor(\mink) \to \eh$ be any normalized POVM over Minkowski and $\V_0 \subset \mink$ be a bounded region of Minkowski spacetime. Hegerfeldt showed \cite{hegerfeldt_causality_1998} that if the generator $H$ of time translations is bounded from below (i.e. the spectrum condition holds) and if $\exists \rho \in \homstate$ such that $\Tr[\rho \F(\V_0)] = 1$ then either $\Tr[\rho \cdot t \F(\V_0)] = 1$ for all $t \in \mathbb{R}$ or $\Tr[\rho \cdot t \F(\V_0)] < 1$ for almost every $t \in \mathbb{R}$ (the set of such $t$’s is dense and open). For normalized covariant POVMs this is particularly problematic and, in Hegerfeldt's words \cite{hegerfeldt_causality_1998},
\begin{quote}
    \enquote{If at $t = 0$ a particle is strictly localized in a bounded region $\V_0$ then, unless it remains in $\V_0$ for all times, it cannot be strictly localized in a bounded region $\V$, however large, for any finite time interval thereafter, and the particle localization immediately develops infinite \enquote{tails}. The spreading is over all space except possibly for \enquote{holes} which, if any, will persist for all times $[\cdots]$.}
\end{quote}

A weaker requirement is that $\norm{\F(\V_0)} = 1$, i.e. that the localisation is arbitrarily well approximated to be in $\V_0$. This is the meaning of the norm-1 or localisability property of POVMs and QRFs. We link these notions of localizability to various continuity conditions of the POVMs.

\begin{definition}
    Let $\Sigma$ be a real finite-dimensional manifold, $\E : \Bor(\Sigma) \to \eh$ be a POVM and $\mu : \Bor(\Sigma) \to \mathbb{R}$ a regular measure. $\E$ is said to be \emph{absolutely continuous with respect to $\mu$}, written $\E \ll_\text{ac} \mu$, if there exists a number $c \in \mathbb{R}$ such that
    \begin{equation}
        \norm{\E(\Delta)} \leq c \mu(\Delta) \, \qquad \forall \Delta \in \Bor(\Sigma) \, .
    \end{equation}
\end{definition}

\begin{theorem}
    \label{thm:no go absolute continuity}
    Let $\Sigma$ be a topological space as above, $\mu : \Bor(\Sigma) \to \mathbb{R}$ a regular non-atomic measure and $\E : \Bor(\Sigma) \to \eh$ be a POVM. If $\E \ll_\text{ac} \mu$, then $\E$ does not have the norm-1 property.
\end{theorem}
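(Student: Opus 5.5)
The idea is to exploit non-atomicity of $\mu$: find a Borel set with nonzero effect but very small measure. The absolute-continuity bound then caps the norm of that effect well below $1$, which contradicts the norm-1 property.

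We argue by contradiction. Suppose $\E \ll_\text{ac} \mu$ with constant $c$, so that $\norm{\E(\Delta)} \leq c\,\mu(\Delta)$ for all Borel $\Delta$. Suppose also that $\E$ has the norm-1 property. Since $\E$ is normalized, $\E(\Sigma)=\id \neq 0$, so $\mu(\Sigma) \geq 1/c > 0$. In particular $c>0$ and $\mu$ is not the zero measure.

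The key step is to construct a Borel set $X$ with $\E(X) \neq 0$ and $0 < \mu(X) < 1/c$.

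\begin{enumerate}
\item Reduce to a set of finite positive measure. By regularity (inner regularity on compacts, or $\sigma$-finiteness of a Radon measure on a manifold), there is a $Y$ with $0 < \mu(Y) < \infty$ and $\E(Y) \neq 0$. To get this, cover $\Sigma$ by countably many sets of finite measure (or relatively compact charts). Countable additivity of $\E$ in the weak sense, together with $\E(\Sigma) = \id$, forces at least one piece $Y$ to satisfy $\E(Y) \neq 0$. Such a piece has $\mu(Y)>0$, since otherwise $\norm{\E(Y)} \leq c\mu(Y) = 0$.
\item Partition $Y$ into small pieces. By Sierpiński's theorem on non-atomic measures, $Y$ can be split into finitely many Borel pieces $Y_1,\dots,Y_N$, each with $\mu(Y_j) < 1/c$. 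One can take $N$ roughly $\lceil c\,\mu(Y)\rceil+1$.
\item Pick a piece with nonzero effect. By additivity, $\sum_j \E(Y_j) = \E(Y) \neq 0$, so some $\E(Y_j) =: \E(X)$ is nonzero.
\end{enumerate}

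For this $X$ we have $\norm{\E(X)} \leq c\,\mu(X) < 1$. This contradicts the norm-1 property, which would demand $\norm{\E(X)} = 1$. Equivalently, via Proposition~\ref{prop:norm 1 equivalences}, no sequence of unit vectors $\phi_n$ can satisfy $\expval{\E(X)}{\phi_n} \to 1$, because $\expval{\E(X)}{\phi} \leq \norm{\E(X)} < 1$ for every unit vector $\phi$.

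The main obstacle is step 1, namely handling $\mu(\Sigma)=\infty$ carefully. The plan there is to use $\sigma$-finiteness, which follows from regularity and the manifold structure: a second countable manifold is $\sigma$-compact, and a regular (Radon) measure is finite on compacts. Step 2 relies on the standard fact that a non-atomic finite measure takes all intermediate values; this is invoked rather than reproved.
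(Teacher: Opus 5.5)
Your argument is correct, but it takes a different route from the paper. The paper's proof is a two-line reduction to an external result. It cites Beneduci--Schroeck (Thm.~2.10 of \cite{beneduci_note_2013}), which says that such an $\E$ can have the norm-1 property only if $\norm{\E(\{x\})}\neq 0$ at every $x$ with $\mu(\{x\})<\infty$. It then observes that non-atomicity gives $\mu(\{x\})=0$, so $\norm{\E(\{x\})}\le c\,\mu(\{x\})=0$ for every $x$.

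You instead prove the claim directly. Sierpi\'nski's intermediate-value theorem gives a Borel set $X$ with $\E(X)\neq 0$ and $\mu(X)<1/c$. The bound $\norm{\E(X)}\le c\,\mu(X)<1$ then violates the norm-1 property outright.

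Your version has three advantages. It is self-contained. It makes the mechanism explicit: any set of measure below $1/c$ that carries a nonzero effect is a witness. And your Step~1 extends the result to $\sigma$-finite infinite $\mu$, such as Lebesgue measure on Minkowski spacetime, which is exactly how the paper applies the theorem immediately afterwards. As literally stated, $\mu:\Bor(\Sigma)\to\mathbb{R}$ is finite, so there Step~1 is superfluous. If you want Step~1 without assuming finiteness on compacts, note that outer regularity plus $\mu(\{x\})=0$ already makes $\mu$ locally finite, hence $\sigma$-finite on a second-countable $\Sigma$.

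The price is that Sierpi\'nski needs non-atomicity in the genuine sense of ``no atoms'', whereas the paper only uses that singletons are $\mu$-null. On a second-countable space with $\sigma$-finite $\mu$ these are equivalent: an atom of finite measure, intersected with a countable separating family of Borel sets, collapses to a single point of positive mass. You should state this equivalence explicitly.
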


\begin{proof}
    Beneduci and Schroeck showed (\cite{beneduci_note_2013}, Thm 2.10) that such an $\E$ can only have the norm-1 property if $\norm{\E(\{x\})} \neq 0$ for each $x \in \Sigma$ such that $\mu(\{x\}) < \infty$. But if $\mu$ is non-atomic then $\mu(\{x\}) = 0 < \infty$ for all $x \in \Sigma$ and so if $\E \ll_\text{ac} \mu$ then $\norm{\E(\{x\})} \leq c \cdot 0 = 0 \Rightarrow \E(\{x\}) = 0$ which shows that $\E$ cannot have the norm-1 property.
\end{proof}

In particular, the Lebesgue measure on Minkowski is non-atomic; thus, any absolutely continuous POVM with respect to that measure is not localizable. This covers POVMs which have a bounded operator-valued density with respect to the Lebesgue measure, i.e. POVMs of the form $\E(\Delta) = \int_\Delta K(x) d\mu(x)$ where $K(x) \in \bh$; and in particular coherent-state POVMs \cite{beneduci_note_2013}. A weaker notion to absolute continuity is that of $\mu$-continuity, which we now cover.

\begin{definition}
    Let $\Sigma$ be a real finite-dimensional manifold, $\E : \Bor(\Sigma) \to \eh$ be a POVM and $\mu : \Bor(\Sigma) \to \mathbb{C}$ be a real measure. We say that $\E$ is \emph{$\mu$-continuous}, written $\E \ll \mu$, if $\mu(\Delta) = 0 \Rightarrow \E(\Delta) = 0$ for all $\Delta \in \Bor(\Sigma)$. 
\end{definition}

Note that if $\E \ll_\text{ac} \mu$ then $\E \ll \mu$. Indeed, if $\exists c < \infty$ such that $\norm{\E(\Delta)} \leq c \mu(\Delta)$ for all $\Delta \in \Bor(\Sigma)$, then $\mu(\Delta)=0 \Rightarrow \norm{E(\Delta)} \leq c \cdot 0 = 0 \Rightarrow \E(\Delta) = 0$. The converse, however, need not hold: for example, the position PVM on the real line is $\mu$-continuous for $\mu$ the Lebesgue measure but not absolutely continuous with respect to $\mu$, since $\norm{\E((0,\epsilon))} = 1$ for any $\epsilon > 0$ while $\mu((0,\epsilon)) = \epsilon$, so we cannot have $1 \leq c \epsilon$ for finite $c$. We now show that covariant POVMs are necessarily $\mu$-continuous with respect to (quasi-)invariant measures.

\begin{theorem}
    \label{thm:covariance implies mu continuity}
    Let $G$ be a locally compact, second countable Hausdorff topological group which acts transitively on $\Sigma \cong G/H$, a locally compact second countable topological space with Borel $\sigma$-algebra $\Bor(\Sigma)$, where $H$ is a closed subgroup of $G$. Let $\mu$ be a quasi-invariant\footnote{A measure $\mu$ on $\Sigma$ is quasi-invariant if $\forall g \in G, \, \mu(g \cdot X) = 0 \Leftrightarrow \mu(X) = 0$ for all $X \in \Bor(\Sigma)$.} $\sigma$-finite positive Borel measure on $\Sigma$. If $\E : \Bor(\Sigma) \to \eh$ is a $G$-covariant normalized  POVM then $\E \ll \mu$, and for all $\chi \in \thi$, $\exists \dfrac{d\mu^{\E}_{\chi}}{d\mu} \in L^1(\Sigma;\mu)$. Moreover, if $\chi \geq 0$ then $\dfrac{d\mu^{\E}_{\chi}}{d\mu} \geq 0$ and if $\Tr[\chi] = 1$ then $\int_\Sigma \dfrac{d\mu^{\E}_{\chi}}{d\mu} d\mu = 1$.
\end{theorem}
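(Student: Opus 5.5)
The plan is to reduce $\mu$-continuity of $\E$ to a statement about the Haar measure on $G$, using covariance to turn a $\mu$-null set into a family of translates that are all $\E$-null. First, recall the standard fact that all quasi-invariant $\sigma$-finite measures on $G/H$ are mutually equivalent, and equivalent to the image of a Haar measure $m_G$ under the quotient map $q:G\to G/H$. In particular, $\mu(\Delta)=0$ holds iff $m_G(q^{-1}(\Delta))=0$. Now fix $\Delta$ with $\mu(\Delta)=0$ and a density operator $\rho$. Positivity of $\E(\Delta)$ means it suffices to show that $\Tr[\rho\,\E(\Delta)]=0$ for all $\rho$.

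The key step is an averaging argument. Consider the function $g\mapsto \Tr[\rho\,\E(g^{-1}\cdot\Delta)]=\Tr[U(g)\rho U(g)^\dagger\E(\Delta)]$. This is continuous in $g$ by ultraweak continuity of $U$, and hence measurable. Choose a probability density $f\in L^1(G,m_G)$ and integrate this function against $f\,dm_G$. Fubini then gives
\begin{equation}
\int_G f(g)\,\Tr[\rho\,\E(g^{-1}\Delta)]\,dm_G(g)=\int_\Sigma\Big(\int_G f(g)\,\mathbb{1}_{\Delta}(g\cdot x)\,dm_G(g)\Big)d\mu^\E_\rho(x).
\end{equation}
Fubini applies here because $(g,x)\mapsto \mathbb{1}_\Delta(g\cdot x)$ is jointly Borel. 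For each fixed $x$, the inner integral is $\int f\,\mathbb{1}_{\{g: g x\in\Delta\}}\,dm_G$. The set $\{g: g\cdot x\in\Delta\}$ equals $q^{-1}(\Delta)\,g_x^{-1}$, a right translate of a Haar-null set. Since right translates of Haar-null sets are Haar-null (the modular function keeps right Haar and left Haar measure equivalent), the inner integral vanishes. So the averaged quantity is zero.

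Because $f>0$ can be chosen everywhere positive on $G$ ($G$ is $\sigma$-compact), we conclude that $\Tr[\rho\,\E(g^{-1}\Delta)]=0$ for $m_G$-a.e. $g$. By continuity in $g$, this then holds for every $g$, in particular $g=e$. Hence $\E(\Delta)=0$, which proves $\E\ll\mu$. I expect this step to be the main obstacle: the joint measurability needed for Fubini and the null-set bookkeeping on the right translate. It is also exactly where covariance enters.

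The remaining claims are routine. For $\chi\in\thi$, write $\chi$ as a combination of four positive trace-class operators. Then $\mu^\E_\chi(X)=\Tr[\chi\E(X)]$ is a complex measure of finite total variation, bounded by $\|\chi\|_1$, and $\mu^\E_\chi\ll\mu$ by the first part. Since $\mu$ is $\sigma$-finite, Radon–Nikodym gives a density in $L^1(\Sigma;\mu)$. If $\chi\ge0$, the measure $\mu^\E_\chi$ is positive, so the density is $\mu$-a.e. nonnegative and can be chosen nonnegative. Finally, normalization of $\E$ gives $\int_\Sigma \frac{d\mu^\E_\chi}{d\mu}\,d\mu=\mu^\E_\chi(\Sigma)=\Tr[\chi]=1$.
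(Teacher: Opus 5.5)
Your proof is correct, and it takes a genuinely different route from the paper's. The paper argues structurally in three steps. It first uses Cattaneo's covariant Naimark dilation to write $\E(\Delta)=W^\dagger P(\Delta)W$ for a system of imprimitivity $(U_P,P)$. It then applies Mackey's imprimitivity theorem to realise $P(\Delta)$ as the multiplication operator $M_{1_\Delta}$ on a direct integral over a quasi-invariant measure $\nu$. Only at the end does it use that quasi-invariant measures on $G/H$ share null sets, giving $\mu(\Delta)=0\Rightarrow\nu(\Delta)=0\Rightarrow P(\Delta)=0\Rightarrow\E(\Delta)=0$.

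You stay at the level of Born measures instead. Averaging $g\mapsto\mu^\E_\rho(g^{-1}\Delta)$ against Haar measure and applying Tonelli reduces the claim to Haar-nullity of right translates of $q^{-1}(\Delta)$. Covariance, continuity of $g\mapsto\Tr[U(g)\rho U(g)^\dagger\E(\Delta)]$, and the fact that Haar measure charges every nonempty open set then return you to $g=e$.

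Both arguments rest on the same fact about $G/H$, and both tacitly need $\mu\neq0$. One wording point: $q_*m_G$ is generally not $\sigma$-finite when $H$ is non-compact, so your ``equivalent'' should be read purely as ``same null sets'', which is all you actually use. Your route avoids dilation and imprimitivity theory altogether. The paper's route additionally shows that the dilated projection-valued measure $P$ is itself $\mu$-continuous.

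Your averaging step can be shortened and freed of any continuity assumption on $U$. For a faithful density operator $\rho$, the probability measure $\mu^\E_\rho$ has exactly the null sets of $\E$. These null sets are $G$-invariant by covariance, so $\mu^\E_\rho$ is a nonzero finite quasi-invariant measure, hence equivalent to $\mu$ by the very fact you quote. The Radon--Nikodym part is the same as in the paper.
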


\begin{proof}
    See App. \ref{proof mu continuity}.
\end{proof}

Thus, covariant POVMs on homogeneous space have Radon-Nikodym derivatives in $L^1$. We now show that if these covariant POVMs are absolutely continuous, then their Radon-Nikodym derivatives are in $L^1 \cap L^\infty$ -- a property which makes these densities \emph{nicer} in the sense that any function in $L^1 \cap L^\infty$ is also in every other $L^p$ space by interpolation. 

\begin{theorem}
    \label{thm: absolute continuity POVMs}
     Let $\Sigma$ be a locally compact second countable topological space with Borel $\sigma$-algebra $\Bor(\Sigma)$, and let $\mu$ be a $\sigma$-finite positive Borel measure on $\Sigma$. Then $\E \ll_\text{ac} \mu$ iff for all $\chi \in \thi$, $\exists \dfrac{d\mu^{\E}_{\chi}}{d\mu} \in L^\infty(\Sigma;\mu)$. Moreover, if $\chi \geq 0$ then $\dfrac{d\mu^{\E}_{\chi}}{d\mu} \geq 0$ and if $\Tr[\chi] = 1$ then $\int_\Sigma  \dfrac{d\mu^{\E}_{\chi}}{d\mu} d\mu = 1$.
\end{theorem}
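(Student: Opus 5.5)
We prove the two directions separately, working with the complex measures $\mu^{\E}_\chi(\Delta) := \Tr[\chi\,\E(\Delta)]$ for $\chi \in \thi$. These are countably additive in $\Delta$ because $\E$ is ultraweakly $\sigma$-additive. Their total variation is bounded by $4\norm{\chi}_1$, since we may split $\chi$ into four positive trace-class pieces.

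\emph{Forward direction.} Assume $\norm{\E(\Delta)} \leq c\,\mu(\Delta)$ for all $\Delta$. Then
\begin{equation}
\abs{\mu^{\E}_\chi(\Delta)} \leq \norm{\chi}_1 \norm{\E(\Delta)} \leq c\norm{\chi}_1\,\mu(\Delta).
\end{equation}
In particular $\mu^{\E}_\chi \ll \mu$. Since $\mu$ is $\sigma$-finite, Radon--Nikodym gives a density $f_\chi = d\mu^{\E}_\chi/d\mu \in L^1_{\mathrm{loc}}(\Sigma;\mu)$, and in fact $f_\chi \in L^1$ because the total variation of $\mu^{\E}_\chi$ is finite. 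To show $f_\chi \in L^\infty$, suppose $\abs{f_\chi} > c\norm{\chi}_1$ on a set of positive $\mu$-measure. Intersecting with a set of finite measure and splitting into real/imaginary and positive/negative parts, we can find $\Delta$ with $0<\mu(\Delta)<\infty$ on which, say, $\mathrm{Re} f_\chi > c'\,$ for some $c' > c\norm{\chi}_1$ (up to a constant factor depending on the splitting). Integrating over $\Delta$ contradicts the displayed bound. Working instead with the total-variation bound $\abs{\mu^{\E}_\chi}(\Delta) \leq 4c\norm{\chi}_1\mu(\Delta)$ gives the clean estimate $\norm{f_\chi}_\infty \leq 4c\norm{\chi}_1$.

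\emph{Converse direction.} Assume every $\mu^{\E}_\chi$ has a density $f_\chi \in L^\infty$. This already gives $\E \ll \mu$: if $\mu(\Delta)=0$ then $\Tr[\chi\E(\Delta)]=0$ for all $\chi$, so $\E(\Delta)=0$. To obtain a \emph{uniform} constant $c$, consider the linear map
\begin{equation}
T:\thi \to L^\infty(\Sigma;\mu), \qquad \chi \mapsto f_\chi,
\end{equation}
which is well defined since densities are unique $\mu$-a.e. We show that $T$ has closed graph. Suppose $\chi_n \to \chi$ in trace norm and $f_{\chi_n} \to g$ in $L^\infty$. For every $\Delta$ of finite measure, $\int_\Delta f_{\chi_n}\,d\mu = \Tr[\chi_n \E(\Delta)] \to \Tr[\chi\E(\Delta)]$, and also $\int_\Delta f_{\chi_n}\,d\mu \to \int_\Delta g\,d\mu$. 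Hence $g = f_\chi$ a.e. by $\sigma$-finiteness. The closed graph theorem then gives $\norm{f_\chi}_\infty \leq C\norm{\chi}_1$. For positive $\chi$ this yields $\Tr[\chi\E(\Delta)] = \int_\Delta f_\chi\,d\mu \leq C\norm{\chi}_1\,\mu(\Delta)$. Taking the supremum over density operators gives $\norm{\E(\Delta)} \leq C\mu(\Delta)$ when $\mu(\Delta)<\infty$; the case $\mu(\Delta)=\infty$ is trivial. This establishing of uniformity is the main obstacle, and the closed graph argument is how I would overcome it.

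\emph{Remaining claims.} These are immediate. If $\chi \geq 0$, then $\mu^{\E}_\chi$ is a positive measure, so its density is nonnegative a.e. If $\Tr[\chi]=1$, normalization gives
\begin{equation}
\int_\Sigma f_\chi\,d\mu = \Tr[\chi\,\E(\Sigma)] = \Tr[\chi] = 1.
\end{equation}
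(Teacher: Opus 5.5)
Your proof is correct. The forward direction is essentially the paper's: the bound $\abs{\mu^{\E}_\chi(\Delta)}\le c\norm{\chi}_1\mu(\Delta)$ is summed over measurable partitions to control the total variation. This gives $\abs{\mu^{\E}_\chi}(\Delta)\le c\norm{\chi}_1\mu(\Delta)$ directly, so you can drop the factor $4$ and the preliminary real/imaginary splitting sketch.

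For the converse you use a different Baire-category tool.
\begin{itemize}
\item The paper applies Banach--Steinhaus to the functionals $L_\Delta(\chi)=\Tr[\chi\E(\Delta)]/\mu(\Delta)$, $0<\mu(\Delta)<\infty$. These are pointwise bounded by $\norm{f_\chi}_\infty$, and the paper then reads off $\norm{L_\Delta}=\norm{\E(\Delta)}/\mu(\Delta)$ from the duality $\thi^*\cong\bh$.
\item You apply the closed graph theorem to the density map $T\colon\chi\mapsto f_\chi$ from $\thi$ to $L^\infty(\Sigma;\mu)$. This costs an extra step, which you handle correctly: identifying the graph limit with $f_\chi$ via integrals over finite-measure sets and $\sigma$-finiteness.
\end{itemize}

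The paper's route lands directly on $\norm{\E(\Delta)}\le c\,\mu(\Delta)$. Yours yields the stronger intermediate fact that $T$ is bounded, $\norm{f_\chi}_\infty\le C\norm{\chi}_1$. The POVM bound then follows by restricting to density operators and using $\E(\Delta)\ge 0$. A bounded density map of this kind is also the natural starting point if one wants to extract a bounded operator-valued density $K$ with $\E(\Delta)=\int_\Delta K\,d\mu$. That connects your formulation more directly to the paper's remark about POVMs with bounded operator-valued densities.
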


\begin{proof}
    See App. \ref{proof absolute continuity}.
\end{proof}

Theorems~\ref{thm:no go absolute continuity} and
\ref{thm: absolute continuity POVMs} give a dichotomy. Covariance implies only $\mu$-continuity, and hence the existence of $L^1$ Radon-Nikodym densities for all Born measures. By contrast, absolute continuity $\E\ll_{\mathrm{ac}}\mu$ is equivalent to essential boundedness of all scalar Radon-Nikodym derivatives; the Banach-Steinhaus argument in App.~\ref{proof absolute continuity} then produces a single constant $c<\infty$ satisfying $\|\E(\Delta)\|\leq c\mu(\Delta)$. For a non-atomic measure this is incompatible with the norm-$1$ property. Consequently, under Definition~\ref{thm:no go absolute continuity}, there are no localizable absolutely continuous POVMs. Localizable covariant POVMs may nevertheless be $\mu$-continuous; along a localizing sequence their $L^1$ densities must become increasingly concentrated and cannot remain uniformly bounded in $L^\infty$.

\begin{remark}
    Several known examples of absolutely continuous covariant POVMs (e.g. coherent state POVMs \cite{beneduci_note_2013}) are not localizable.
\end{remark}

Beyond these tensions, Thm. \ref{thm:covariance implies mu continuity} is useful for us in the context of relativization, since it guarantees the existence of a Radon-Nikodym derivative $f^{\E_\R}_{\omega} = \frac{d\mu^{\E_\R}_{\omega}}{d\mu} \in L^1(\Sigma,\mu)$, i.e. we can write the integral as a smearing with respect to this function, that we call the \emph{frame smearing function} \cite{fedida_foundations_2025}, and the relevant quasi-invariant measure on the space. It will also be useful for showing the $\mu$-admissibility (see Sec.~\ref{subsec:groupoid-qrf-definition})  of groupoid QRF in the recovery of torsor QRFs, as we will see in Sec.~\ref{subsec:reduction to group qrfs}.

\section{Quantum reference frames for continuous groupoids}
\label{sec:groupoid-qrf}

In this section we formulate the operational notion of a quantum reference frame for a continuous groupoid. The guiding principle is simple: the group-based ingredients recalled in Sec.~\ref{sec:operational-qrf-groups} are replaced by their groupoid analogues. A single Hilbert space is replaced by a measurable field of Hilbert spaces, a unitary representation of a group by a unitary representation of a continuous groupoid, and a covariant POVM on a homogeneous space by a family of fiber-wise covariant POVMs on the target fibers of the groupoid.

The analytic background for this replacement is provided by the measure-groupoid formalism and the associated groupoid von Neumann algebra (see, for instance \cite{ciaglia_groupoidal_2024} and references therein). Conceptually, however, the operational construction itself only needs three ingredients: a continuous groupoid, a measurable field of Hilbert spaces carrying a unitary groupoid representation, and a covariant family of POVMs. The operator-algebraic structure associated with the corresponding measure groupoid will become more important in Secs.~\ref{sec:RQFT Minkowski}, \ref{sec:RQFT curved} and \ref{sec:towards RQGFT} where the framework is applied to relational quantum field theory.

\subsection{Groupoid quantum reference frames}
\label{subsec:groupoid-qrf-definition}

We now formalize the operational notion of a quantum reference frame for a continuous groupoid. 

\begin{definition}
    Let $\Gamma \rightrightarrows M$ be a continuous groupoid. We say that a field of POVMs $(\E^x)$ is \emph{measurable} if for every $\Delta \in \Bor(\Gamma)$, the field $x \mapsto \E^x(\Delta \cap \Gamma^x)$ is a measurable operator field.
\end{definition}

\begin{definition}
\label{def:groupoid-qrf-section4}
A (principal) \emph{quantum reference frame} for the continuous groupoid $\Gamma\rightrightarrows M$ based on $\Gamma$ is a triple $\R = (U,\E_\R,\scrhir)$,
where:
\begin{enumerate}
    \item $\scrhir=\bigsqcup_{x\in M}\hirx\to M$ is a continuous field of separable Hilbert spaces;
    \item $U$ is a fiber-wise ultraweakly continuous unitary representation of $\Gamma$ on $\scrhir$;
    \item $\E_\R=\{\E_\R^x\}_{x\in M}$ is a measurable field of normalized POVMs  $\E_\R^x:\Bor(\Gamma^x)\to \mathcal{E}(\hirx)$,
    satisfying the covariance condition
    \begin{equation}
    U_\alpha\, \E_\R^{s(\alpha)}(\Delta)\, U_\alpha^\dagger  =     \E_\R^{t(\alpha)}(L_\alpha \Delta),
    \qquad
    \alpha\in \Gamma,\ \Delta \in \Bor(\Gamma^{s(\alpha)}) \, ,
    \end{equation}
    where  $L_\alpha:\Gamma^{s(\alpha)}\to \Gamma^{t(\alpha)}$,  $\beta\mapsto L_\alpha (\beta) = \alpha\circ \beta$.
\end{enumerate}
The family $\E_\R$ is called the \emph{frame observable} of $\R$.
\end{definition}

The group-based formalism is recovered by replacing the family of fibers $\Gamma^x$ by a single space $\Sigma$ and fiber-wise covariance under left translation in the groupoid by global covariance under $G$.

\begin{remark}
    It is possible to define QRFs for groupoids based on spaces other than $\Gamma$. That is, the frame observables need not take Borel subsets of $\Gamma^x$ for $x \in M$, but Borel subsets of other spaces (perhaps $\Gamma^x/\Gamma^{x}_x$, perhaps some other groupoid spaces). Those generalise non-principal homogeneous space QRFs. One then expects the ensuing construction of the relativization map to follow through assuming isotropy-invariant operators. We do not focus on such QRFs in this paper, but these might be of interest if one wanted, for example, to define QRFs directly on the spacetime manifold $\mathcal{M}$ or on the space of $4$-metrics $GL(4,\mathbb{R})/SO_+^\uparrow(1,3)$.
\end{remark}

\begin{remark}
    One may also want to generalise the notion of groupoid QRFs to measurable groupoids, and relax the continuity assumptions of the field of Hilbert spaces and of the unitary representation. One challenge to overcome this would be to define the countable additivity of POVMs
    \begin{equation}
        \E_\R^x\left(\bigsqcup_{n=1}^\infty \Delta_n\right) = \sum_{n=1}^\infty \E_\R^x(\Delta_n), \qquad \Delta_n \in \Bor(\Gamma^x)
    \end{equation}
    in some suitable topology; we leave this for future work. 
\end{remark}


\begin{definition}
    Let $\R$ be a (principal) QRF for the continuous groupoid $\Gamma \rightrightarrows M$. For $\mu$-almost every $x \in M$ and $\pi_x : \Gamma^x \to M$ be a measurable projection, we define
    \begin{align}
        \F_{\R,\pi}^x : \Bor(M) &\to \Eff(\hirx) \\
        \F_{\R,\pi}^x(N) &:= \E_\R^x(\pi_x^{-1}(N))
    \end{align}
    and call $(\F_{\R,\pi}^x)$ the \emph{$\pi$-marginal frame POVM}.\footnote{It is a normalized POVM: $\F_{\R,\pi}^x(N)$ is non-negative for all $N \in \Bor(M)$, $\F_{\R,\pi}^x(M) = \E_{\R,\pi}^x(\Gamma^x) = \id_{\hirx}$, and countable additivity follows from $\E_\R^x$ being a POVM and by the measurability assumption of $\pi$.} In particular, if $\pi_x = s|_{\Gamma^x}$ (i.e. $\pi_x^{-1}(N) = s^{-1}(N) \cap \Gamma^x$ for all $N \in \Bor(M)$), we call $\F_{\R,s}^x$ the \emph{source marginal of the frame observable}.
\end{definition}

From there, we can also introduce several nice properties that the QRFs may or may not have.

\begin{definition}
\label{def:groupoid-qrf-sharp-localizable}
Let $\R=(\scrhir,U,\E_\R)$ be a quantum reference frame for the continuous groupoid $\Gamma \rightrightarrows M$. $\R$ is called
\begin{enumerate}
    \item $\mu$-admissible if the source marginal of the frame observable does not see $\mu$-null sets, i.e. given
    \begin{equation}
        \label{eqn:marginal povm s}
        \F_{\R,s}^x(N) := \E_\R^x(s^{-1}(N) \cap \Gamma^x), \qquad N \in \Bor(M),
    \end{equation}
    we have that $\F_{\R,s}^x$ is $\mu$-continuous: $\F_{\R,s}^x \ll \mu$ (i.e. $\mu(N)=0 \Rightarrow \F_{\R,s}^x(N)=0$).
    \item \emph{Lie} if $\Gamma$ is a Lie groupoid.
    \item \emph{sharp} if each $\E_\R^x$ is projection-valued.
    \item \emph{principal} if it is based on $\Gamma$ and non-principal if not.
    \item \emph{ideal} if it is both principal and sharp.
   \item \emph{localizable} if each $\E_\R^x$ has the norm-$1$ property, namely
\begin{equation}
  \|\E_\R^x(\Delta)\|=1
\end{equation}
for every $x\in M$ and every $\Delta\in\Bor(\Gamma^x)$ with
$\E_\R^x(\Delta)\neq0$.
    \item \emph{bisection-admissible} if $(\Gamma \rightrightarrows M,\nu,\mu)$ is Lie and bisection-admissible.
\end{enumerate}
\end{definition}

$\mu$-admissibility will be useful for the independence of the relativization map defined below on the choice of representative operator field. 

\begin{remark}
    If $\E_\R^x$ is projection-valued then it has the norm-1 property. Hence, if $\R$ is sharp (i.e. $\E_\R^x$ is projection-valued for $\mu$-almost every $x$) then it is localizable (i.e. $\E_\R^x$ has the norm-1 property for $\mu$-almost every $x$).
\end{remark}

\begin{lemma}
    \label{lem:covariance marginal}
    Let $\R$ be a principal quantum reference frame for $\Gamma \rightrightarrows M$. Then $\F_{\R,s} = (\F_{\R,s}^x)$, defined as in Eqn.~\eqref{eqn:marginal povm s}, is a measurable field of normalized POVMs satisfying the invariance condition
    \begin{equation}
        \label{eqn:source marginal invariance}
        U_\alpha \F_{\R,s}^{s(\alpha)}(N)U_{\alpha}^\dagger = \F_{\R,s}^{t(\alpha)}(N), \qquad \alpha \in \Gamma, N \in \Bor(M).
    \end{equation}
\end{lemma}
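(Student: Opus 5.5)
The plan is to split the statement into two parts: first that $\F_{\R,s}$ is a measurable field of normalized POVMs, and then the invariance relation \eqref{eqn:source marginal invariance}.

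\textbf{Measurability and POVM property.} For each $x$, the map $s|_{\Gamma^x}:\Gamma^x\to M$ is continuous, hence Borel. So $N\mapsto s^{-1}(N)\cap\Gamma^x$ sends $\Bor(M)$ into $\Bor(\Gamma^x)$ and preserves complements (relative to $\Gamma^x$) and countable disjoint unions. Positivity, normalization $\F_{\R,s}^x(M)=\E_\R^x(\Gamma^x)=\id_{\hirx}$ and ultraweak $\sigma$-additivity then follow directly from the corresponding properties of $\E_\R^x$, as in the footnote to the definition of the $\pi$-marginal.

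For measurability of the field, fix $N\in\Bor(M)$ and set $\Delta:=s^{-1}(N)\in\Bor(\Gamma)$, which is Borel since $s$ is continuous on $\Gamma$. Then $\Delta\cap\Gamma^x=s^{-1}(N)\cap\Gamma^x$ for every $x$, so $x\mapsto\F^x_{\R,s}(N)=\E_\R^x(\Delta\cap\Gamma^x)$ is a measurable operator field by the measurability assumption on $(\E_\R^x)$.

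\textbf{Invariance.} Fix $\alpha:x\to y$ and $N\in\Bor(M)$, and apply the covariance condition of Definition~\ref{def:groupoid-qrf-section4} with $\Delta=s^{-1}(N)\cap\Gamma^x$. This gives
\begin{equation}
U_\alpha\F_{\R,s}^{x}(N)U_\alpha^\dagger=\E_\R^{y}\bigl(L_\alpha(s^{-1}(N)\cap\Gamma^x)\bigr).
\end{equation}
The key step, and the only genuinely non-formal point, is the set identity
\begin{equation}
L_\alpha\bigl(s^{-1}(N)\cap\Gamma^x\bigr)=s^{-1}(N)\cap\Gamma^y.
\end{equation}
It rests on two facts. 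First, $s(\alpha\circ\beta)=s(\beta)$, so $s\circ L_\alpha=s$ on $\Gamma^x$; this was already used in the proof of Proposition~\ref{prop:transitive groupoid implies nu compatible}. Second, $L_\alpha:\Gamma^x\to\Gamma^y$ is a bijection with inverse $L_{\alpha^{-1}}$.

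For the inclusion $\subseteq$: if $\beta\in\Gamma^x$ with $s(\beta)\in N$, then $\alpha\circ\beta\in\Gamma^y$ and $s(\alpha\circ\beta)=s(\beta)\in N$. For the inclusion $\supseteq$: given $\gamma\in\Gamma^y$ with $s(\gamma)\in N$, set $\beta=\alpha^{-1}\circ\gamma\in\Gamma^x$. Then $s(\beta)=s(\gamma)\in N$ and $L_\alpha\beta=\gamma$.

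Substituting the set identity yields $\E_\R^y(s^{-1}(N)\cap\Gamma^y)=\F_{\R,s}^y(N)$, which is exactly \eqref{eqn:source marginal invariance}. I expect no real obstacle beyond stating this identity carefully. The one subtlety is that Borel measurability of $L_\alpha(\Delta)$ must hold for covariance to make sense; here it is immediate, because the identity shows that $L_\alpha(\Delta)$ is itself a preimage of a Borel set under a continuous map.
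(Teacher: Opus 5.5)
Your proposal is correct and follows essentially the same route as the paper: apply the covariance of $\E_\R$ to $\Delta = s^{-1}(N)\cap\Gamma^{s(\alpha)}$, then use $s(\alpha\circ\beta)=s(\beta)$ and the bijectivity of $L_\alpha$ to obtain $L_\alpha(s^{-1}(N)\cap\Gamma^{s(\alpha)}) = s^{-1}(N)\cap\Gamma^{t(\alpha)}$. You also argue measurability of the field explicitly, by taking $\Delta = s^{-1}(N)\in\Bor(\Gamma)$ in the definition of a measurable field, and you check the POVM properties; the paper's proof leaves both implicit, so this is a useful addition.
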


\begin{proof}
    This follows immediately from the fact that
    \begin{equation}
        U_\alpha \E_\R^{s(\alpha)}(\Delta) U_\alpha^\dagger = \E_\R^{t(\alpha)}(L_\alpha \Delta)
    \end{equation}
    for all $\alpha \in \Gamma$ and all $\Delta \in \Bor(\Gamma^{s(\alpha)})$: in particular, for any $N \in \Bor(M)$, $s^{-1}(N) \cap \Gamma^{s(\alpha)} \in \Bor(\Gamma^{s(\alpha)})$ since $s$ is continuous thus Borel and $\Gamma^x = t^{-1}(x)$ is closed. Since $s(\alpha \circ \beta) = s(\beta)$ and $L_\alpha : \Gamma^{s(\alpha)} \to \Gamma^{t(\alpha)}$ is a bijection, $L_\alpha(s^{-1}(N) \cap \Gamma^{s(\alpha)})=s^{-1}(N) \cap \Gamma^{t(\alpha)}$ so writing $\F_{\R,s}^{s(\alpha)}(N) := \E_\R^{s(\alpha)}(s^{-1}(N) \cap \Gamma^{s(\alpha)})$ the result follows.
\end{proof}

Note that this is the source marginal of the frame observable; it is not the usual marginal POVM that one manipulates, for example, in the case of principal bundles $P \to M$.

\begin{definition}
    Let $(\Gamma \rightrightarrows M,\nu,\mu)$ be a Lie groupoid, and $\mathscr{B} \leq \mathrm{Bis}(\Gamma)$ be a subgroup of the group of bisections of $\Gamma$. We say that the measurable projections $\pi=(\pi_x : \Gamma^x \to M)$ are \emph{$\mathscr{B}$-compatible} if for $\mu$-almost every $x \in M$, $\pi_{\varphi_b(x)} \circ L_{b(x)} = \varphi_b \circ \pi_x$ for all $b \in \mathscr{B}$. If $\pi=(\pi_x)$ is a family of $\mathscr{B}$-compatible measurable projections, we call $(\F_{\R,\pi}^x)$ \emph{$\mathscr{B}$-compatible marginal frame observables.}
\end{definition}

\begin{lemma}
    \label{lem:projection marginal covariance}
    Let $\R$ be a principal Lie quantum reference frame for $\Gamma \rightrightarrows M$ and $\mathscr{B} \leq \mathrm{Bis}(\Gamma)$. Then the $\mathscr{B}$-compatible marginal frame observables $\F_{\R,\pi} = (\F_{\R,\pi}^x)$ satisfy the covariance condition
    \begin{equation}
        \label{eqn:projection marginal covariance}
        U_{b(x)} \F_{\R,\pi}^{x}(N)U_{b(x)}^\dagger = \F_{\R,\pi}^{\varphi_b(x)}(\varphi_b(N)), \qquad b \in \mathscr{B}, x \in M, N \in \Bor(M).
    \end{equation}
\end{lemma}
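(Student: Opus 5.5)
The argument is a direct unfolding of definitions, parallel to the proof of Lemma~\ref{lem:covariance marginal}. Fix $b\in\mathscr{B}$, a point $x\in M$ in the full-measure set on which $\mathscr{B}$-compatibility holds, and $N\in\Bor(M)$. Set $\alpha:=b(x)$. Since $b$ is a bisection, $s(\alpha)=x$ and $t(\alpha)=\varphi_b(x)$. Hence the covariance condition of Definition~\ref{def:groupoid-qrf-section4} applies with $\Delta:=\pi_x^{-1}(N)\in\Bor(\Gamma^x)$. Here $\Delta$ is Borel because $\pi_x$ is assumed measurable. Covariance gives
\begin{equation}
U_{b(x)}\,\F^x_{\R,\pi}(N)\,U_{b(x)}^\dagger
= U_{b(x)}\,\E^x_\R\bigl(\pi_x^{-1}(N)\bigr)\,U_{b(x)}^\dagger
= \E^{\varphi_b(x)}_\R\bigl(L_{b(x)}\,\pi_x^{-1}(N)\bigr).
\end{equation}

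The key step is the set identity
\begin{equation}
L_{b(x)}\bigl(\pi_x^{-1}(N)\bigr)=\pi_{\varphi_b(x)}^{-1}\bigl(\varphi_b(N)\bigr).
\end{equation}
This follows from the compatibility relation $\pi_{\varphi_b(x)}\circ L_{b(x)}=\varphi_b\circ\pi_x$, together with the facts that $L_{b(x)}:\Gamma^x\to\Gamma^{\varphi_b(x)}$ is a bijection (its inverse is $L_{b(x)^{-1}}$) and that $\varphi_b$ is a diffeomorphism, in particular a bijection of $M$.

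For the inclusion $\subseteq$: if $\beta\in\pi_x^{-1}(N)$, then $\pi_{\varphi_b(x)}(b(x)\circ\beta)=\varphi_b(\pi_x(\beta))\in\varphi_b(N)$.

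For the inclusion $\supseteq$: take $\gamma\in\Gamma^{\varphi_b(x)}$ with $\pi_{\varphi_b(x)}(\gamma)\in\varphi_b(N)$, and write $\gamma=b(x)\circ\beta$ with $\beta:=b(x)^{-1}\circ\gamma\in\Gamma^x$. Then $\varphi_b(\pi_x(\beta))\in\varphi_b(N)$, and injectivity of $\varphi_b$ gives $\pi_x(\beta)\in N$.

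We also note that $\varphi_b(N)$ is Borel, since $\varphi_b$ is a homeomorphism. Therefore $\F^{\varphi_b(x)}_{\R,\pi}(\varphi_b(N))$ is defined and equals $\E^{\varphi_b(x)}_\R\bigl(\pi_{\varphi_b(x)}^{-1}(\varphi_b(N))\bigr)$, which is the claim.

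The only genuine subtlety is the quantifier over $x$. Compatibility is assumed only for $\mu$-almost every $x$, whereas the statement is written for every $x\in M$, so the identity can only be asserted on the conull set where compatibility holds. A further point arises if $\F^{\varphi_b(x)}_{\R,\pi}$ is itself only defined for almost every point. In that case I would use quasi-invariance of $\mu$ under $\varphi_b$, available in the bisection-admissible setting, to ensure that $\varphi_b$ maps this conull set to a conull set. I would state the lemma as holding for $\mu$-a.e.\ $x$, or for all $x$ when compatibility holds everywhere. Beyond this, the proof is purely set-theoretic and needs nothing beyond the covariance axiom.
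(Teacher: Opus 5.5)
Your proof is correct and matches the paper's argument essentially step for step. You apply frame covariance along the arrow $b(x):x\to\varphi_b(x)$ to $\pi_x^{-1}(N)$, then prove $L_{b(x)}(\pi_x^{-1}(N))=\pi_{\varphi_b(x)}^{-1}(\varphi_b(N))$ by the two inclusions, using pointwise $\mathscr{B}$-compatibility and injectivity of $\varphi_b$.

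Your remark that the identity holds only on the conull set where compatibility holds is a fair sharpening of the statement; the paper's proof applies compatibility at every $x$ without comment. Your extra quasi-invariance caveat is unnecessary, though. Compatibility at $x$ already presupposes that $\pi_{\varphi_b(x)}$ is defined, and hence so is $\F^{\varphi_b(x)}_{\R,\pi}$.
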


\begin{proof}
    See App.~\ref{app:proof projection marginal covariance}.
\end{proof}

Thus, we see from Eqns.~\eqref{eqn:source marginal invariance} and \eqref{eqn:projection marginal covariance} that the source marginal of the frame observable is not $\mathrm{Bis}(\Gamma)$-compatible. On the other hand, $\F_{\R,t}$ is trivially $\mathrm{Bis}(\Gamma)$-compatible. Other interesting marginal frame observables will typically be $\mathscr{B}$-compatible for e.g. $\mathscr{B}$ the isometry subgroup of $\mathrm{Bis}(\mathrm{Poin}(\M,g))$ for $\Gamma = \mathrm{Poin}(\M,g) \rightrightarrows \M$. There are thus different notions of ``marginal POVMs over spacetime" to take into account. Now that we can talk about marginal POVMs, let us look at conditional probability measures and disintegration. We remind the reader that $\mu^{\E_\R^x}_{\omega_x}(\cdot) := \Tr[\omega_x \E_\R^x(\cdot)]$ and $\mu^{\F_{\R,\pi}^x}_{\omega_x}(\cdot) = \Tr[\omega_x \F_{\R,\pi}^x(\cdot)]$ are Born probability measures.

\begin{lemma}
\label{lem:disintegration exists Lie QRF}
Let $\R$ be a principal Lie groupoid QRF for
$\Gamma \rightrightarrows M$. Then, for $\mu$-almost every $x \in M$, every $\omega_x \in \mathscr{D}(\hirx)$, and every Borel measurable map $\pi_x : \Gamma^x \to M$, there exists a Markov kernel
\begin{equation}
    M \ni y
    \longmapsto
    \nu^{\R,\pi}_{\omega_x}(\,\cdot\mid y)
    \in \operatorname{Prob}(\Gamma^x)
\end{equation}
such that:
\begin{enumerate}
    \item for every $\Delta \in \Bor(\Gamma^x)$, the function $y \mapsto \nu^{\R,\pi}_{\omega_x}(\Delta\mid y)$ is Borel measurable,
    \item one has $\nu^{\R,\pi}_{\omega_x}         \bigl(\pi_x^{-1}(\{y\})\mid y\bigr) = 1$
    for $\mu^{\F_{\R,\pi}^x}_{\omega_x}$-almost every $y \in M$,
    
    \item for every $\Delta \in \Bor(\Gamma^x)$ and every
    $\Lambda \in \Bor(M)$,
    \begin{equation}
        \mu^{\E_\R^x}_{\omega_x}
        \bigl(\Delta\cap\pi_x^{-1}(\Lambda)\bigr)
        =
        \int_{\Lambda}
        \nu^{\R,\pi}_{\omega_x}(\Delta\mid y)\,
        d\mu^{\F_{\R,\pi}^x}_{\omega_x}(y).
    \end{equation}
\end{enumerate}
The kernel is unique up to $\mu^{\F_{\R,\pi}^x}_{\omega_x}$-almost-everywhere equality.
\end{lemma}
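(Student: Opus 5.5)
This is an instance of the classical disintegration theorem for Borel probability measures on standard Borel spaces. The plan is to reduce to that theorem and then check that the hypotheses hold in the present setting.

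\emph{Setup.} Fix $x$ in the full-measure set where $\E_\R^x$ is a normalized POVM, and fix $\omega_x\in\mathscr{D}(\hirx)$.

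First, $\mu^{\E_\R^x}_{\omega_x}(\Delta)=\Tr[\omega_x\E_\R^x(\Delta)]$ is a Borel probability measure on $\Gamma^x$. Positivity comes from $\E_\R^x(\Delta)\ge0$ and $\omega_x\ge0$. Normalization comes from $\E_\R^x(\Gamma^x)=\id$. Countable additivity follows from the ultraweak $\sigma$-additivity of the POVM, since $\Tr[\omega_x\,\cdot\,]$ is a normal functional.

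Second, $\Gamma^x=t^{-1}(x)$ is a closed subset of the second countable Hausdorff space $\Gamma$. For a Lie groupoid it is moreover an embedded submanifold, because $t$ is a submersion. So $\Gamma^x$ is a Polish space, and $\Bor(\Gamma^x)$ is a standard Borel structure. Likewise $M$ is a second countable manifold, hence Polish.

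Third, by definition of the marginal, the pushforward of $\mu^{\E_\R^x}_{\omega_x}$ under $\pi_x$ is $\mu^{\F_{\R,\pi}^x}_{\omega_x}$, since $\F_{\R,\pi}^x(N)=\E_\R^x(\pi_x^{-1}(N))$.

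\emph{Applying disintegration.} With these facts, I would invoke the disintegration theorem (e.g.\ Kechris, Thm.~17.35, or Dellacherie--Meyer) for the Borel map $\pi_x:\Gamma^x\to M$ between standard Borel spaces and the probability measure $P:=\mu^{\E_\R^x}_{\omega_x}$, whose image measure is $Q:=\mu^{\F_{\R,\pi}^x}_{\omega_x}$. The theorem provides a $Q$-a.e.\ unique family $y\mapsto\nu(\cdot\mid y)\in\operatorname{Prob}(\Gamma^x)$ with the following properties:

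1. $y\mapsto\nu(\Delta\mid y)$ is Borel for every $\Delta$, which is item 1.
2. $\nu(\pi_x^{-1}(y)\mid y)=1$ for $Q$-a.e.\ $y$, which is item 2.
3. $P(\Delta')=\int\nu(\Delta'\mid y)\,dQ(y)$ for all Borel $\Delta'$.

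Item 3 of the lemma then follows from the third property. Take $\Delta'=\Delta\cap\pi_x^{-1}(\Lambda)$. By item 2, $\nu(\pi_x^{-1}(\Lambda)\mid y)=\mathbf 1_\Lambda(y)$ for $Q$-a.e.\ $y$, so
\begin{equation}
\nu(\Delta\cap\pi_x^{-1}(\Lambda)\mid y)=\mathbf 1_\Lambda(y)\,\nu(\Delta\mid y)
\end{equation}
for $Q$-a.e.\ $y$, which yields the stated integral over $\Lambda$.

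\emph{Uniqueness.} Suppose $\nu,\nu'$ are two kernels satisfying item 3. Then $\int_\Lambda\nu(\Delta\mid y)\,dQ=\int_\Lambda\nu'(\Delta\mid y)\,dQ$ for all $\Lambda$, so $\nu(\Delta\mid\cdot)=\nu'(\Delta\mid\cdot)$ $Q$-a.e. for each fixed $\Delta$. Now take a countable generating $\pi$-system for $\Bor(\Gamma^x)$, which exists by second countability. Discarding a countable union of null sets, the two measures agree on this $\pi$-system for $Q$-a.e.\ $y$. Dynkin's $\pi$--$\lambda$ theorem then gives $\nu(\cdot\mid y)=\nu'(\cdot\mid y)$ for $Q$-a.e.\ $y$.

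\emph{Main obstacle.} The step I expect to need the most care is the standard-Borel/Polish verification. If only a continuous (not Lie) groupoid were assumed, $\Gamma^x$ would still be closed in a second countable Hausdorff space, but one would additionally need local compactness, which gives Polishness, to conclude that it is standard. The Lie hypothesis handles this cleanly, which is presumably why the lemma is stated for Lie QRFs. Measurability of $\pi_x$ is assumed, so nothing further is needed there.
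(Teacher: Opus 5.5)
Your proposal is correct and follows the paper's own route. The paper checks that $\Gamma^x$ and $M$ are Polish and that $\mu^{\E_\R^x}_{\omega_x}$ is a Borel probability measure whose push-forward under $\pi_x$ is $\mu^{\F_{\R,\pi}^x}_{\omega_x}$, then invokes the standard disintegration theorem (citing Bogachev); your derivation of the $\Lambda$-restricted identity from the concentration property and your $\pi$--$\lambda$ uniqueness argument simply spell out details the paper leaves to that reference.
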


\begin{proof}
Since $\Gamma$ is a Lie groupoid, $\Gamma^x=t^{-1}(\{x\})$ is a second-countable smooth manifold and hence a Polish space. Likewise, $M$ is Polish. The measure $\mu^{\E_\R^x}_{\omega_x}$ is a Borel probability measure on $\Gamma^x$, and its push-forward under $\pi_x$ is
\begin{equation}
    (\pi_x)_*\mu^{\E_\R^x}_{\omega_x}(\Lambda)    =\mu^{\E_\R^x}_{\omega_x}    \bigl(\pi_x^{-1}(\Lambda)\bigr)    =\Tr_{\hirx}\left[\omega_x\E_\R^x\bigl(\pi_x^{-1}(\Lambda)\bigr)\right]=\Tr_{\hirx}\left[   \omega_x\F_{\R,\pi}^x(\Lambda)\right] = \mu^{\F_{\R,\pi}^x}_{\omega_x}(\Lambda)
\end{equation}
for every $\Lambda \in \Bor(M)$. The disintegration theorem for Borel probability measures on standard Borel
(or, in particular, Polish) spaces therefore yields the stated kernel, its
concentration on the fibers of $\pi_x$, the disintegration identity, and
essential uniqueness; see, for example, \cite[Sec.~10.4]{bogachev_conditional_2007}.
\end{proof}

\begin{remark}
    For every fixed $x\in M$, the preceding disintegration applies to any Borel measurable map $\pi_x:\Gamma^x\to M$; $\mathscr{B}$-compatibility is not required. The lemma does not, by itself, assert that the conditional probability measures may be selected jointly measurably as functions of $(x,y)$.
\end{remark}

\subsection{Localization}
\label{subsec:localization-limit-groupoids}

Before discussing relativization, let us first discuss more general notions of localization of groupoid QRFs. Unlike in the homogeneous space setting, where the normalization and covariance of the frame observables ensure that they are supported over all of $\Sigma$, this can fail to hold in more general settings. Indeed, the covariance properties now relate different $\E_\R^x$'s. We thus untangle several notions of localization and supports.

\begin{definition}
    Let $\R$ be a principal QRF for the continuous groupoid $\Gamma \rightrightarrows M$.
    \begin{enumerate}
        \item For $\mu$-almost every $x \in M$, we write
        \begin{equation}
            \supp \E_\R^x := \bigcup_{\omega_x \in \mathscr{D}(\hirx)} \supp \mu^{\E^x_\R}_{\omega_x} \subseteq \Gamma^x
        \end{equation}
        for the \emph{support of $\E_\R^x$}.
        \item For $\mu$-almost every $x \in M$ and projections $\pi_x : \Gamma^x \to M$, we write
        \begin{equation}
            \supp \F_{\R,\pi}^x := \bigcup_{\omega_x \in \mathscr{D}(\hirx)}\supp \mu^{\F^x_{\R,\pi}}_{\omega_x} \subseteq M
        \end{equation}
        for the \emph{support of $\F_{\R,\pi}^x$}.
        \item Given $\omega \in \framestate$, we write
        \begin{equation}
            \supp(\R,\omega) := \esssupp_{x \in M}  \mu^{\E_\R^x}_{\omega_x} \subseteq \Gamma
        \end{equation}
        for the \emph{support of $(\R,\omega)$}.
        \item Given $\omega \in \framestate$ and $\pi=(\pi_x : \Gamma^x \to M)$, we write
        \begin{equation}
            \label{eqn:Loc(R,w)}
            \text{Loc}_\pi(\R,\omega) := \esssupp_{x \in M} \mu^{\F_{\R,\pi}^x}_{\omega_x} \subseteq M
        \end{equation}
        for the \emph{(spacetime)\footnote{Note that these definitions also work when $M$ is not a spacetime manifold, but we will later use it in this sense.} localization of $(\R,\omega)$ given $\pi$.}
        \item We write
        \begin{equation}
            \supp \R := \esssupp_{x \in M} \E_\R^x \subseteq \Gamma
        \end{equation}
        for the \emph{support of $\R$.}
        \item Given $\pi = (\pi_x : \Gamma^x \to M)$, we write
        \begin{equation}
        \label{eqn:Loc(R)}
            \text{Loc}_\pi(\R) := \esssupp_{x \in M} \F_{\R,\pi}^x \subseteq M
        \end{equation}
        for the \emph{(spacetime) localization of $\R$ given $\pi$.}
    \end{enumerate}
\end{definition}

\begin{remark}
    Unlike the case of QRFs on homogeneous spaces, we need not have $\text{Loc}_\pi(\R) = M$ in general. This is because the groupoid covariance, in general, joins different $\E_\R^x$'s and $\E_\R^y$'s, rather than joining different Borel subsets in the range of each $\E_\R^x$. Note however that the isotropy part of the covariance, i.e. $\gamma \in \Gamma^x_x$ giving $U_\gamma \E_\R^x(\Delta) U_\gamma^\dagger = \E_\R^x(L_\gamma \Delta)$, does join different Borel subsets $\Delta$ and $L_\gamma \Delta$ for a fixed $x$; however, this is typically not enough to ensure the full support on the space if the isotropy group is not transitive on $\Gamma^x$. Moreover, a suitable subgroup $\mathscr{B}$ of the group of bisections may or may not help with this, depending on whether the $(\F_{\R,\pi}^x)$ are $\mathscr{B}$-compatible or not.
\end{remark}

\begin{remark}
    It can now be difficult to place a canonical notion of ``localization" to a QRF, given it can be dependent on a choice $\pi$: it is not always true that $\text{Loc}_{\pi_1}(\R) = \text{Loc}_{\pi_2}(\R)$ for two families of projections $\pi_1$ and $\pi_2$. There does exist an intrinsic choice of $\pi$ for $\Gamma$ given by $\pi=s|_{\Gamma^x}$, though one would expect the localization of the QRF to change covariantly under (a subgroup $\mathscr{B}$, e.g. in the case of the Poincaré groupoid over spacetime, the isometry subgroup $\mathrm{Bis}_{\mathrm{Isom}}(\mathrm{Poin}(\M,g))$ of) bisections rather than stay invariant. Thus, perhaps it is more natural to choose $\mathscr{B}$-compatible $\pi$'s for suitable $\mathscr{B}$; again, this is typically non-unique. Note however that $\supp \R$ is \emph{not} dependent on any choice of $\pi$.
\end{remark}

\begin{remark}
    Different choices of $\pi$ have different physical meanings and answer different questions related to localization. For example, take $\pi_t = t_{\Gamma^x}$, then $\supp \F_{\R,\pi_t}^x = \{x\}$ and $\text{Loc}_{\pi_t}(\R,\omega) = M$ for all $\omega \in \framestate$ and $\text{Loc}_{\pi_t}(\R) = M$, i.e. the notions of localization from this $\pi_t$ answer a question of the form ``where are the ontic frame observables localized in spacetime" (and, indeed, they are localized $\mu$-almost everywhere). Other choices, notably those $\mathrm{Bis}_{\mathrm{Isom}}(\mathrm{Poin}(\M,g))$-compatible projections, answer more epistemic questions of the type ``where are the frame observables non-trivially realized in spacetime", in which case the answer will typically not be all of $M$.
\end{remark}

\begin{remark}
    The fact that we can have $\text{Loc}_\pi(\R) \neq M$ is a quantum mechanical counterpart of the idea that in General Relativity, tetrads need not have a global description and are localised in space and time: only in Minkowski spacetime do we tend to have globally defined reference frames. That is, we can write (as an analogy)
    \begin{quote}
        There exists a global inertial reference frame $\leftrightsquigarrow$ $\text{Loc}_\pi(\R)=M$ for some ``suitable" $\pi$.
    \end{quote}
\end{remark}

It would certainly be interesting to examine the different meanings of different possible $\pi$'s, and whether natural choices arise. We keep this for future work.

\begin{lemma}
\label{lem:supp E fiber closed}
Let $\R$ be a principal groupoid QRF for $\Gamma\rightrightarrows M$. For every
$x\in M$ such that $(\Gamma^x)$ is second-countable, write
\begin{equation}
    O^x_{\max}:=\bigcup\{O\subseteq\Gamma^x\ \text{open}\mid \E_\R^x(O)=0\}\,.
\end{equation}
Then
\begin{equation}
    \supp\E_\R^x = \Gamma^x\smallsetminus O^x_{\max}\,,
\end{equation}
which is closed. Moreover, there is a unit vector
$\ket{\psi}\in\hirx$ with $\supp\mu^{\E_\R^x}_{\ketbra{\psi}}=\supp\E_\R^x$.
\end{lemma}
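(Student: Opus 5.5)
The plan is to first establish that $\E_\R^x(O^x_{\max})=0$. Since $\Gamma^x$ is second-countable, it has a countable base $\{B_k\}$. Every open $O$ with $\E_\R^x(O)=0$ is a union of those basic sets $B_k$ contained in it, and each such $B_k$ satisfies $0\le \E_\R^x(B_k)\le \E_\R^x(O)=0$ by monotonicity of the POVM. Hence $O^x_{\max}$ is the countable union of the basic sets $B_k$ with $\E_\R^x(B_k)=0$. By countable subadditivity, obtained from countable additivity after disjointifying and using positivity, we get $\E_\R^x(O^x_{\max})=0$. So $O^x_{\max}$ is the largest open $\E_\R^x$-null set, and its complement is closed.

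Next I would identify $\supp\E_\R^x$ with $\Gamma^x\smallsetminus O^x_{\max}$. Here, for each $\omega_x$, $\supp\mu^{\E^x_\R}_{\omega_x}$ is the usual support: the complement of the largest open $\mu^{\E^x_\R}_{\omega_x}$-null set, which exists by the same second-countability argument.

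For ``$\subseteq$'': $\mu^{\E^x_\R}_{\omega_x}(O^x_{\max})=\Tr[\omega_x\E_\R^x(O^x_{\max})]=0$, so $O^x_{\max}$ is disjoint from every $\supp\mu^{\E^x_\R}_{\omega_x}$. Hence the union of these supports lies in the complement of $O^x_{\max}$.

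For ``$\supseteq$'': take $p\notin O^x_{\max}$ and any open neighbourhood $V$ of $p$. Then $\E_\R^x(V)\neq 0$, so some unit vector $\phi$ has $\langle\phi,\E_\R^x(V)\phi\rangle>0$. This only gives a state depending on $V$, so instead I would go directly through the last claim: construct one vector $\psi$ whose measure charges every open set not contained in $O^x_{\max}$. Then $p\in\supp\mu^{\E_\R^x}_{\ketbra{\psi}}$, which proves both the equality and the final statement.

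The main step is the construction of $\psi$. Let $\{B_k\}_{k\in K}$ be the basic sets with $\E_\R^x(B_k)\neq0$, with $K$ countable. For each $k$ pick a unit vector $\phi_k$ with $\langle\phi_k,\E_\R^x(B_k)\phi_k\rangle>0$. Set $\rho=\sum_k 2^{-k}\ketbra{\phi_k}$ (normalized), which is a density operator. Then $\mu^{\E}_\rho(B_k)\ge 2^{-k}\langle\phi_k,\E(B_k)\phi_k\rangle>0$ for all $k\in K$. Every open $V$ meeting the complement of $O^x_{\max}$ contains some $B_k$ with $k\in K$: otherwise $V$ would be a union of null basic sets and so would lie in $O^x_{\max}$. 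Hence $\mu_\rho(V)>0$, so $\supp\mu_\rho=\Gamma^x\smallsetminus O^x_{\max}$.

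To obtain a pure state, I would purify on the separable space $\hirx$. The obstacle is that a rank-one projection cannot directly mimic a mixture unless cross terms are controlled. Since $\mu_{\ketbra{\psi}}(B)=\|\E(B)^{1/2}\psi\|^2$, I would take $\psi=\sum_k c_k\phi_k$ with generic coefficients. For each $k$, the set of coefficient sequences $c\in\ell^2$ (with the $\phi_k$ fixed) for which $\E(B_k)^{1/2}\psi=0$ is a proper closed subspace, since the $k$-th coordinate direction is excluded. A countable union of proper closed subspaces cannot exhaust $\ell^2$ by the Baire category theorem. So some $c$, normalized, yields $\psi$ with $\mu_{\ketbra{\psi}}(B_k)>0$ for all $k\in K$. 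The support argument above then applies verbatim to $\psi$, completing the proof.
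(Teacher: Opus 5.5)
Your overall route is the paper's. A countable base (the paper phrases this via Lindel\"of) gives $\E_\R^x(O^x_{\max})=0$, hence $\supp\E_\R^x\subseteq\Gamma^x\smallsetminus O^x_{\max}$. A Baire-category choice of a single vector charging every non-null basic set $B_k$ then gives the reverse inclusion and the final claim simultaneously. Your observation that any open $V$ meeting $\Gamma^x\smallsetminus O^x_{\max}$ contains some $B_k$ with $k\in K$ is exactly the paper's closing step.

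The one step that fails as written is the ``purification''. Your $\phi_k$ are only unit vectors, not orthogonal ones, so $c\mapsto\sum_k c_k\phi_k$ is not defined on $\ell^2$. For instance, if every $\phi_k$ equals the same $\phi$, the series is $(\sum_k c_k)\phi$, which diverges for $c_k=1/k$. So the sets you call proper closed subspaces of $\ell^2$ are not well defined, and closedness needs a bounded map. The $c$ that Baire produces may then yield no vector at all.

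There are two easy repairs. You can work in $\ell^1$, where $\|\sum_k c_k\phi_k\|\le\|c\|_1$; the map is then bounded and your argument goes through verbatim. Or, more simply, do what the paper does and apply Baire directly in $\hirx$. Each $\ker\E_\R^x(B_k)$ with $k\in K$ is a proper closed subspace because $\E_\R^x(B_k)\neq0$. Their countable union is therefore meagre, and any unit $\psi$ outside it satisfies $\langle\psi,\E_\R^x(B_k)\psi\rangle>0$ for all $k\in K$. With this, the auxiliary vectors $\phi_k$ and the mixed state $\rho$ are unnecessary.
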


\begin{proof}
    See App. \ref{app:proof supp E fiber closed}.
\end{proof}

\begin{proposition}
    \label{prop:norm 1 property povm}
    Let $\R$ be a principal groupoid QRF for a continuous groupoid $\Gamma \rightrightarrows M$ and $x \in M$. Suppose $(\Gamma^x)$ is metrizable\footnote{This is automatic for Lie groupoids: every target fiber $\Gamma^x$ of a Lie groupoid is a smooth (hence second-countable, metrizable, locally compact Hausdorff) manifold.}. The following are equivalent:
    \begin{enumerate}
        \item $\E_\R^x$ has the norm-1 property.
        \item For every $X \in \Bor(\Gamma^x)$ with $\E_\R^x(X) \neq 0$ there exists unit vectors $(\ket{\phi^x_n}) \subset \hirx$ with 
        \begin{equation}
            \lim_{n \to \infty} \expval{\E_\R^x(X)}{\phi_n^x} = 1 \, .
        \end{equation}
        \item ($\epsilon$-decidability property) For every $X \in \Bor(\Gamma^x)$ such that $\E_\R^x(X) \neq 0$ and for any $\epsilon > 0$ there exists a unit vector $\ket{\phi_\epsilon^x} \in \hirx$ for which
        \begin{equation}
            \expval{\E_\R^x(X)}{\phi_\epsilon^x} > 1-\epsilon
        \end{equation}
    \end{enumerate}
\end{proposition}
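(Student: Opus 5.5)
The statement is fiber-wise and purely operator-theoretic: for fixed $x$, only the single normalized POVM $\E_\R^x$ on $\Bor(\Gamma^x)$ with values in $\Eff(\hirx)$ is involved. Neither covariance nor the groupoid structure plays a role, and metrizability is only used so that $\Gamma^x$ is a standard Borel space, matching the setting of Proposition~\ref{prop:norm 1 equivalences}. I would therefore either invoke Proposition~\ref{prop:norm 1 equivalences} directly, noting that its proof never uses the homogeneous-space structure, or reproduce the elementary argument. I will give the argument directly through the cycle (1)$\Rightarrow$(3)$\Rightarrow$(2)$\Rightarrow$(1), so as not to depend on that hidden assumption.

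For (1)$\Rightarrow$(3): fix $X$ with $\E_\R^x(X)\neq 0$, so that $\|\E_\R^x(X)\|=1$. Since $\E_\R^x(X)$ is a positive operator, its norm equals $\sup_{\|\phi\|=1}\expval{\E_\R^x(X)}{\phi}$. By the definition of the supremum, for any $\epsilon>0$ there is a unit vector $\ket{\phi^x_\epsilon}$ with $\expval{\E_\R^x(X)}{\phi^x_\epsilon}>1-\epsilon$.

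For (3)$\Rightarrow$(2): take $\epsilon=1/n$ and set $\ket{\phi^x_n}:=\ket{\phi^x_{1/n}}$. Because $0\le \E_\R^x(X)\le \id_{\hirx}$ (it is an effect), we have
\begin{equation}
1-\tfrac1n<\expval{\E_\R^x(X)}{\phi^x_n}\le 1,
\end{equation}
so the limit is $1$.

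For (2)$\Rightarrow$(1): for $X$ with $\E_\R^x(X)\neq0$, we have $1\ge\|\E_\R^x(X)\|\ge \expval{\E_\R^x(X)}{\phi^x_n}\to1$. Hence $\|\E_\R^x(X)\|=1$.

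There is no real obstacle here. The only point needing care is the identity $\|A\|=\sup_{\|\phi\|=1}\expval{A}{\phi}$, which holds for positive (indeed self-adjoint) $A$. The metrizability hypothesis enters only to ensure $\Bor(\Gamma^x)$ is a standard Borel $\sigma$-algebra, so that the proposition is consistent with Proposition~\ref{prop:norm 1 equivalences} and with later localizing-sequence constructions analogous to Corollary~\ref{cor:norm 1 sequence}.
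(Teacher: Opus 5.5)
\textbf{Assessment.} Your argument is correct, and it takes a slightly different route from the paper's.

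\textbf{How the two proofs differ.} The paper's proof is a one-line reduction: it applies Proposition~\ref{prop:norm 1 equivalences}, quoted from the literature, to the single normalized POVM $\E_\R^x$ on $\hirx$. You instead run the elementary cycle (1)$\Rightarrow$(3)$\Rightarrow$(2)$\Rightarrow$(1). You use only two facts: $0\le \E_\R^x(X)\le\id$, and $\|A\|=\sup_{\|\phi\|=1}\langle\phi,A\phi\rangle$ for positive $A$.

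\textbf{What your route buys.} It makes the proof self-contained. Proposition~\ref{prop:norm 1 equivalences} is stated for principal homogeneous $G$-spaces, so citing it for a target fiber $\Gamma^x$ needs a justification that the paper leaves implicit. Your proof supplies that justification: it shows the equivalence uses no homogeneous structure, no covariance, and no topology on $\Gamma^x$.

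\textbf{One inaccurate side remark.} You say metrizability is used so that $\Gamma^x$ is a standard Borel space. Metrizability alone does not give a standard Borel space; one needs, e.g., separable complete metrizability. More to the point, metrizability plays no role in this equivalence at all. In the paper it is needed only afterwards, for the metric balls $B_n^x$ used to build localizing sequences in Proposition~\ref{prop:norm 1 groupoid sequence}.
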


\begin{proof}
    This is Prop. \ref{prop:norm 1 equivalences} applied to the normalized POVM $\E_\R^x$ on the separable Hilbert space $\hirx$ over $\Gamma^x$.
\end{proof}

\begin{remark}
    Note that the second property has to do with pure states at every $x \in M$, so the right notion of states to work with at the level of the frame for localization purposes is $\framestate$, rather than $\normalframestate$.
\end{remark}

The notion of localizability is however more intricate than in the homogeneous space case. Indeed, if there is an $\alpha \in \Gamma$ such that $\alpha \notin \supp \R$, then one typically would not expect observables relativised with respect to any $\E_\R^{t(\alpha)}$ to be, even approximately, the absolute observable transported by $\alpha$. In spacetime (see Sec.~ \ref{sec:RQFT curved}), we can understand this as saying that we shouldn't expect to be able to approximate an absolute description of an observable localised in another galaxy if the QRF we have access to is only localised in ours. However, we now show that when both localisations concur, such a correspondence between relativised and absolute observables does exist.

\begin{proposition}
    \label{prop:norm 1 groupoid sequence}
    Let $\R$ be a localizable principal groupoid QRF for $\Gamma \rightrightarrows M$. For $\mu$-almost every $x \in M$ with $\beta_x \in \supp \E_\R^x$ such that $(\Gamma^x)$ is metrizable, there is a sequence of pure states $(\omega_n^x) \subset \framestatex$ with $\mu^{\E_\R^x}_{\omega_n^x} \to \delta_{\beta_x}$ weakly. 
\end{proposition}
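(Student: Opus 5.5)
The strategy is to reduce the statement to the fiber-wise norm-1 property, exactly as in the group case (Corollary~\ref{cor:norm 1 sequence}), but working on the single metrizable fiber $\Gamma^x$. Fix $x$ in the full-measure set where $\E_\R^x$ has the norm-1 property and $\Gamma^x$ is metrizable, and fix $\beta_x\in\supp\E_\R^x$. Choose a compatible metric $d$ on $\Gamma^x$ and let $B_n$ be the open ball of radius $1/n$ around $\beta_x$.

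\textbf{Step 1 (nonvanishing on balls).} I will show that $\E_\R^x(B_n)\neq 0$ for every $n$. By Lemma~\ref{lem:supp E fiber closed}, $\supp\E_\R^x=\Gamma^x\smallsetminus O^x_{\max}$. If $\E_\R^x(B_n)=0$, then $B_n\subseteq O^x_{\max}$, so $\beta_x\notin\supp\E_\R^x$, a contradiction. Second countability of $\Gamma^x$, which follows from metrizability together with the standing second-countability assumptions, is what makes the lemma applicable.

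\textbf{Step 2 (localizing states).} Apply Proposition~\ref{prop:norm 1 groupoid sequence}'s predecessor, Proposition~\ref{prop:norm 1 property povm} (the $\epsilon$-decidability form), with $X=B_n$ and $\epsilon=1/n$. This gives unit vectors $\phi_n^x\in\hirx$ with $\expval{\E_\R^x(B_n)}{\phi_n^x}>1-1/n$. Set $\omega_n^x:=\ketbra{\phi_n^x}$; these are pure states in $\framestatex$. Then $\mu^{\E_\R^x}_{\omega_n^x}(B_n)>1-1/n$, and hence $\mu^{\E_\R^x}_{\omega_n^x}(\Gamma^x\smallsetminus B_n)<1/n$.

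\textbf{Step 3 (weak convergence).} Take $f$ bounded and continuous on $\Gamma^x$ and $\eta>0$. By continuity at $\beta_x$, there is $N$ such that $|f-f(\beta_x)|<\eta$ on $B_n$ for all $n\ge N$. Then
\begin{equation}
\left|\int f\,d\mu^{\E_\R^x}_{\omega_n^x}-f(\beta_x)\right|\le \eta+2\|f\|_\infty/n .
\end{equation}
This tends to at most $\eta$ as $n\to\infty$, and since $\eta$ is arbitrary, $\mu^{\E_\R^x}_{\omega_n^x}\to\delta_{\beta_x}$ weakly. Normalization of $\E_\R^x$ ensures each $\mu^{\E_\R^x}_{\omega_n^x}$ is a probability measure, which is needed for the estimate.

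\textbf{Main obstacle.} The only delicate point is Step 1: the statement defines $\supp\E_\R^x$ as a union of supports of Born measures, so it must be reconciled with the "complement of the maximal null open set" description. Lemma~\ref{lem:supp E fiber closed} does exactly this. The "$\mu$-almost every $x$" qualifier is handled simply by restricting to the set of $x$ where localizability holds. No covariance across fibers is needed, because the argument is entirely fiber-wise.
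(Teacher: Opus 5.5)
Your proposal is correct and follows essentially the paper's proof: balls $B_n$ of radius $1/n$ around $\beta_x$, $\E_\R^x(B_n)\neq0$ because $\beta_x\in\supp\E_\R^x$, and unit vectors from the $\epsilon$-decidability property with $\epsilon=1/n$. The differences are cosmetic. The paper concludes with the portmanteau criterion on sets $X$ with $\beta_x\notin\partial X$, whereas you estimate directly against bounded continuous $f$. Your Step 1 also does not need Lemma~\ref{lem:supp E fiber closed}: $\beta_x\in\supp\mu^{\E_\R^x}_{\omega_x}$ for some $\omega_x$ already gives $\Tr[\omega_x\E_\R^x(B_n)]>0$.
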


\begin{proof}
    See App. \ref{app:proof norm 1 groupoid sequence}.
\end{proof}

\begin{remark}
    Note that in the group analogue of this proposition found in \cite{carette_operational_2025}, the assumption that $x \in \supp \E$ was omitted, and it was instead assumed simply that $x \in \Sigma$. This is not problematic provided either $x \in \supp \E$ or $\E$ is $G$-covariant and $\Sigma$ is a homogeneous $G$-space (making $\supp \E = \Sigma$).
\end{remark}

We now show that $\beta_x$ must indeed lie in $\supp \E_\R^x$ for there to be a localizing sequence at $\beta_x$.

\begin{proposition}
    \label{prop:beta not in supp E}
    Let $\R$ be a principal groupoid QRF for $\Gamma \rightrightarrows M$. For $\mu$-almost every $x \in M$ with $\beta_x \in \Gamma^x$ such that $\beta_x \notin \supp \E_\R^x$, if $(\Gamma^x)$ is metrizable then $\nexists (\omega_n^x) \subset \framestatex$ such that $\mu^{\E_\R^x}_{\omega_n^x} \to \delta_{\beta_x}$ weakly.
\end{proposition}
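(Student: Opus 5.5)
The argument is by contradiction and uses the characterisation of the support in Lemma~\ref{lem:supp E fiber closed} together with the Portmanteau theorem. Suppose $\beta_x\notin\supp\E_\R^x$. By Lemma~\ref{lem:supp E fiber closed}, valid since a metrizable $\Gamma^x$ that is a target fiber of a second countable groupoid is second countable, we have $\supp\E_\R^x=\Gamma^x\smallsetminus O^x_{\max}$. Hence $\beta_x\in O^x_{\max}$, which is open. Moreover $\E_\R^x(O^x_{\max})=0$.

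For this last claim I would argue as follows. $O^x_{\max}$ is a union of open null sets $O$. By second countability (Lindel\"of), it is a countable union of such sets $O_k$. Countable subadditivity of the positive operator measure (via $\E_\R^x(\bigcup_k O_k)\le\sum_k\E_\R^x(O_k)$ in each Born measure) then gives $\E_\R^x(O^x_{\max})=0$. If the appendix proof of Lemma~\ref{lem:supp E fiber closed} already contains this step, I will simply cite it.

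Since $\Gamma^x$ is metrizable, it is regular, so I can choose an open neighbourhood $V$ of $\beta_x$ with $\beta_x\in V\subseteq O^x_{\max}$; taking $V=O^x_{\max}$ itself is enough. Then for every state $\omega_x\in\framestatex$,
\begin{equation}
\mu^{\E_\R^x}_{\omega_x}(O^x_{\max})=\Tr[\omega_x\E_\R^x(O^x_{\max})]=0 .
\end{equation}
Now suppose $(\omega_n^x)$ is a sequence with $\mu^{\E_\R^x}_{\omega_n^x}\to\delta_{\beta_x}$ weakly. The Portmanteau theorem for open sets gives
\begin{equation}
\liminf_n\mu^{\E_\R^x}_{\omega_n^x}(O^x_{\max})\ge\delta_{\beta_x}(O^x_{\max})=1 ,
\end{equation}
which contradicts the left-hand side being identically $0$.

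If one prefers to avoid Portmanteau, there is a direct route. Urysohn's lemma in the metric space gives a continuous $f$ with $0\le f\le1$, $f(\beta_x)=1$ and $f=0$ off $O^x_{\max}$, for instance $f(\gamma)=\max(0,1-d(\gamma,\beta_x)/r)$ with a ball of radius $r$ inside $O^x_{\max}$. Then $\int f\,d\mu^{\E_\R^x}_{\omega_n^x}=0$ for all $n$, while $\int f\,d\delta_{\beta_x}=1$.

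The only delicate step is establishing $\E_\R^x(O^x_{\max})=0$, i.e. passing from individually null open sets to their union. This rests on second countability, which is why the metrizability and second-countability hypotheses on $\Gamma^x$ are used. The ``$\mu$-almost every $x$'' qualifier plays no role beyond restricting to the fibers where the hypotheses hold.
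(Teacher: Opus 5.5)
Your proof is correct and follows essentially the same route as the paper: use Lemma~\ref{lem:supp E fiber closed} to obtain an open $\E_\R^x$-null neighbourhood of $\beta_x$, so that every Born measure gives it zero mass, and then apply the open-set Portmanteau criterion to get $1\le 0$. Your explicit argument that $\E_\R^x(O^x_{\max})=0$ (via Lindel\"of and countable subadditivity, as in the lemma's appendix proof) supplies a step the paper uses implicitly when it asserts $\E_\R^x(B)=0$ for an open $B$ disjoint from $\supp\E_\R^x$.
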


\begin{proof}
    If $\beta_x \notin \supp \E_\R^x$ (a closed set by Lemma \ref{lem:supp E fiber closed}), choose an open $B \ni \beta_x$ with $B \cap \supp \E^x_\R = \varnothing$; then $\E_\R^x(B) = 0$, so $\mu^{\E_\R^x}_{\omega^x}(B) = 0$ for every state $\omega_x \in \framestatex$. The open-set criterion of portmanteau at the open $B \ni \beta_x$ gives $1 = \delta_{\beta_x}(B) \leq \liminf_{n \to \infty} \mu^{\E_\R^x}_{\omega^x_n}(B) = 0$ which is a contradiction. Hence no sequence localizes at $\beta_x$.
\end{proof}

In particular, a localizing sequence centred at the unit $1_x$ exists iff $1_x \in \supp \E_\R^x$. From these results, we will recover in Sec.~ \ref{subsec:groupoid-relativization} an absolutist description in the localizing limit of a relational description.

\subsection{Relativization for continuous groupoids}
\label{subsec:groupoid-relativization}

Let $\S=(\scrhis,V)$ be a quantum system over the same base $M$, where $V$ is a fiber-wise ultraweakly continuous representation of $\Gamma$. We now define the groupoid relativization map associated with the frame $\R=(\scrhir,U,\E_\R)$.

\begin{proposition}
\label{def:groupoid-relativization-section4}
Let $A\in \A(\scrhis)$ and $\R$ be a $\mu$-admissible principal groupoid QRF. Then for $\mu$-almost every $x \in M$,\footnote{The right-hand side is an operator-valued integral of the type studied in \cite{glowacki_w-algebraic_2026}: it is well-defined as it is the operator-valued integration of a uniformly bounded ultraweakly measurable function quotiented by $\E_\R^x$-null sets: $(\beta \mapsto V_\beta A_{s(\beta)} V_\beta^\dagger) \in L^\infty_{\E_\R^x}(\Gamma^x,\bhsx)$.}
\begin{align}
    \euro^{\R}|_x : \A(\scrhis) &\to \bhsrx \\
    \euro^{\R}(A)_x &:= \int_{\Gamma^x} 
\bigl(
V_\beta\, A_{s(\beta)}\, V_\beta^\dagger
\bigr)\otimes d\E^x_\R(\beta)
\end{align}
defines an essentially bounded measurable field $(\euro^{\R}(A)_x) \in \A(\scrhis\boxtimes\scrhir)$. Likewise, if $\omega \in \framestate$, we have that for $\mu$-almost every $x \in M$,
\begin{align}
    \euro^{\R}_{\omega}|_x : \A(\scrhis) &\to \bhsx \\
    \euro^{\R}_{\omega}(A)_x &= \int_{\Gamma^x} 
\bigl(
V_\beta\, A_{s(\beta)}\, V_\beta^\dagger
\bigr) d\mu^{\E^x_\R}_{\omega_x}(\beta)
\end{align}
defines an essentially bounded measurable field $(\euro^{\R}_{\omega}(A)_x) \in \A(\scrhis)$.
\end{proposition}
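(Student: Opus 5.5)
We will prove the statement in three steps: the fiberwise integrand is well defined, the resulting operators are bounded, and the field $x\mapsto\euro^{\R}(A)_x$ is measurable. The restricted version then follows by composing with a partial trace.

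\textbf{Step 1: the integrand is well defined.} Fix $x$ and consider the integrand $f_x(\beta):=V_\beta A_{s(\beta)}V_\beta^\dagger$ on $\Gamma^x$. The field $A$ is only defined $\mu$-almost everywhere, so we first check that $f_x$ does not depend on the representative of $A$.

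\begin{itemize}
\item If $A$ and $A'$ agree off a $\mu$-null set $N$, then $f_x$ and $f'_x$ agree off $s^{-1}(N)\cap\Gamma^x$.
\item By $\mu$-admissibility, $\F^x_{\R,s}(N)=\E^x_\R(s^{-1}(N)\cap\Gamma^x)=0$.
\item Hence $f_x=f'_x$ as elements of $L^\infty_{\E^x_\R}(\Gamma^x,\bhsx)$. This is exactly where the hypothesis enters.
\end{itemize}

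Next we check ultraweak measurability of $f_x$. For $\chi\in\thsx$ we have
\begin{equation}
\beta\mapsto \Tr[\chi V_\beta A_{s(\beta)}V_\beta^\dagger]=\Tr[(V_\beta^\dagger\chi V_\beta)A_{s(\beta)}].
\end{equation}
The map $\beta\mapsto V_\beta^\dagger\chi V_\beta$ is trace-norm continuous in the transported sense, by fiberwise ultraweak continuity together with Theorem~\ref{thm:equivalences continuity} (strong continuity). The map $\beta\mapsto A_{s(\beta)}$ is measurable, since $s$ is Borel. We therefore obtain measurability by approximating $\chi$ with finite-rank operators built from measurable sections, as in the proof of Theorem~\ref{thm:equivalences continuity}.

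Boundedness is immediate: $\|f_x(\beta)\|\le\|A\|$ for $\E^x_\R$-a.e.\ $\beta$.

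\textbf{Step 2: fiberwise existence and the norm bound.} Apply the operator-valued integration of \cite{glowacki_w-algebraic_2026}. For a bounded ultraweakly measurable $f$ and a POVM $\E$, the integral $\int f\otimes d\E$ exists. It is characterised through the pairings
\begin{equation}
\Tr[(\rho\otimes\sigma)\textstyle\int f\otimes d\E]=\int \Tr[\rho f(\beta)]\,d\Tr[\sigma\E(\beta)],
\end{equation}
extended from product trace-class operators to all of $\thsrx$. It defines a normal, unital, completely positive map of $f$ (Proposition~\ref{prop:relativization-basic}), so
\begin{equation}
\|\euro^{\R}(A)_x\|\le\|A\|_{\A(\scrhis)}\qquad\text{uniformly in }x,
\end{equation}
and essential boundedness follows. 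The restricted version is the slice $\Gamma_{\omega_x}$ applied to this integral, which is again contractive. It equals the Bochner-type weak$^*$ integral against the probability measure $\mu^{\E^x_\R}_{\omega_x}$.

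\textbf{Step 3: measurability in $x$ (main obstacle).} It remains to show that $x\mapsto\euro^{\R}(A)_x$ is a measurable field; we expect this to be the main difficulty. It suffices to show that
\begin{equation}
x\mapsto\langle\xi_x\otimes\eta_x,\euro^{\R}(A)_x\,\xi'_x\otimes\eta'_x\rangle
\end{equation}
is measurable for measurable sections $\xi,\xi',\eta,\eta'$. This scalar equals $\int_{\Gamma^x}g(\beta)\,d m_x(\beta)$, where $g(\beta)=\langle\xi_x,V_\beta A_{s(\beta)}V_\beta^\dagger\xi'_x\rangle$ and $m_x$ is the complex measure $\langle\eta_x,\E^x_\R(\cdot)\eta'_x\rangle$.

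The plan for this step is as follows.
\begin{itemize}
\item First treat $g$ of the form $\mathbb 1_\Delta$ for $\Delta\in\Bor(\Gamma)$. Then $x\mapsto m_x(\Delta\cap\Gamma^x)$ is measurable by the measurability assumption on the field $(\E^x_\R)$.
\item Extend to simple functions by linearity.
\item Extend to bounded jointly measurable functions on $\Gamma$ by a monotone class argument with dominated convergence, using $|m_x|(\Gamma^x)\le\|\eta_x\|\|\eta'_x\|$.
\end{itemize}

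The delicate point is that $g$ depends on $x$ both through the sections and through $\beta\in\Gamma^x$. To handle this we view it as a bounded Borel function of $\beta\in\Gamma$, using $x=t(\beta)$, which is continuous in $\beta$. The monotone class step then applies after we use a countable dense family of sections to reduce to countably many such functions.

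The restricted case follows from the same argument with $m_x=\mu^{\E^x_\R}_{\omega_x}$, using measurability of $\omega\in\framestate$.
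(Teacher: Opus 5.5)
Your proposal is correct and follows the same route as the paper. Essential boundedness comes from the uniform fibrewise bound $\|\euro^\R(A)_x\|\le\|A\|_{\A(\scrhis)}$ (using $\E^x_\R(\Gamma^x)=\id_{\hirx}$), measurability in $x$ is inherited from the measurability of $(\E^x_\R)$, $A$ and $V$, and the restricted map is handled in the same way. Your monotone-class argument in Step 3 (indicators $\mathbf 1_\Delta$ with $\Delta\in\Bor(\Gamma)$, extended to bounded Borel functions on $\Gamma$ via $x=t(\beta)$) and your use of $\mu$-admissibility in Step 1 supply details that the paper only asserts or defers to item 1 of Theorem~\ref{thm:groupoid-relativization-invariant}. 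The one slip is citing Proposition~\ref{prop:relativization-basic}, which concerns the group map $\yen^\R$, for contractivity; what you actually need there is the unital complete positivity of $f\mapsto\int f\otimes d\E^x_\R$ from \cite{glowacki_w-algebraic_2026}.
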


\begin{proof}
    See App.~\ref{app:proof groupoid-relativization-section4}.
\end{proof}

\begin{remark}
    Note that if we were to use $\omega \in \normalframestate$ instead of $\omega \in \framestate$, it is not obvious that $(\euro^\R_{\omega_x}(A_x))$ would define a measurable field of operators (since we would not necessarily have $\esssup_{x \in M} \norm{\omega_x}_{\bhsx} < \infty$).
\end{remark}

\begin{remark}
\label{rem:groupoid-relativization-natural}
The formula above is the groupoid analogue of the usual relativization map for groups and torsors on each fiber of $\A(\scrhis)$. One integrates the transported system observable against the frame observable. The only novelty is that transport takes place along the arrows of the target fiber $\Gamma^x$, and the resulting operator lives in the fiber $\hisx\otimes \hirx$ (respectively $\hix$) of the system-frame composite (respectively system).
\end{remark}

\begin{remark}
    In \cite{bartlett_reference_2007}, the Canadian authors introduced a relativization map for compact groups and ideal frames using a \$ (probably a Canadian dollar) sign. The extension to non-ideal frames and to more general group structures led to replacing the \$ map by a $\yen$ map, since one of the authors of Ref.~\cite{miyadera_approximating_2016} is Japanese. Since we now further extend this relativization procedure beyond group structures on homogeneous spaces to groupoids and principal bundles, we propose the replacement of the $\yen$ map by the $\euro$ map (given the authors of the present paper are European).\footnote{It came to our attention that a recent paper \cite{ludescher_quantum_2026} also introduced a $\euro$ map, although for the relativization of quantum instruments for compact Lie groups in the context of the perspective-neutral approach to quantum reference frames \cite{de_la_hamette_perspective-neutral_2021}. In the context of the operational approach to quantum reference frames (i.e. the approach used in this paper), the relativization of quantum instruments has been discussed using the $\yen$ map \cite{jorquera_riera_relational_2026} rather than the $\euro$ map.}
\end{remark}

Since $(\euro^\R(A)_x)_{x \in M} \in \A(\scrhis \boxtimes \scrhir)$ and $(\euro^\R_{\omega}(A_x))_{x \in M} \in \A(\scrhis)$ are essentially bounded measurable fields, we can now understand direct integrals of those.

\begin{definition}
    Let $A=(A_x)_{x\in M}\in \A(\scrhis)$ and $\R$ be a $\mu$-admissible principal groupoid QRF.  The \emph{relativization map associated with} $\R$ is defined as
    \begin{align}
        \euro^\R : \A(\scrhis) &\to \A(\scrhis \boxtimes \scrhir) \\
    \euro^\R(A) &:= \int_M^\oplus \euro^\R(A)_x d\mu(x)
    \end{align}
    and the \emph{restricted relativization map}, given $\omega \in \framestate$, is
    \begin{align}
        \euro^\R_\omega : \A(\scrhis) &\to \A(\scrhis) \\
    \euro^\R_\omega(A) &:= \int_M^\oplus \euro^{\R}_{\omega}(A)_x d\mu(x)
    \end{align}
\end{definition}

\begin{definition}
    Let $\omega \in \framestate$. We define
    \begin{align}
        \Gamma_\omega : \A(\scrhis \boxtimes \scrhir) &\to \A(\scrhis) \\
        (\Gamma_\omega A)_x &:= \Gamma^x_{\omega_x}(A_x)
    \end{align}
    where $\Gamma_{\omega_x}^x : \bhsrx \to \bhsx$ is given as in Def. \ref{def:restriction-group}. Then $\euro^\R_\omega = \Gamma_\omega \circ \euro^\R$ and $\euro^\R_\omega|_x = \Gamma^x_{\omega_x} \circ \euro^\R|_x$.
\end{definition}

\begin{theorem}
\label{thm:groupoid-relativization-invariant}
Let $A\in \A(\scrhis)$, $\R$ be a $\mu$-admissible principal groupoid QRF and $\omega \in \framestate$. Then:
\begin{enumerate}
    \item $\euro^\R(A)|_x$, $\euro^\R_\omega(A)|_x$, $\euro^\R(A)$ and $\euro^\R_\omega(A)$ are independent of the choice of the representative of $A$.
    \item For $\mu$-almost every $x \in M$, $\euro^\R|_x$ is linear, normal, unital, completely positive, contractive, $*$-preserving, effect preserving.
    \item For $\mu$-almost every $x \in M$, $\Gamma^x_{\omega_x}$ is linear, normal, unital, completely positive, contractive, $*$-preserving and effect-preserving. 
    \item For $\mu$-almost every $x \in M$, $\euro^\R_\omega|_x$ is linear, normal, unital, completely positive, contractive, $*$-preserving, effect preserving.
    \item $\euro^\R$ is linear, normal, unital, completely positive, contractive, $*$-preserving, effect preserving.
    \item $\Gamma_\omega$ is linear, normal, unital, completely positive, contractive, $*$-preserving, effect preserving.
    \item $\euro^\R_\omega$ is linear, normal, unital, completely positive, contractive, $*$-preserving, effect preserving. 
    \item  $\euro^\R\bigl(\A(\scrhis)\bigr)
    \subseteq
    \A(\scrhis\boxtimes \scrhir)^\Gamma$.
    \item if $\R$ is sharp, then $\euro^\R|_x$ and $\euro^\R$ are multiplicative and hence normal $*$-homomorphisms.
    \item If $\R$ is localizable and $(\Gamma^x)$ is metrizable and for all $x \in M$, $1_x \in \supp \E_\R^x$ then $\euro^\R$ is injective (hence isometric) on the space of ultraweakly continuous operator fields.
\end{enumerate}
\end{theorem}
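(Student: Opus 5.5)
I would prove the ten items in the stated order, since the later ones rest on the earlier ones.

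\textbf{Item 1: independence of the representative.} If $A=A'$ $\mu$-a.e., the integrands $\beta\mapsto V_\beta A_{s(\beta)}V_\beta^\dagger$ and $\beta\mapsto V_\beta A'_{s(\beta)}V_\beta^\dagger$ differ only on $s^{-1}(N)\cap\Gamma^x$ for some $\mu$-null $N$. By $\mu$-admissibility, $\E_\R^x(s^{-1}(N)\cap\Gamma^x)=\F^x_{\R,s}(N)=0$. So the two integrands agree in $L^\infty_{\E_\R^x}(\Gamma^x,\bhsx)$, and the fiberwise operators coincide. The direct integrals then coincide as well, and the restricted versions follow by applying $\Gamma_\omega$.

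\textbf{Items 2--7: structural properties.} For the fiberwise items 2--4 I would use the standard facts for operator-valued integrals $\int f\otimes d\E$ from \cite{glowacki_w-algebraic_2026}, as in Prop.~\ref{prop:relativization-basic}.
\begin{itemize}
\item Linearity and $*$-preservation are immediate.
\item Unitality follows from $\E_\R^x(\Gamma^x)=\id$.
\item Complete positivity: the map $f\mapsto\int f\otimes d\E$ is completely positive on $L^\infty_{\E}(\Gamma^x,\bhsx)$ because $\E$ is a POVM (Naimark dilation). The map $A\mapsto (\beta\mapsto V_\beta A_{s(\beta)}V_\beta^\dagger)$ is a $*$-homomorphism, and a composition of completely positive maps is completely positive.
\item Contractivity and effect preservation follow from unital complete positivity.
\item Normality: use dominated convergence in each Born measure $\Tr[\rho\,\E(\cdot)]$ for an increasing bounded net.
\item $\Gamma^x_{\omega_x}$ is the usual partial trace against a density operator, so it has all of these properties.
\end{itemize}
For the global items 5--7, the properties pass to direct integrals fiberwise: positivity and identities hold $\mu$-a.e., and norms are essential suprema. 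Normality of the direct integral follows because normal functionals on $\A(\cdot)$ are $\int\Tr[\chi_x\,\cdot\,]d\mu$, together with monotone convergence.

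\textbf{Item 8: invariance.} Take $\alpha:x\to y$. Then
$$W_\alpha\,\euro^\R(A)_x\,W_\alpha^\dagger=\int_{\Gamma^x}V_{\alpha\beta}A_{s(\beta)}V_{\alpha\beta}^\dagger\otimes U_\alpha\, d\E^x_\R(\beta)\,U_\alpha^\dagger .$$
By covariance, $U_\alpha\E^x_\R(L_\alpha^{-1}\cdot)U_\alpha^\dagger=\E^y_\R$. Changing variables $\gamma=\alpha\beta$ and using $s(\alpha\beta)=s(\beta)$ gives $\euro^\R(A)_y$. To make the change of variables rigorous I would check it first on simple functions and then pass to the limit.

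\textbf{Item 9: multiplicativity for sharp frames.} If $\E_\R^x$ is projection-valued, the integral against a PVM is a $*$-homomorphism on $L^\infty$. Composed with the homomorphism $A\mapsto V_\beta A_{s(\beta)}V_\beta^\dagger$, this gives multiplicativity, and multiplicativity passes to direct integrals.

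\textbf{Item 10: injectivity, the main obstacle.} Let $A$ be ultraweakly continuous with $\euro^\R(A)=0$. Then $\euro^\R_\omega(A)=0$ for every $\omega\in\framestate$.
\begin{itemize}
\item Since $1_x\in\supp\E_\R^x$ and the frame is localizable, Prop.~\ref{prop:norm 1 groupoid sequence} gives pure states $\omega^x_n$ with $\mu^{\E^x_\R}_{\omega^x_n}\to\delta_{1_x}$ weakly.
\item For $\rho\in\thsx$, set $g(\beta)=\Tr[\rho V_\beta A_{s(\beta)}V_\beta^\dagger]$. Then $g$ is bounded and continuous, by ultraweak continuity of $A$ together with fiber-wise ultraweak continuity of $V$ (Thm.~\ref{thm:equivalences continuity}).
\item Weak convergence gives $\Tr[\rho\,\euro^\R_{\omega_n}(A)_x]\to\Tr[\rho A_x]$, so $A_x=0$.
\end{itemize}
The delicate part is assembling the $\omega^x_n$ into measurable fields $\omega_n\in\framestate$ so that the vanishing statements hold for each $x$ rather than only a.e. I would handle this with a measurable selection argument. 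The selection needs some joint continuity in $(\beta,x)$ of the test section: I would first show continuity of $g$ via a test section $\chi_\beta=V_\beta^\dagger\rho V_\beta$, and use a countable dense set of $\rho$ to reduce to countably many conditions.

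Isometry then follows, since an injective $*$-homomorphism between C$^*$-algebras is isometric in the sharp case. In general I would state only injectivity, treating "isometric" as relying on the sharp case.
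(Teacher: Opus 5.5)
Items 1--9 follow the paper's proof essentially step by step: representative-independence from $\mu$-admissibility, fibrewise properties from the operator-valued integration theory of \cite{glowacki_w-algebraic_2026}, invariance via covariance and the substitution $\gamma=\alpha\circ\beta$, and multiplicativity from integration against a PVM. There is one small slip. Dominated and monotone convergence are theorems about sequences, not nets. Your fibrewise normality argument should therefore either invoke the normality of $f\mapsto\int f\otimes d\E$ directly, as the paper does, or first reduce to increasing sequences. That reduction is legitimate here because all preduals are separable, so sequentially normal positive maps are normal.

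The genuine gap is in item 10, which asserts that $\euro^\R$ is \emph{isometric} on $\A(\scrhis)_{\mathrm{uw}}$, not merely injective. You give up isometry outside the sharp case, and your sharp-case argument does not work either. The principle ``an injective $*$-homomorphism of C$^*$-algebras is isometric'' needs injectivity on a C$^*$-algebra. You only prove injectivity on $\A(\scrhis)_{\mathrm{uw}}$, which is closed under adjoints but not under products. For example, on the constant bundle $[0,1]\times L^2(\mathbb{R})$, let $A_x$ be translation by $1/x$ and $A_0=0$: then $A$ is ultraweakly continuous, while $A^\dagger A$ jumps at $0$. Knowing $\ker\euro^\R\cap\A(\scrhis)_{\mathrm{uw}}=\{0\}$ gives no control on the quotient norm $\norm{A+\ker\euro^\R}=\norm{\euro^\R(A)}$. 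The missing idea is the direct norm estimate the paper uses. Slicing $\euro^\R(A)$ with localizing frame states gives $\abs{\Tr[\rho A]}=\lim_n\abs{\Tr[(\rho\boxtimes\omega_n)\euro^\R(A)]}\le\norm{\euro^\R(A)}$, hence $\norm{A}\le\norm{\euro^\R(A)}$. Together with contractivity this gives isometry for every localizable frame, sharp or not, with injectivity as a corollary.

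On the other hand, the step you flag as the main obstacle is unnecessary. If $\euro^\R(A)=0$, then $\euro^\R(A)_x=0$ off a single $\mu$-null set. At each such $x$ you may apply the fibre slice $\Gamma^x_{\omega^x_n}$ with \emph{arbitrary} fibre states. So you need no measurable field $\omega_n\in\framestate$, no selection theorem, and no joint continuity in $x$. Run fibrewise, the same argument also gives isometry. For $\mu$-a.e.\ $x$ and every $\rho_x\in\systemstatex$, $\abs{\Tr[\rho_xA_x]}=\lim_n\abs{\Tr[(\rho_x\otimes\omega^x_n)\euro^\R(A)_x]}\le\norm{\euro^\R(A)_x}$. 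Hence $\norm{A_x}\le\norm{\euro^\R(A)_x}$, and taking essential suprema yields $\norm{A}\le\norm{\euro^\R(A)}$. This fibrewise version is actually cleaner than the paper's global one. The paper pairs with $\rho\boxtimes\omega_n$ and therefore needs a measurable localizing field $\omega_n\in\framestate$; the global part of Thm.~\ref{thm:euro localisation limit} takes that field as a hypothesis rather than constructing it.
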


\begin{proof}
See App. \ref{app:proof properties euro}.
\end{proof}

\begin{remark}
    Note that we do not necessarily require the field of POVMs to be ultraweakly continuous here, though to prove results of the form $\euro^\R(\A(\scrhis)_{\mathrm{uw}}) \subset \A(\scrhis\boxtimes\scrhir)^\Gamma_\mathrm{uw}$ and $\euro^\R_\omega(\A(\scrhis)_{\mathrm{uw}}) \subset \A(\scrhis)_{\mathrm{uw}}$ this might become necessary. We leave these considerations to future work.
\end{remark}

\begin{proposition}
    \label{prop:invariant observable}
    Let $A \in \A(\scrhis)^\Gamma$ and $\R$ be a $\mu$-admissible principal groupoid QRF. Then for $\mu$-almost every $x \in M$,
    \begin{equation}
        \euro^\R(A)_x = A_x \otimes \id_{\bhrx}, \qquad \euro^\R(A) = A \boxtimes \id_{\A(\scrhir)} \equiv \int^\oplus_M (A_x \otimes \id_{\bhrx}) d\mu(x)
    \end{equation}
    and for all $\omega \in \framestate$,
    \begin{equation}
        \euro^\R_\omega(A)_x = A_x, \qquad \euro^\R_\omega(A) = A.
    \end{equation}
\end{proposition}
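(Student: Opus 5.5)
The argument is pointwise in $x$, and the key input is $\Gamma$-invariance of $A$. Fix a representative $(A_x)$ of $A\in\A(\scrhis)^\Gamma$. By Definition \ref{def:groupoid-invariant-operator-field}, applied to the system representation $V$, we have $V_\beta A_{s(\beta)} V_\beta^\dagger = A_{t(\beta)}$ for every arrow $\beta$. For $\beta\in\Gamma^x$ this gives $t(\beta)=x$, so the integrand in Proposition \ref{def:groupoid-relativization-section4} is constant on the target fiber:
\begin{equation}
V_\beta A_{s(\beta)} V_\beta^\dagger = A_x, \qquad \beta\in\Gamma^x.
\end{equation}
The operator-valued integral of a constant function $A_x$ against a normalized POVM is $A_x\otimes \E_\R^x(\Gamma^x)=A_x\otimes\id_{\hirx}$. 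This follows, for instance, from linearity and normality of the integral on simple functions, since a constant is itself simple. This yields $\euro^\R(A)_x = A_x\otimes\id_{\hirx}$.

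The main subtlety is that invariance is phrased for a fixed field, whereas $A$ is an element of $\A(\scrhis)$, defined only up to $\mu$-null sets. If invariance holds only $\mu$-almost everywhere, the identity $V_\beta A_{s(\beta)}V_\beta^\dagger = A_x$ must hold for $\E_\R^x$-almost every $\beta$. This is where $\mu$-admissibility enters, exactly as in item 1 of Theorem \ref{thm:groupoid-relativization-invariant}. Let $N$ be the $\mu$-null set on which the chosen representative fails to be invariant. Then
\begin{equation}
\E_\R^x\bigl(s^{-1}(N)\cap\Gamma^x\bigr)=\F^x_{\R,s}(N)=0,
\end{equation}
so the integrand equals $A_x$ off an $\E_\R^x$-null set. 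Since the integral is defined on $L^\infty_{\E_\R^x}(\Gamma^x,\bhsx)$, this is enough. Discarding a further $\mu$-null set of $x$, where $A_x$ itself is modified, gives the identity for $\mu$-almost every $x$.

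Taking direct integrals then gives
\begin{equation}
\euro^\R(A)=\int_M^\oplus (A_x\otimes\id_{\hirx})\,d\mu(x) = A\boxtimes\id_{\A(\scrhir)}.
\end{equation}
For the restricted map, apply $\Gamma^x_{\omega_x}$, or equivalently integrate the constant $A_x$ against the probability measure $\mu^{\E^x_\R}_{\omega_x}$, whose total mass is $\Tr[\omega_x]=1$. Using the defining relation of Definition \ref{def:restriction-group}, $\Tr[\rho\,\Gamma_{\omega_x}(A_x\otimes\id)]=\Tr[(\rho\otimes\omega_x)(A_x\otimes\id)]=\Tr[\rho A_x]$, we obtain $\euro^\R_\omega(A)_x=A_x$ for $\mu$-almost every $x$. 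Taking the direct integral gives $\euro^\R_\omega(A)=A$.
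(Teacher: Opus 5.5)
Your proof is correct and follows essentially the same route as the paper: $\Gamma$-invariance with $t(\beta)=x$ makes the integrand the constant $A_x$ on $\Gamma^x$, so unitality of the operator-valued integral gives $A_x\otimes\id$, and integrating against the Born probability measure gives $A_x$ for $\euro^\R_\omega$. Your extra step using $\mu$-admissibility to handle invariance holding only up to a $\mu$-null set is something the paper's proof leaves implicit; it is covered by item 1 of Theorem~\ref{thm:groupoid-relativization-invariant}.
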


\begin{proof}
    We have that for $\mu$-almost every $x \in M$,
    \begin{multline}
        \euro^\R(A)_x = \int_{\Gamma^x} V_\beta A_{s(\beta)} V_\beta^\dagger \otimes d\E_\R^x(\beta) = \int_{\Gamma^x} A_x \otimes d\E_\R^x(\beta) = (A_x \otimes \id_{\bhrx}) \int_{\Gamma^x} \id_{\bhsx} \otimes d\E_\R^x(\beta) \\  = A_x \otimes \id_{\bhrx}
    \end{multline}
    since $t(\beta)=x$ and the last equality follows by the unitality of operator-valued integration \cite{glowacki_w-algebraic_2026}, and likewise for all $\omega \in \framestate$,
    \begin{equation}
        \euro^\R_\omega(A)_x = \int_{\Gamma^x} V_\beta A_{s(\beta)} V_\beta^\dagger d\mu^{\E_\R^x}_{\omega_x}(\beta) = \int_{\Gamma^x} A_x d\mu^{\E_\R^x}_{\omega_x}(\beta) = A_x \int_{\Gamma^x} d\mu^{\E_\R^x}_{\omega_x}(x) = A_x
    \end{equation}
     since $\mu^{\E_\R^x}_{\omega_x}(\Gamma^x) = 1$.
\end{proof}

Thus, invariant operators of the system are untouched by the relativization maps.

\begin{theorem}
    \label{thm:covariance euro}
    Let $A\in \A(\scrhis)$, $\R$ be a $\mu$-admissible principal groupoid QRF for $\Gamma \rightrightarrows M$ and $\omega \in \framestate$. Then for $\mu$-almost every $x \in M$ and arrow $\alpha : x \to z$ in $\Gamma$,
    \begin{equation}
        V_\alpha \euro^\R_\omega(A)_x V_\alpha^\dagger = \euro^\R_{\omega^\alpha}(A)_z
    \end{equation}
    where we write the transformed frame state as $\omega_z^\alpha := U_\alpha \omega_x U_\alpha^\dagger$ so that $\mu$-almost every $x \in M$, $\omega^\alpha_x = \begin{cases}
        \omega_x \text{ if } x \neq z \\
        \omega_z^\alpha \text{ if } x=z
    \end{cases}$.
\end{theorem}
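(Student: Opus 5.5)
The plan is a direct computation on the fiber at $x$: push the conjugation by $V_\alpha$ inside the operator-valued (here scalar-measure) integral, then reparametrize the target fiber by left translation $L_\alpha:\Gamma^x\to\Gamma^z$. Throughout I fix $x$ in the full-measure set on which Proposition~\ref{def:groupoid-relativization-section4} makes $\euro^\R_\omega(A)_x$ well defined.

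First, since $\euro^\R_\omega(A)_x=\int_{\Gamma^x}V_\beta A_{s(\beta)}V_\beta^\dagger\,d\mu^{\E^x_\R}_{\omega_x}(\beta)$ is a weak (ultraweak) integral of a uniformly bounded measurable function against a probability measure, and $B\mapsto V_\alpha B V_\alpha^\dagger$ is a normal bounded map $\bhsx\to\B(\his^z)$, conjugation commutes with the integral. This is checked by pairing with an arbitrary $\chi\in\mathcal{T}(\his^z)$ and using $\Tr[\chi V_\alpha B V_\alpha^\dagger]=\Tr[V_\alpha^\dagger\chi V_\alpha B]$. Functoriality, $V_\alpha V_\beta=V_{\alpha\circ\beta}$ with $s(\alpha\circ\beta)=s(\beta)$, then gives
\begin{equation}
V_\alpha\euro^\R_\omega(A)_xV_\alpha^\dagger=\int_{\Gamma^x}V_{\alpha\circ\beta}A_{s(\alpha\circ\beta)}V_{\alpha\circ\beta}^\dagger\,d\mu^{\E^x_\R}_{\omega_x}(\beta)=\int_{\Gamma^z}V_\gamma A_{s(\gamma)}V_\gamma^\dagger\,d\bigl((L_\alpha)_*\mu^{\E^x_\R}_{\omega_x}\bigr)(\gamma),
\end{equation}
by the change-of-variables formula for push-forward measures. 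Here $L_\alpha$ is a homeomorphism, hence Borel, with inverse $L_{\alpha^{-1}}$.

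Second, I identify the push-forward measure via covariance (Definition~\ref{def:groupoid-qrf-section4}). For $\Delta\in\Bor(\Gamma^z)$,
\begin{equation}
(L_\alpha)_*\mu^{\E^x_\R}_{\omega_x}(\Delta)=\Tr\bigl[\omega_x\E^x_\R(L_\alpha^{-1}\Delta)\bigr]=\Tr\bigl[\omega_xU_\alpha^\dagger\E^z_\R(\Delta)U_\alpha\bigr]=\Tr\bigl[U_\alpha\omega_xU_\alpha^\dagger\,\E^z_\R(\Delta)\bigr]=\mu^{\E^z_\R}_{\omega^\alpha_z}(\Delta).
\end{equation}
Substituting this into the previous display yields $\euro^\R_{\omega^\alpha}(A)_z$, computed with the modified frame field $\omega^\alpha$, which agrees with $\omega$ except at the single fiber $z$. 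Note that $\omega^\alpha_z$ is again a density operator on $\hi_\R^z$, so it defines a valid state there.

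The main obstacle is the measure-theoretic bookkeeping rather than the algebra. The field $\omega^\alpha$ differs from $\omega$ only on $\{z\}$. If $\mu(\{z\})=0$, then $\omega^\alpha$ equals $\omega$ as an element of $\framestate$, and the identity must be read fiberwise: the right-hand side is the fiber formula of Proposition~\ref{def:groupoid-relativization-section4} evaluated at $z$ with frame state $\omega^\alpha_z$. I would state explicitly that the identity is pointwise in $z$ for each such $x$, and not an identity of direct-integral classes. One also needs the integrand $\gamma\mapsto V_\gamma A_{s(\gamma)}V_\gamma^\dagger$ on $\Gamma^z$ to be defined independently of the representative of $A$. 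For that I would invoke $\mu$-admissibility together with Lemma~\ref{lem:covariance marginal}. The source-marginal invariance gives $\F^z_{\R,s}=U_\alpha\F^x_{\R,s}U_\alpha^\dagger$, so $\mu$-null sets in the source variable are negligible at $z$ exactly when they are at $x$. Hence the integral at $z$ is well defined whenever it is at $x$, even if $z$ itself lies outside the original full-measure set.
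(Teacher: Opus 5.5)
Your proposal is correct and follows the paper's proof essentially step for step. You pull the conjugation inside the integral, use functoriality to get $V_{\alpha\circ\beta}$ with $s(\alpha\circ\beta)=s(\beta)$, identify $(L_\alpha)_*\mu^{\E^x_\R}_{\omega_x}$ with $\mu^{\E^z_\R}_{\omega^\alpha_z}$ via covariance, and change variables $\gamma=L_\alpha\beta$. Your extra bookkeeping is a sound refinement that the paper leaves implicit: you read the identity fiberwise (since $\omega^\alpha=\omega$ $\mu$-a.e.\ when $\mu(\{z\})=0$), and you transfer $\mu$-admissibility from $x$ to $z$ via Lemma~\ref{lem:covariance marginal}.
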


\begin{proof}
    See App.~\ref{app:proof of covariance euro}.
\end{proof}

\begin{remark}
    Note that $V_\alpha$ does not act at the level of $\A(\scrhis)$, so we need a notion of unitary action at this direct integral level. This is given by bisections.
\end{remark}

\begin{corollary}
    \label{cor:invariance and covariance bisections euro}
    Let $\R$ be a $\mu$-admissible bisection-admissible principal (Lie) groupoid QRF for $\Gamma \rightrightarrows M$. Then $\euro^\R(\A(\scrhis)) \subseteq \A(\scrhis \boxtimes \scrhir)^{\mathrm{Bis}(\Gamma)}$, and for all $\omega \in \framestate$ and all $A \in \A(\scrhis)$ and all $b \in \mathrm{Bis}(\Gamma)$,
    \begin{equation}
        \mathbf{V}_b \euro^\R_{\omega}(A) \mathbf{V}_b^\dagger = \euro^\R_{\omega^b}(A)
    \end{equation}
    where $\omega^b = \left(\omega^b_x = U_{b(\varphi_b^{-1}(x))} \omega_{\varphi_b^{-1}(x)} U_{b(\varphi_b^{-1}(x))}^\dagger\right)$.
\end{corollary}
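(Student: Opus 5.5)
The corollary has two parts, and I will derive both from earlier results by unfolding the bisection unitaries fiber by fiber.

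\emph{Invariance.} The inclusion $\euro^\R(\A(\scrhis)) \subseteq \A(\scrhis\boxtimes\scrhir)^{\mathrm{Bis}(\Gamma)}$ follows directly from Theorem~\ref{thm:groupoid-relativization-invariant}(8), which gives $\euro^\R(A)\in\A(\scrhis\boxtimes\scrhir)^\Gamma$. I then apply Lemma~\ref{lem:Gamma invariance implies bisection invariance} to the field $\scrhis\boxtimes\scrhir$ with representation $W=V\otimes U$. The lemma is stated for a generic unitary representation on a measurable field of Hilbert spaces, so it applies verbatim once $V$ is replaced by $W$ and $\mathbf{V}_b$ by $\mathbf{W}_b$. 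Bisection-admissibility of $\R$ supplies $(\varphi_b)_*\mu\sim\mu$ for every $b$, so each $\mathbf{W}_b$ is well defined.

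\emph{Covariance.} For the restricted map, I compute $(\mathbf{V}_b\,\euro^\R_\omega(A)\,\mathbf{V}_b^\dagger)_x$ exactly as in the proof of Lemma~\ref{lem:Gamma invariance implies bisection invariance}. The Radon--Nikodym factors cancel, and with $y:=\varphi_b^{-1}(x)$ one obtains
\begin{equation}
(\mathbf{V}_b\,\euro^\R_\omega(A)\,\mathbf{V}_b^\dagger)_x = V_{b(y)}\,\euro^\R_\omega(A)_y\,V_{b(y)}^\dagger
\end{equation}
for $\mu$-almost every $x$. The arrow $b(y)$ goes from $y$ to $\varphi_b(y)=x$, so Theorem~\ref{thm:covariance euro}, applied with $\alpha=b(y)$, gives $\euro^\R_{\omega'}(A)_x$ with transformed frame state $\omega'_x = U_{b(y)}\,\omega_y\,U_{b(y)}^\dagger$. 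This is exactly $\omega^b_x$ as defined in the statement. Since the identity holds at $\mu$-almost every $x$ and both sides are essentially bounded measurable fields, taking direct integrals gives $\mathbf{V}_b\,\euro^\R_\omega(A)\,\mathbf{V}_b^\dagger=\euro^\R_{\omega^b}(A)$.

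\emph{Main obstacle.} The delicate step is the almost-everywhere bookkeeping. Theorem~\ref{thm:covariance euro} holds for $\mu$-a.e.\ source point $y$, and I need the exceptional set to be transported correctly to the point $x=\varphi_b(y)$. I will handle this using $(\varphi_b)_*\mu\sim\mu$: the image under $\varphi_b$ of a $\mu$-null set is again $\mu$-null.

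A second technical point is that $\omega^b$ must itself lie in $\framestate$. Its entries are states fiberwise, since they are unitary conjugates of states. Measurability of $x\mapsto U_{b(\varphi_b^{-1}(x))}\,\omega_{\varphi_b^{-1}(x)}\,U_{b(\varphi_b^{-1}(x))}^\dagger$ follows from three facts: $b$ is smooth, $\varphi_b^{-1}$ is a diffeomorphism, and $U$ is a measurable (indeed continuous) representation. Only after this is established is $\euro^\R_{\omega^b}$ well defined.
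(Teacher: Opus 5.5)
Your proposal is correct and follows the paper's own proof: invariance comes from item 8 of Theorem~\ref{thm:groupoid-relativization-invariant} combined with Lemma~\ref{lem:Gamma invariance implies bisection invariance}, and covariance comes from writing $(\mathbf{V}_b B\mathbf{V}_b^\dagger)_x = V_{b(y)}B_yV_{b(y)}^\dagger$ with $y=\varphi_b^{-1}(x)$ and then applying Theorem~\ref{thm:covariance euro} with $\alpha=b(y)$. Your additional bookkeeping is not in the paper's proof of this corollary but is consistent with it. This covers transporting the exceptional null set via $(\varphi_b)_*\mu\sim\mu$, checking $\omega^b\in\framestate$ (the paper only does this later, in the proof of Prop.~\ref{prop:covariance localization}), and noting that only the fiber value $\omega^b_x$ enters $\euro^\R_{\omega^b}(A)_x$.
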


\begin{proof}
    The claim that $\euro^\R(\A(\scrhis)) \subseteq \A(\scrhis \boxtimes \scrhir)^{\mathrm{Bis}(\Gamma)}$ follows directly from Thm.~\ref{thm:groupoid-relativization-invariant} item 8.~and Lem.~\ref{lem:Gamma invariance implies bisection invariance}.

    Let $y=\varphi_b^{-1}(x)$. Conjugating any decomposable operator $B \in \A(\scrhis)$ by the bisection unitary gives $(\mathbf V_bB\mathbf V_b^\dagger)_x = V_{b(y)}B_yV_{b(y)}^\dagger$ for $\mu$-almost every $x \in M$. fiber-wise arrow covariance therefore gives, for any $A \in \A(\scrhis)$,
\begin{equation}
    (\mathbf V_b\euro_\omega^\R(A)\mathbf V_b^\dagger)_x =   V_{b(y)}\euro_{\omega}^\R(A)_yV_{b(y)}^\dagger \stackrel{\ref{thm:covariance euro}}{=} \euro_{\omega^{b}}^\R(A)_x
\end{equation}
which concludes the proof.
\end{proof}

We thus have a local statement (Thm.~\ref{thm:covariance euro}) and a global statement (Cor.~\ref{cor:invariance and covariance bisections euro}). The absolutist (localization) limit of relativised observables can also be expressed in the language of groupoid quantum reference frames, as follows.

\begin{theorem}[Fundamental theorem of operational groupoid quantum reference frames]
    \label{thm:euro localisation limit}
    Let $\R$ be a localizable $\mu$-admissible principal groupoid QRF for a continuous $\Gamma \rightrightarrows M$, where $(\Gamma^x)$ is metrizable, such that $\exists x \in M$ with $1_x \in \supp \E_\R^x$, and suppose that $(\omega_n^x) \in \mathscr{D}(\hirx)$ is a localizing sequence centred at $1_x$, with $\omega_n = (\omega_n^x) \in \framestate$.  Then for any $A \in \A(\scrhis)_{\mathrm{uw}}$, 
    \begin{equation}
        \label{eqn:localization euro fiber-wise}
        \lim_{n \to \infty} \euro_{\omega_n}^\R(A)_x = A_x \, ,
    \end{equation}
    where the limit is understood ultraweakly in $\bhsx$.
   Moreover, assume that there exists a sequence of measurable fiber-state
sections $\omega_n=(\omega_n^x)\in\framestate$ such that
\begin{equation}
  \mu_{\omega_{n,x}}^{\E_\R^x}\longrightarrow\delta_{1_x}
  \quad\text{weakly, for $\mu$-almost every }x.
\end{equation}
Then $\euro^\R_{\omega_n}(A)\to A$ ultraweakly in $\A(\scrhis)_{\mathrm{uw}}$.
\end{theorem}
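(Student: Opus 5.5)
The plan is to reduce the fiber-wise statement to weak convergence of probability measures on $\Gamma^x$, and then lift it to the direct integral by dominated convergence.

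\textbf{Fiber-wise limit.} Fix $x$ with $1_x\in\supp\E_\R^x$ and a localizing sequence $(\omega_n^x)$, which exists by Prop.~\ref{prop:norm 1 groupoid sequence}. Fix $\chi\in\thsx$. By Prop.~\ref{def:groupoid-relativization-section4}, together with normality of the operator-valued integral, pairing with $\chi$ gives
\begin{equation}
  \Tr\bigl[\chi\,\euro^\R_{\omega_n}(A)_x\bigr]
  =\int_{\Gamma^x} f_\chi(\beta)\,d\mu^{\E_\R^x}_{\omega_n^x}(\beta),
  \qquad
  f_\chi(\beta):=\Tr\bigl[\chi\,V_\beta A_{s(\beta)}V_\beta^\dagger\bigr].
\end{equation}
The function $f_\chi$ is bounded by $\|\chi\|_1\,\esssup\|A\|$.

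The key claim is that $f_\chi$ is continuous on $\Gamma^x$. Write $f_\chi(\beta)=\Tr[(V_\beta^\dagger\chi V_\beta)A_{s(\beta)}]$. Here $\beta\mapsto V_\beta^\dagger\chi V_\beta\in\mathcal T(\his^{s(\beta)})$ is a trace-norm continuous section along $\Gamma^x$, since $V$ is strongly continuous by Thm.~\ref{thm:equivalences continuity} and $\chi$ can be approximated by finite-rank operators. Composing with the continuous map $s$ and using ultraweak continuity of $A$ then gives continuity. The one subtlety is that the test section is defined only along $s(\Gamma^x)$, not on all of $M$. I expect this to be the main obstacle, and I would handle it by extending locally using the local triviality of the continuous field, or by arguing directly with local uniform finite-rank approximations as in the proof of Thm.~\ref{thm:equivalences continuity}.

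Once $f_\chi\in C_b(\Gamma^x)$ is established, weak convergence $\mu^{\E_\R^x}_{\omega_n^x}\to\delta_{1_x}$ gives
\begin{equation}
  \Tr\bigl[\chi\,\euro^\R_{\omega_n}(A)_x\bigr]\to f_\chi(1_x)=\Tr[\chi A_x],
\end{equation}
because $V_{1_x}=\id$ and $s(1_x)=x$. Since $\chi$ was arbitrary, this is the ultraweak convergence \eqref{eqn:localization euro fiber-wise}.

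\textbf{Global limit.} Let $\rho=\int^\oplus\rho_x\,d\mu\in\A(\scrhis)_*$ be an arbitrary predual element. Define
\begin{equation}
  g_n(x):=\Tr\bigl[\rho_x\,\euro^\R_{\omega_n}(A)_x\bigr].
\end{equation}
For $\mu$-almost every $x$ the fiber-wise argument above applies with $\chi=\rho_x$, so $g_n(x)\to\Tr[\rho_xA_x]$. These functions are measurable because each $\euro^\R_{\omega_n}(A)$ is a measurable field (Prop.~\ref{def:groupoid-relativization-section4}). They are also dominated: contractivity (Thm.~\ref{thm:groupoid-relativization-invariant}) gives $|g_n(x)|\le\|\rho_x\|_1\|A\|$, and $x\mapsto\|\rho_x\|_1$ lies in $L^1(\mu)$. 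Dominated convergence then yields
\begin{equation}
  \Tr[\rho\,\euro^\R_{\omega_n}(A)]=\int_M g_n\,d\mu\to\int_M\Tr[\rho_xA_x]\,d\mu=\Tr[\rho A].
\end{equation}
This is ultraweak convergence in $\A(\scrhis)$, and the limit $A$ lies in $\A(\scrhis)_{\mathrm{uw}}$.
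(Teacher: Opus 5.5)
Your proposal follows the paper's proof essentially step for step: you pair with trace-class operators, use that $\beta\mapsto\Tr[\chi V_\beta A_{s(\beta)}V_\beta^\dagger]$ is bounded and continuous on $\Gamma^x$ together with weak convergence of the Born measures to $\delta_{1_x}$, and then pass to the direct integral by dominated convergence with the integrable bound $\|\rho_x\|_1\|A\|$. You are more careful than the paper about the continuity of the integrand, which the paper only asserts. The actual difficulty is that $\beta\mapsto V_\beta^\dagger\chi V_\beta$ is indexed by arrows rather than by base points. Your second fix handles this: near $\beta_0$ you compare with a continuous section through $V_{\beta_0}^\dagger\chi V_{\beta_0}$ and approximate $\chi$ by finite-rank operators. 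It does not need local triviality of the field, which is not among the hypotheses.
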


\begin{proof}
    See App. \ref{app:proof yen localisation limit}.
\end{proof}

\begin{remark}
    Note that the use of continuous groupoids, as opposed to the more general measurable groupoids, is relevant precisely here: we make use of ultraweak continuity. The well-definedness of the integrals of relativization does not rely on such continuity requirements however. It may be that some physical contexts of interest, notably related to quantum gravity and Wigner's friend-type scenarios, require the use of non-continuous measurable groupoids, for which relativization maps are well-defined but do not reduce, even approximately, to any absolute description. 
\end{remark}

Note that we associate normal states $\rho \in \normalsystemstate$ of the system to operator fields of states $\omega \in \framestate$ of the frame; these are two different kinds of states. We have fiber-wise frame preparations giving us probability measures of frame configurations: these are \emph{preparation states} of the frame. Normal states are giving us probabilities of (relativised) observables: these are \emph{evaluation states}. This is mathematically consistent for the following reason.

\begin{proposition}
    \label{prop:joint normal state}
    Let $\rho \in \normalsystemstate$ and $\omega \in \framestate$. Then the fiber-wise tensor product
    \begin{equation}
        \rho \boxtimes \omega := \int_M^\oplus (\rho_x \otimes \omega_x) \, d\mu(x)
    \end{equation}
    is a normal state of $\A(\scrhis \boxtimes \scrhir)$: $\rho \boxtimes \omega \in \normalsystemframestate \subset \A(\scrhis \boxtimes \scrhir)_*$.
\end{proposition}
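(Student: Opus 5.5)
We have to show that $\rho\boxtimes\omega$ lies in $\A(\scrhis\boxtimes\scrhir)_*$, is positive, and has trace one. The plan is to work fiber by fiber and then integrate. Since $\rho\in\normalsystemstate$, the field $x\mapsto\rho_x$ is measurable with values in $\thsx$. It satisfies $\rho_x\geq0$ for $\mu$-almost every $x$, and $\int_M\Tr[\rho_x]\,d\mu(x)=\norm{\rho}_1=1$. Since $\omega\in\framestate$, each $\omega_x$ is a density operator on $\hirx$ for $\mu$-almost every $x$, so $\Tr[\omega_x]=1$ and $\omega_x\geq0$.

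First, measurability. For measurable sections $\xi=\xi_1\otimes\xi_2$ and $\eta=\eta_1\otimes\eta_2$ of the product field $\scrhis\boxtimes\scrhir$, the matrix coefficient factorizes:
\begin{equation}
\langle \eta_x,(\rho_x\otimes\omega_x)\xi_x\rangle=\langle\eta_{1,x},\rho_x\xi_{1,x}\rangle\,\langle\eta_{2,x},\omega_x\xi_{2,x}\rangle .
\end{equation}
Each factor is measurable, so the product is too. Such elementary tensor sections generate the measurable structure of the tensor product field in the standard direct-integral sense (Dixmier), so $x\mapsto\rho_x\otimes\omega_x$ is a weakly measurable field of trace-class operators. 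I expect this to be the only step requiring care. If a fully rigorous argument is needed, I would pick measurable orthonormal frames of $\scrhis$ and $\scrhir$ and note that the tensor products of these frames form a measurable frame of the product field. Trace-norm measurability then follows, because $\norm{\cdot}_1$ can be computed as a supremum over countably many finite-rank contractions built from that frame.

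Second, integrability and positivity. Fiberwise, $\norm{\rho_x\otimes\omega_x}_1=\norm{\rho_x}_1\norm{\omega_x}_1=\norm{\rho_x}_1$. Hence
\begin{equation}
\norm{\rho\boxtimes\omega}_1=\int_M\norm{\rho_x}_1\,d\mu(x)=1<\infty ,
\end{equation}
so $\rho\boxtimes\omega\in\A(\scrhis\boxtimes\scrhir)_*$. Positivity holds because a tensor product of positive operators is positive, so $\rho_x\otimes\omega_x\geq0$ almost everywhere, and therefore the decomposable element is positive. The same computation gives the normalization $\Tr[\rho\boxtimes\omega]=\int_M\Tr[\rho_x]\Tr[\omega_x]\,d\mu=1$.

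Finally, I would note what the argument uses. It relies only on the essential boundedness $\esssup_x\norm{\omega_x}_1=1$ of the frame preparation field, which is exactly why $\omega$ is taken in $\framestate$ rather than in $\normalframestate$. This also gives the consistency check that $\Tr[(\rho\boxtimes\omega)B]=\Tr[\rho\,\Gamma_\omega(B)]$, by the fiberwise definition of $\Gamma^x_{\omega_x}$.
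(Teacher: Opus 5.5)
Your proposal is correct and follows essentially the same route as the paper's proof: measurability of $x\mapsto\rho_x\otimes\omega_x$, the fiberwise identity $\norm{\rho_x\otimes\omega_x}_1=\norm{\rho_x}_1$ giving membership in $\A(\scrhis\boxtimes\scrhir)_*$, fiberwise positivity, and normalization from $\Tr[\omega_x]=1$; your measurability argument via elementary-tensor sections is more detailed than the paper's one-line assertion. One small correction to your closing remark: normalization needs $\Tr[\omega_x]=1$ for $\mu$-almost every $x$, whereas $\esssup_x\norm{\omega_x}_1=1$ alone would only give integrability.
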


\begin{proof}
We first check measurability. Since $\rho\in\normalsystemstate$, the field $x \mapsto \rho_x \in \thsx$ is measurable. Since $\omega\in\framestate$, the field $x\mapsto\omega_x \in \framestatex$ is a measurable field of density operators on $\hirx$. Hence $x\mapsto \rho_x\otimes\omega_x \in \thsrx$ is measurable. Next, for $\mu$-almost every $x$, $\rho_x\geq0$, $\omega_x\geq0$ and therefore $\rho_x\otimes\omega_x\geq0$ as a trace-class operator in $\thsrx$. Moreover, $\norm{\rho_x\otimes\omega_x}_1 = \norm{\rho_x}_1\norm{\omega_x}_1$. Since $\omega_x$ is a density operator, $\norm{\omega_x}_1=\Tr_{\hirx}[\omega_x]=1$. Thus,
\begin{equation}
    \int_M \norm{\rho_x\otimes\omega_x}_1\,d\mu(x) = \int_M
\norm{\rho_x}_1\,d\mu(x) = \norm{\rho}_1 = 1.
\end{equation}
Therefore, $\rho\boxtimes\omega \in \A(\scrhis\boxtimes\scrhir)_*$. It remains to be shown that this predual element is a state. Let $B=(B_x)_{x\in M}\in\A(\scrhis\boxtimes\scrhir)$ be positive. Then $B_x\geq0$ for $\mu$-almost every $x$, and hence $\Tr_{\hisx\otimes\hirx}[(\rho_x\otimes\omega_x)B_x] \geq 0$ for $\mu$-almost every $x$. Therefore
\begin{equation}
(\rho\boxtimes\omega)(B) = \int_M \Tr_{\hisx\otimes\hirx}[(\rho_x\otimes\omega_x)B_x] \,d\mu(x) \geq 0.
\end{equation}
So $\rho\boxtimes\omega$ is a positive normal functional. Finally, evaluating on the identity field gives
\begin{multline}
    (\rho\boxtimes\omega)(\id_{\scrhis\boxtimes\scrhir}) =
\int_M
\Tr_{\hisx\otimes\hirx}[\rho_x\otimes\omega_x]\,d\mu(x)
\\ = \int_M \Tr_{\hisx}[\rho_x] \Tr_{\hirx}[\omega_x]\, d\mu(x) = \int_M \Tr_{\hisx}[\rho_x]\,d\mu(x) = \Tr_{\his}[\rho] =1.
\end{multline}
Thus $\rho\boxtimes\omega$ is a positive normal functional of norm $1$ on $\A(\scrhis\boxtimes\scrhir)$, i.e. $\rho\boxtimes\omega\in\normalsystemframestate$.
\end{proof}

Note however that not every product decomposable state in $\normalsystemframestate$ can be written as $\rho \boxtimes \omega$ where $\rho \in \normalsystemstate$ and $\omega \in \framestate$ (e.g. given $\eta \in \systemstate$ and $\kappa \in \normalframestate$ also gives $\eta \boxtimes \kappa \in \normalsystemframestate$). Here, we merely argue that restricting to the consideration of normal states of the joint algebra given by $\rho \boxtimes \omega$ where $\rho \in \normalsystemstate$ and $\omega \in \framestate$ is physically meaningful. We argue that computing probabilities of joint observables in $\euro^\R(\A(\scrhis))$ should be done using such normal states.  

\begin{remark}
    It is here important to distinguish the role of agents and of quantum reference frames. It should \emph{not} be assumed that agents \emph{are} quantum reference frames: agents may (or perhaps may not) always carry a quantum reference frame. This is a statement that depends on the choice of how to model agents. The operational quantum reference frames formalism is \emph{not} a formalism aimed at doing so, though it might help our understanding of the resources accessible to agents. Rather, one should take a perspectival view here: agents are assumed to be in the picture and giving a view of both the quantum system $\S$ and the quantum reference frame $\R$; it is assumed that the agent somehow (again, this is dependent on the model of agents) can prepare the quantum reference frame's state. The role of quantum reference frames is then to help agents extract information about the system $\S$, very much in the spirit of measurement schemes \cite{loveridge_relational_2020}. In this sense, the consideration of states $\rho \boxtimes \omega$ for $\rho \in \normalsystemstate$ and $\omega \in \framestate$ is natural.
\end{remark}

\subsection{An example: the canonical groupoid quantum reference frame}
\label{subsec:canonical groupoid qrf}

Definition \ref{def:groupoid-qrf-section4} is a definition, not an existence statement, and it is natural to ask whether generic curved spacetimes, which motivated the whole construction, actually carry groupoid quantum reference frames. We show in this subsection that they do, and that the relevant example is canonical: it is built out of the Haar system alone, requires no further choices, and has all of the good properties outlined in Definition \ref{def:groupoid-qrf-sharp-localizable}. It is the groupoid counterpart of the ideal frame $\hir = L^2(G)$ carried by the left regular representation together with the canonical projection-valued measure, which plays the analogous role in the group-based theory \cite{loveridge_symmetry_2018,carette_operational_2025}.

Throughout this subsection $(\Gamma\rightrightarrows M,\nu,\mu)$ is a second countable Hausdorff continuous groupoid that can be equipped with a continuous left Haar system $\nu=\{\nu^x\}_{x\in M}$ and a $\sigma$-finite quasi-invariant Radon measure $\mu$ on $M$, as in Sec.~\ref{subsec:continuous-groupoid-preliminaries}. Recall in particular that $\supp\nu^x=\Gamma^x$ for every $x\in M$.

\begin{definition}
\label{def:canonical-qrf}
The \emph{canonical quantum reference frame} of $(\Gamma\rightrightarrows M,\nu,\mu)$
is the triple
\begin{equation}
    \R_{\mathrm{can}} := \bigl(U^{\mathrm{can}},\E_{\R_{\mathrm{can}}},\scrhir^{\,\mathrm{can}}\bigr)
\end{equation}
defined as follows.
\begin{enumerate}
    \item The field of Hilbert spaces is the field of fibrewise square-integrable functions,
    \begin{equation}
        \label{eqn:canonical-hilbert-field}
        \hir^{x,\mathrm{can}} :=       L^2\bigl(\Gamma^x,\nu^x\bigr), \qquad        \scrhir^{\,\mathrm{can}} :=      \bigsqcup_{x\in M} L^2\bigl(\Gamma^x,\nu^x\bigr)\to M,
    \end{equation}
    whose continuous structure is the one generated by the restrictions $f|_{\Gamma^x}$ of functions $f\in C_c(\Gamma)$  \cite{renault_locally_1980,landsman_mathematical_1998}.
    \item The representation is the \emph{left regular representation} of $\Gamma$: for $\alpha:x\to y$,
    \begin{equation}
        \label{eqn:canonical-representation}
        U^{\mathrm{can}}_\alpha :        L^2\bigl(\Gamma^{x},\nu^{x}\bigr)        \longrightarrow     L^2\bigl(\Gamma^{y},\nu^{y}\bigr), \qquad
        \bigl(U^{\mathrm{can}}_\alpha f\bigr)(\gamma)  := f\bigl(\alpha^{-1}\circ\gamma\bigr),        \quad \gamma\in\Gamma^{y}.
    \end{equation}
    \item The frame observable is the family of \emph{canonical projection-valued measures}, acting by multiplication by indicator functions:
    \begin{equation}
        \label{eqn:canonical-povm}
        \E^x_{\R_{\mathrm{can}}}(\Delta) :=        M_{\mathbf 1_\Delta}, \qquad \bigl(M_{\mathbf 1_\Delta}f\bigr)(\beta) = \mathbf{1}_\Delta(\beta)\,f(\beta), \qquad   \Delta\in\Bor(\Gamma^x).
    \end{equation}
\end{enumerate}
\end{definition}

The formulas \eqref{eqn:canonical-representation} and \eqref{eqn:canonical-povm} should be read as saying that the canonical frame \emph{is} the totality of frame comparisons available at each point: its fibre Hilbert space is spanned by the arrows of $\Gamma^x$, and its frame observable is the sharp question ``which arrow?''. All of the structure is supplied by the
Haar system.

\begin{theorem}
\label{thm:canonical-qrf}
Let $(\Gamma\rightrightarrows M,\nu,\mu)$ be as above and $\nu$ be source-compatible for $\mu$. Then $\R_{\mathrm{can}}$ of Def.~\ref{def:canonical-qrf} is a principal groupoid quantum reference frame for $\Gamma$ in the sense of Def.~\ref{def:groupoid-qrf-section4}. Moreover, $\R_{\mathrm{can}}$ is \emph{sharp} (hence \emph{ideal} and localizable) and \emph{$\mu$-admissible} in the sense of Def.~\ref{def:groupoid-qrf-sharp-localizable}. If in addition $\Gamma$ is a Lie groupoid then $\R_{\mathrm{can}}$ is \emph{Lie}, and if $\Gamma$ is bisection-admissible then $\R_{\mathrm{can}}$ is \emph{bisection-admissible}. 
\end{theorem}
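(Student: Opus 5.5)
The plan is to verify the three items of Def.~\ref{def:groupoid-qrf-section4} in turn, then read off the additional properties of Def.~\ref{def:groupoid-qrf-sharp-localizable}. For item 1, $L^2(\Gamma^x,\nu^x)$ is separable because $\Gamma^x$ is a closed subset of the second countable space $\Gamma$ and $\nu^x$ is Radon. The continuous structure generated by $\{f|_{\Gamma^x}: f\in C_c(\Gamma)\}$ is the standard one of \cite{renault_locally_1980,landsman_mathematical_1998}. The only condition to check is continuity of $x\mapsto\|f|_{\Gamma^x}\|^2=\int_{\Gamma^x}|f|^2\,d\nu^x$, and this is exactly continuity of the Haar system applied to $|f|^2\in C_c(\Gamma)$. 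Totality of these sections in each fiber follows from density of $C_c(\Gamma^x)$ in $L^2(\Gamma^x,\nu^x)$ together with the Tietze extension theorem.

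For item 2, $U^{\mathrm{can}}_\alpha$ is isometric by left invariance, $\int|f(\alpha^{-1}\gamma)|^2\,d\nu^y(\gamma)=\int|f|^2\,d\nu^x$. It is surjective with inverse $U^{\mathrm{can}}_{\alpha^{-1}}$, and functoriality $U_{\alpha\beta}=U_\alpha U_\beta$ is immediate. For continuity I would use Theorem~\ref{thm:equivalences continuity} and establish weak continuity. Given $\xi,\eta\in C_c(\Gamma)$, the function
\begin{equation}
\alpha\mapsto\int_{\Gamma^{t(\alpha)}}\overline{\eta(\gamma)}\,\xi(\alpha^{-1}\gamma)\,d\nu^{t(\alpha)}(\gamma)=\int_{\Gamma^{s(\alpha)}}\overline{\eta(\alpha\beta)}\,\xi(\beta)\,d\nu^{s(\alpha)}(\beta)
\end{equation}
must be continuous. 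This is the standard continuity of convolution-type integrals: $F(\alpha,\beta)=\overline{\eta(\alpha\beta)}\xi(\beta)$ is continuous with compact support in $\beta$, locally uniformly in $\alpha$, on $\Gamma^{(2)}$, and continuity of the Haar system applies (cf.\ \cite{renault_locally_1980}). This is the step I expect to be the main technical obstacle. The cleanest route is to approximate $F$ uniformly by finite sums $\sum_i g_i(\alpha)h_i(\beta)$ with $g_i,h_i\in C_c$, using Stone--Weierstrass, and then apply continuity of $x\mapsto\int h\,d\nu^x$. Weak continuity on a generating family then extends to all continuous sections by local uniform approximation.

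For item 3, each $M_{\mathbf 1_\Delta}$ is a projection and $\Delta\mapsto M_{\mathbf 1_\Delta}$ is a normalized PVM, with strong countable additivity given by dominated convergence. Measurability of $x\mapsto \E^x(\Delta\cap\Gamma^x)$ follows by pairing with sections $f,g\in C_c(\Gamma)$: the map $x\mapsto\int \mathbf 1_\Delta \bar g f\,d\nu^x$ is Borel by a monotone class argument starting from continuous functions. For covariance, if $\alpha:x\to y$ and $f\in L^2(\Gamma^y)$, then
\begin{equation}
(U_\alpha M_{\mathbf 1_\Delta}U_\alpha^\dagger f)(\gamma)=\mathbf 1_\Delta(\alpha^{-1}\gamma)f(\gamma)=\mathbf 1_{L_\alpha\Delta}(\gamma)f(\gamma).
\end{equation}

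For the remaining properties, sharpness holds by construction. Hence the frame is ideal because it is principal, and localizable by the remark after Def.~\ref{def:groupoid-qrf-sharp-localizable}. For $\mu$-admissibility, if $\mu(N)=0$ then source-compatibility gives $\nu^x(s^{-1}(N)\cap\Gamma^x)=s_*\nu^x(N)=0$ for $\mu$-a.e.\ $x$. Multiplication by the indicator of a $\nu^x$-null set is zero in $L^2$, so $\F^x_{\R,s}(N)=0$. The Lie and bisection-admissibility statements are purely properties of $(\Gamma,\nu,\mu)$, so they transfer directly.
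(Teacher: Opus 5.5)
Your proposal is correct and follows the paper's proof essentially step for step. The shared steps are: separability and the $C_c(\Gamma)$-generated continuous structure via continuity of the Haar system; unitarity and functoriality of the left regular representation, with continuity reduced to weak continuity via Thm.~\ref{thm:equivalences continuity}; the covariance identity $\mathbf 1_\Delta(\alpha^{-1}\gamma)=\mathbf 1_{L_\alpha\Delta}(\gamma)$; and $\mu$-admissibility read off directly from source-compatibility. Your Stone--Weierstrass argument for the convolution integral only fills in a detail the paper delegates to \cite{renault_locally_1980}, and your monotone-class argument covers all Borel $\Delta$ where the paper only spells out the open case.
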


\begin{proof}
    See App.~\ref{app:proof canonical qrf}.
\end{proof}

The canonical frame is, as one would expect of an idealized frame, maximally spread out: it sees every arrow and localizes nowhere in particular.

\begin{proposition}
\label{prop:canonical-supports}
Let $\R_{\mathrm{can}}$ be as in Def.~\ref{def:canonical-qrf}. Then
\begin{equation}
    \supp\E^x_{\R_{\mathrm{can}}}=\Gamma^x    \quad\text{for } \mu\text{-almost every } x\in M, \qquad    \supp\R_{\mathrm{can}}=\Gamma .
\end{equation}
In particular $1_x\in\supp\E^x_{\R_{\mathrm{can}}}$ for $\mu$-almost every $x\in M$. If moreover $\Gamma$ is transitive, then for the source projection $\pi=s|_{\Gamma^{\bullet}}$ one has
\begin{equation}
    \supp\F^x_{\R_{\mathrm{can}},s}=M \quad\text{for } \mu\text{-almost every }x\in M, \qquad    \text{\emph{Loc}}_s(\R_{\mathrm{can}})=M .
\end{equation}
\end{proposition}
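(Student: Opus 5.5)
The plan is to reduce everything to the support of the Haar measures $\nu^x$. For the first claim, fix $x$ and use the definition of $\supp\E^x_{\R_{\mathrm{can}}}$ as a union of supports of Born measures, together with Lemma~\ref{lem:supp E fiber closed}. By that lemma it suffices to show that no nonempty open $O\subseteq\Gamma^x$ satisfies $\E^x_{\R_{\mathrm{can}}}(O)=M_{\mathbf 1_O}=0$. Since $\supp\nu^x=\Gamma^x$, every nonempty open $O$ has $\nu^x(O)>0$. Inner regularity of the Radon measure $\nu^x$ gives a compact $K\subseteq O$ with $0<\nu^x(K)<\infty$, and then $f=\mathbf 1_K\in L^2(\Gamma^x,\nu^x)$ satisfies $M_{\mathbf 1_O}f=f\neq0$. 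Hence $O^x_{\max}=\varnothing$ and $\supp\E^x_{\R_{\mathrm{can}}}=\Gamma^x$.

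More concretely, I would normalize $\psi=\nu^x(K)^{-1/2}\mathbf 1_K$. The measure $\mu^{\E^x}_{\ketbra{\psi}}$ is then $|\psi|^2\nu^x$, and by varying $K$ over compact subsets of small open sets one sees directly that every point of $\Gamma^x$ lies in some such support. This holds for every $x$, hence in particular $\mu$-a.e., and gives $1_x\in\supp\E^x_{\R_{\mathrm{can}}}$ because $1_x\in\Gamma^x$.

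For $\supp\R_{\mathrm{can}}=\Gamma$, I would interpret the essential support over $x$ as the closure of $\bigcup_{x\notin N}\supp\E^x$ minimized over $\mu$-null $N$. Since the fibres $\Gamma^x$ partition $\Gamma$ and each has full support, this union is $t^{-1}(M\smallsetminus N)$. The main obstacle is to show that its closure is all of $\Gamma$ for every null $N$, i.e. that $t^{-1}(M\smallsetminus N)$ is dense. I would argue via the measure $\lambda$: $M\smallsetminus N$ is $\mu$-conull, and I will assume $\supp\mu=M$, which is standard for quasi-invariant Radon measures in this setting (for example, volume measures on Lie groupoids) and should be stated explicitly. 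Given this, $M\smallsetminus N$ is dense in $M$. Since $t$ is open, being a submersion in the Lie case or by continuity of the Haar system with full-support fibres in general, the preimage of a dense set is dense. Hence the closure is $\Gamma$.

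For the transitive statement, the source marginal is $\F^x_{\R_{\mathrm{can}},s}(N)=M_{\mathbf 1_{s^{-1}(N)\cap\Gamma^x}}$, whose support is the set of $y$ with $\nu^x(s^{-1}(V)\cap\Gamma^x)>0$ for all open $V\ni y$. By transitivity, $s^{-1}(V)\cap\Gamma^x=\bigcup_{z\in V}\Gamma^x_z$ is a nonempty open subset of $\Gamma^x$ for nonempty open $V$, because some arrow $z\to x$ exists and $s$ is continuous. Full support of $\nu^x$ then gives positive measure. The same indicator-vector argument as above shows $\supp\F^x_{\R_{\mathrm{can}},s}=M$ for every $x$, and taking the essential union over $x$ yields $\mathrm{Loc}_s(\R_{\mathrm{can}})=M$. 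Source-compatibility, via Proposition~\ref{prop:transitive groupoid implies nu compatible}, is not needed here beyond ensuring that the frame is well defined.
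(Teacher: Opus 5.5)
Your treatment of the fibrewise claim and of the transitive claim follows the paper's argument. The paper uses Lemma~\ref{lem:supp E fiber closed} together with $\supp\nu^x=\Gamma^x$ to rule out nonempty open $\E^x_{\R_{\mathrm{can}}}$-null sets. Your inner-regularity step usefully justifies $M_{\mathbf 1_O}\neq0$ even when $\nu^x(O)=\infty$. For the source marginal, both arguments use that $s^{-1}(V)\cap\Gamma^x$ is nonempty (by transitivity) and open (by continuity of $s$).

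You genuinely diverge on $\supp\R_{\mathrm{can}}$. The paper simply takes the union $\bigcup_{x\in M}\supp\E^x_{\R_{\mathrm{can}}}=\bigcup_{x\in M}\Gamma^x=\Gamma$, and it reads the essential support as such a union throughout (compare the proof of Prop.~\ref{prop:covariance localization}). Under that reading, neither $\supp\mu=M$ nor openness of $t$ is needed.

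Your stricter reading, the closure of the union over a $\mu$-conull set of base points, really does need an extra hypothesis. For example, take the unit groupoid $\Gamma=M$ with $\nu^x=\delta_{1_x}$ and $\mu=\delta_{x_0}$; then it gives $\{1_{x_0}\}\neq\Gamma$. So flagging $\supp\mu=M$ is correct rather than a defect, and your density argument via openness of $t$ is sound.

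In the transitive case this hypothesis holds automatically. Prop.~\ref{prop:transitive groupoid implies nu compatible} gives $s_*\nu^x=\kappa\ll\mu$. Your own last-paragraph computation shows $\kappa(V)=\nu^x(s^{-1}(V)\cap\Gamma^x)>0$ for every nonempty open $V\subseteq M$, hence $\mu(V)>0$.

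One small precision concerns your ``more concretely'' remark. There $K$ should be a compact \emph{neighbourhood} of the given point, not merely a compact subset of a small open set. Only then is the point guaranteed to lie in $\supp(\nu^x|_K)$. The Lemma route already makes this remark unnecessary.
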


\begin{proof}
By Lemma \ref{lem:supp E fiber closed}, $\supp\E^x_{\R_{\mathrm{can}}}=\Gamma^x\smallsetminus O^x_{\max}$ where $O^x_{\max}$ is the union of the open sets on which the frame observable vanishes. If $O\subseteq\Gamma^x$ is open and nonempty then $\nu^x(O)>0$, because $\supp\nu^x=\Gamma^x$, and therefore $\E^x_{\R_{\mathrm{can}}}(O)=M_{\mathbf 1_O}\neq0$. Hence $O^x_{\max}=\varnothing$ and $\supp\E^x_{\R_{\mathrm{can}}}=\Gamma^x$. Taking the union over $x\in M$ gives $\supp\R_{\mathrm{can}}=\Gamma$, and $1_x\in\Gamma^x$ always.

Now suppose $\Gamma$ is transitive and let $N\subseteq M$ be open and nonempty. Transitivity gives $s(\Gamma^x)=M$, so $s^{-1}(N)\cap\Gamma^x$ is nonempty, and it is open because $s$ is continuous. By
\eqref{eqn:canonical-source-marginal} and the preceding paragraph, $\F^x_{\R_{\mathrm{can}},s}(N)\neq0$. Hence no nonempty open subset of $M$ is $\F^x_{\R_{\mathrm{can}},s}$-null, so $\supp\F^x_{\R_{\mathrm{can}},s}=M$, and $\text{Loc}_s(\R_{\mathrm{can}})=\bigcup_{x\in M}\supp\F^x_{\R_{\mathrm{can}},s}=M$.
\end{proof}

\begin{remark}
\label{rem:canonical-global-frame}
Prop.~\ref{prop:canonical-supports} says that the canonical frame realizes, on an \emph{arbitrary} curved spacetime, the condition $\text{Loc}_\pi(\R)=M$ which was interpreted in Sec.~\ref{subsec:localization-limit-groupoids} as the quantum counterpart of the existence of a global inertial reference frame. There is no tension with the discussion there: $\mathrm{Loc}_s(\R_\mathrm{can}) = M$ says only ``the canonical frame observable has access to arrows originating at every spacetime point". It is when $\pi$ is operationally meaningful (e.g. $\mathscr{B}$-compatible for a subgroup $\mathscr{B}$ of bisections) that one interprets such a statement as defining the operational localization of the QRF in spacetime.
\end{remark}

For the canonical frame the localizing sequences whose existence is postulated in Thm.~\ref{thm:euro localisation limit} can be written down explicitly, and crucially they can be chosen to depend measurably on the base point. This is the hypothesis needed for the global part of that theorem and for the injectivity statement of Thm.~\ref{thm:groupoid-relativization-invariant}.

\begin{proposition}
\label{prop:canonical-localizing-sequence}
Let $\Gamma\rightrightarrows M$ be a Lie groupoid with continuous Haar system $\nu$, and let $\R_{\mathrm{can}}$ be its canonical quantum reference frame. Then there exists a sequence $\omega_n=(\omega^x_n)_{x\in M}\in\framestate$ of measurable fields of pure fibre states such that
\begin{equation}
    \mu^{\E^x_{\R_{\mathrm{can}}}}_{\omega^x_n}    \longrightarrow \delta_{1_x} \qquad\text{weakly, for } \mu\text{-almost every }x\in M .
\end{equation}
\end{proposition}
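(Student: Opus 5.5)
The idea is to build explicit bump functions concentrated at the unit section. Let $u:M\to\Gamma$, $x\mapsto 1_x$, be the unit map. Since $\Gamma$ is a Lie groupoid, $u(M)$ is an embedded submanifold, and $\Gamma$ is metrizable. Fix a compatible metric $d$ on $\Gamma$, and set
\begin{equation}
B_n:=\{\gamma\in\Gamma\mid d(\gamma,u(t(\gamma)))<1/n\}.
\end{equation}
This is an open neighbourhood of $u(M)$ in $\Gamma$, because $\gamma\mapsto d(\gamma,u(t(\gamma)))$ is continuous. Choose a continuous cutoff $\phi_n:\Gamma\to[0,1]$ with $\phi_n=1$ on $u(M)$ and $\operatorname{supp}\phi_n\subseteq B_n$, for instance $\phi_n(\gamma)=\max\{0,1-n\,d(\gamma,u(t(\gamma)))\}$. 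Also take an exhaustion of $M$ by compacts $K_m$ and, where local compactness is needed, replace $\phi_n$ by a product with a cutoff in $t(\gamma)$ so that each restriction is compactly supported. This keeps the restrictions $\phi_n|_{\Gamma^x}$ in the continuous structure generated by $C_c(\Gamma)$.

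Define
\begin{equation}
c_n(x):=\int_{\Gamma^x}\phi_n^2\,d\nu^x, \qquad \psi^x_n:=c_n(x)^{-1/2}\,\phi_n|_{\Gamma^x}, \qquad \omega^x_n:=|\psi^x_n\rangle\langle\psi^x_n|.
\end{equation}
To see that $c_n(x)$ is positive, note that $\phi_n>0$ on the open set $B_n\cap\Gamma^x\ni 1_x$, which has positive $\nu^x$-measure because $\supp\nu^x=\Gamma^x$. Continuity of the Haar system makes $x\mapsto c_n(x)$ continuous, hence measurable. Therefore $x\mapsto\psi^x_n$ is a continuous (in particular measurable) section of $\scrhir^{\,\mathrm{can}}$, and $x\mapsto\omega^x_n$ is a measurable field of pure states, so $\omega_n\in\framestate$.

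Weak convergence is then checked fibrewise. The Born measure is $d\mu^{\E^x}_{\omega^x_n}=|\psi^x_n|^2d\nu^x$, a probability measure carried by $B_n\cap\Gamma^x$. This set is contained in the ball $\{\gamma\in\Gamma^x: d(\gamma,1_x)<1/n\}$. For bounded continuous $f$ on $\Gamma^x$,
\begin{equation}
\Big|\int f\,d\mu^{\E^x}_{\omega^x_n}-f(1_x)\Big|\le\sup_{d(\gamma,1_x)<1/n}|f(\gamma)-f(1_x)|\to0,
\end{equation}
which gives convergence to $\delta_{1_x}$ for every $x$, and a fortiori $\mu$-almost everywhere.

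The step I expect to be the main obstacle is the measurability and compact-support bookkeeping: ensuring that $\psi_n$ is a genuine measurable section of the field generated by $C_c(\Gamma)$, and that the normalization is well behaved. The plan is to handle it locally. On each relatively compact open set $W\subseteq M$, multiply $\phi_n$ by $\chi(t(\gamma))$ with $\chi\in C_c(M)$ equal to $1$ on $W$. Local compactness of $\Gamma$, together with properness near the unit section (shrinking $B_n$ to a relatively compact neighbourhood over $\overline W$), then makes the product lie in $C_c(\Gamma)$. The local choices are glued over a countable partition of $M$ into Borel pieces, which preserves measurability since only measurability in $x$ is required. Positivity of $c_n$ on each piece again follows from full support of $\nu^x$.
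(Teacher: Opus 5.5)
Your proposal is correct and follows essentially the same route as the paper: normalized bump vectors in $L^2(\Gamma^x,\nu^x)$ supported on metric balls of radius $1/n$ around $1_x$, whose Born measures $|\psi^x_n|^2\,d\nu^x$ converge weakly to $\delta_{1_x}$ because test functions are continuous at $1_x$. The differences are only technical: the paper uses the indicator $\mathbf 1_{B^x_n}$ and a complete Riemannian metric on $\Gamma$, so the balls are relatively compact and $\nu^x(B^x_n)<\infty$ without any cutoff or gluing; your continuous tent function instead gives a genuinely continuous section, at the cost of compact-support bookkeeping that you could avoid entirely by choosing $d$ to be a proper metric from the outset.
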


\begin{proof}
    See App.~\ref{app:proof canonical localizing sequence}.
\end{proof}

Combining the above with the results of Sec.~\ref{subsec:groupoid-relativization} gives a clean structural statement about the canonical relativization map.

\begin{corollary}
\label{cor:canonical-relativization}
Let $(\Gamma\rightrightarrows M,\mu,\nu)$ be a Lie groupoid where $\nu$ is source-compatible for $\mu$, and let $\R_{\mathrm{can}}$ be its canonical quantum reference frame. Then for every system $\S=(\scrhis,V)$ over $M$ the relativization map
\begin{equation}
    \euro^{\R_{\mathrm{can}}} : \A(\his)    \longrightarrow    \A\bigl(\his\boxtimes\hir\bigr)^\Gamma
\end{equation}
is a normal unital $*$-homomorphism. Consequently $\A(\his)$ embeds as a von Neumann subalgebra of the algebra of $\Gamma$-invariant observables of the joint system-frame theory.
\end{corollary}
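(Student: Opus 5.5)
The corollary follows by feeding the properties of the canonical frame into the structural results of Sec.~\ref{subsec:groupoid-relativization}. I would first invoke Thm.~\ref{thm:canonical-qrf}. Since $\nu$ is assumed source-compatible with $\mu$, that theorem says $\R_{\mathrm{can}}$ is a principal groupoid QRF, that it is $\mu$-admissible, and that it is sharp. Its frame observable consists of multiplication operators $M_{\mathbf 1_\Delta}$, which are projections. Because $\R_{\mathrm{can}}$ is $\mu$-admissible and principal, Prop.~\ref{def:groupoid-relativization-section4} and the subsequent definition give a well-defined map $\euro^{\R_{\mathrm{can}}}:\A(\scrhis)\to\A(\scrhis\boxtimes\scrhir^{\,\mathrm{can}})$.

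Next I would apply Thm.~\ref{thm:groupoid-relativization-invariant} to this map, using three of its items:
\begin{enumerate}
    \item item 5 gives that $\euro^{\R_{\mathrm{can}}}$ is linear, normal, unital, completely positive and $*$-preserving;
    \item item 8 gives that its range lies in the $\Gamma$-invariant fields $\A(\scrhis\boxtimes\scrhir^{\,\mathrm{can}})^\Gamma$;
    \item item 9, using sharpness, gives that it is multiplicative.
\end{enumerate}
Together these show that $\euro^{\R_{\mathrm{can}}}$ is a normal unital $*$-homomorphism into the invariant algebra. That algebra is itself a von Neumann algebra: it is the commutant-type fixed-point set of the fibrewise conjugations $W_\alpha(\cdot)W_\alpha^\dagger$, and it is ultraweakly closed because each invariance condition is ultraweakly closed fibrewise.

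For the embedding claim I need injectivity, and this is the main obstacle. Item 10 of Thm.~\ref{thm:groupoid-relativization-invariant} gives injectivity only on ultraweakly continuous fields, under three hypotheses:
\begin{enumerate}
    \item localizability, which holds because $\R_{\mathrm{can}}$ is sharp;
    \item metrizable fibres, which holds automatically for Lie groupoids;
    \item $1_x\in\supp\E^x_{\R_{\mathrm{can}}}$, which holds by Prop.~\ref{prop:canonical-supports}.
\end{enumerate}
To reach all of $\A(\scrhis)$, I would use the measurable localizing sequence of Prop.~\ref{prop:canonical-localizing-sequence}. Together with the global part of Thm.~\ref{thm:euro localisation limit}, it yields $\euro^{\R_{\mathrm{can}}}_{\omega_n}(A)=\Gamma_{\omega_n}\euro^{\R_{\mathrm{can}}}(A)\to A$ ultraweakly. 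Hence if $\euro^{\R_{\mathrm{can}}}(A)=0$, then $A=0$.

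The delicate point is that this limit theorem is stated for $A\in\A(\scrhis)_{\mathrm{uw}}$. I would handle the general case by a separate argument. The kernel of a normal $*$-homomorphism is an ultraweakly closed two-sided ideal of the form $z\A(\scrhis)$, with $z$ a central projection. The centre of $\A(\scrhis)$ is the diagonal algebra $L^\infty(M,\mu)$. So it suffices to show that $\euro^{\R_{\mathrm{can}}}(\mathbf 1_N)\neq0$ whenever $\mu(N)>0$. I would compute $\euro^{\R_{\mathrm{can}}}(\mathbf 1_N)_x=\id\otimes\F^x_{\R_{\mathrm{can}},s}(N)=\id\otimes M_{\mathbf 1_{s^{-1}(N)\cap\Gamma^x}}$. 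This is nonzero because, by left invariance and quasi-invariance of $\mu$ (the argument of Prop.~\ref{prop:transitive groupoid implies nu compatible} run in reverse), $\nu^x(s^{-1}(N))>0$ for $\mu$-a.e.\ $x$. If this centre argument fails, I would fall back to stating the embedding for the ultraweak closure of $\A(\scrhis)_{\mathrm{uw}}$.

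Finally, a normal injective unital $*$-homomorphism has ultraweakly closed range. Therefore $\euro^{\R_{\mathrm{can}}}(\A(\scrhis))$ is a von Neumann subalgebra of the invariant algebra, isomorphic to $\A(\scrhis)$, which is the embedding statement of the corollary.
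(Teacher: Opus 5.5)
Your argument for the homomorphism property is exactly the paper's proof. Thm.~\ref{thm:canonical-qrf} gives that $\R_{\mathrm{can}}$ is principal, $\mu$-admissible and sharp. Items 5, 8 and 9 of Thm.~\ref{thm:groupoid-relativization-invariant} then make $\euro^{\R_{\mathrm{can}}}$ a normal unital $*$-homomorphism into $\A(\scrhis\boxtimes\scrhir)^\Gamma$.

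The paper's proof stops there and treats the embedding as immediate, without addressing injectivity. Your second half is therefore an addition rather than an alternative route, and your central-support argument is the right way to supply it. The kernel of a normal $*$-homomorphism is $\mathbf 1_N\A(\scrhis)$ for some Borel $N\subseteq M$, and $\euro^{\R_{\mathrm{can}}}(\mathbf 1_N)_x=\id\otimes M_{\mathbf 1_{s^{-1}(N)\cap\Gamma^x}}$. This gives injectivity on all of $\A(\scrhis)$ with no localizing sequences, metrizability or ultraweak continuity. Item 10, by contrast, only covers $\A(\scrhis)_{\mathrm{uw}}$, so your detour through item 10 and Prop.~\ref{prop:canonical-localizing-sequence} can be dropped.

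One step is overstated. Quasi-invariance does not give $\nu^x(s^{-1}(N)\cap\Gamma^x)>0$ for $\mu$-a.e.\ $x$ in general. It fails, for instance, for a disjoint union of two pair groupoids with $N$ inside one component. What the reversed argument of Prop.~\ref{prop:transitive groupoid implies nu compatible} actually gives is $\int_M\nu^x\bigl(s^{-1}(N)\cap\Gamma^x\bigr)\,d\mu(x)=\lambda\bigl(s^{-1}(N)\bigr)>0$. This follows from $\lambda^{-1}\bigl(s^{-1}(N)\bigr)=\lambda\bigl(t^{-1}(N)\bigr)=\int_N\nu^x(\Gamma^x)\,d\mu(x)>0$ together with $\lambda^{-1}\ll\lambda$. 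Positivity on a set of positive measure is all you need. In the transitive case $s_*\nu^x$ is independent of $x$, so your a.e.\ claim does hold there.

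Two smaller remarks. First, your fallback to the ultraweak closure of $\A(\scrhis)_{\mathrm{uw}}$ would not have worked, since injectivity on an ultraweakly dense subalgebra does not pass to its closure. Second, you do not need the invariant set itself to be a von Neumann algebra: the range of any normal $*$-homomorphism is already ultraweakly closed. Your fibrewise-closedness justification for that claim is also loose, because ultraweak convergence of decomposable operators does not imply fibrewise convergence.
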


\begin{proof}
$\R_{\mathrm{can}}$ is $\mu$-admissible and principal by
Thm.~\ref{thm:canonical-qrf}, so Thm.~\ref{thm:groupoid-relativization-invariant} applies: $\euro^{\R_{\mathrm{can}}}$ is linear, normal, unital, completely positive and takes values in $\A(\his\boxtimes\hir)^\Gamma$. Since $\R_{\mathrm{can}}$ is sharp, item $9$ of that theorem gives multiplicativity, so $\euro^{\R_{\mathrm{can}}}$ is a normal unital $*$-homomorphism. 
\end{proof}

\begin{corollary}
\label{cor:canonical-general-relativistic-qrf}
Every oriented and time-oriented Lorentzian spacetime $(\M,g)$ admits an ideal, general relativistic quantum reference frame, namely the canonical quantum reference frame of its Poincar\'e groupoid $\mathrm{Poin}(\M,g)\rightrightarrows\M$ with respect to the volume measure $\mu_g$.
\end{corollary}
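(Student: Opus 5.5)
The plan is to apply Theorem~\ref{thm:canonical-qrf} to the Poincar\'e groupoid $\Gamma=\mathrm{Poin}(\M,g)\rightrightarrows\M$ with base measure $\mu=\mu_g$, the Riemannian-type volume measure of $g$. Every hypothesis of that theorem and of Def.~\ref{def:groupoid-qrf-sharp-localizable} has to be checked in this setting, and the conclusion is then read off. The checks come in four steps.

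\textbf{Step 1 (structure).} By Definition~\ref{def:poincare-groupoid-gauge}, $\mathrm{Poin}(\M,g)$ is the Atiyah groupoid of the principal $SO_+^\uparrow(1,d-1)$-bundle $O_+^\uparrow(\M,g)$. It is therefore a Lie groupoid whose arrow space is a finite-dimensional manifold. Since $\M$ is second countable and Hausdorff, so are $O_+^\uparrow(\M,g)$ and its quotient $\Gamma$; this uses that the diagonal action is free and proper, because $H$ acts freely and properly on the fibres. The remark after the Haar system definition (citing \cite{renault_locally_1980}) then gives a smooth left Haar system $\nu$. Concretely, each $\Gamma^x$ can be identified with $O_+^\uparrow(\M,g)$ itself via a frame at $x$, and $\nu^x$ can be built from $\mu_g$ times the Haar measure of the Lorentz group on the fibres. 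Left translations $L_\alpha$ act on $\Gamma^x\cong O_+^\uparrow(\M,g)$ only through a right multiplication by a fixed group element in the frame identification. Since $SO_+^\uparrow(1,d-1)$ is unimodular, this preserves the measure, so $\nu$ is left invariant with full support.

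\textbf{Step 2 (quasi-invariance of $\mu_g$ and source-compatibility).} $\mu_g$ is a $\sigma$-finite, nonzero Radon measure with smooth positive density in charts. The associated $\lambda$ is locally a smooth positive density on $\Gamma$, and so is its image under inversion, which is a diffeomorphism. Hence $\lambda\sim\lambda^{-1}$, and $\mu_g$ is quasi-invariant. Transitivity holds by Proposition~\ref{prop:poincare-transitive}. Proposition~\ref{prop:transitive groupoid implies nu compatible} then yields that $\nu$ is source-compatible with $\mu_g$.

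\textbf{Step 3 (conclusion).} Theorem~\ref{thm:canonical-qrf} now applies: $\R_{\mathrm{can}}$ is a principal, sharp, hence ideal, localizable, $\mu_g$-admissible, Lie groupoid QRF for $\mathrm{Poin}(\M,g)$. Proposition~\ref{prop:canonical-supports} adds that $\mathrm{Loc}_s(\R_{\mathrm{can}})=\M$, and Proposition~\ref{prop:canonical-localizing-sequence} supplies measurable localizing sequences. This justifies calling the frame ``general relativistic'': it is defined intrinsically on any $(\M,g)$, with no isometries or global trivialization.

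\textbf{Main obstacle.} The delicate point is Step 1, namely producing a continuous left Haar system with $\supp\nu^x=\Gamma^x$ in a way that matches the paper's standing assumptions. I would do this explicitly. Fix a smooth section of frames over a neighbourhood, use the identification $\Gamma^x\cong O_+^\uparrow(\M,g)$ given by $u\mapsto[u_x,u]$, and define $\nu^x$ as the pushforward of the bundle measure $\mu_g\otimes dh$. Independence from the choice of $u_x$ then follows from unimodularity of the Lorentz group, and continuity in $x$ from the local triviality of the frame bundle.
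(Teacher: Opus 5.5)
Your proposal is correct and follows the paper's own route: transitivity of $\mathrm{Poin}(\M,g)$ (Prop.~\ref{prop:poincare-transitive}) gives source-compatibility of $\nu$ with $\mu_g$ via Prop.~\ref{prop:transitive groupoid implies nu compatible}, and Thm.~\ref{thm:canonical-qrf} then yields a $\mu_g$-admissible, principal, sharp (hence ideal) frame, which by definition is a general relativistic QRF. Your explicit Haar system on $\Gamma^x\cong O_+^\uparrow(\M,g)$, built from $\mu_g\otimes dh$ using unimodularity of the Lorentz group, and your check that $\mu_g$ is quasi-invariant are also correct; they make explicit the standing hypotheses that the paper's three-line proof takes for granted.
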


\begin{proof}
By Prop.~\ref{prop:poincare-transitive}, $\mathrm{Poin}(\M,g)$ is a transitive Lie groupoid. Prop.~\ref{prop:transitive groupoid implies nu compatible} and Thm.~\ref{thm:canonical-qrf} then produce a $\mu_g$-admissible principal, sharp groupoid QRF for $\mathrm{Poin}(\M,g)$, which is by definition a general relativistic QRF.
\end{proof}

Corollary \ref{cor:canonical-general-relativistic-qrf} is the existence statement that makes the constructions of Sec.~\ref{sec:RQFT curved} non-vacuous: relational quantum fields, relational covariance and the localization limit are available on \emph{every} Lorentzian spacetime, with no assumption whatsoever on its isometry group. This should be contrasted with the group-based formalism, which requires a transitive global symmetry action and therefore has essentially no instances outside the maximally symmetric case.

\begin{example}[Minkowski spacetime]
\label{ex:canonical-minkowski}
Let $(\mink,\eta)$ be Minkowski spacetime. By Prop.~\ref{prop:minkowski-action-groupoid}, $\mathrm{Poin}(\mink,\eta)\cong\Poincup\ltimes\mink$, and the target fibre over $x\in\mink$ is
\begin{equation}
    \mathrm{Poin}(\mink,\eta)^x = \bigl\{(g,g^{-1}\cdot x)\;\big|\;g\in\Poincup\bigr\} \cong \Poincup .
\end{equation}
Under this identification the canonical frame becomes the constant field
\begin{equation}
    \hir^{x,\mathrm{can}}\cong L^2\bigl(\Poincup\bigr), \qquad U^{\mathrm{can}}\;\text{the left regular representation of }\Poincup, \qquad    \E^x_{\R_{\mathrm{can}}}\;\text{the canonical PVM on }\Poincup,
\end{equation}
that is, exactly the ideal relativistic quantum reference frame of Sec.~\ref{sec:RQFT Minkowski}. The canonical groupoid construction therefore reduces, in the flat case, to the familiar ideal frame of the group-based theory, as it must.
\end{example}

\begin{remark}
\label{rem:canonical-vs-absolute-continuity}
The canonical frame is sharp, hence certainly not absolutely continuous with respect to any non-atomic measure on $\Gamma^x$, in accordance with Thm.~\ref{thm:no go absolute continuity}: localizable frames cannot have operator-norm bounded densities. This is the groupoid manifestation of the tension, discussed in Sec.~\ref{subsec:localization and continuity}, between sharp localizability and regularity of the frame observable. 
\end{remark}

\begin{remark}
\label{rem:canonical-caveats}
The canonical frame is an idealization: its fibre Hilbert space $L^2(\Gamma^x,\nu^x)$ is ``as large as the groupoid", and it should be thought of as a reference object rather than as a model of any physically prepared frame, much as $L^2(G)$ with the canonical PVM plays that role in the group case. 

The construction uses no dynamical input: the Haar system is a purely kinematical datum. Selecting physically preferred frames, for instance by restricting to a holonomy subgroupoid determined by the Levi-Civita connection, in the spirit of Remark \ref{rem:poincare-no-connection}, would produce smaller and more realistic frames.
\end{remark}


\subsection{Reduction to quantum reference frames for locally compact groups}
\label{subsec:reduction to group qrfs}

We now prove that the groupoid formalism reduces to the usual operational theory of QRFs when the Lie groupoid is an action groupoid. This is a key consistency check of the whole construction. We distinguish two cases: relativization on a locally compact second countable Hausdorff group $G$, and relativization on torsors of locally compact second countable Hausdorff groups. These two cases are often identified as they share many analytic similarities, but the reduction from the groupoid case is quite different in those two cases.

\subsubsection{Groupoid quantum reference frames on locally compact groups} 

We first start with the locally compact group case. Groupoids reduce to groups when the base space $M$ contains a single element $M=\{*\}$.

\begin{theorem}
    \label{thm:groupoid rel equiv group rel}
    Let $G$ be a locally compact second countable Hausdorff topological group and $\Gamma : G \rightrightarrows \{*\}$ be endowed with the Dirac measure $\mu=\delta_*$ on $M=\{*\}$. Then
    \begin{enumerate}
        \item $\R=(U,\E_\R=(\E_\R^x)_{x \in \{*\}},\scrhir=\{*\} \times \hir \to \{*\})$ is a $\delta_*$-admissible groupoid QRF for $\Gamma$ if and only if $(U_\R \equiv U,\E_\R\equiv \E_\R^*,\hir)$ is a group QRF for $G$.
        \item $\S = (V,\scrhis = \{*\} \times \his \to \{*\})$ is a $\Gamma$-system if and only if $(U_\S \equiv V, \his)$ is a $G$-system.
        \item $\homframestate = \normalframestate = \framestate$, $\homsystemstate = \normalsystemstate = \systemstate$.
        \item $\A(\scrhis) = \bhs$, $\A(\scrhis\boxtimes\scrhir) = \bhsr$ and $\A(\scrhis\boxtimes\scrhir)^\Gamma = \bhsr^G$.
        \item For all $A \in \bhs$ and all $\omega \in \homframestate$,
        \begin{equation}
            \euro^\R(A) = \yen^\R(A), \qquad \euro^\R_\omega(A) = \yen^\R_\omega(A).
        \end{equation}
    \end{enumerate}
\end{theorem}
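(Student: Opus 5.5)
The plan is to unwind every definition at $M=\{*\}$ and check each item by direct comparison. The basic observation is that a groupoid over a one-point base is a group: $\Gamma^{(2)}=G\times G$, $s=t$ is the constant map, and $\Gamma^*=\Gamma_*=\Gamma^*_*=G$. Moreover $L_\alpha$ is left multiplication $\beta\mapsto\alpha\beta$, a unitary representation of $\Gamma$ on $\{*\}\times\hi$ is exactly a unitary representation of $G$, and fields of Hilbert spaces, POVMs and operators over $\{*\}$ are single objects. Measurability of fields over a point is vacuous, and $\mu=\delta_*$-a.e.\ statements become statements at $*$.

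For item 1, I would show that the covariance condition of Def.~\ref{def:groupoid-qrf-section4} becomes literally $U_g\E(\Delta)U_g^\dagger=\E(g\cdot\Delta)$ for $\Delta\in\Bor(G)$, which is Def.~\ref{def:qrf-group}. Fiber-wise ultraweak continuity of $U$ reduces to ultraweak continuity of $g\mapsto\Tr[U_g\chi]$ by the remark after the continuity definition, since trace-norm continuous test maps can be replaced by constant ones. The continuity of the field $\{*\}\times\hir$ is automatic. For $\delta_*$-admissibility, the source marginal is $\F^*_{\R,s}(N)=\E(s^{-1}(N))$, which equals $\E(G)=\id$ if $*\in N$ and $\E(\varnothing)=0$ otherwise. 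The only $\delta_*$-null Borel set is $\varnothing$, so admissibility holds automatically. Item 2 is the same representation-theoretic identification without the POVM.

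Items 3 and 4 come from the definitions of direct integrals against a Dirac measure. We have $\int^\oplus_{\{*\}}\hi_x\,d\delta_*=\hi_*$, so $\A(\scrhi)=\bh$ with the essential-sup norm equal to the operator norm. The predual $\A(\scrhi)_*$ is $\thi$ with $\norm{\chi}_1=\Tr|\chi_*|$, so both $\normalstate$ and $\state$ are $\mathscr{D}(\hi)$. Since the fiber-wise tensor product over one point is $\his\otimes\hir$, we get $\hisr=\his\otimes\hir$ and $\A(\scrhis\boxtimes\scrhir)=\bhsr$. The $\Gamma$-invariance condition $W_\alpha B_*W_\alpha^\dagger=B_*$ for all $\alpha\in G$ is exactly membership in $\bhsr^G$.

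For item 5, I would substitute into Prop.~\ref{def:groupoid-relativization-section4}. At $x=*$ one has $A_{s(\beta)}=A$ and $V_\beta=U_\S(\beta)$, so
\begin{equation}
\euro^\R(A)_*=\int_G U_\S(g)AU_\S(g)^\dagger\otimes d\E_\R(g)=\yen^\R(A),
\end{equation}
and the direct integral over $\{*\}$ returns this operator. The restricted version follows either the same way or via $\euro^\R_\omega=\Gamma_\omega\circ\euro^\R$, since $\Gamma_\omega$ at $*$ is the restriction map of Def.~\ref{def:restriction-group}.

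The only place needing care is making sure that the operator-valued integral used in the groupoid definition, in the sense of \cite{glowacki_w-algebraic_2026}, is the same integral used for $\yen^\R$. It is the same construction with the same integrand and the same POVM, so I expect this to be a remark rather than a genuine obstacle. The main subtle point I would be careful with is the continuity equivalence in item 1, namely that fiber-wise ultraweak continuity with $\alpha$-dependent test sections reduces to ordinary ultraweak continuity. I would handle it with a uniform approximation argument: the bound $\norm{U_g}=1$ lets one replace $\chi_g$ by $\chi_{g_0}$ locally.
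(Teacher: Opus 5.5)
Your proposal is correct and takes the same approach as the paper: every item is checked by unwinding the definitions over the one-point base with $\mu=\delta_*$. You supply more detail than the paper, which simply declares each step immediate; the extra detail is the explicit source-marginal computation showing that $\delta_*$-admissibility is automatic, and the uniform-approximation argument (using $\|U_g\|=1$) for the equivalence of fiber-wise ultraweak continuity with ordinary ultraweak continuity.
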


\begin{proof}
    These statements are immediate from the fact that $V \equiv U_\S$ and $U \equiv U_\R$ are ultraweakly continuous unitary representations of $\Gamma=G$: fiber-wise ultraweak continuity becomes equivalent to ultraweak continuity when $M=\{*\}$. Moreover,  $\E_\R = \E_\R^*$ is a single POVM which is covariant with respect to $\Gamma=G$. No direct integral is performed, and consequently $\homframestate  = \normalframestate = \framestate$, $\homsystemstate = \normalsystemstate = \systemstate$,  $\A(\scrhis) =\bhs$ and $\A(\scrhir) = \bhr$. $\mu$-admissibility is automatic for $\mu=\delta_*$ on $M=\{*\}$. Furthermore, $\Gamma = G$ so trivially $\A(\scrhis \boxtimes \scrhir)^\Gamma = \bhsr^G$. Finally, the integrals of $\yen^\R(A)$ (resp. $\yen^R_\omega(A)$) are now the same as those of $\euro^\R(A)$ (resp. $\euro^\R_\omega(A)$), which completes the proof.
\end{proof}

Thus, in this locally compact group case, we recover exactly the standard relativization map: this goes both ways. We can indeed start from a groupoid QRF and recover a group QRF when the groupoid is a group with $M=\{*\}$, with the groupoid relativization map $\euro^\R$ (respectively $\euro^\R_\omega$) reducing to the group relativization map $\yen^\R$ (respectively $\yen^\R_\omega$). We can otherwise start from a group QRF and recover a groupoid QRF for $\Gamma = G \rightrightarrows \{*\}$, with the group relativization map $\yen^\R$ (respectively $\yen^\R_\omega$) becoming the groupoid relativization map $\euro^\R$ (respectively $\euro^\R_\omega$). As we now see, the situation is slightly different for the case of torsor QRFs.

\subsubsection{Groupoid quantum reference frames on torsors} 
\label{subsubsec:groupoid QRFs on torsors}

We now consider the case where $\Sigma \cong G$ is a topological left $G$-torsor for a locally compact second-countable Hausdorff topological group $G$. Any such $G$-torsor can be seen as a principal $H$-bundle $q : \Sigma \to M = \Sigma/H$, where $H \leq G$ is an admissible\footnote{A subgroup $H$ of $G$ is called \emph{admissible} if the quotient map $G \to G/H$ is a principal $H$-bundle. For example, any subgroup of a discrete group is admissible, and any closed subgroup of a Lie group is admissible \cite{mitchell_notes_2006}.} closed subgroup. Choosing $H$ when several choices are available amounts to a physical choice. This choice of $H$ is therefore non-canonical, but given a torsor there is always an $H$ for which the reduction is meaningful. 

Choose an origin $o \in \Sigma$. Every $p \in \Sigma$ has a unique coordinate $g_p \in G$ such that $p=g_p \cdot o$. Define $\theta_o : G \to \Sigma$ as the $G$-equivariant homeomorphism $\theta_o(g) = g \cdot o$. Define the right $H$-action on $\Sigma$ by $p \cdot h = g_p h \cdot o$. This action is free, and the quotient $q : \Sigma \to M := \Sigma/H$ is a principal $H$-bundle. Note that if $H=G$ then $M=\{*\}$, while if $H=\{e\}$ then $M=\Sigma$. 

The Atiyah groupoid is $\Gamma = \At(\Sigma \to M) = (\Sigma \times \Sigma)/H \rightrightarrows M$. An arrow $[p_t,p_s]$ has source $s([p_t,p_s])=q(p_s)$ and target $t([p_t,p_s]) = q(p_t)$. There is a groupoid isomorphism $\At(\Sigma \to \Sigma/H) \cong G \ltimes G/H$ given, after choosing $o$, by
\begin{equation}
    [g_1 \cdot o, g_2 \cdot o] \mapsto (g_1 g_2^{-1},g_2 H).
\end{equation}
This is an action groupoid over $M=\Sigma/H$. The target fibers are indexed by $x \in M = \Sigma/H$. For each $x \in M$, define
\begin{align}
    \Pi_x^o : \Gamma^x & \to \Sigma \\
    \beta = [p_t,p_s] &\mapsto g_{p_t}g^{-1}_{p_s} \cdot o, \qquad q(p_t) = x.
\end{align}
The map $\Pi^o_x$ is a Borel isomorphism: under $\At(\Sigma \to G/H) \cong G \ltimes G/H$, one has $\At(\Sigma \to G/H)^x = \{(g,g^{-1}x) \mid g \in G\}$. Then $\Pi^o_x(g,g^{-1}x) = g \cdot o$ is a homeomorphism with inverse $g \cdot o \mapsto (g,g^{-1}x)$.  Even though the direct integral is over $M=\Sigma/H$, each target fiber $\Gamma^x$ is identified with the full torsor $\Sigma$, not with $\Sigma/H$. Thus, a torsor POVM on $\Sigma$ can be pulled to a POVM on every $\Gamma^x$.

Moreover, for an arrow $\alpha = [p_t,p_s]$, define $\kappa(\alpha) := g_{p_t}g_{p_s}^{-1} \in G$ and set $U_\alpha := U_{\R}(\kappa(\alpha))$. $\kappa([p_t,p_s])$ is the unique $g \in G$ such that $p_t = g \cdot p_s$. Note that replacing $[p_t,p_s]$ by $[p_t \cdot h,p_s \cdot h]$ for $h \in H$ leaves $g_{p_t}g_{p_s}^{-1}$ unchanged. This relates unitary group representations to the associated action groupoid unitary representation.

Furthermore, we can embed the system observables as constant operator fields over $M$: let
\begin{align}
    \iota_\S : \bhs &\to \A(\scrhis) \\
    \iota_\S(A)_x &= A \qquad \forall x \in M,
\end{align}
and likewise we can raise the frame observables as constant operator fields:
\begin{align}
    \iota_\R : \bhr &\to \A(\scrhir) \\
    \iota_\R(C)_x &= C \qquad \forall x \in M,
\end{align}
and, for joint observables,
\begin{align}
    \iota_{\S\R} : \mathcal{B}(\his \otimes \hi_{\R}) &\to \A(\scrhis\boxtimes\scrhir) \\
    \iota_{\S\R}(B)_x &= B \qquad \forall x \in M.
\end{align}
This allows us to recover groupoid QRFs from torsor QRFs as follows.

\begin{theorem}
    \label{thm:torsor reduction}
    Let $G$ be a locally compact second countable Hausdorff group and $\R = (U_\R,\E_\R,\hir)$ be a $G$-covariant QRF over $\Sigma \cong G$ for $o \in \Sigma$. Let $H\leq G$ be an admissible closed subgroup, and let $q:\Sigma\to M:=\Sigma/H$ be the principal $H$-bundle defined above, and let $\Gamma:=\At(\Sigma\to M)$. Let $m_G$ be a left Haar measure on $G$, let $\nu_\Sigma:=(\theta_o)_*m_G$ be the corresponding invariant measure on the torsor $\Sigma$, and let $\mu_M$ be a sigma-finite quasi-invariant Radon measure on $M\cong G/H$. Then $\underline{\R} = (U,\E_{\underline{\R}},\scrhi_{\underline{\R}})$ is a $\mu_M$-admissible\footnote{The direct integrals defining the groupoid theory are taken with respect to $\mu_M$.} principal groupoid QRF, where
    \begin{enumerate}
        \item $\scrhi_{\underline{\R}} = M \times \hir \to M$,
        \item $\forall \alpha \in \Gamma$ , $U_\alpha = U_\R(\kappa(\alpha))$,
        \item $\forall x \in M$, $\E_{\underline{\R}}^x(\Delta) := \E_\R(\Pi^o_x(\Delta))$ for all $\Delta \in \Bor(\Gamma^x)$.
    \end{enumerate}
    Moreover, let $\S=(V = (V_\alpha = U_\S(\kappa(\alpha))),\scrhis = M \times \his \to M)$ be a $\Gamma$-system. Then for all $\omega \in \homframestate$
    \begin{equation}
        \euro^{\underline{\R}} \circ \iota_\S = \iota_{\S\R} \circ \yen^\R, \qquad \euro^{\underline{\R}}_{\underline{\omega}} \circ \iota_\S = \iota_\S \circ \yen^\R_\omega
    \end{equation}
    where $\underline{\omega} = \iota_\R (\omega) \in \mathscr{S}(\scrhi_{\underline{\R}})$. That is, the following diagram commutes:
\[\begin{tikzcd}[sep=huge]
	{\mathcal{B}(\mathcal{H}_\mathcal{S})} & {\A(\scrhis)} \\
	{\mathcal{B}(\mathcal{H}_\mathcal{S} \otimes \mathcal{H}_\mathcal{R})} & {\mathcal{A}(\scrhis\boxtimes\scrhir)} \\
	{\mathcal{B}(\mathcal{H}_\mathcal{S})} & {\A(\scrhis)}
	\arrow["{\iota_\mathcal{S}}", from=1-1, to=1-2]
	\arrow["{\yen^{\mathcal{R}}}"', from=1-1, to=2-1]
	\arrow["{\yen^{\mathcal{R}}_{\omega}}"', curve={height=40pt}, from=1-1, to=3-1]
	\arrow["{\euro^{\underline{\R}}}", from=1-2, to=2-2]
	\arrow["{\euro^{\underline{\R}}_{\underline{\omega}}}", curve={height=-40pt}, from=1-2, to=3-2]
	\arrow["{\iota_{\mathcal{S}\mathcal{R}}}", from=2-1, to=2-2]
	\arrow["{\Gamma_\omega}"', from=2-1, to=3-1]
	\arrow["{\Gamma_{\underline{\omega}}}", from=2-2, to=3-2]
	\arrow["{\iota_\mathcal{S}}"', from=3-1, to=3-2]
\end{tikzcd}\]
\end{theorem}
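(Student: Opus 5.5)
The proof splits into two parts. First we check that $\underline{\R}$ is a $\mu_M$-admissible principal groupoid QRF. Then we compute the two relativization maps on constant fields.

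\textbf{Step 1: functoriality and continuity of $U$.} The representation $U$ is well defined because $\kappa$ is independent of the representative of $[p_t,p_s]$. The map $\kappa$ is a groupoid homomorphism $\Gamma\to G$: for a composable pair $[p_t,p_m]\circ[p_m,p_s]=[p_t,p_s]$ we have
$$\kappa([p_t,p_m])\,\kappa([p_m,p_s])=g_{p_t}g_{p_m}^{-1}g_{p_m}g_{p_s}^{-1}=\kappa([p_t,p_s]).$$
Moreover $\kappa(1_x)=e$. So $U_\alpha=U_\R(\kappa(\alpha))$ and $V_\alpha=U_\S(\kappa(\alpha))$ are unitary representations of $\Gamma$ on the trivial bundles. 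Under $\Gamma\cong G\ltimes G/H$, $\kappa$ is the continuous projection $(g,x)\mapsto g$. Fiber-wise ultraweak continuity therefore follows from ultraweak continuity of $U_\R$: on trivial bundles, weak continuity reduces to continuity of $g\mapsto\langle U_\R(g)\xi(x),\eta(gx)\rangle$, and Theorem~\ref{thm:equivalences continuity} then applies.

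\textbf{Step 2: POVM properties, measurability and covariance of $\E_{\underline{\R}}$.} Each $\E^x_{\underline{\R}}$ is a normalized POVM because $\Pi^o_x$ is a Borel isomorphism onto $\Sigma$. Measurability in $x$ follows since $(x,\beta)\mapsto\Pi^o_x(\beta)=\kappa(\beta)\cdot o$ is jointly Borel.

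For covariance, the key identity is
$$\Pi^o_{t(\alpha)}\circ L_\alpha=\kappa(\alpha)\cdot\Pi^o_{s(\alpha)},$$
which holds because $\kappa(\alpha\circ\beta)=\kappa(\alpha)\kappa(\beta)$. Given this,
$$U_\alpha\E^{s(\alpha)}_{\underline{\R}}(\Delta)U_\alpha^\dagger=\E_\R\bigl(\kappa(\alpha)\cdot\Pi^o_{s(\alpha)}\Delta\bigr)=\E^{t(\alpha)}_{\underline{\R}}(L_\alpha\Delta),$$
using the $G$-covariance of $\E_\R$.

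\textbf{Step 3: $\mu_M$-admissibility.} We expect this to be the main obstacle. Under the identification, $s(g,g^{-1}x)=g^{-1}x$. Hence
$$\F^x_{\underline{\R},s}(N)=\E_\R\bigl(\{g\cdot o: g^{-1}x\in N\}\bigr).$$
Suppose $\mu_M(N)=0$. Quasi-invariance of $\mu_M$ and the standard correspondence between quasi-invariant measures on $G/H$ and the Haar measure class give $m_G(\{g:g^{-1}x\in N\})=0$. Concretely, the preimage of $N$ in $G$ is $m_G$-null, and inversion and left translation preserve $m_G$-null sets. Transporting by $\theta_o$, the set $\{g\cdot o: g^{-1}x\in N\}$ is $\nu_\Sigma$-null. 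Theorem~\ref{thm:covariance implies mu continuity}, applied to $\Sigma\cong G/\{e\}$ with the invariant measure $\nu_\Sigma$, gives $\E_\R\ll\nu_\Sigma$, so $\F^x_{\underline{\R},s}(N)=0$. The step needing most care is the null-set transfer from $G/H$ to $G$; we would invoke the Mackey--Bruhat description of quasi-invariant measures (all lie in one measure class, with $\rho$-function cross sections) to justify it.

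\textbf{Step 4: the intertwining identities.} Fix $x\in M$ and $A\in\bhs$. Then
$$\euro^{\underline{\R}}(\iota_\S A)_x=\int_{\Gamma^x}U_\S(\kappa(\beta))\,A\,U_\S(\kappa(\beta))^\dagger\otimes d\E^x_{\underline{\R}}(\beta).$$
We change variables along the Borel isomorphism $\Pi^o_x$, using $\kappa(\beta)=g_{\Pi^o_x(\beta)}$ and the fact that $\E^x_{\underline{\R}}$ is the pullback of $\E_\R$. Operator-valued integrals are compatible with pushforward of POVMs (check on simple functions, then take ultraweak limits), so the integral becomes $\int_\Sigma U_\S(g_p)AU_\S(g_p)^\dagger\otimes d\E_\R(p)=\yen^\R(A)$ for every $x$. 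Thus $\euro^{\underline{\R}}\circ\iota_\S=\iota_{\S\R}\circ\yen^\R$. The restricted identity follows either by the same computation with the scalar measure $\mu^{\E_\R}_\omega$, or by noting $\Gamma_{\underline{\omega}}\circ\iota_{\S\R}=\iota_\S\circ\Gamma_\omega$ fiberwise. Together these give commutativity of the whole diagram.
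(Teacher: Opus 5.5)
Your proposal is correct and follows the paper's proof essentially step for step. You use $\kappa$ as a continuous groupoid homomorphism to get functoriality and fiber-wise continuity, the identity $\Pi^o_{t(\alpha)}\circ L_\alpha=\kappa(\alpha)\cdot\Pi^o_{s(\alpha)}$ for covariance, $\E_\R\ll\nu_\Sigma$ from Thm.~\ref{thm:covariance implies mu continuity} for $\mu_M$-admissibility, and a change of variables along $\Pi^o_x$ for the intertwining identities, just as the paper does. The only variation is the null-set transfer: you cite the standard fact that $\mu_M(N)=0$ iff the preimage of $N$ under $G\to G/H$ is $m_G$-null, and then apply inversion and left translation, whereas the paper proves this directly by integrating $I(x)=\int_G\mathbf 1_N(g^{-1}x)\,dm_G(g)$ against $\mu_M$ (Tonelli plus quasi-invariance) and then transporting a single point $x_0$ with $I(x_0)=0$ to every $x$ by transitivity and left invariance of $m_G$.
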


\begin{proof}
    See App.~\ref{app:proof torsor reduction}.
\end{proof}

Note that for the trivial bundle one has the canonical von Neumann algebra identifications
\begin{align}
 \A(M \times \his)
   &\cong L^\infty(M,\mu_M)\,\bar\otimes\,\B(\his),\\
 \A((M \times \his) \boxtimes (M \times \hir))
   &\cong L^\infty(M,\mu_M)\,\bar\otimes\,
          \B(\his\otimes\hir),
\end{align}
where $\bar{\otimes}$ is the spatial tensor product, with corresponding direct-integral Hilbert spaces
\begin{align}
    \int_M^\oplus\hi_\S\,d\mu_M(x) = L^2(M,\mu_M;\his)
 &\cong L^2(M,\mu_M)\otimes\hi_\S, \\
 \hisr = \int^\oplus_M (\his \otimes \hir) \, d\mu_M(x) = L^2(M,\mu_M; \his\otimes\hir) &\cong L^2(M,\mu_M) \otimes (\his \otimes \hir).
\end{align}
The isomorphisms follow since, on a direct-integral Hilbert space $L^2(M,\mu_M) \otimes \hi$, there is a natural unitary\footnote{The (unique extension of this) map is unitary since \begin{equation}
    \braket{J(f \otimes \xi)}{J(g \otimes \eta)}_{L^2(M,\mu_M;\hi)} = \int_M \overline{f(x)}g(x) \braket{\xi}{\eta}_\hi \, d\mu_M(x) = \braket{f}{g}_{L^2(M,\mu_M)} \braket{\xi}{\eta}_\hi
\end{equation} i.e. $J$ is isometric and its range is dense so it extends uniquely to a unitary.}
\begin{align}
    \label{eqn:J unitary}
    J : L^2(M,\mu_M) \otimes \hi &\to L^2(M,\mu_M;\hi) \\
    J(f \otimes \xi)(x) &= f(x) \xi.
\end{align}
For $A \in \bh$ and the associated constant operator field $\underline{A} := \int^\oplus_M A \, d\mu_M(x) \in \A(M \times \hi)$, we have
\begin{equation}
    \underline{A} J (f \otimes \xi)(x) = A(f(x) \xi) = f(x) A \xi = J(f \otimes A\xi)(x) \Longrightarrow J^* \underline{A} J = \mathbb{1}_{L^2(M,\mu_M)} \otimes A.
\end{equation}
Consider the $*$-isomorphism
\begin{align}
\label{eqn:normal * hom}
    \Theta : L^\infty(M,\mu_M) \bar\otimes \bh &\to \A(M \times \hi) \\
    \Theta(f \otimes A) &= \int^\oplus_M f(x) A \, d\mu_M(x).
\end{align}
Writing
\begin{align}
    \Theta_{\S\R} : L^\infty(M,\mu_M) \bar\otimes \bhsr &\to \A(M \times (\his \otimes \hir)) \\
    \Theta_\S : L^\infty(M,\mu_M) \bar\otimes \bhs &\to \A(M \times \his)
\end{align}
defined as in Eqn.~\eqref{eqn:normal * hom}, note that the constant-field embeddings are 
\begin{equation}
 \iota_\S(A)=\Theta_\S(\mathbf 1_{L^\infty(M)}\otimes A),
 \qquad
\iota_{\S\R}(B)=\Theta_{\S\R}(\mathbf 1_{L^\infty(M)}\otimes B),
\end{equation}
for $A \in \bhs$ and $B \in \bhsr$, where $\mathbf 1_{L^\infty(M)}$ is the unit of the commutative von Neumann algebra $L^\infty(M,\mu_M)$. To relate this $\mathbf{1}_{L^\infty(M,\mu_M)}$ to $\mathbb{1}_{\mathcal{B}(L^2(M,\mu_M))}$, note that
\begin{equation}
    J^* \Theta(f\otimes A) J = M_f \otimes A \in \mathcal{B}(L^2(M,\mu_M) \otimes \hi)
\end{equation}
where $(M_f \psi)(x) = f(x) \psi(x)$ for $\psi \in L^2(M,\mu_M)$ is the multiplication operator. Thus,
\begin{equation}
    \mathbf{1}_{L^\infty(M,\mu_M)} \otimes A \stackrel{\Theta}{\longmapsto} \underline{A} \stackrel{\mathrm{Ad}_{J^*}}{\longmapsto} \mathbb{1}_{L^2(M,\mu_M)} \otimes A.
\end{equation}
Writing \begin{align}
    J_{\S\R} : L^2(M,\mu_M) \otimes (\his \otimes \hir) &\to L^2(M,\mu_M; \his \otimes \hir) \\
    J_{\S} : L^2(M,\mu_M) \otimes \his &\to L^2(M,\mu_M;\his)
\end{align}
defined as in Eqn.~\eqref{eqn:J unitary}, we thus have that $\forall A \in \bhs$ and $\forall \omega \in \homframestate$,
\begin{align}
    \euro^{\underline{\R}}(\iota_\S(A)) = \iota_{\S\R}(\yen^\R(A)) &= \int^\oplus_M \yen^R(A) \, d\mu_M(x), \\ &= J_{\S\R} \left(\mathbb{1}_{L^2(M,\mu_M)} \otimes \yen^\R(A)\right) J_{\S\R}^* \\ &= \Theta_{\S\R}\left(\mathbf{1}_{L^\infty(M,\mu_M)} \otimes \yen^\R(A)\right)
    \\
    \euro^{\underline{\R}}_{\underline{\omega}}(\iota_\S(A)) = \iota_{\S}(\yen^\R_\omega(A)) &= \int^\oplus_M \yen^R_\omega(A) \, d\mu_M(x), \\ &= J_{\S} \left(\mathbb{1}_{L^2(M,\mu_M)} \otimes \yen^\R_\omega(A)\right) J_{\S}^* \\ &= \Theta_{\S}\left(\mathbf{1}_{L^\infty(M,\mu_M)} \otimes \yen^\R_\omega(A)\right)  
\end{align}
Note that the identity factor is not an additional physical observable: it records that the same group-based operator is embedded in every base fiber of the groupoid theory. That is, ordinary torsor relativization is the constant operator field sector of groupoid relativization. Importantly, this is an embedding of the torsor theory into the groupoid theory for action groupoids, rather than equivalence between the two. This is different to the group case, where both pictures are equivalent when $\Gamma$ is a locally compact second countable Hausdorff topological group $G$.

We can indeed now have more general groupoid QRFs in this action groupoid case. For example, we can have $G$-covariance across the Hilbert bundle linking different fibers $\hirx \to \hir^y$, rather than on a fixed fiber $\hirx$. In the special case where the bundle $\scrhir$ is built from constant fibers $\hir$, and the representation $U_{(g,x)} : \hirx \to \hir^y$ fully reconstructs the representation $U_\R : G \to \U(\hir)$, we can recover a torsor QRF from the action groupoid QRF as in Thm.~\ref{thm:torsor reduction}; more generally, we do not.

Likewise, at the level of the relativization map, we only recover the usual $\yen$ maps for constant operator fields $A_x = A$ for $\mu$-almost every $x \in M$, in which case $(\euro^{\underline{\R}}(\iota_\S(A)))_x = \yen^\R(A)$ independently of $x$. More generally, even if the action groupoid QRF $\underline{\R}$ is built from torsor QRF $\R$ as in Thm.~\ref{thm:torsor reduction} and $V_\alpha = U_\S(\kappa(\alpha))$, we have that for $A \in \A(\scrhis)$ and $\omega \in \homframestate$ with $\underline{\omega} := \iota_\R(\omega)$,
\begin{align}
    (\euro^{\underline{\R}}(A))_x &= \int_\Sigma U_\S(g_{\eta,o})  A_{g_{\eta,o}^{-1} \cdot x} U_\S(g_{\eta,o})^\dagger \, \otimes d\E_\R(\eta), \\
    (\euro^{\underline{\R}}_{\underline{\omega}}(A))_x &= \int_\Sigma U_\S(g_{\eta,o})  A_{g_{\eta,o}^{-1} \cdot x} U_\S(g_{\eta,o})^\dagger \,  d\mu^{\E_\R}_{\omega}(\eta) .
\end{align}
This is now $x$-dependent. Thus, the groupoidal approach to quantum reference frames extends the torsor formalism even in the case of action groupoids. Only for certain classes of action groupoid QRFs and for constant operator fields of the system do we retrieve the standard relativization map given by $\yen$.

\section{Foundations of relational quantum field theory on Minkowski spacetime}
\label{sec:RQFT Minkowski}

In the rest of this paper we apply the groupoid QRF formalism to relational quantum field theory. We proceed in three stages. First, in this section, we specialize to Minkowski spacetime, where we review the framework introduced in \cite{fedida_foundations_2025} and link it to Wightman quantum field theory and Algebraic quantum field theory. Next, in Sec.~\ref{sec:RQFT curved}, we will pass to generic curved spacetimes, where the relational field is defined intrinsically on the Poincar\'e groupoid without any global trivialization of the frame bundle. Finally, in Sec.~\ref{sec:towards RQGFT}, we will indicate how the same ideas extend to internal gauge symmetry via Atiyah groupoids.

We start by examining the notion of quantum reference frames on the space of oriented inertial reference frames on Minkowski spacetime $O_+^\uparrow(\mink,\eta)$. 

\begin{definition}[\cite{fedida_foundations_2025}]
    We call a $\Poincup$-covariant QRF based on $O_+^\uparrow(\mink,\eta)$ a \emph{relativistic QRF}.
\end{definition}

A relativistic QRF lives on $O_+^\uparrow(\mink,\eta)$, a trivial Lorentz bundle over Minkowski (see Fig. \ref{fig:Visualisation bundle}). For a frame state $\omega\in \homframestate$, we denote by
\begin{equation}
\mu_\omega^{\E_\R}(W):=\Tr\!\bigl[\omega\,\E_\R(W)\bigr], \qquad W \in \Bor(O_+^\uparrow(\mink,\eta))
\end{equation}
the corresponding Born probability measure on $O_+^\uparrow(\mink,\eta)$. Note that tetrads are points $(x,\lambda) \in O_+^\uparrow(\mink,\eta)$ and are localised in $\pi(x,\lambda) \in \mink$. We can then likewise discuss the projection of the support of $\mu^{\E_\R}_{\omega}$ onto $\mink$ to discuss the localization of a QRF $\R$ prepared in a state $\omega$.

\begin{definition}
    Let $\R$ be a relativistic QRF and $\omega \in \homframestate$. Let
\begin{equation}
\F_\R:\Bor(\mink)\to \mathcal E(\hir),
\qquad
\F_\R(\mathcal{V}):=\E_\R(\mathcal{V}\times \Lup),
\end{equation}
be the spacetime marginal of the frame observable, and write
\begin{equation}
\mu_\omega^{\F_\R}(U):=\Tr[\omega\,\F_\R(U)]
\end{equation}
for the induced probability measure on spacetime. We call
    \begin{equation}
        \text{Loc}(\R,\omega) := \supp \mu^{\F_\R}_{\omega}
    \end{equation}
    the \emph{spacetime localization of $(\R,\omega)$}.
\end{definition}

It is easily seen that $\F_\R$ is Poincaré covariant \cite{fedida_foundations_2025}:
\begin{equation}
    U_\R(a,\Lambda) \F_\R(\mathcal{V}) U_\R(a,\Lambda)^\dagger = \F_\R((a,\Lambda) \cdot \mathcal{V}) \qquad \forall (a,\Lambda) \in \Poincup, \mathcal{V} \in \Bor(\mink).
\end{equation}

\begin{remark}
    Note the similarity between this notion of $\text{Loc}(\R,\omega)$ and that given in \eqref{eqn:Loc(R,w)}. However, the notion of localization given here is canonical: this is because there exists a global trivialization of the principal bundle, and thus a canonical $\pi$. Further note that the covariance of $\F_\R$ indicates that this $\pi$ is $\Poincup$-compatible. Note moreover that, for a relativistic QRF, $\text{Loc}(\R) = \bigcup_{\omega \in \homframestate} \text{Loc}(\R,\omega) = \mink$ by the Poincaré covariance of $\F_\R$. This is the QRF analogue of the fact that there exists a global inertial reference frame on Minkowski. 
\end{remark}

\begin{figure*}[t!]
    \centering
    \begin{subfigure}[t]{0.4\textwidth}
    \begin{tikzpicture}[x=0.75pt,y=0.75pt,yscale=-1,xscale=1]

\draw   (209.56,47.56) .. controls (219.21,39.02) and (439.36,31.69) .. (431.16,47.06) .. controls (422.96,62.43) and (430.68,190.51) .. (430.68,212.71) .. controls (430.68,234.91) and (198.64,236.62) .. (188.99,211) .. controls (179.34,185.39) and (199.91,56.1) .. (209.56,47.56) -- cycle ;
\draw    (458.79,42.79) -- (458.79,221.25) ;
\draw    (188.03,263.09) .. controls (207.5,244.66) and (415.24,281.87) .. (434.53,256.26) ;
\draw  [fill={rgb, 255:red, 0; green, 0; blue, 0 }  ,fill opacity=1 ] (301.22,134.21) .. controls (301.22,132.56) and (301.98,131.22) .. (302.91,131.22) .. controls (303.85,131.22) and (304.6,132.56) .. (304.6,134.21) .. controls (304.6,135.87) and (303.85,137.21) .. (302.91,137.21) .. controls (301.98,137.21) and (301.22,135.87) .. (301.22,134.21) -- cycle ;
\draw   (302.39,110.16) .. controls (302.39,100.97) and (308.84,93.52) .. (316.8,93.52) .. controls (324.75,93.52) and (331.2,100.97) .. (331.2,110.16) .. controls (331.2,119.35) and (324.75,126.8) .. (316.8,126.8) .. controls (308.84,126.8) and (302.39,119.35) .. (302.39,110.16) -- cycle ;
\draw    (328.1,100.33) -- (316.8,110.16) ;
\draw [line width=3.75]    (280.06,122.53) -- (300.01,122.53) ;
\draw [line width=3.75]    (280.06,124.24) -- (280.06,98.63) ;
\draw [line width=3.75]    (280.06,122.53) -- (294.1,107.16) ;

\draw (439.07,245.66) node [anchor=north west][inner sep=0.75pt]    {$\mink$};
\draw (445.46,20) node [anchor=north west][inner sep=0.75pt]    {$\mathcal{SO}_+^\uparrow(1,d-1)$};
\draw (145.59,62.94) node [anchor=north west][inner sep=0.75pt]    {$O_+^\uparrow(\mink,\eta)$};

\end{tikzpicture}
    \caption{A classical inertial reference frame, which can be thought of as a clock and rods, sharply localised in spacetime and with a definite Lorentz orientation.}
    \label{fig:Visualisation pointwise}
    \end{subfigure}
    \qquad \qquad \quad
    \begin{subfigure}[t]{0.4\textwidth}
        \begin{tikzpicture}[x=0.75pt,y=0.75pt,yscale=-0.95,xscale=0.8]
\draw   (202.6,45.39) .. controls (213.62,36.76) and (465.17,29.36) .. (455.8,44.89) .. controls (446.43,60.42) and (455.25,189.83) .. (455.25,212.26) .. controls (455.25,234.69) and (190.12,236.42) .. (179.09,210.54) .. controls (168.07,184.65) and (191.57,54.02) .. (202.6,45.39) -- cycle ;
\draw    (491.79,40.58) -- (491.79,220.89) ;
\draw    (177.99,263.16) .. controls (200.24,244.54) and (437.61,282.14) .. (459.66,256.26) ;
\draw   (311.75,74.28) .. controls (311.75,65) and (318.82,57.47) .. (327.55,57.47) .. controls (336.28,57.47) and (343.36,65) .. (343.36,74.28) .. controls (343.36,83.57) and (336.28,91.1) .. (327.55,91.1) .. controls (318.82,91.1) and (311.75,83.57) .. (311.75,74.28) -- cycle ;
\draw    (339.95,64.35) -- (327.55,74.28) ;
\draw [line width=3.75]    (287.36,86.79) -- (309.97,86.79) ;
\draw [line width=3.75]    (287.36,88.51) -- (287.36,62.63) ;
\draw [line width=3.75]    (287.36,86.79) -- (303.27,71.26) ;
\draw  [pattern=_nikqpx0v0,pattern size=6pt,pattern thickness=0.75pt,pattern radius=0pt, pattern color={rgb, 255:red, 0; green, 0; blue, 0}][dash pattern={on 4.5pt off 4.5pt}] (259.02,109.42) .. controls (270.04,100.8) and (391.31,95.62) .. (410.6,111.15) .. controls (429.89,126.68) and (378.08,174.99) .. (328.47,141.34) .. controls (278.86,107.7) and (226.5,209.5) .. (215.47,183.62) .. controls (204.45,157.74) and (248,118.05) .. (259.02,109.42) -- cycle ;
\draw  [fill={rgb, 255:red, 0; green, 0; blue, 0 }  ,fill opacity=1 ] (312.29,105.86) .. controls (312.29,104.24) and (313.12,102.93) .. (314.16,102.93) .. controls (315.19,102.93) and (316.02,104.24) .. (316.02,105.86) .. controls (316.02,107.47) and (315.19,108.78) .. (314.16,108.78) .. controls (313.12,108.78) and (312.29,107.47) .. (312.29,105.86) -- cycle ;
\draw   (389.17,186.29) .. controls (380.65,184.94) and (374.75,176.4) .. (376,167.21) .. controls (377.25,158.02) and (385.17,151.67) .. (393.69,153.02) .. controls (402.21,154.37) and (408.11,162.91) .. (406.86,172.1) .. controls (405.61,181.29) and (397.69,187.64) .. (389.17,186.29) -- cycle ;
\draw    (384.09,155.16) -- (391.43,169.66) ;
\draw [line width=3.75]    (363.37,182.37) -- (374.05,158.66) ;
\draw [line width=3.75]    (364.61,183.49) -- (346.03,166.68) ;
\draw [line width=3.75]    (363.37,182.37) -- (359.73,155.6) ;
\draw  [fill={rgb, 255:red, 0; green, 0; blue, 0 }  ,fill opacity=1 ] (374.48,143.23) .. controls (374.48,141.61) and (375.32,140.31) .. (376.35,140.31) .. controls (377.38,140.31) and (378.22,141.61) .. (378.22,143.23) .. controls (378.22,144.85) and (377.38,146.15) .. (376.35,146.15) .. controls (375.32,146.15) and (374.48,144.85) .. (374.48,143.23) -- cycle ;
\draw   (263.66,167.08) .. controls (271.85,163.02) and (281.72,166.51) .. (285.7,174.86) .. controls (289.68,183.21) and (286.26,193.27) .. (278.07,197.33) .. controls (269.87,201.39) and (260,197.9) .. (256.02,189.55) .. controls (252.04,181.19) and (255.46,171.13) .. (263.66,167.08) -- cycle ;
\draw    (285.28,189.73) -- (270.86,182.2) ;
\draw [line width=3.75]    (234.47,194.05) -- (247.53,211.64) ;
\draw [line width=3.75]    (233.36,195.18) -- (250.07,178.21) ;
\draw [line width=3.75]    (234.47,194.05) -- (253.68,196.24) ;
\draw  [fill={rgb, 255:red, 0; green, 0; blue, 0 }  ,fill opacity=1 ] (238.08,171.6) .. controls (238.08,169.99) and (238.92,168.68) .. (239.95,168.68) .. controls (240.98,168.68) and (241.82,169.99) .. (241.82,171.6) .. controls (241.82,173.22) and (240.98,174.53) .. (239.95,174.53) .. controls (238.92,174.53) and (238.08,173.22) .. (238.08,171.6) -- cycle ;
\draw    (233.11,91.72) -- (267.88,125.02) ;
\draw [shift={(270.04,127.09)}, rotate = 223.77] [fill={rgb, 255:red, 0; green, 0; blue, 0 }  ][line width=0.08]  [draw opacity=0] (8.93,-4.29) -- (0,0) -- (8.93,4.29) -- cycle    ;
\draw  [dash pattern={on 0.84pt off 2.51pt}]  (213.71,172.51) -- (213.71,257.17) ;
\draw  [dash pattern={on 0.84pt off 2.51pt}]  (415.04,117.17) -- (415.71,263.84) ;
\draw  [pattern=_ng5lqn19c,pattern size=6pt,pattern thickness=0.75pt,pattern radius=0pt, pattern color={rgb, 255:red, 0; green, 0; blue, 0}] (258.37,256.64) .. controls (280.37,256.64) and (317.71,259.31) .. (340.37,260.64) .. controls (363.04,261.97) and (399.04,253.31) .. (415.71,263.84) .. controls (432.37,274.37) and (187.04,261.71) .. (213.71,257.17) .. controls (240.37,252.64) and (236.37,256.64) .. (258.37,256.64) -- cycle ;
\draw    (188.86,235.48) -- (250.23,258.26) ;
\draw [shift={(253.04,259.31)}, rotate = 200.37] [fill={rgb, 255:red, 0; green, 0; blue, 0 }  ][line width=0.08]  [draw opacity=0] (8.93,-4.29) -- (0,0) -- (8.93,4.29) -- cycle    ;

\draw (466.12,245.71) node [anchor=north west][inner sep=0.75pt]    {$\mink$};
\draw (470,20) node [anchor=north west][inner sep=0.75pt]    {$\mathcal{SO}_+^\uparrow(1,d-1)$};
\draw (120.53,61.09) node [anchor=north west][inner sep=0.75pt]    {$O_+^\uparrow(\mink,\eta)$};
\draw (199.28,70.3) node [anchor=north west][inner sep=0.75pt]    {$\supp \mu_{\omega}^{\E_\R}$};
\draw (123.95,218.2) node [anchor=north west][inner sep=0.75pt]    {$\text{Loc}(\R,\omega)$};

\end{tikzpicture}

    \caption{Born probability distribution over possible clocks and rods associated to a quantum reference frame prepared in a state $\omega$ \cite{fedida_foundations_2025}. The \enquote{fuzziness} in the space of inertial reference frames projects to the level of the localization in spacetime.}
    \label{fig:Visualisation probability distributions}
    \end{subfigure}
    \caption{Space of inertial frames for which every point represent a different viewpoint from which physical systems can be described. A classical inertial reference frame can be thought of as a Dirac delta distribution over this space, while an oriented relativistic quantum reference frame gives a Born probability distribution over it. The measure projected to $\mink$ is a Born probability measure of the marginal POVM $\F_\R$.}
    \label{fig:Visualisation bundle}
\end{figure*}

Let $U_\S:\Poincup\to \mathcal U(\his)$ be an ultraweakly continuous unitary representation of the proper orthochronous Poincaré group describing the kinematics of the system $\S$, and let $\R$
be a relativistic QRF. No irreducibility or spin assumption on $U_\S$ is needed for the construction
below. The adjective ``scalar'' refers here to the absence of an additional
finite-dimensional Lorentz index: the Lorentz orientation is retained as part
of the frame variable $\lambda$. Wigner's classification concerns irreducible
positive-energy one-particle representations. If $U_\S$ is such a
representation with mass $m>0$ and trivial little-group representation, one
may call $\S$ a scalar one-particle system of mass $m$. General field-theoretic
and Fock representations are typically reducible, and all results below apply
to them as well.

A relativistic QRF should not, in general, be assumed irreducible or assigned a
sharp mass and spin. A covariant POVM on the full frame torsor is a system of
covariance whose dilation is governed by imprimitivity theory and is generally
not a single Wigner sector. One may impose a spectral-support condition on
$U_\R$ if a definite particle content of the frame is desired, but this is an
additional physical hypothesis and is not used here.

Fix a bounded operator $\phi\in \bhs$.
We define its associated \emph{Lorentz-oriented absolute field} by
\begin{align}
\label{eq:absolute-oriented-field}
\hat{\phi}_{\cdot}(\cdot) : \mink \times \mathcal{SO}_+^\uparrow(1,d-1) &\to \bhs \\
\hat\phi_\lambda(x)
&:=
U_\S(x,\lambda)\,\phi\,U_\S(x,\lambda)^\dagger.
\end{align}

\begin{proposition}[\cite{fedida_foundations_2025}]
\label{prop:absolute-field-covariance}
For every $(a,\Lambda)\in \Poincup$ one has
\begin{equation}
\label{eq:absolute-field-transform}
U_\S(a,\Lambda) \hat\phi_\lambda(x) U_\S(a,\Lambda)^\dagger
=
\hat\phi_{\Lambda\lambda}(\Lambda x+a).
\end{equation}
\end{proposition}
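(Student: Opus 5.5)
The identity follows directly from the homomorphism property of $U_\S$ together with the multiplication law in the semidirect product $\Poincup=T(1,d-1)\rtimes SO_+^\uparrow(1,d-1)$. The plan is to write the left-hand side of \eqref{eq:absolute-field-transform} as a conjugation of $\phi$ by a single group element, compute that element, and read off the new spacetime point and Lorentz label.

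Substituting the definition \eqref{eq:absolute-oriented-field} gives
\begin{equation}
U_\S(a,\Lambda)\hat\phi_\lambda(x)U_\S(a,\Lambda)^\dagger = U_\S(a,\Lambda)U_\S(x,\lambda)\,\phi\,U_\S(x,\lambda)^\dagger U_\S(a,\Lambda)^\dagger .
\end{equation}
Since $U_\S$ is an honest unitary representation (Remark~\ref{rem:projective-group-case}), $U_\S(a,\Lambda)U_\S(x,\lambda)=U_\S\bigl((a,\Lambda)(x,\lambda)\bigr)$. Taking adjoints of this relation gives the corresponding identity for the right-hand factors. The left-hand side is therefore $U_\S(g)\,\phi\,U_\S(g)^\dagger$ with $g=(a,\Lambda)(x,\lambda)$.

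The semidirect-product law, compatible with the action $(a,\Lambda)\cdot x=\Lambda x+a$ used in Prop.~\ref{prop:minkowski-action-groupoid}, is $(a,\Lambda)(x,\lambda)=(\Lambda x+a,\Lambda\lambda)$. The left-hand side is thus $U_\S(\Lambda x+a,\Lambda\lambda)\,\phi\,U_\S(\Lambda x+a,\Lambda\lambda)^\dagger$, which is $\hat\phi_{\Lambda\lambda}(\Lambda x+a)$ by definition.

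The only point needing care is the convention identifying a frame $(x,\lambda)\in O_+^\uparrow(\mink,\eta)\cong\mink\times SO_+^\uparrow(1,d-1)$ with the group element $(x,\lambda)\in\Poincup$, so that the product comes out as $(\Lambda x+a,\Lambda\lambda)$ rather than $(\Lambda x+a,\lambda\Lambda)$. This is fixed by the torsor identification $\theta_o$ with origin $o=(0,\id)$. In the projective variant one would instead have to track a phase, but the phases cancel under conjugation, so the statement holds in that setting as well.
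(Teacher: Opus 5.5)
Your proof is correct and is the standard direct computation. The homomorphism property of $U_\S$ combined with the semidirect-product law $(a,\Lambda)(x,\lambda)=(\Lambda x+a,\Lambda\lambda)$ gives the transformed label immediately, and this is the argument behind the result, which the paper simply quotes from \cite{fedida_foundations_2025} without a separate proof; your remark that projective phases cancel under conjugation is also right.
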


\begin{definition}
    Let $\R$ be a relativistic QRF and $\phi \in \bhs$. We call the map
    \begin{align*}
        \hat{\Phi}^\R : \homframestate &\to \bhs \\
        \hat{\Phi}^\R(\omega) &= \yen^\R_\omega(\phi) = \iint_{O_+^\uparrow(\mink,\eta)} \hat{\phi}_{\lambda}(x) \, d\mu^{\E_\R}_{\omega}(x,\lambda)
    \end{align*}
    a \emph{scalar relational quantum field}.
\end{definition}

The absolute quantum field $(x,\lambda) \mapsto \hat{\phi}_{\lambda}(x)$ is not directly observable: what is operationally meaningful is its average with respect to the quantum uncertainty of the frame. In parallel, (scalar) relational quantum fields answer the question: for $\phi \in \bhs$ and $\R$ fixed, if given a specific state $\omega \in \homframestate$ of the frame, what is the operator (the observable if self-adjoint) one ``sees" using that QRF? This is precisely $\hat{\Phi}^\R(\omega)$.

\begin{remark}
    In the above, we omitted the mention of a specific choice of tetrad in $O_+^\uparrow(\mink,\eta)$ for the relativistic QRF and the relativization map (and the relational quantum field). In \cite{fedida_foundations_2025}, a physical motivation was given to argue for this. For now, we will not worry about such a choice, and will omit it for conciseness.
\end{remark}

How does the notion of relational quantum field relate to the ``standard" concepts of relational quantum fields we are used to in high-energy physics? We answer this question using tools from probability theory. Write down a standard disintegration of $\mu^{\E_\R}_{\omega}$ with respect to the marginal measure (also a Born probability measure \cite{fedida_foundations_2025}) $\mu^{\F_\R}_{\omega}$ over spacetime, such that there exists (by Lem.~\ref{lem:disintegration exists Lie QRF}) a family of conditional probability measures $\{\nu^{\R}_{\omega}(\cdot \mid x)\}_{x \in \mink}$ on $\mathcal{SO}_+^\uparrow(1,d-1)$ such that
\begin{equation}
    d\mu^{\E_{\R_0}}_{\omega}((x,\lambda) \cdot e_0) = d\nu^{\R_0}_{\omega}(\lambda \mid x; e_0) d\mu^{\F_{\R_0}}_{\omega}(x;e_0).
\end{equation}

As above, from now on we will omit writing down $e_0 \in O_+^\uparrow(\mink,\eta)$.

\begin{definition}
    Let $\R$ be a relativistic QRF and $\phi \in \bhs$. The \emph{(scalar) relational local quantum field} associated with $(\R,\omega)$ is the operator-valued function
    \begin{align}
        \hat{\phi}^\R_{\omega} : \supp \mu^{\F_\R}_{\omega} &\to \bhs \\
        \hat{\phi}^\R_{\omega}(x) &:= \int_{\mathcal{SO}_+^\uparrow(1,d-1)} \hat{\phi}_{\lambda}(x) \, d\nu^{\R}_{\omega}(\lambda \mid x).
    \end{align}
\end{definition}

\begin{remark}
    Note that these are, rigorously speaking, equivalence classes in $L^\infty(\supp \mu^{\F_\R}_\omega,\bhs)$ rather than pointwise-defined fields of operators. Any pointwise-defined version is a chosen representative.
\end{remark}

Colloquially, this is saying: transport an operator $\phi$ to the point $x \in \mink$, then average over the Lorentz fibers with respect to the conditional measure of the frame at $x$; this is $\hat{\phi}^\R_\omega(x)$. With this notation, the relational quantum fields become the spacetime smearing of the relational local quantum fields:
\begin{equation}
\label{eqn:rlo-as-smearing}
    \hat{\Phi}^\R(\omega) = \int_\mink \hat{\phi}^\R_\omega(x) \, d\mu^{\F_\R}_{\omega}(x) = \int_\mink \hat{\phi}^\R_\omega(x) f^{\F_\R}_{\omega}(x) d^dx,
\end{equation}
where the last equality holds by Thm.~\ref{thm:covariance implies mu continuity}.

\begin{remark}
\label{rem:rlo-vs-kernel}
Equation \eqref{eqn:rlo-as-smearing} is the flat-spacetime prototype of the curved-spacetime construction that will be
developed in Sec.~\ref{sec:RQFT curved}. The relational local quantum field $\hat\phi_\omega^\R(x)$ is not introduced independently; it appears as the disintegrated kernel of the relational quantum field.
\end{remark}

\subsection{Relational covariance} The covariance of the frame observable implies a corresponding covariance law for the relational (local) quantum fields. 

\begin{proposition}[\cite{fedida_foundations_2025}]
\label{prop:relational-covariance-rlo}
For every $(a,\Lambda)\in \Poincup$ and every $\omega\in\homframestate$,
\begin{equation}
\label{eq:rlo-covariance}
U_\S(a,\Lambda) \hat{\Phi}^{\R}(\omega) U_\S(a,\Lambda)^\dagger
=
\hat{\Phi}^{\R}(U_\R(a,\Lambda) \omega U_\R(a,\Lambda)^\dagger)
\end{equation}
and, for $\mu^{\F_\R}_{\omega}$-almost every $x \in \mink$,
\begin{equation}
\label{eqn:kernel-covariance}
    U_\S(a,\Lambda) \hat{\phi}^{\R}_\omega(x) U_\S(a,\Lambda)^\dagger = \hat{\phi}^\R_{U_\R(a,\Lambda) \omega U_\R(a,\Lambda)^\dagger}(\Lambda x + a).
\end{equation}
\end{proposition}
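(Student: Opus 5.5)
The plan is to prove the first identity \eqref{eq:rlo-covariance} by pushing the conjugation through the operator-valued integral. Then we obtain the pointwise statement \eqref{eqn:kernel-covariance} from uniqueness of disintegrations. Throughout, write $g=(a,\Lambda)$ and $\omega^g:=U_\R(g)\omega U_\R(g)^\dagger$.

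For the first identity we use normality of $\mathrm{Ad}_{U_\S(g)}$. Because this map is ultraweakly continuous, it commutes with the weak integral defining $\yen^\R_\omega$, giving
\begin{equation}
U_\S(g)\hat\Phi^\R(\omega)U_\S(g)^\dagger=\iint_{O_+^\uparrow(\mink,\eta)} U_\S(g)\hat\phi_\lambda(x)U_\S(g)^\dagger\,d\mu^{\E_\R}_\omega(x,\lambda)=\iint \hat\phi_{\Lambda\lambda}(\Lambda x+a)\,d\mu^{\E_\R}_\omega(x,\lambda),
\end{equation}
where the last step is Prop.~\ref{prop:absolute-field-covariance}. Next we identify the Born measure of the transformed state. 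From the covariance of $\E_\R$ we get $\mu^{\E_\R}_{\omega^g}(W)=\Tr[\omega U_\R(g)^\dagger\E_\R(W)U_\R(g)]=\mu^{\E_\R}_\omega(g^{-1}\cdot W)$, so $\mu^{\E_\R}_{\omega^g}=g_*\mu^{\E_\R}_\omega$. The change of variables $(x',\lambda')=g\cdot(x,\lambda)=(\Lambda x+a,\Lambda\lambda)$ then turns the integral above into $\hat\Phi^\R(\omega^g)$. To make this rigorous, we test both sides against an arbitrary $\rho\in\homsystemstate$ and reduce to a change of variables for the bounded measurable scalar function $\Tr[\rho\,\hat\phi_{\lambda}(x)]$.

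For the kernel identity we do the same computation marginally. Applying the marginal covariance of $\F_\R$ gives $\mu^{\F_\R}_{\omega^g}=(T_g)_*\mu^{\F_\R}_\omega$, where $T_g x=\Lambda x+a$. We then define candidate conditional measures $\tilde\nu(\cdot\mid y):=(L_\Lambda)_*\nu^\R_\omega(\cdot\mid T_g^{-1}y)$ on $\Lup$ and check that they disintegrate $\mu^{\E_\R}_{\omega^g}$ with respect to $\mu^{\F_\R}_{\omega^g}$. This is a direct Fubini computation on rectangles $\mathcal V\times\Delta$ using the two pushforward identities, after which it extends to all Borel sets by a monotone class argument. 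By the essential uniqueness in Lem.~\ref{lem:disintegration exists Lie QRF}, we get $\nu^\R_{\omega^g}(\cdot\mid y)=\tilde\nu(\cdot\mid y)$ for $\mu^{\F_\R}_{\omega^g}$-almost every $y$. Finally, conjugating $\hat\phi^\R_\omega(x)$, again using normality and Prop.~\ref{prop:absolute-field-covariance}, gives $\int\hat\phi_{\Lambda\lambda}(T_gx)\,d\nu^\R_\omega(\lambda\mid x)=\int\hat\phi_{\lambda'}(T_gx)\,d\tilde\nu(\lambda'\mid T_gx)$. Since a set is $\mu^{\F_\R}_\omega$-null exactly when its image under $T_g$ is $\mu^{\F_\R}_{\omega^g}$-null, this equals $\hat\phi^\R_{\omega^g}(T_gx)$ for $\mu^{\F_\R}_\omega$-almost every $x$.

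The main obstacle is this last uniqueness step. It requires care in translating "almost everywhere" statements between the two marginal measures. It also requires that the pushed-forward kernels remain measurable in $y$, which holds because $(y,\lambda)\mapsto(T_g^{-1}y,\Lambda^{-1}\lambda)$ is a homeomorphism. The first identity is routine once normality of the integral is invoked.
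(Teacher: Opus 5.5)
Your proposal is correct and follows essentially the same route the paper uses for the groupoid analogues, which the paper says reduce to this proposition in Minkowski spacetime. For the first identity you push the conjugation through the integral and use $\mu^{\E_\R}_{\omega^g}=g_*\mu^{\E_\R}_\omega$, as in the proof of Thm.~\ref{thm:covariance euro}; for the kernel identity you transport the conditional kernels, check that they disintegrate the transformed Born measure, and invoke essential uniqueness, as in the proof of Thm.~\ref{thm:covariance local rqf} (the flat statement itself is only cited from \cite{fedida_foundations_2025}).
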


Equation \eqref{eqn:kernel-covariance} is the characteristic relational covariance law: the system transforms actively, but the frame state transforms at the same time. In general, this is not the naive pointwise scalar covariance law, which is precisely what allows us to avoid Wizimirski's no-go theorem \cite{wizimirski_existence_1966,fedida_foundations_2025,fedida_no-go_2026}.

\subsection{Localization limit} The formalism reduces to the ordinary non-relational picture when the frame is prepared in increasingly localized
states.

\begin{proposition}[\cite{fedida_foundations_2025}]
    Let $\R$ be a localizable relativistic QRF. Then for any localizing sequence $(\omega_n^{(x,\lambda)})_{n \in \Nn}$ of frame states centred around $(x,\lambda) \in O_+^\uparrow(\mink,\eta)$,
    \begin{equation}
        \lim_{n \to \infty} \hat{\Phi}^\R(\omega^{(x,\lambda)}_n) = \hat{\phi}_\lambda(x)
    \end{equation}
    ultraweakly.
\end{proposition}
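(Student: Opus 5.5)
The plan is to reduce the statement to the fundamental theorem of operational quantum reference frames (Theorem~\ref{prop:localization-limit-groups}), using the relativistic QRF as a principal frame on the torsor $O_+^\uparrow(\mink,\eta)\cong\Poincup$. Fix a reference tetrad $e_0$, which is implicit throughout this section, and identify each point $(x,\lambda)$ with the group element $g=(x,\lambda)\in\Poincup$. Under this identification the absolute field becomes $\hat\phi_\lambda(x)=U_\S(g)\phi U_\S(g)^\dagger$. The relational field then becomes $\hat\Phi^\R(\omega)=\int_{\Poincup}U_\S(g)\phi U_\S(g)^\dagger\,d\mu^{\E_\R}_\omega(g)=\yen^\R_\omega(\phi)$, which is exactly the restricted relativization map of Definition~\ref{def:relativization-torsor}.

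The first step is to move the localization point to the identity. Take the localizing sequence $\omega_n:=\omega_n^{(x,\lambda)}$, write $g_0:=(x,\lambda)$, and define $\tilde\omega_n:=U_\R(g_0)^\dagger\omega_nU_\R(g_0)$. By the covariance of $\E_\R$,
\begin{equation}
\mu^{\E_\R}_{\tilde\omega_n}(\Delta)=\Tr\bigl[\omega_n\E_\R(g_0\cdot\Delta)\bigr]=\mu^{\E_\R}_{\omega_n}(g_0\cdot\Delta).
\end{equation}
This is the pushforward of $\mu^{\E_\R}_{\omega_n}$ under left translation by $g_0^{-1}$, which is continuous. Weak convergence $\mu^{\E_\R}_{\omega_n}\to\delta_{g_0}$ therefore gives $\mu^{\E_\R}_{\tilde\omega_n}\to\delta_e$. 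So $(\tilde\omega_n)$ is a localizing sequence centred at the identity.

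Next we use relational covariance. By Proposition~\ref{prop:relational-covariance-rlo}, equation~\eqref{eq:rlo-covariance}, we have $\hat\Phi^\R(\omega_n)=U_\S(g_0)\hat\Phi^\R(\tilde\omega_n)U_\S(g_0)^\dagger$. This can also be checked directly by the change of variables $g\mapsto g_0g$ in the integral. Theorem~\ref{prop:localization-limit-groups} then gives $\hat\Phi^\R(\tilde\omega_n)=\yen^\R_{\tilde\omega_n}(\phi)\to\phi$ ultraweakly. Since conjugation by a fixed unitary is ultraweakly continuous, $\hat\Phi^\R(\omega_n)\to U_\S(g_0)\phi U_\S(g_0)^\dagger=\hat\phi_\lambda(x)$.

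The main obstacle is bookkeeping rather than analysis: the identification of $O_+^\uparrow(\mink,\eta)$ with $\Poincup$ has to be consistent. The left action on the bundle, $(a,\Lambda)\cdot(x,\lambda)=(\Lambda x+a,\Lambda\lambda)$, must match left multiplication in $\Poincup$, so that $U_\S(x,\lambda)$ means the group element sending $e_0$ to $(x,\lambda)$. This is the content of Proposition~\ref{prop:absolute-field-covariance}.

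If one prefers not to cite the group theorem, the argument can be done directly. For $\rho\in\mathcal T(\his)$, the function $g\mapsto\Tr[\rho\,U_\S(g)\phi U_\S(g)^\dagger]$ is bounded and continuous, because $U_\S$ is ultraweakly (hence strongly) continuous. Weak convergence of the Born measures to $\delta_{g_0}$ then gives convergence of the integrals to $\Tr[\rho\,\hat\phi_\lambda(x)]$, which is exactly the claimed ultraweak limit. The continuity of this integrand is the only nontrivial input, and I would include it as the fallback route.
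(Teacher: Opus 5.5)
Your proposal is correct. The shift $\tilde\omega_n=U_\R(g_0)^\dagger\omega_nU_\R(g_0)$ gives Born measures $(L_{g_0^{-1}})_*\mu^{\E_\R}_{\omega_n}\to\delta_e$. Relational covariance (Proposition~\ref{prop:relational-covariance-rlo}) gives $\hat\Phi^\R(\omega_n)=U_\S(g_0)\hat\Phi^\R(\tilde\omega_n)U_\S(g_0)^\dagger$. Theorem~\ref{prop:localization-limit-groups}, together with normality of $\operatorname{Ad}U_\S(g_0)$, then closes the argument.

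The paper does not reprove this statement; it is cited from \cite{fedida_foundations_2025}. The closest proof in the paper is for its groupoid generalisation, Theorem~\ref{thm:euro localisation limit} (App.~\ref{app:proof yen localisation limit}), from which the curved-spacetime Theorem~\ref{thm:localization rqf} follows. That proof is essentially your fallback: pair with a trace-class operator, note that the integrand is bounded and continuous by ultraweak continuity of the representation, and use weak convergence of the Born measures to the Dirac measure.

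The direct route and your main route trade off as follows.
\begin{itemize}
\item The direct route handles an arbitrary centre in one step. It uses neither the covariance of $\E_\R$ nor the localizability of $\R$, only the existence of the sequence. This is why it carries over to the groupoid setting essentially unchanged.
\item Your main route treats the identity-centred theorem as a black box. It pays for this with the torsor identification, the continuous-mapping step for weak convergence, and covariance on both the frame and system sides.
\item In exchange, your main route shows the limit at $(x,\lambda)$ as the covariant transport of the identity-centred limit. This is conceptually clarifying but not analytically cheaper, since the identity-centred theorem itself rests on the same continuity-plus-weak-convergence step.
\end{itemize}
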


Thus the absolute Lorentz-oriented quantum field appears as the sharp-frame limit of the relational
construction, in exact analogy with the localization results of Sec.~\ref{sec:operational-qrf-groups}.

\subsection{Relational causality} The notion of causality can also be expressed in relational quantum field theory, with interesting features arising. Notably, the causality properties of relational quantum fields can now depend on the QRF, and one may restrict causality conditions to operationally meaningful preparations of frames and to ``sufficiently" spacelike-separated supports. We review some of the discussion first presented in \cite{fedida_foundations_2025}, and will generalise it in Sec.~\ref{sec:RQFT curved} to curved spacetimes.

We start by highlighting that the concept of causality is most naturally understood epistemically through the no-superluminal signalling principle: it is a statement about what can be observed, not what occurs underneath the carpet. For that reason, we may want to make notions of causality depend on available resources: notably, on operationally meaningful preparations of the frame (e.g. $\omega \in \homframestate$ such that $\mu^{\E_\R}_{\omega}$ is compact), and on operationally meaningful notions of spacelike separations. For example, one can consider notions of spacelike separations for which an agent can indeed confirm the spacelike-ness of two regions. This is one way to implement smeared light cones; another is through considerations of quantum gravity. Regardless, this is an interesting notion to work with.

\begin{definition}
    Let $\sigma \geq 0$. If $\U, \V \subset \mathcal{M}$, we say that $\U$ and $\V$ are \emph{$\sigma$-spacelike separated} and write $\U \perp_{\sigma} \V$ if
    \begin{equation}
        \sup \{(x-y)^2 \mid x \in \U, \, y \in \V\} < -\sigma \, .
    \end{equation} Moreover, let $\R$ be a relativistic QRF and $\omega_1,\omega_2 \in \homframestate$. We say that $\omega_1$ and $\omega_2$ are $(\R,\sigma)$-spacelike separated and write $\omega_1 \perp^\R_\sigma \omega_2$ if $\text{Loc}(\R,\omega_1) \perp_\sigma \text{Loc}(\R,\omega_2)$.
\end{definition}

In effect, the distinguished family of states $\operationalstate$ of the frame $\R$ naturally defines a \enquote{cut-off scale} for spacelike separation as follows.

\begin{definition}[\cite{fedida_foundations_2025}]
    \label{def:spacelike resolution of R}
    Let $\R$ be a relativistic QRF and $\operationalstate \subseteq \homframestate$ be convex. The quantity
    \begin{equation}
        \sigma(\operationalstate) := \inf_{\omega \in \operationalstate}  \sup \{\abs{(x-y)^2} \mid x \perp y \in \supp \mu^{\F_\R}_{\omega}\}
    \end{equation}
    is the \emph{$\operationalstate$-spacelike resolution of $\R$}.
\end{definition}

This is shown pictorially in Fig.~\ref{fig:approximate causality flat spacetime}. From there, we can discuss relational aspects of causality for relational quantum fields.

\begin{figure}
    \centering
    \begin{subfigure}[t]{0.3\textwidth}
        \begin{tikzpicture}[x=0.75pt,y=0.75pt,yscale=-1,xscale=1]

\draw   (353.6,97.1) .. controls (374.6,89.1) and (383.6,99.1) .. (389.6,104.1) .. controls (395.6,109.1) and (418.93,122.43) .. (423.6,153.1) .. controls (428.27,183.77) and (367.6,221.1) .. (355.67,223.93) .. controls (343.73,226.77) and (314.27,183.77) .. (317.6,153.1) .. controls (320.93,122.43) and (332.6,105.1) .. (353.6,97.1) -- cycle ;
\draw    (260.67,175.67) -- (260.03,137.27) ;
\draw [shift={(260,135.27)}, rotate = 89.05] [color={rgb, 255:red, 0; green, 0; blue, 0 }  ][line width=0.75]    (10.93,-3.29) .. controls (6.95,-1.4) and (3.31,-0.3) .. (0,0) .. controls (3.31,0.3) and (6.95,1.4) .. (10.93,3.29)   ;
\draw    (260.67,175.67) -- (295.33,175.29) ;
\draw [shift={(297.33,175.27)}, rotate = 179.37] [color={rgb, 255:red, 0; green, 0; blue, 0 }  ][line width=0.75]    (10.93,-3.29) .. controls (6.95,-1.4) and (3.31,-0.3) .. (0,0) .. controls (3.31,0.3) and (6.95,1.4) .. (10.93,3.29)   ;
\draw    (320.6,153.1) -- (420.6,153.1) ;
\draw [shift={(423.6,153.1)}, rotate = 180] [fill={rgb, 255:red, 0; green, 0; blue, 0 }  ][line width=0.08]  [draw opacity=0] (8.93,-4.29) -- (0,0) -- (8.93,4.29) -- cycle    ;
\draw [shift={(317.6,153.1)}, rotate = 0] [fill={rgb, 255:red, 0; green, 0; blue, 0 }  ][line width=0.08]  [draw opacity=0] (8.93,-4.29) -- (0,0) -- (8.93,4.29) -- cycle    ;

\draw (256,123.4) node [anchor=north west][inner sep=0.75pt]  [font=\scriptsize]  {$t$};
\draw (300,168.07) node [anchor=north west][inner sep=0.75pt]  [font=\scriptsize]  {$\vec{x}$};
\draw (356.67,158.73) node [anchor=north west][inner sep=0.75pt]  [font=\footnotesize]  {$\sigma ( \operationalstate)$};
\draw (330,111.4) node [anchor=north west][inner sep=0.75pt]    {$\supp \mu _{\omega_\text{min}}^{\F_{\R}}$};
\end{tikzpicture}

    \caption{Spacetime support of the state preparation $\omega_\text{min} \in \operationalstate$ which provides the ``smallest`` spacelike resolution $\sigma(\operationalstate)$ for the frame $\R$ out of all ``operationally meaningful'' states $\omega \in \operationalstate$.}
    \end{subfigure}
    \qquad
    \begin{subfigure}[t]{0.6\textwidth}
        \begin{tikzpicture}[x=0.75pt,y=0.75pt,yscale=-0.85,xscale=0.85]

\draw   (118,116.27) .. controls (138,106.27) and (144,112.93) .. (156.67,118.93) .. controls (169.33,124.93) and (173.33,135.6) .. (180,160.93) .. controls (186.67,186.27) and (153.33,208.93) .. (126,204.93) .. controls (98.67,200.93) and (98,126.27) .. (118,116.27) -- cycle ;
\draw  [dash pattern={on 0.84pt off 2.51pt}]  (172,189.67) -- (245.33,88.93) ;
\draw   (290.67,118) .. controls (310.67,108) and (352,112.4) .. (380.67,118) .. controls (409.33,123.6) and (418,152.93) .. (380.67,178) .. controls (343.33,203.07) and (310.67,208) .. (290.67,178) .. controls (270.67,148) and (270.67,128) .. (290.67,118) -- cycle ;
\draw  [dash pattern={on 0.84pt off 2.51pt}]  (280,126) -- (225.33,208.27) ;
\draw    (189.6,170.62) -- (247,170.92) ;
\draw [shift={(250,170.93)}, rotate = 180.3] [fill={rgb, 255:red, 0; green, 0; blue, 0 }  ][line width=0.08]  [draw opacity=0] (8.93,-4.29) -- (0,0) -- (8.93,4.29) -- cycle    ;
\draw [shift={(186.6,170.6)}, rotate = 0.3] [fill={rgb, 255:red, 0; green, 0; blue, 0 }  ][line width=0.08]  [draw opacity=0] (8.93,-4.29) -- (0,0) -- (8.93,4.29) -- cycle    ;
\draw    (224.6,145.64) -- (263.67,146.22) ;
\draw [shift={(266.67,146.27)}, rotate = 180.85] [fill={rgb, 255:red, 0; green, 0; blue, 0 }  ][line width=0.08]  [draw opacity=0] (8.93,-4.29) -- (0,0) -- (8.93,4.29) -- cycle    ;
\draw [shift={(221.6,145.6)}, rotate = 0.85] [fill={rgb, 255:red, 0; green, 0; blue, 0 }  ][line width=0.08]  [draw opacity=0] (8.93,-4.29) -- (0,0) -- (8.93,4.29) -- cycle    ;
\draw   (486,84.4) .. controls (506,74.4) and (537.33,89.2) .. (554.67,97.87) .. controls (572,106.53) and (603.33,115.87) .. (576,144.4) .. controls (548.67,172.93) and (506,174.4) .. (486,144.4) .. controls (466,114.4) and (466,94.4) .. (486,84.4) -- cycle ;
\draw  [dash pattern={on 0.84pt off 2.51pt}]  (396.67,164.33) -- (470,63.6) ;
\draw  [dash pattern={on 0.84pt off 2.51pt}]  (402.67,193) -- (476,92.27) ;
\draw    (416.6,142.16) -- (436.33,142.57) ;
\draw [shift={(439.33,142.63)}, rotate = 181.19] [fill={rgb, 255:red, 0; green, 0; blue, 0 }  ][line width=0.08]  [draw opacity=0] (8.93,-4.29) -- (0,0) -- (8.93,4.29) -- cycle    ;
\draw [shift={(413.6,142.1)}, rotate = 1.19] [fill={rgb, 255:red, 0; green, 0; blue, 0 }  ][line width=0.08]  [draw opacity=0] (8.93,-4.29) -- (0,0) -- (8.93,4.29) -- cycle    ;
\draw    (458.47,85.98) -- (497.53,86.56) ;
\draw [shift={(500.53,86.6)}, rotate = 180.85] [fill={rgb, 255:red, 0; green, 0; blue, 0 }  ][line width=0.08]  [draw opacity=0] (8.93,-4.29) -- (0,0) -- (8.93,4.29) -- cycle    ;
\draw [shift={(455.47,85.93)}, rotate = 0.85] [fill={rgb, 255:red, 0; green, 0; blue, 0 }  ][line width=0.08]  [draw opacity=0] (8.93,-4.29) -- (0,0) -- (8.93,4.29) -- cycle    ;

\draw (207.67,175.4) node [anchor=north west][inner sep=0.75pt]  [font=\scriptsize]  {$\sigma _{12}$};
\draw (234.33,129.07) node [anchor=north west][inner sep=0.75pt]  [font=\scriptsize]  {$\sigma ( \operationalstate)$};
\draw (105.33,148.73) node [anchor=north west][inner sep=0.75pt]  [font=\small]  {$\text{Loc}(\R,\omega_1)$};
\draw (305.33,136.73) node [anchor=north west][inner sep=0.75pt]  [font=\small]  {$\text{Loc}(\R,\omega_2)$};
\draw (498.67,113.4) node [anchor=north west][inner sep=0.75pt]  [font=\small]  {$\text{Loc}(\R,\omega_3)$};
\draw (415.6,145.5) node [anchor=north west][inner sep=0.75pt]  [font=\scriptsize]  {$\sigma _{23}$};
\draw (467.33,69.07) node [anchor=north west][inner sep=0.75pt]  [font=\scriptsize]  {$\sigma ( \operationalstate)$};

\end{tikzpicture}

    \caption{Examples of $\sigma$-spacelike separations for different state preparations of the frame $\R$ with $\operationalstate$-spacelike resolution $\sigma(\operationalstate)$. We see that $\sigma_{12} > \sigma(\operationalstate)$ i.e. the preparations $\omega_1$ and $\omega_2$ are ``operationally spacelike separated'': the frame can resolve this spacelike separation. On the other hand, $\sigma_{23} < \sigma(\operationalstate)$ i.e. the frame's spacelike resolution is too small to determine whether $\omega_2$ and $\omega_3$ should be understood as yielding causally disconnected supports: relational quantum fields generated by these preparations may not commute.}
    \end{subfigure}
    \caption{Pictorial representations of the frame resolutions and of $\sigma(\operationalstate)$-spacelike separation of regions in flat spacetime \cite{fedida_foundations_2025}.}
    \label{fig:approximate causality flat spacetime}
\end{figure}

\begin{definition}
    Let $\mathcal{O}_\S \subset \bhs$ be closed under adjoints, $\sigma \geq 0$, $\R$ be a relativistic QRF and $\operationalstate \subset \homframestate$ be convex. We say that $\mathcal{O}_\S$ is
    \begin{enumerate}
        \item \emph{absolutely $\sigma$-microcausal} if for every $\phi_1,\phi_2 \in \mathcal{O}_\S$ and $(x,\lambda_1),(y,\lambda_2) \in O_+^\uparrow(\mink,\eta)$,
        \begin{equation}
            x \perp_\sigma y \Longrightarrow \comm{\hat{\phi}_{1,\lambda_1}(x)}{\hat{\phi}_{2,\lambda_2}(y)} = \comm{\hat{\phi}_{1,\lambda_1}(x)^\dagger}{\hat{\phi}_{2,\lambda_2}(y)} = 0.
        \end{equation}
        \item \emph{strongly $(\operationalstate,\sigma)$-microcausal} if for all $\phi_1,\phi_2 \in \mathcal{O}_\S$ and every pair $\omega_1 $ and $\omega_2$ of frame states in $\operationalstate$ and $\forall x \in \text{Loc}(\R,\omega_1)$ and $\forall y \in \text{Loc}(\R,\omega_2)$,
        \begin{equation}
            x \perp_\sigma y \Longrightarrow\comm{\hat{\phi}^\R_{1,\omega_1}(x)}{\hat{\phi}^\R_{2,\omega_2}(y)} = \comm{\hat{\phi}^\R_{1,\omega_1}(x)^\dagger}{\hat{\phi}^\R_{2,\omega_2}(y)} = 0,
        \end{equation}
        \item \emph{weakly $(\operationalstate,\sigma)$-microcausal} if for all $\phi_1,\phi_2 \in \mathcal{O}_\S$ and every pair $\omega_1$ and $\omega_2$ of frame states in $\operationalstate$,
        \begin{multline}
            \text{Loc}(\R,\omega_1) \perp_\sigma \text{Loc}(\R,\omega_2) \Longrightarrow \comm{\hat{\phi}^\R_{1,\omega_1}(x)}{\hat{\phi}^\R_{2,\omega_2}(y)} = \comm{\hat{\phi}^\R_{1,\omega_1}(x)^\dagger}{\hat{\phi}^\R_{2,\omega_2}(y)} = 0 \\ \forall x \in \text{Loc}(\R,\omega_1), y \in \text{Loc}(\R,\omega_2).
        \end{multline}
        \item $(\operationalstate,\sigma)$-causal if for all $\phi_1,\phi_2 \in \mathcal{O}_\S$ and every pair $\omega_1$ and $\omega_2$ of frame states in $\operationalstate$,
        \begin{equation}
            \text{Loc}(\R,\omega_1) \perp_\sigma \text{Loc}(\R,\omega_2) \Longrightarrow \comm{\hat{\Phi}_1^\R(\omega_1)}{\hat{\Phi}_2^\R(\omega_2)} = \comm{\hat{\Phi}_1^\R(\omega_1)^\dagger}{\hat{\Phi}_2^\R(\omega_2)} = 0
        \end{equation}
    \end{enumerate}
\end{definition}

\begin{remark}
    Note that a simple generalisation to the following is to consider relational causality for two choices of QRFs $\R_1$ and $\R_2$. This might be meaningful if two agents have access to two different kinds of clocks and rulers. We however keep such a discussion for future work.
\end{remark}

\begin{theorem}[\cite{fedida_foundations_2025}]
	\label{thm:causality rqft flat}
	Let $\R$ be a relativistic QRF, $\operationalstate \subseteq \homframestate$ and $\sigma \geq 0$. If $\mathcal{O}_\S \subseteq \bhs$ is
	\begin{enumerate}
		\item strongly $(\operationalstate,\sigma)$-microcausal, it is weakly $(\operationalstate,\sigma)$-microcausal,
		\item weakly $(\operationalstate,\sigma)$-microcausal, it is $(\operationalstate,\sigma)$-causal.
\end{enumerate}	
\end{theorem}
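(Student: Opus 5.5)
Both implications follow from unfolding the definitions. The first is a matter of quantifier logic. The second follows from the smearing representation \eqref{eqn:rlo-as-smearing} and the normality of operator-valued integration.

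For item 1, assume strong $(\operationalstate,\sigma)$-microcausality. Fix $\phi_1,\phi_2\in\mathcal{O}_\S$ and $\omega_1,\omega_2\in\operationalstate$ with $\text{Loc}(\R,\omega_1)\perp_\sigma\text{Loc}(\R,\omega_2)$. By definition of $\perp_\sigma$ for sets, $\sup\{(x-y)^2\}<-\sigma$ over the two supports. Hence every pair $x\in\text{Loc}(\R,\omega_1)$, $y\in\text{Loc}(\R,\omega_2)$ satisfies $(x-y)^2<-\sigma$, i.e.\ $x\perp_\sigma y$. Strong microcausality then gives the vanishing of both commutators at every such pair, which is exactly the conclusion of weak microcausality. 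Nothing else is needed. Since $\hat\phi^\R_{\omega}$ is only defined up to $\mu^{\F_\R}_\omega$-null sets, I would note that the pointwise statements are read for the fixed representatives used in the definitions.

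For item 2, assume weak microcausality and fix the same data. By \eqref{eqn:rlo-as-smearing},
\begin{equation}
\hat\Phi^\R_i(\omega_i)=\int_{\text{Loc}(\R,\omega_i)}\hat\phi^\R_{i,\omega_i}(x)\,d\mu^{\F_\R}_{\omega_i}(x).
\end{equation}
The marginal measure is concentrated on its support, so we may restrict the integrals there. The integrands are uniformly bounded by $\|\phi_i\|$, because each $\hat\phi^\R_{i,\omega_i}(x)$ is an average of unitary conjugates of $\phi_i$ against a probability measure. They are also ultraweakly measurable. The integrals therefore exist as ultraweak (Bochner-type, in the sense of \cite{glowacki_w-algebraic_2026}) integrals against probability measures. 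I would then write the commutator as
\begin{equation}
\comm{\hat\Phi^\R_1(\omega_1)}{\hat\Phi^\R_2(\omega_2)}=\iint\comm{\hat\phi^\R_{1,\omega_1}(x)}{\hat\phi^\R_{2,\omega_2}(y)}\,d\mu^{\F_\R}_{\omega_1}(x)\,d\mu^{\F_\R}_{\omega_2}(y).
\end{equation}
Weak microcausality makes the integrand vanish for every $(x,y)$ in the product of supports, so the integral is zero. The same argument, applied after taking adjoints inside the integral, handles $\comm{\hat\Phi^\R_1(\omega_1)^\dagger}{\hat\Phi^\R_2(\omega_2)}$. This uses $\hat\Phi^\R(\omega)^\dagger=\int\hat\phi^\R_\omega(x)^\dagger\,d\mu$, which holds because the integral is $*$-preserving (Thm.~\ref{thm:groupoid-relativization-invariant} in the group case, or directly from positivity of the measure).

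The one step that needs care is interchanging products with the integrals. Multiplication is only separately ultraweakly continuous, so I would justify the interchange by pairing with an arbitrary trace-class $\rho$. For fixed $B\in\bhs$, the maps $A\mapsto\Tr[\rho AB]$ and $A\mapsto\Tr[\rho BA]$ are normal functionals. So $\hat\Phi_1B=\int\hat\phi_1(x)B\,d\mu_1$, and likewise $B\hat\Phi_1=\int B\hat\phi_1(x)\,d\mu_1$. Applying this twice gives
\begin{equation}
\Tr[\rho\,\hat\Phi_1\hat\Phi_2]=\int\!\!\int\Tr[\rho\,\hat\phi_1(x)\hat\phi_2(y)]\,d\mu_2(y)\,d\mu_1(x).
\end{equation}
The iterated integrand is bounded and jointly measurable on a product of probability spaces, so Fubini applies. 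Subtracting the same expression with the order reversed yields $\Tr[\rho\,\comm{\hat\Phi_1}{\hat\Phi_2}]=0$ for all $\rho$, hence the commutator vanishes.
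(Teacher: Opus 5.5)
Your proposal is correct and follows essentially the same route as the paper, which cites this result and reproduces the argument for its curved-spacetime analogue, Thm.~\ref{thm:causality conditions}. Item 1 is quantifier logic on $\perp_\sigma$. Item 2 writes the product of the smeared fields as a double integral against the product of the marginal Born measures, uses pointwise commutativity of the kernels on $\text{Loc}(\R,\omega_1)\times\text{Loc}(\R,\omega_2)$, and reorders by Fubini. Your extra step of pairing with a trace-class $\rho$, to justify pulling products inside the ultraweak integrals together with joint measurability of $(x,y)\mapsto\Tr[\rho\,\hat\phi_1(x)\hat\phi_2(y)]$ (which holds by separability), makes rigorous what the paper writes down directly.
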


Microcausality is a more ontological implementation than Einstein causality of the no-superluminal signalling principle. In turn, strong microcausality (causality holding pointwise, regardless of frame preparations) is ``more ontic" than weak microcausality (causality holding pointwise provided frame preparations are spacelike separated). Arguably, absolute microcausality is the ``most ontic" of all those conditions, but is too strong to assume in general (for $\sigma=0$) by Wightman's theorem \cite{wightman_theorie_1964}. We see that, in relational quantum field theory, causality is understood relationally. Some operators may lead to causal relational quantum fields in some QRFs but not others. This is an interesting specificity of this formalism.

\begin{example}[\cite{fedida_foundations_2025}]
    \label{ex:example causal rqf}
    Let $\R$ be a relativistic QRF, $K$ be any compact subset of $O_+^\uparrow(\mink,\eta)$, $W_{\hat{\phi}} : \mathscr{S}(\mink,\mathbb{C}) \to \bhs$ be the Weyl operator for a Klein-Gordon Wightmannian quantum field $\hat{\phi}$. Define
    \begin{equation}
        \operationalstate^K := \{\omega \in \homframestate \mid \supp \mu^{\E_\R}_{\omega} \subset K\}.
    \end{equation}
    Then $\forall \sigma > 0$, $\exists f_{\sigma,K} \in C^\infty_c(\mink)$ such that $W_{\hat{\phi}}(f_{\sigma,K}) \in \bhs$ is weakly $(\operationalstate^K,\sigma)$-microcausal.
\end{example}

\subsection{Relational and Wightman quantum field theory}
    It may, at first view, be worrying to be able to discuss a notion of ``quantum field at $x \in \mink$" through the relational local quantum field $\hat{\phi}^\R_\omega(x)$. Indeed, ultraviolet divergences typically prevent one from doing so. In Wightman quantum field theory, where quantum fields are (unbounded) operator-valued \emph{distributions} $\hat{\Phi}_W : \mathscr{S}(\mink,\mathbb{C}) \to \mathcal{L}(\D,\hi)$, with $\mathscr{S}(\mink,\mathbb{C})$ the space of Schwartz functions, these ultraviolet divergences prevent one from assuming these distributions are regular, i.e. of the form $\hat{\Phi}_W(f) = \int_\mink \hat{\phi}(x) f(x)d^dx$ for a \emph{function-independent} kernel $\hat{\phi}:\mink \to \mathcal{L}(\D,\hi)$. However, such quantum fields can, in fact, be written as smearings of pointwise quantum fields, though these need to be function-dependent as follows. 

    For any nonzero $f \in \mathscr{S}(\mink,\mathbb{C})$, let $g_f : \supp f \to \mathbb{C}$ be such that $\int_{\mink} g_f(x) f(x) d^dx$ exists and is nonzero. Then, we can define
    \begin{align}
        \hat{\phi}^{g_f}_f : \supp f &\to \mathcal{L}(\D,\hi) \\
        \hat{\phi}^{g_f}_f(x) &:= \frac{g_f(x)}{\int_\mink g_f(x) f(x)d^dx} \hat{\Phi}_W(f).
    \end{align}
    Then, trivially, $\hat{\Phi}_W(f) = \int_\mink \hat{\phi}^{g_f}_f(x) f(x) d^dx$. Note however that not all function-dependent kernels can be written explicitly as above; for example, $\hat{\phi}^\R_\omega$ cannot. Nonetheless, it provides some understanding of where these kernels come from.
    
    Note also that, from Wightmannian covariance $U_\S(a,\Lambda) \hat{\Phi}_W(f) U_\S(a,\Lambda)^\dagger = \hat{\Phi}_W(\{a,\Lambda\} \cdot f)$, where $(\{a,\Lambda\} \cdot f)(x) := f(\Lambda^{-1}(x-a))$, we have that
    \begin{equation}
        U_\S(a,\Lambda) \hat{\phi}^{g_f}_f(x) U_\S(a,\Lambda)^\dagger = \hat{\phi}^{g_{\{a,\Lambda\} \cdot f}}_{\{a,\Lambda\} \cdot f}(\Lambda x +a).
    \end{equation}
    This construction also allows us to compare the different notions of causality: strong microcausality $x \perp y \Rightarrow \comm{\hat{\phi}^{g_{f_1}}_{f_1}(x)}{\hat{\phi}^{g_{f_2}}_{f_2}(y)} = 0$ is obviously too strong, while weak microcausality $\supp f_1 \perp \supp f_2 \Rightarrow \comm{\hat{\phi}^{g_{f_1}}_{f_1}(x)}{\hat{\phi}^{g_{f_2}}_{f_2}(y)} = 0$ is now equivalent (in this specific case where the kernels are explicitly defined with respect to $\hat{\Phi}_W$) to Wightmannian Einstein causality $\supp f_1 \perp \supp f_2 \Rightarrow \comm{\hat{\Phi}_W(f_1)}{\hat{\Phi}_W(f_2)} = 0$. We can thus start motivating relational causality conditions from such analogies.

    \begin{remark}
        One can see quantum reference frames and relational quantum field theory as providing an operational meaning to the (arguably obscure) notion of test function in Wightman quantum field theory. These smearing functions are nothing but the frame smearing functions from the perspective of a quantum reference frame prepared in a certain state. In turn, given sufficiently ``nice" properties of the frame observable $\E_\R$ and/or of the operationally accessible states $\operationalstate$, these frame smearing functions can be more regular and smooth than just $L^1$; perhaps, in suitable regimes, they are indeed Schwartz, but they need not be in general.
    \end{remark}

    Overall, the following comparison holds. 

    \begin{center}
\begin{tabular}{ccl}
Relational quantum field theory & $\leftrightsquigarrow$ & Wightman quantum field theory \\
$f^{\F_\R}_{\omega} \in L^1(\mink,\mathbb{C})$ & $\leftrightsquigarrow$ & $f \in \mathscr{S}(\mink,\mathbb{C})$,\\
$\int_{\mathcal{SO}_+^\uparrow(1,d-1)} d\nu^{\R}_{\omega}(\lambda \mid x) : \supp f^{\F_\R}_{\omega} \to \mathbb{C}$ & $\leftrightsquigarrow$ & $g_f : \supp f \to \mathbb{C}$, \\
$\hat{\phi}^\R_\omega : \supp f^{\F_\R}_{\omega} \to \bhs$ & $\leftrightsquigarrow$ & $\hat{\phi}^{g_f}_{f} : \supp f \to \mathcal{L}(\D_\S,\his)$,\\ 
$\hat{\Phi}^\R : \homframestate \to \bhs$ & $\leftrightsquigarrow$ & $\hat{\Phi}_W : \mathscr{S}(\mink,\mathbb{C}) \to \mathcal{L}(\D_\S,\his)$,\\ 
$U_\S(a,\Lambda) \hat{\Phi}^\R(\omega) U_\S(a,\Lambda)^\dagger = \hat{\Phi}^\R(U_\R(a,\Lambda) \omega U_\R(a,\Lambda)^\dagger)$ & $\leftrightsquigarrow$ & $U_\S(a,\Lambda) \hat{\Phi}_W(f) U_\S(a,\Lambda)^\dagger = \hat{\Phi}_W(\{a,\Lambda\} \cdot f)$,\\
$U_\S(a,\Lambda) \hat{\phi}^\R_\omega(x) U_\S(a,\Lambda)^\dagger = \hat{\phi}^\R_{U_\R(a,\Lambda) \omega U_\R(a,\Lambda)^\dagger}(\Lambda x + a)$ & $\leftrightsquigarrow$ & $U_\S(a,\Lambda) \hat{\phi}^{g_f}_f(x) U_\S(a,\Lambda)^\dagger = \hat{\phi}^{g_{\{a,\Lambda\} \cdot f}}_{\{a,\Lambda\} \cdot f}(\Lambda x+a) $,\\
$(\operationalstate,\sigma)$-causality  & $\leftrightsquigarrow$ & Causality for $f,g \in C^\infty_c$.
\end{tabular}
\end{center}
Note that, in the Wightmannian framework, the analytic properties of the Schwartz smearing functions (which help with the analytic continuation of the relevant vacuum expectation values) make $\sigma>0$ causality become equivalent to $\sigma=0$ causality. This is known as the global nature of local commutativity \cite{streater_pct_1989}. The rigidity coming from the analyticity of these smearing functions is weakened in relational quantum field theory, so $\sigma>0$ causality may not be equivalent to $\sigma=0$ causality in general. This is an interesting prospect to find new physically meaningful types of causal quantum fields.

\begin{remark}
\label{rem:boundedness-rqft}
At this stage the construction is performed for bounded seed operators $\phi\in \bhs$, so the resulting
relational quantum fields and observables are bounded. This is sufficient for the foundation developed here and already exhibits the basic mechanisms of locality, covariance and localization. The extension to unbounded operators belongs to a later stage of the programme.
\end{remark}

\begin{remark}
    Let $(f_n) \subset \mathscr{S}(\mink,\mathbb{C})$ be a sequence of Schwartz functions converging (distributionally) to $\delta_x$. In Wightman quantum field theory, the limit $\hat{\Phi}_W(f_n) \to \hat{\phi}(x)$ does not exist (in the space of unbounded operator-valued distributions): the right-hand side is not an unbounded operator on a Hilbert space \cite{halvorson_algebraic_2007}. Further note that $\mathscr{S}(\mink,\mathbb{C}) \subset L^1(\mink,\mathbb{C}) \cap L^\infty(\mink,\mathbb{C})$. That is, from the lens of relational quantum field theory and Sec.~\ref{subsec:localization and continuity}, the fact that one cannot localize Wightmannian quantum fields is to be understood as the fact that they are described using a QRF whose frame observable is absolutely continuous with respect to the Lebesgue measure on $\mink$ and thus not localizable. Whether it is possible, in relational quantum field theory, to retrieve an analogous version of Wightmannian quantum fields defined instead over $L^1(\mink,\mathbb{C})$ and described using a localizable QRF is an interesting open question.
\end{remark}

\subsection{Relational and Algebraic quantum field theory} One can also generate algebras of relational observables as follows (see \cite{fedida_foundations_2025} for a more thorough exposition and proof of the following).

\begin{definition}
    \label{def:relational local algebras minkowski}
    Let $\R$ be a relativistic QRF, $\operationalstate \subseteq \homframestate$ be convex, $\sigma \geq 0$ and $\mathcal{O}_\S \subseteq \bhs$ be closed under adjoints and $(\operationalstate,\sigma)$-causal. Given a subset $\U \subseteq \mink$, we call
    \begin{equation}
        \A^{(\operationalstate,\sigma)}_{\mathcal{O}_\S}(\U) := \{\hat{\Phi}^\R(\omega) \mid \phi \in \mathcal{O}_\S, \omega \in \operationalstate \text{ s.t. } \text{Loc}(\R,\omega) \subset \U\}''
    \end{equation}
    a \emph{relational local algebra}.
\end{definition}

These relational local algebras satisfy the core axioms of algebraic quantum field theory \cite{haag_algebraic_1964}.

\begin{theorem}[\cite{fedida_foundations_2025}]
    Let $\R$ be a relativistic QRF, $\operationalstate \subseteq \homframestate$ be convex, $\sigma \geq 0$ and $\mathcal{O}_\S \subseteq \bhs$ be $(\operationalstate,\sigma)$-causal. Then
    \begin{enumerate}
        \item (Isotony) For all $\U \subseteq \V \subseteq \mink$, $\A^{(\operationalstate,\sigma)}_{\mathcal{O}_\S}(\U) \subseteq \A^{(\operationalstate,\sigma)}_{\mathcal{O}_\S}(\V)$.
        \item (Covariance) For all $(a,\Lambda) \in \Poincup$ and $\U \subseteq \mink$,
        \begin{equation}
            U_\S(a,\Lambda) \A^{(\operationalstate,\sigma)}_{\mathcal{O}_\S}(\U) U_\S(a,\Lambda)^\dagger = \A^{(U_\R(a,\Lambda)\operationalstate U_\R(a,\Lambda)^\dagger,\sigma)}_{\mathcal{O}_\S}((a,\Lambda) \cdot \U).
        \end{equation}
        \item (Causality) For all $\U \perp_\sigma \V \subset \mink$, $\comm{\A^{(\operationalstate,\sigma)}_{\mathcal{O}_\S}(\U) }{\A^{(\operationalstate,\sigma)}_{\mathcal{O}_\S}(\V)} = \{0\}$. 
    \end{enumerate}
\end{theorem}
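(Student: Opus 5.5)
The plan is to check the three axioms separately, using only the definition of $\A^{(\operationalstate,\sigma)}_{\mathcal{O}_\S}(\U)$ as a bicommutant of a generating set, the covariance law of Prop.~\ref{prop:relational-covariance-rlo}, and the assumed $(\operationalstate,\sigma)$-causality. Write
\begin{equation}
G(\U):=\{\hat{\Phi}^\R(\omega)\mid \phi\in\mathcal{O}_\S,\ \omega\in\operationalstate,\ \mathrm{Loc}(\R,\omega)\subset\U\}.
\end{equation}
Since $\mathcal{O}_\S$ is closed under adjoints and $\yen^\R_\omega$ is $*$-preserving (Prop.~\ref{prop:relativization-basic}), we have $\hat{\Phi}^\R(\omega)^\dagger=\yen^\R_\omega(\phi^\dagger)$. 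Hence $G(\U)$ is self-adjoint, and $G(\U)''$ is the von Neumann algebra it generates.

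\textbf{Isotony.} If $\U\subseteq\V$, then $\mathrm{Loc}(\R,\omega)\subset\U$ implies $\mathrm{Loc}(\R,\omega)\subset\V$, so $G(\U)\subseteq G(\V)$. Taking commutants twice preserves inclusions, which gives $G(\U)''\subseteq G(\V)''$.

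\textbf{Covariance.} Fix $g=(a,\Lambda)$ and set $\omega^g:=U_\R(g)\omega U_\R(g)^\dagger$.
\begin{enumerate}
\item By Eq.~\eqref{eq:rlo-covariance}, $U_\S(g)\hat{\Phi}^\R(\omega)U_\S(g)^\dagger=\hat{\Phi}^\R(\omega^g)$.
\item The Poincar\'e covariance of $\F_\R$ gives $\mu^{\F_\R}_{\omega^g}=(g\cdot)_*\mu^{\F_\R}_{\omega}$. Because $x\mapsto\Lambda x+a$ is a homeomorphism, $\mathrm{Loc}(\R,\omega^g)=g\cdot\mathrm{Loc}(\R,\omega)$.
\item Therefore $\mathrm{Loc}(\R,\omega)\subset\U$ holds iff $\mathrm{Loc}(\R,\omega^g)\subset g\cdot\U$. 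As $\omega$ ranges over $\operationalstate$, $\omega^g$ ranges over $U_\R(g)\operationalstate U_\R(g)^\dagger$ bijectively.
\item Combining these, $U_\S(g)G(\U)U_\S(g)^\dagger$ equals the generating set of $\A^{(U_\R(g)\operationalstate U_\R(g)^\dagger,\sigma)}_{\mathcal{O}_\S}(g\cdot\U)$.
\end{enumerate}
Finally, conjugation by a unitary is a $*$-automorphism of $\bhs$ and commutes with taking commutants: $(USU^\dagger)'=US'U^\dagger$. So the bicommutants correspond.

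\textbf{Causality.} Let $\U\perp_\sigma\V$. For generators with $\mathrm{Loc}(\R,\omega_1)\subset\U$ and $\mathrm{Loc}(\R,\omega_2)\subset\V$, the supremum of $(x-y)^2$ over the smaller sets is bounded by that over $\U\times\V$, so $\mathrm{Loc}(\R,\omega_1)\perp_\sigma\mathrm{Loc}(\R,\omega_2)$. The assumed $(\operationalstate,\sigma)$-causality then says every element of $G(\U)$ commutes with every element of $G(\V)$, and with their adjoints. In other words, $G(\V)\subseteq G(\U)'$.

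This is the step where care is needed: we must pass from generators to generated algebras. Since $G(\U)'$ is a von Neumann algebra containing $G(\V)$, it contains $G(\V)''$; explicitly, $G(\V)''\subseteq G(\U)'''=G(\U)'$. This means $G(\U)''$ and $G(\V)''$ commute elementwise, so $[\A(\U),\A(\V)]=\{0\}$.

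The self-adjointness of the generating sets, established at the start via closure of $\mathcal{O}_\S$ under adjoints, is what justifies using the bicommutant. The second, adjoint commutator condition in the causality assumption is what makes the inclusion $G(\V)\subseteq G(\U)'$ hold for the full self-adjoint sets.
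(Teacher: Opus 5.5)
Your argument is correct and is essentially the one the paper gives for the curved-spacetime analogue, Thm.~\ref{thm:RAQFT} (App.~\ref{app:proof AQFT}); the flat statement itself is only cited there. In both, isotony comes from inclusion of generating sets plus monotonicity of the bicommutant. Covariance comes from relational covariance together with $\mathrm{Loc}(\R,\omega^g)=g\cdot\mathrm{Loc}(\R,\omega)$ and $(USU^\dagger)'=US'U^\dagger$. Causality comes from $G(\V)\subseteq G(\U)'$ followed by taking commutants.

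The one point the paper makes explicit that you skip is checking that $U_\R(a,\Lambda)\operationalstate U_\R(a,\Lambda)^\dagger$ is again convex and that $\mathcal{O}_\S$ remains causal for it. This is what makes the right-hand side a relational local algebra in the sense of Def.~\ref{def:relational local algebras minkowski}. It follows at once from your step 2, since Poincar\'e transformations preserve $\perp_\sigma$ and conjugation preserves vanishing commutators.
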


Thus, just as Wightman quantum field theory can generate algebras of local observables and be seen as a basis to understand algebraic quantum field theory, so can relational quantum field theory.

\begin{remark}
    This can be seen as an algebraic approach to relational quantum field theory -- we coin this ARQFT for short. This shares some similarities with recent work \cite{fewster_quantum_2024,fewster_semi-local_2025} where it was argued that supplementing algebraic quantum field theory with quantum reference frames can lead to type reduction of the local algebras. These recent works form the basis of a top-down approach (which we can call relational algebraic quantum field theory, or RAQFT for short), which contrasts with our bottom-up approach in that in RAQFT, a background spacetime structure is required to introduce local nets of algebras in the first place. At the moment, (A)RQFT also works on (frame bundles over) spacetime, but we can plausibly imagine extensions to cases where notions of spacetime disintegrate but quantum reference frames (and thus relational quantum fields) are still meaningful. This may be relevant in the context of quantum gravity: one might not be able to talk about a background spacetime structure (i.e. we might not be able to assign nets of algebras to regions of spacetime), but one can still discuss relations between physical systems. It would be interesting to understand the links between both approaches, whether they concur on a similar domain of applicability, and whether RQFT is indeed better suited to discuss quantum notions of spacetimes.
\end{remark}

\section{Foundations of relational quantum field theory on curved spacetime}
\label{sec:RQFT curved}

We now formulate the curved-spacetime counterpart of the Minkowski construction of Sec.~\ref{sec:RQFT Minkowski}. The point is to retain the operational and relational content of the flat-spacetime formalism while replacing the global Poincar\'e symmetry by the Poincar\'e groupoid of the spacetime. This achieves the passage from a global homogeneous-space description of frames to a genuinely local, bundle/groupoid-based one.

Let $(\mathcal{M},g)$ be an oriented and time-oriented $d$-dimensional Lorentzian spacetime. Let $O_+^\uparrow(\mathcal{M},g)\to \M$ be its bundle of oriented orthochronous orthonormal frames, and let $\Gamma:=\mathrm{Poin}(\mathcal{M},g) = O_+^\uparrow(\mathcal{M},g)\times_{\Lup}O_+^\uparrow(\mathcal{M},g) \rightrightarrows \M$, be the associated Poincar\'e groupoid. We use the volume measure $\mu_g$ of the spacetime $(\mathcal{M},g)$. The quantum system of interest is described by a continuous Hilbert bundle $\scrhis=\bigsqcup_{x\in \M}\hisx \to \M$ carrying a fiber-wise ultraweakly continuous unitary representation $V:\mathrm{Poin}(\M,g)\curvearrowright \scrhis$. The groupoid relativization theory requires no particle-sector assumption on
$V$. Moreover, mass is not classified by irreducible representations of the
transitive Poincare groupoid: such representations are controlled by the
Lorentz isotropy representation, whereas the mass parameter belongs to the
orbit structure of the Wigner groupoid (or to the dynamics, for example a
Klein--Gordon operator). Thus a ``scalar system of mass $m$'' on a curved
spacetime should be introduced only after adding Wigner-groupoid data or a
specific field equation. No such extra hypothesis is needed in the present
kinematical construction, and a general relativistic QRF is not assigned a
mass or spin.

\begin{definition}
    We call a $\mu_g$-admissible principal QRF for the groupoid $\mathrm{Poin}(\M,g)$ a \emph{general relativistic QRF}. 
\end{definition}

\begin{lemma}
    \label{lem:general relativistic QRF is bisection admissible}
    A general relativistic QRF is bisection-admissible.
\end{lemma}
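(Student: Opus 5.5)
Unwinding the definitions, bisection-admissibility of $\R$ means two things: $(\mathrm{Poin}(\M,g)\rightrightarrows\M,\nu,\mu_g)$ is a Lie groupoid, and $(\varphi_b)_*\mu_g\sim\mu_g$ for every $b\in\mathrm{Bis}(\mathrm{Poin}(\M,g))$. The Lie property is immediate. $\mathrm{Poin}(\M,g)$ is by Definition \ref{def:poincare-groupoid-gauge} the gauge groupoid of the principal $\Lup$-bundle $O_+^\uparrow(\M,g)\to\M$, and the intrinsic description of Proposition \ref{prop:poincare-groupoid-intrinsic} gives smooth structure maps with $s,t$ surjective submersions. A smooth Haar system $\nu$ exists because $\M$ is second countable and Hausdorff. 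So a general relativistic QRF is Lie, and the remaining work is the measure-class condition.

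For that condition, fix $b\in\mathrm{Bis}(\mathrm{Poin}(\M,g))$. By definition $\varphi_b=t\circ b$ is a diffeomorphism of $\M$. I will show that pushforward of the Riemannian--Lorentzian volume measure $\mu_g$ under any diffeomorphism is mutually absolutely continuous with $\mu_g$. Work in a countable atlas of oriented charts. There $\mu_g=\sqrt{|\det g|}\,d^dx$ with a strictly positive continuous density, so $\mu_g$ and Lebesgue measure have the same null sets chartwise.

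Let $N\in\Bor(\M)$. Then $(\varphi_b)_*\mu_g(N)=\mu_g(\varphi_b^{-1}(N))$. By the change of variables formula in charts,
\begin{equation}
\mu_g(\varphi_b^{-1}(N))=\int_N \frac{\sqrt{|\det g|}\circ\varphi_b^{-1}}{\sqrt{|\det g|}}\,\bigl|\det D\varphi_b^{-1}\bigr|\,d\mu_g .
\end{equation}
The integrand is continuous and strictly positive, since $\varphi_b^{-1}$ is a diffeomorphism and its Jacobian never vanishes. So $(\varphi_b)_*\mu_g$ has an everywhere positive Radon--Nikodym density with respect to $\mu_g$. Hence $\mu_g(N)=0$ if and only if $(\varphi_b)_*\mu_g(N)=0$. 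Equivalently, diffeomorphisms map Lebesgue-null sets to Lebesgue-null sets, because locally Lipschitz maps preserve null sets, and the same holds for $\varphi_b^{-1}$.

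The only mild obstacle is globalizing the chart argument. I would handle it by σ-additivity: cover $\M$ by countably many charts $W_i$ with $\varphi_b^{-1}(W_i)$ also covered by countably many charts, then decompose $N$ accordingly. No property of $b$ beyond $\varphi_b\in\mathrm{Diff}(\M)$ is used, and in particular no isometry condition. This matches the remark that bisections of $\mathrm{Poin}(\M,g)$ are pairs $(\varphi,T)$ with $\varphi$ an arbitrary diffeomorphism. Since $\mu$-admissibility and principality are part of the hypothesis, this completes bisection-admissibility.
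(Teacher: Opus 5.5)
Your proof is correct and follows the same route as the paper, whose proof is the one-line observation that diffeomorphisms of $\M$ preserve the null sets of the volume measure $\mu_g$. You justify that fact in detail via the chartwise change-of-variables formula with a strictly positive Jacobian density, and you also check the Lie condition on $\mathrm{Poin}(\M,g)$, which the paper leaves implicit.
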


\begin{proof}
    This is immediate from the fact that diffeomorphisms of a manifold preserve the null sets of the volume measure.
\end{proof}

For every $x \in \M$, we can, as in the flat spacetime case, discuss the projection of the support of $\mu^{\E_\R^x}_{\omega_x}$ onto $\M$ to discuss the localization of a QRF $\R$ prepared in a state $\omega = (\omega_x) \in \framestate$.

\begin{definition}
    Let $\R$ be a general relativistic QRF, $\omega \in \framestate$ and $\pi = (\pi_x : \mathrm{Poin}(\M,g)^x \to \M)$ a family of measurable projections. We call
    \begin{equation}
        \text{Loc}_\pi(\R,\omega) := \esssupp_{x \in \M} \mu^{\F_{\R,\pi}^x}_{\omega_x}
    \end{equation}
    the \emph{spacetime localization of $(\R,\omega)$ given $\pi$}, and
    \begin{equation}
        \text{Loc}_\pi(\R) := \esssupp_{x \in \M} \F_{\R,\pi}^x
    \end{equation}
    the \emph{spacetime localization of $\R$ given $\pi$}, where $(\F_{\R,\pi}^x)$ is the $\pi$-marginal frame POVM of the QRF.
\end{definition}

This precisely matches the notion of localization in Eqns.~\eqref{eqn:Loc(R,w)} and \eqref{eqn:Loc(R)}, with a pictorial representation of $\text{Loc}_\pi(\R,\omega)$ given in Fig.~\ref{fig:localization gr QRF}.

\begin{figure*}
    \centering
    \begin{tikzpicture}[x=0.75pt,y=0.75pt,yscale=-1,xscale=1]

\draw   (229.6,37.08) .. controls (240.62,28.46) and (492.17,21.05) .. (482.8,36.58) .. controls (473.43,52.11) and (482.25,181.52) .. (482.25,203.95) .. controls (482.25,226.38) and (217.12,228.11) .. (206.09,202.23) .. controls (195.07,176.34) and (218.57,45.71) .. (229.6,37.08) -- cycle ;
\draw    (518.79,32.27) -- (518.79,212.58) ;
\draw    (204.99,254.85) .. controls (227.24,236.23) and (464.61,273.83) .. (486.66,247.95) ;
\draw  [pattern=_h3f2g35xw,pattern size=6pt,pattern thickness=0.75pt,pattern radius=0pt, pattern color={rgb, 255:red, 0; green, 0; blue, 0}][dash pattern={on 4.5pt off 4.5pt}] (286.02,101.11) .. controls (297.04,92.49) and (299.91,102.93) .. (319.2,118.46) .. controls (338.49,133.99) and (350.81,175.11) .. (301.2,141.46) .. controls (251.59,107.81) and (253.5,201.19) .. (242.47,175.31) .. controls (231.45,149.43) and (275,109.74) .. (286.02,101.11) -- cycle ;
\draw    (260.11,83.41) -- (294.88,116.71) ;
\draw [shift={(297.04,118.78)}, rotate = 223.77] [fill={rgb, 255:red, 0; green, 0; blue, 0 }  ][line width=0.08]  [draw opacity=0] (8.93,-4.29) -- (0,0) -- (8.93,4.29) -- cycle    ;
\draw  [dash pattern={on 0.84pt off 2.51pt}]  (240.71,164.2) -- (240.71,248.86) ;
\draw  [dash pattern={on 0.84pt off 2.51pt}]  (335.2,153.46) -- (334.2,253.46) ;
\draw  [pattern=_cebealfc3,pattern size=6pt,pattern thickness=0.75pt,pattern radius=0pt, pattern color={rgb, 255:red, 0; green, 0; blue, 0}] (268.2,247.46) .. controls (290.2,247.46) and (291.53,249.13) .. (314.2,250.46) .. controls (336.87,251.79) and (328.2,246.46) .. (334.2,252.46) .. controls (340.2,258.46) and (214.04,252.4) .. (240.71,247.86) .. controls (267.37,243.33) and (246.2,247.46) .. (268.2,247.46) -- cycle ;
\draw    (211.86,225.17) -- (273.23,247.95) ;
\draw [shift={(276.04,249)}, rotate = 200.37] [fill={rgb, 255:red, 0; green, 0; blue, 0 }  ][line width=0.08]  [draw opacity=0] (8.93,-4.29) -- (0,0) -- (8.93,4.29) -- cycle    ;
\draw  [pattern=_cl4zvaxox,pattern size=6pt,pattern thickness=0.75pt,pattern radius=0pt, pattern color={rgb, 255:red, 0; green, 0; blue, 0}][dash pattern={on 4.5pt off 4.5pt}] (345.02,97.11) .. controls (356.04,88.49) and (405.91,98.93) .. (425.2,114.46) .. controls (444.49,129.99) and (403.2,160.46) .. (369.2,156.46) .. controls (335.2,152.46) and (312.2,152.46) .. (309.2,132.46) .. controls (306.2,112.46) and (334,105.74) .. (345.02,97.11) -- cycle ;
\draw  [dash pattern={on 0.84pt off 2.51pt}]  (431.2,124.46) -- (431.2,257.46) ;
\draw  [dash pattern={on 0.84pt off 2.51pt}]  (309.2,132.46) -- (308.2,252.46) ;
\draw  [pattern=_yh1bf9qsy,pattern size=6pt,pattern thickness=0.75pt,pattern radius=0pt, pattern color={rgb, 255:red, 0; green, 0; blue, 0}] (335.69,252.06) .. controls (357.69,252.06) and (356.44,254.66) .. (381.69,255.06) .. controls (406.94,255.46) and (425.2,251.46) .. (431.2,257.46) .. controls (437.2,263.46) and (281.53,256.99) .. (308.2,252.46) .. controls (334.87,247.93) and (313.69,252.06) .. (335.69,252.06) -- cycle ;
\draw    (440.11,72.41) -- (400.46,114.52) ;
\draw [shift={(398.4,116.7)}, rotate = 313.28] [fill={rgb, 255:red, 0; green, 0; blue, 0 }  ][line width=0.08]  [draw opacity=0] (8.93,-4.29) -- (0,0) -- (8.93,4.29) -- cycle    ;
\draw    (454.57,236.04) -- (409.25,250.77) ;
\draw [shift={(406.4,251.7)}, rotate = 341.98] [fill={rgb, 255:red, 0; green, 0; blue, 0 }  ][line width=0.08]  [draw opacity=0] (8.93,-4.29) -- (0,0) -- (8.93,4.29) -- cycle    ;
\draw    (242.4,261.76) .. controls (248.8,268.34) and (319.56,270.72) .. (323.96,271.09) .. controls (328.36,271.46) and (327.47,271.7) .. (331.46,274.09) .. controls (335.45,276.48) and (334.29,271.15) .. (340.46,271.59) .. controls (346.63,272.03) and (408.46,272.59) .. (430.96,261.09) ;

\draw (493.12,245.4) node [anchor=north west][inner sep=0.75pt]    {$\mathcal{M}$};
\draw (525.94,21.37) node [anchor=north west][inner sep=0.75pt]    {$\mathcal{SO}_+^\uparrow(1,d-1)$};
\draw (155,49.78) node [anchor=north west][inner sep=0.75pt]    {$O_+^\uparrow(\M,g)$};
\draw (226.28,50.99) node [anchor=north west][inner sep=0.75pt]    {$\supp \mu _{\omega_{x_{1}}}^{\E_{\R}^{x_{1}}}$};
\draw (115.95,207.89) node [anchor=north west][inner sep=0.75pt]  [font=\small]  {$\supp \mu _{\omega_{x_{1}}}^{\F_{\R,\pi}^{x_{1}}}$};
\draw (282,236.4) node [anchor=north west][inner sep=0.75pt]  [font=\tiny]  {$x_{1}$};
\draw (374,243.14) node [anchor=north west][inner sep=0.75pt]  [font=\tiny]  {$x_{2}$};
\draw (399.28,41.99) node [anchor=north west][inner sep=0.75pt]    {$\supp \mu _{\omega_{x_{2}}}^{\E_{\R}^{x_{2}}}$};
\draw (450,215.89) node [anchor=north west][inner sep=0.75pt]    {$\supp \mu _{\omega_{x_2}}^{\F_{\R,\pi}^{x_{2}}}$};
\draw (300.06,279.28) node [anchor=north west][inner sep=0.75pt]   [align=left] {Loc$_\pi(\R,\omega)$};
\end{tikzpicture}
    \caption{Spacetime localization of the QRF (given a measurable projection $\pi$) prepared in state $\omega = (\omega_x) \in \framestate$ as the union of the spacetime localization of the preparations at different $x \in \mathcal{M}$ of the QRF. Here, only two frame $\omega_{x_1}$ and $\omega_{x_2}$ are displayed, and for a single $\omega = (\omega_x)$, for clarity of the representation.}
    \label{fig:localization gr QRF}
\end{figure*}

Any $\mathscr{B}$-compatible projection yields a covariant notion of frame localization, as we now show.

\begin{proposition}
    \label{prop:covariance localization} Let $\R$ be a general relativistic QRF, $\mathscr{B} \leq  \mathrm{Bis}(\mathrm{Poin}(\M,g))$ and $\pi = (\pi_x : \mathrm{Poin}(\M,g)^x \to \M)$ be a $\mathscr{B}$-compatible measurable projection. Then $\forall \omega \in \framestate$ and $\forall b \in \mathscr{B}$,
    \begin{equation}
        \text{Loc}_\pi(\R,\omega^b) = \varphi_b(\text{Loc}_\pi(\R,\omega))
    \end{equation}
    where $\omega^b = \left(\omega^b_x = U_{b(\varphi_b^{-1}(x))} \omega_{\varphi_b^{-1}(x)} U_{b(\varphi_b^{-1}(x))}^\dagger\right)$.
\end{proposition}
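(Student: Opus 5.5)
The plan is to work fiberwise first and then pass to the essential support over $\M$. Fix $b\in\mathscr{B}$ and $\omega\in\framestate$, and for $x\in\M$ put $y:=\varphi_b^{-1}(x)$, so that $\omega^b_x=U_{b(y)}\,\omega_y\,U_{b(y)}^\dagger$.

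The central step is the identity of Born measures
\begin{equation}
\mu^{\F^{x}_{\R,\pi}}_{\omega^b_x}(N)=\Tr\bigl[\omega_y\,U_{b(y)}^\dagger\F^{\varphi_b(y)}_{\R,\pi}(N)U_{b(y)}\bigr]=\Tr\bigl[\omega_y\,\F^{y}_{\R,\pi}(\varphi_b^{-1}(N))\bigr]=\bigl((\varphi_b)_*\mu^{\F^{y}_{\R,\pi}}_{\omega_y}\bigr)(N),
\end{equation}
valid for all $N\in\Bor(\M)$. It follows from Lemma~\ref{lem:projection marginal covariance}: apply the covariance relation to the set $\varphi_b^{-1}(N)$, which is Borel because $\varphi_b$ is a diffeomorphism, and then use cyclicity of the trace.

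Since $\varphi_b$ is a homeomorphism, the support of a push-forward measure is the image of the support. Hence
\begin{equation}
\supp\mu^{\F^x_{\R,\pi}}_{\omega^b_x}=\varphi_b\bigl(\supp\mu^{\F^{y}_{\R,\pi}}_{\omega_y}\bigr).
\end{equation}
I would prove this directly. An open set $O$ is null for $(\varphi_b)_*\nu$ if and only if $\varphi_b^{-1}(O)$ is $\nu$-null, so the complement of the support, being the largest null open set, maps to the largest null open set.

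It remains to take essential supports over $x$. I read $\esssupp_{x}$ as the closure of the union of fiber supports after discarding $\mu_g$-null sets of base points; if the paper's definition differs, I would adapt this step to it. Set $S_x(\omega):=\supp\mu^{\F^x_{\R,\pi}}_{\omega_x}$. The fiber identity then reads $S_x(\omega^b)=\varphi_b(S_{\varphi_b^{-1}(x)}(\omega))$, and reindexing gives
\begin{equation}
\bigcup_{x\in\M\setminus Z} S_x(\omega^b)=\varphi_b\Bigl(\bigcup_{y\in\M\setminus\varphi_b^{-1}(Z)}S_y(\omega)\Bigr).
\end{equation}
Two facts make this pass to essential supports. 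First, $\varphi_b^{-1}$ maps $\mu_g$-null sets to $\mu_g$-null sets, since diffeomorphisms preserve null sets of the volume measure (Lemma~\ref{lem:general relativistic QRF is bisection admissible}); so ranging over null exceptional sets $Z$ on one side is the same as ranging over them on the other. Second, closure commutes with the homeomorphism $\varphi_b$. Together these give $\mathrm{Loc}_\pi(\R,\omega^b)=\varphi_b(\mathrm{Loc}_\pi(\R,\omega))$.

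I expect the main obstacle to be the "almost every $x$" bookkeeping. The compatibility hypothesis and the covariance in Lemma~\ref{lem:projection marginal covariance} hold only for $\mu_g$-a.e.\ base point, so I must check that the exceptional null set for $y$ pulls back to a null set for $x=\varphi_b(y)$. This again reduces to bisection-admissibility.
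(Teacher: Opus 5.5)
Your proposal is correct and follows the paper's proof essentially step by step. Both arguments use the push-forward identity for Born measures, obtained from Lemma~\ref{lem:projection marginal covariance} applied to $\varphi_b^{-1}(N)$ together with cyclicity of the trace; then the image-of-support identity under the homeomorphism $\varphi_b$; then reindexing over base points. Your handling of the essential support is slightly more careful than the paper's: you take closures and discard $\mu_g$-null sets of base points, controlled by bisection-admissibility, whereas the paper simply treats $\mathrm{Loc}_\pi(\R,\omega)$ as a plain union over $x\in\M$.
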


\begin{proof}
    See App.~\ref{app:proof covariance localization}.
\end{proof}

Thus, $\mathscr{B}$-compatible measurable projections really do encode a covariant notion of localization in spacetime, which is physically meaningful when e.g. $\mathscr{B} = \mathrm{Bis}_{\mathrm{Isom}}(\mathrm{Poin}(\M,g))$.

\subsection{Relational quantum fields} In the present work, we want to understand how relational quantum fields can emerge from the relativization of \emph{operators} (\textit{a fortiori} observables). A similar picture can be drawn, though we now relativize operator fields $\phi = (\phi_x)_{x \in \mathcal{M}} \in \A(\scrhis)$, such that
\begin{align}
    \label{eqn:absolute quantum field def}
    \hat{\phi} : \mathcal{M} &\to \bigsqcup_{x \in M} \bhsx \\
    \hat{\phi}(x) &:= \phi_x
\end{align}
is, in fact, a quantum field defined over spacetime points. Regardless, we find it more elegant to stick to the picture of relativizing operators and \emph{recovering} a notion of relational observables arising as relational quantum fields. Indeed, it highlights that the consideration of relational (general) relativistic quantum physics \emph{necessarily} implies the emergence of quantum fields, rather than having to postulate them. We now go on to generalise the notion of relational quantum fields to curved spacetimes. We do this in several steps.

\begin{definition}
    Let $\phi \in \A(\scrhis)$ and $\R$ be a general relativistic QRF. For $\mu_g$-almost every $x \in \mathcal{M}$, the \emph{relational semilocal quantum field at $x$} is the map
\begin{align}
    \hat{\Phi}^\R_x : \framestate &\to \bhsx \\
    \label{eqn:relational semilocal observable}
    \hat{\Phi}^\R_x(\omega) &:= \euro^{\R}_{\omega}(\phi)_x = \iint_{\mathrm{Poin}(\M,g)^x} V_\beta \phi_{s(\beta)} V_\beta^\dagger \, d\mu^{\E_\R^x}_{\omega_x}(\beta).
\end{align}
    We call the right-hand side of Eqn.\eqref{eqn:relational semilocal observable} a \emph{relational semilocal observable}.
\end{definition}

The name \emph{semilocal} stems from the fact that these operators are defined at a point, but also depend on the points in their vicinity, as we now highlight. By disintegrating the measure $\mu^{\E_\R^x}_{\omega_x}$ with respect to its source marginal $\mu^{\F_{\R,\pi}^x}_{\omega_x}$ (Lem.~\ref{lem:disintegration exists Lie QRF}), we get a family of conditional measures $\{\nu^{\R,\pi}_{\omega_x}(\cdot \mid y)\}_{y \in \supp \mu^{\F_{\R,\pi}^x}_{\omega_x}}$.

\begin{definition}
    Let $\R$ be a general relativistic QRF, $\omega \in \framestate$, $\pi = (\pi_x : \mathrm{Poin}(\M,g)^x \to \M)$ be a family of measurable projections. For $\mu_g$-almost every $x \in \M$, we call
    \begin{align}
    \hat{\phi}^{\R,\pi}_{x,\omega} : \supp \mu^{\F_{\R,\pi}^x}_{\omega_x} &\to \bhsx \\
    \hat{\phi}^{\R,\pi}_{x,\omega}(y) &:= \int_{\pi_x^{-1}(\{y\})} V_\beta \phi_{s(\beta)} V_\beta^\dagger d\nu^{\R,\pi}_{\omega_x}(\beta \mid y)
\end{align}
    a \emph{relational local quantum field at $x$ given $\pi$}.\footnote{Strictly speaking these are representatives of $\mu^{\F_{\R,\pi}^x}_{\omega_x}$-almost everywhere defined kernels given in Eqn.~\eqref{eqn:quantum field disintegration curved}.}
\end{definition}

\begin{remark}
    Note that relational local quantum fields now depend on the choice of measurable projection. Further note that this projection may, but need not, be $\mathscr{B}$-compatible for a nontrivial $\mathscr{B} \leq \mathrm{Bis}(\mathrm{Poin}(\M,g))$.
\end{remark}

For example, if $\pi_x = s|_{\mathrm{Poin}(\M,g)^x}$, then $\pi_x^{-1}(\{y\}) = \mathrm{Poin}(\M,g)^x_y$ and so the ``quantum field at $x$" is some average of the absolute quantum field over points ``around x" weighted by the frame's probability measure. We obtain the local expression
\begin{equation}
    \hat{\Phi}^\R_x(\omega) = \int_{\mathcal{M}} \hat{\phi}^{\R,\pi}_{x,\omega}(y) \, d\mu^{\F_{\R,\pi}^x}_{\omega_x}(y),
\end{equation}
since
\begin{equation}
    \label{eqn:quantum field disintegration curved}
    \hat{\Phi}^\R_x(\omega)=\iint_{\mathrm{Poin}(\M,g)^x} V_\beta \phi_{s(\beta)} V_\beta^\dagger \, d\mu^{\E_\R^x}_{\omega_x}(\beta) = \int_{\mathcal{M}} \underbrace{\left(\int_{\pi_x^{-1}(\{y\})} V_\beta \phi_{s(\beta)} V_\beta^\dagger \, d\nu^{\R,\pi}_{\omega_x}(\beta \mid y)\right)}_{\hat{\phi}^{\R,\pi}_{x,\omega}(y)} \, d\mu^{\F_{\R,\pi}^x}_{\omega_x}(y).
\end{equation}

Thus, for each target point $x$, the relational field is obtained by averaging the absolute field over all source points $y$ and all local frame comparisons $\beta:y\to x$ weighted by the quantum statistics of the frame. 

\begin{theorem}
    Let $\R$ be a general relativistic QRF and $\phi \in \A(\scrhis)$. Then for all $\omega \in \framestate$ and $\mu_g$-almost every $x \in M$, $\exists f^{\F_{\R,s}^x}_{\omega_x} \equiv \frac{d\mu^{\F_{\R,s}^x}_{\omega_x}}{d\mu_g} \in L^1(M,\mu_g)$, such that
    \begin{equation}
        \hat{\Phi}^\R_x(\omega) = \int_\M \hat{\phi}^{\R,s}_{x,\omega}(y) f^{\F_{\R,s}^x}_{\omega_x}(y) d\mu_g(y).
    \end{equation}
\end{theorem}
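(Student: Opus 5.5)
The plan is to combine three ingredients already available: the disintegration of the Born measure along the source projection (Lem.~\ref{lem:disintegration exists Lie QRF}), the $\mu_g$-admissibility built into the definition of a general relativistic QRF, and the Radon--Nikodym theorem. Fix $\omega\in\framestate$ and take $x$ in the full-measure set where three things hold: the fiber relativization $\euro^\R_\omega(\phi)_x$ is defined (Prop.~\ref{def:groupoid-relativization-section4}), the $\mu_g$-admissibility condition holds, and $\omega_x$ is a density operator. Since $\R$ is general relativistic, it is $\mu_g$-admissible, so the source marginal satisfies $\F^x_{\R,s}\ll\mu_g$. Hence, for every Borel $N\subseteq\M$ with $\mu_g(N)=0$,
\begin{equation}
\mu^{\F^x_{\R,s}}_{\omega_x}(N)=\Tr[\omega_x\F^x_{\R,s}(N)]=0 .
\end{equation}

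Next I would check the hypotheses of the Radon--Nikodym theorem. The Born measure $\mu^{\F^x_{\R,s}}_{\omega_x}$ is a finite positive Borel measure with total mass $1$. The volume measure $\mu_g$ is $\sigma$-finite, since $\M$ is second countable and $\mu_g$ is Radon. The theorem therefore gives a density $f^{\F^x_{\R,s}}_{\omega_x}=d\mu^{\F^x_{\R,s}}_{\omega_x}/d\mu_g\geq 0$. It lies in $L^1(\M,\mu_g)$ with integral equal to $1$, and it is unique $\mu_g$-a.e. This is the same argument as in Thm.~\ref{thm:covariance implies mu continuity}. The difference is that covariance is not used here: $\mu$-continuity is supplied directly by admissibility.

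With the density in hand, I would apply Lem.~\ref{lem:disintegration exists Lie QRF} with $\pi_x=s|_{\mathrm{Poin}(\M,g)^x}$, which is legitimate because $\mathrm{Poin}(\M,g)$ is a Lie groupoid. This gives the conditional kernel $\nu^{\R,s}_{\omega_x}(\cdot\mid y)$, concentrated on $\mathrm{Poin}(\M,g)^x_y$. The next step is to justify the identity \eqref{eqn:quantum field disintegration curved} at the level of operator-valued integrals. I would do this weakly. For $\chi\in\thsx$, consider the bounded measurable scalar function $\beta\mapsto\Tr[\chi V_\beta\phi_{s(\beta)}V_\beta^\dagger]$, which is bounded by $\|\chi\|_1\,\esssup\|\phi\|$. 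Integrate it against $\mu^{\E^x_\R}_{\omega_x}$. The disintegration identity, extended from indicator functions to bounded measurable functions by simple-function approximation and dominated convergence, gives
\begin{equation}
\Tr[\chi\hat\Phi^\R_x(\omega)]=\int_\M\Big(\int\Tr[\chi V_\beta\phi_{s(\beta)}V_\beta^\dagger]\,d\nu^{\R,s}_{\omega_x}(\beta\mid y)\Big)\,d\mu^{\F^x_{\R,s}}_{\omega_x}(y).
\end{equation}

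The inner integral equals $\Tr[\chi\hat\phi^{\R,s}_{x,\omega}(y)]$ by definition of the ultraweak operator-valued integral. This requires that $y\mapsto\hat\phi^{\R,s}_{x,\omega}(y)$ be ultraweakly measurable and essentially bounded. Measurability follows from item 1 of Lem.~\ref{lem:disintegration exists Lie QRF}, again by monotone class from indicators to bounded integrands; boundedness holds because each $\nu^{\R,s}_{\omega_x}(\cdot\mid y)$ is a probability measure. Finally, I would replace $d\mu^{\F^x_{\R,s}}_{\omega_x}$ by $f^{\F^x_{\R,s}}_{\omega_x}\,d\mu_g$ using the chain rule for Radon--Nikodym derivatives. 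Since $\chi$ is arbitrary, the identity holds ultraweakly, which is the claim.

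I expect the main obstacle to be this measurability step. One has to ensure that the kernel integral $\hat\phi^{\R,s}_{x,\omega}$ is a well-defined $L^\infty$ class, independent of the choice of conditional kernel up to $\mu^{\F}$-null sets, and that the bounded-function extension of the disintegration identity is valid. I would handle both by a monotone-class argument on $\mathrm{Poin}(\M,g)^x$, which is Polish. The Radon--Nikodym part should be routine once admissibility is invoked.
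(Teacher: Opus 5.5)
Your proposal is correct and follows the same route as the paper: $\mu_g$-admissibility gives $\mu^{\F^x_{\R,s}}_{\omega_x}\ll\mu_g$, and the Radon--Nikodym theorem then supplies the $L^1$ density, which is inserted into the source-projection disintegration of the semilocal field. The paper's proof stops at the existence of the density and relies on the operator-valued disintegration identity stated in the preceding display. Your weak (trace-class pairing) justification of that identity, and of the measurability of $y\mapsto\hat\phi^{\R,s}_{x,\omega}(y)$, therefore adds rigour but is not a different argument.
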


\begin{proof}
    For a general relativistic QRF, $\mu_g$-admissibility already says $\F_{\R,s}^x \ll \mu_g$, thus for every $\omega_x \in \framestatex$ we have $\mu^{\F_{\R,s}^x}_{\omega_x} \ll \mu_g$ and there thus exists $f^{\F_{\R,s}^x}_{\omega_x} \equiv \frac{d\mu^{\F_{\R,s}^x}_{\omega_x}}{d\mu_g} \in L^1(M,\mu_g)$.
\end{proof}

\begin{remark}
    Note that if $\pi=t$ then $\F_{\R,t}^x(N) = \text{id}_N(x) \id_{\hirx}$ so $\mu^{\F_{\R,t}^x}_{\omega_x} = \delta_x$ hence $\hat{\Phi}^\R_x(\omega) = \hat{\phi}^{\R,t}_{x,\omega}(x)$. In such a case, there does not exist an $f^{\F_{\R,t}^x}_{\omega_x}$ such that $\hat{\Phi}^\R_x(\omega) = \int_\M \hat{\phi}^{\R,t}_{x,\omega}(y) f^{\F_{\R,t}^x}_{\omega_x}(y) d\mu_g(y)$.
\end{remark}

\begin{remark}
    It may be that Thm.~\ref{thm:covariance implies mu continuity} can be generalised to curved spacetime for certain $\mathscr{B}$-compatible $\pi$ (e.g. when $\mathscr{B} = \mathrm{Bis}_{\mathrm{Isom}}(\mathrm{Poin}(\M,g))$) so that we can sometimes write $d\mu^{\F_{\R,\pi}^x}_{\omega_x}(y) = f^{\F_{\R,\pi}^x}_{\omega_x}(y) d\mu_g(y)$ with $f^{\F_{\R,\pi}^x}_{\omega_x} \in L^1(\mathcal{M},\mu_g)$. In such cases, we can write the relational semilocal quantum fields as
    \begin{equation}
        \hat{\Phi}^\R_x(\omega) = \int_{\mathcal{M}} \hat{\phi}^{\R,\pi}_{x,\omega}(y) f^{\F_{\R,\pi}^x}_{\omega_x}(y) d\mu_g(y)
    \end{equation}
    as a direct generalisation of Eqn.~\eqref{eqn:rlo-as-smearing}. We keep such a generalisation of Thm.~\ref{thm:covariance implies mu continuity} to curved spacetimes to future work.
\end{remark}

From there, we can define relational quantum fields in curved spacetimes.

\begin{definition}
    Let $\R$ be a general relativistic QRF, $\mu_g$ be the spacetime volume measure determined by the metric $g$, and $\phi \in \A(\scrhis)$. The \emph{relational quantum field} is the map
    \begin{align}
        \hat{\Phi}^\R : \framestate &\to \A(\scrhis) \\
        \hat{\Phi}^\R(\omega) &:= \euro^\R_\omega(\phi)=
        \label{eqn:relational observable} \int_\M^\oplus \hat{\Phi}^\R_x(\omega) \, d\mu_g(x).
    \end{align}
    We call the right-hand side of Eqn.~\eqref{eqn:relational observable} a \emph{relational  observable}. 
\end{definition}

\begin{example}
    For $\pi=t$, we have
    \begin{equation}
        \hat{\Phi}^\R(\omega) = \int_\M^\oplus \hat{\phi}^{\R,t}_{x,\omega}(x) \, d\mu_g(x).
    \end{equation}
    For more general projections $\pi$, there are now two notions of integration over spacetime: one at the level of the relational observable, the other at the level of the relational semilocal observable.
\end{example}

\emph{Relational groupoid covariance.} In the absence of a global symmetry group, covariance is expressed relative to arrows of the Poincar\'e groupoid.

\begin{theorem}
    \label{thm:covariance semilocal}
    Let $\R$ be a general relativistic QRF. For $\mu_g$-almost every $x,z \in \mathcal{M}$, let $\alpha : x \to z$ be an arrow of $\mathrm{Poin}(\M,g)$ and $\omega = (\omega_x) \in \framestate$. Write the transformed frame state as $\omega_z^\alpha := U_\alpha \omega_x U_\alpha^\dagger$ so that for $\mu_g$-almost every $x \in \mathcal{M}$, $\omega^\alpha_x = \begin{cases}
        \omega_x \text{ if } x \neq z \\
        \omega_z^\alpha \text{ if } x=z
    \end{cases}$. Then for all $\omega \in \framestate$,
    \begin{equation}
        V_\alpha \hat{\Phi}^\R_x(\omega)V_\alpha^\dagger = \hat{\Phi}^\R_z(\omega^\alpha).
    \end{equation}
\end{theorem}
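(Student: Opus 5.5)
The plan is to notice that, by definition, $\hat{\Phi}^\R_x(\omega)=\euro^\R_\omega(\phi)_x$, so the statement is Thm.~\ref{thm:covariance euro} specialised to $\Gamma=\mathrm{Poin}(\M,g)$ and $\mu=\mu_g$. A general relativistic QRF is by definition a $\mu_g$-admissible principal QRF for $\mathrm{Poin}(\M,g)$, so all the hypotheses of that theorem hold. I would nevertheless write out the short direct computation, because it is the core of the result and makes clear which fiber of $\omega^\alpha$ is used.

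First, fix $\alpha:x\to z$ and write the semilocal field as the integral over $\Gamma^x$:
\begin{equation}
V_\alpha \hat{\Phi}^\R_x(\omega) V_\alpha^\dagger=\int_{\Gamma^x} V_\alpha V_\beta \phi_{s(\beta)} V_\beta^\dagger V_\alpha^\dagger\, d\mu^{\E_\R^x}_{\omega_x}(\beta).
\end{equation}
Pulling the bounded map $\mathrm{Ad}_{V_\alpha}:\mathcal{B}(\hisx)\to\mathcal{B}(\his^z)$ inside the integral is allowed because the integral is ultraweakly defined and $\mathrm{Ad}_{V_\alpha}$ is normal. By functoriality, $V_\alpha V_\beta=V_{\alpha\circ\beta}$, and since $s(\alpha\circ\beta)=s(\beta)$ the integrand becomes $V_\gamma\phi_{s(\gamma)}V_\gamma^\dagger$ evaluated at $\gamma=L_\alpha\beta$.

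Second, I would change variables by $\gamma=L_\alpha\beta$, which is a bijection $\Gamma^x\to\Gamma^z$. The push-forward measure is computed from covariance of the frame observable. For every $\Delta\in\Bor(\Gamma^z)$,
\begin{equation}
(L_\alpha)_*\mu^{\E_\R^x}_{\omega_x}(\Delta)=\Tr\bigl[\omega_x\E_\R^x(L_\alpha^{-1}\Delta)\bigr]=\Tr\bigl[\omega_x U_\alpha^\dagger\E_\R^z(\Delta)U_\alpha\bigr]=\mu^{\E_\R^z}_{\omega^\alpha_z}(\Delta).
\end{equation}
Substituting this into the integral gives exactly $\euro^\R_{\omega^\alpha}(\phi)_z=\hat{\Phi}^\R_z(\omega^\alpha)$.

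Third, I would check that $\omega^\alpha\in\framestate$: it differs from $\omega$ only at the single point $z$. This is the step I expect to be delicate. A single point is $\mu_g$-null, so it does not affect the equivalence class in $\A(\scrhis)$, but the fiberwise integral at $z$ must be taken with the modified fiber state $\omega^\alpha_z$. This is why the statement is fiberwise and holds for $\mu_g$-almost every $x,z$, where the fiber integrals are well defined (Prop.~\ref{def:groupoid-relativization-section4}).

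Fourth, I would remark that the change of variables for operator-valued integrals against scalar Born measures is justified by pairing both sides with arbitrary $\chi\in\mathcal{T}(\his^z)$. This reduces the identity to the scalar change-of-variables formula for the bounded measurable function $\gamma\mapsto\Tr[\chi V_\gamma\phi_{s(\gamma)}V_\gamma^\dagger]$.
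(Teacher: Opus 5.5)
Your proposal is correct and matches the paper's proof. The paper's proof is exactly the one-line reduction $\hat{\Phi}^\R_x(\omega)=\euro^\R_\omega(\phi)_x$ followed by Thm.~\ref{thm:covariance euro}, and your unrolled computation (functoriality, push-forward of the Born measure under $L_\alpha$ via frame covariance, then the change of variables $\gamma=L_\alpha\beta$) is precisely the appendix proof of that theorem, with the measure-transformation step written somewhat more cleanly.
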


\begin{proof}
    We have
    \begin{equation}
        V_\alpha \hat{\Phi}^\R_x(\omega) V_\alpha^\dagger = V_\alpha \euro^\R_{\omega}(\phi)_x V_\alpha^\dagger \stackrel{\ref{thm:covariance euro}}{=} \euro^\R_{\omega^{\alpha}}(\phi)_z = \hat{\Phi}^\R_z(\omega^\alpha).
    \end{equation}
\end{proof}

Likewise, we recover a notion of covariance at the level of the whole relational quantum field.

\begin{theorem}
    \label{thm:covariance rqf}
    Let $\R$ be a general relativistic QRF, $\phi \in \A(\scrhis)$ and $\omega = (\omega_x) \in \framestate$. Then for all $b \in \mathrm{Bis}(\mathrm{Poin}(\M,g))$,
    \begin{equation}
        \mathbf{V}_{b} \hat{\Phi}^\R(\omega) \mathbf{V}_{b}^\dagger = \hat{\Phi}^\R(\omega^b)
    \end{equation}
    where $\omega^b = \left(\omega^b_x = U_{b(\varphi_b^{-1}(x))} \omega_{\varphi_b^{-1}(x)} U_{b(\varphi_b^{-1}(x))}^\dagger\right)$.
\end{theorem}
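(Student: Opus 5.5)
The plan is to reduce the statement to Corollary~\ref{cor:invariance and covariance bisections euro}, which already asserts $\mathbf{V}_b \euro^\R_\omega(A)\mathbf{V}_b^\dagger = \euro^\R_{\omega^b}(A)$ for every $A\in\A(\scrhis)$, $\omega\in\framestate$ and $b\in\mathrm{Bis}(\Gamma)$. That corollary needs a $\mu$-admissible, bisection-admissible principal Lie groupoid QRF.

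First, check these hypotheses for a general relativistic QRF. By definition it is a $\mu_g$-admissible principal QRF for $\mathrm{Poin}(\M,g)$. This is a Lie groupoid by Definition~\ref{def:poincare-groupoid-gauge} and Proposition~\ref{prop:poincare-groupoid-intrinsic}, so $\R$ is Lie. Lemma~\ref{lem:general relativistic QRF is bisection admissible} gives bisection-admissibility: $\varphi_b$ is a diffeomorphism, so $(\varphi_b)_*\mu_g\sim\mu_g$. Consequently $\mathbf{V}_b$ and $\mathbf{U}_b$ are well-defined unitaries on the direct integrals for every $b$.

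Second, rewrite both sides using the definition $\hat\Phi^\R(\omega)=\euro^\R_\omega(\phi)$ with the fixed seed $\phi\in\A(\scrhis)$. The left-hand side becomes $\mathbf{V}_b\euro^\R_\omega(\phi)\mathbf{V}_b^\dagger$. Applying the corollary with $A=\phi$ turns this into $\euro^\R_{\omega^b}(\phi)=\hat\Phi^\R(\omega^b)$.

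The main obstacle is making sure $\omega^b$ is a legitimate element of $\framestate$, so that $\hat\Phi^\R(\omega^b)$ is defined. For each $x$, $\omega^b_x$ is a unitary conjugate of a density operator on $\hi_\R^{\varphi_b^{-1}(x)}$, hence a density operator on $\hirx$, because $b(\varphi_b^{-1}(x))$ is an arrow $\varphi_b^{-1}(x)\to x$. Measurability of $x\mapsto\omega^b_x$ should follow from three facts: $\varphi_b^{-1}$ is smooth, $b$ is smooth, and $U$ is a measurable (indeed fiber-wise ultraweakly continuous) representation. Measurable fields are preserved under composition with these maps.

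A second point to check is $\mu_g$-a.e.\ well-definedness. If $\omega$ is modified on a $\mu_g$-null set, then $\omega^b$ changes only on the image of that set under $\varphi_b$, which is again null by bisection-admissibility. The fiberwise identity $(\mathbf V_b B\mathbf V_b^\dagger)_x=V_{b(y)}B_yV_{b(y)}^\dagger$, with $y=\varphi_b^{-1}(x)$, therefore holds almost everywhere. Combined with Theorem~\ref{thm:covariance semilocal} applied to the arrow $b(y):y\to x$, this yields
\[
(\mathbf V_b\hat\Phi^\R(\omega)\mathbf V_b^\dagger)_x=\hat\Phi^\R_x(\omega^b)
\]
for $\mu_g$-a.e.\ $x$. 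Taking the direct integral over $x$ then concludes the proof.
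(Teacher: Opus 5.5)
Your proposal is correct and is essentially the paper's own argument. The paper's proof simply invokes Lemma~\ref{lem:general relativistic QRF is bisection admissible} for bisection-admissibility and then applies Corollary~\ref{cor:invariance and covariance bisections euro} with $A=\phi$, which are exactly your first two steps. Your extra verification that $\omega^b\in\framestate$ and your fiberwise re-derivation through Theorem~\ref{thm:covariance semilocal} go beyond what is needed here but are harmless. The paper carries out that same $\omega^b$ check only in the proof of Proposition~\ref{prop:covariance localization}.
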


\begin{proof}
    This follows immediately from above and Lem.~\ref{lem:general relativistic QRF is bisection admissible} and Cor.~\ref{cor:invariance and covariance bisections euro}.
\end{proof}

Finally, we can discuss relational covariance at the level of the relational local quantum fields, as follows.

\begin{theorem}
    \label{thm:covariance local rqf}
    Let $\R$ be a general relativistic QRF, $\pi$ be a family of measurable projections. For $\mu_g$-almost every $x,z \in \mathcal{M}$, let $\alpha : x \to z$ be an arrow of $\mathrm{Poin}(\M,g)$ and $\omega = (\omega_x) \in \framestate$. Let $\tau_\alpha : \M \to \M$ be a Borel isomorphism such that $\pi_z \circ L_\alpha = \tau_\alpha \circ \pi_x$, where $L_\alpha : \mathrm{Poin}(\M,g)^{s(\alpha)} \to \mathrm{Poin}(\M,g)^{t(\alpha)}$ is such that $L_\alpha(\beta) = \alpha \circ \beta$. Write transformed frame states as $\omega_z^\alpha := U_\alpha \omega_x U_\alpha^\dagger$ so that for $\mu_g$-almost every $x \in \mathcal{M}$, $\omega^\alpha_x = \begin{cases}
        \omega_x \text{ if } x \neq z \\
        \omega_z^\alpha \text{ if } x=z
    \end{cases}$. Then for all $\omega \in \framestate$ and for $\mu^{\F_{\R,\pi}^x}_{\omega_x}$-almost every $y \in \M$, 
    \begin{equation}
    \label{eqn:conditional-probability-covariance}
        \nu_{\omega_z^\alpha}^{\R,\pi}
        (\,\cdot\mid\tau_\alpha(y))
        =
        (L_\alpha)_*
        \nu_{\omega_x}^{\R,\pi}(\,\cdot\mid y).
    \end{equation}
     Consequently, for all $\phi \in \A(\scrhis)$,
    \begin{equation}
        V_\alpha \hat{\phi}^{\R,\pi}_{x, \omega}(y) V_{\alpha}^\dagger = \hat{\phi}^{\R,\pi}_{z,\omega^\alpha}(\tau_\alpha(y)).
    \end{equation}
    
\end{theorem}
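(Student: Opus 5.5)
The plan is to derive the conditional-measure identity \eqref{eqn:conditional-probability-covariance} from the essential uniqueness in Lem.~\ref{lem:disintegration exists Lie QRF}, and then obtain the field statement by pushing the integrand through $L_\alpha$.

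\emph{Step 1: transport of the Born measures.} By the covariance condition of Def.~\ref{def:groupoid-qrf-section4}, for $\Delta\in\Bor(\Gamma^z)$ with $\Gamma=\mathrm{Poin}(\M,g)$,
\begin{equation*}
\mu^{\E_\R^z}_{\omega^\alpha_z}(\Delta)=\Tr\bigl[U_\alpha\omega_xU_\alpha^\dagger\,\E_\R^z(\Delta)\bigr]=\Tr\bigl[\omega_x\,\E_\R^x(L_\alpha^{-1}\Delta)\bigr],
\end{equation*}
so $\mu^{\E_\R^z}_{\omega^\alpha_z}=(L_\alpha)_*\mu^{\E_\R^x}_{\omega_x}$. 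Applying $\pi_z$ and using $\pi_z\circ L_\alpha=\tau_\alpha\circ\pi_x$ gives $\mu^{\F^z_{\R,\pi}}_{\omega^\alpha_z}=(\tau_\alpha)_*\mu^{\F^x_{\R,\pi}}_{\omega_x}$.

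\emph{Step 2: identification of the kernels.} Define a candidate kernel on $\Gamma^z$ by
\begin{equation*}
\tilde\nu(\cdot\mid w):=(L_\alpha)_*\nu^{\R,\pi}_{\omega_x}(\cdot\mid\tau_\alpha^{-1}(w)).
\end{equation*}
We check the three properties of Lem.~\ref{lem:disintegration exists Lie QRF} for $\mu^{\E_\R^z}_{\omega^\alpha_z}$ relative to $\pi_z$.
\begin{itemize}
\item \emph{Measurability.} This holds because $\tau_\alpha^{-1}$ is Borel and $L_\alpha$ is a homeomorphism.
\item \emph{Concentration.} We have $L_\alpha\bigl(\pi_x^{-1}(\{y\})\bigr)=\pi_z^{-1}(\{\tau_\alpha(y)\})$, using the intertwining relation and the bijectivity of $L_\alpha$ and $\tau_\alpha$. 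Hence $\tilde\nu(\cdot\mid w)$ is concentrated on $\pi_z^{-1}(\{w\})$ for almost every $w$; the null sets correspond under $\tau_\alpha$ by Step 1.
\item \emph{Disintegration identity.} For $\Delta\subset\Gamma^z$ and $\Lambda\subset\M$,
\begin{equation*}
\mu^{\E_\R^z}_{\omega^\alpha_z}\bigl(\Delta\cap\pi_z^{-1}(\Lambda)\bigr)=\mu^{\E_\R^x}_{\omega_x}\bigl(L_\alpha^{-1}\Delta\cap\pi_x^{-1}(\tau_\alpha^{-1}\Lambda)\bigr)=\int_{\tau_\alpha^{-1}\Lambda}\nu^{\R,\pi}_{\omega_x}(L_\alpha^{-1}\Delta\mid y)\,d\mu^{\F^x}_{\omega_x}(y).
\end{equation*}
The change of variables $w=\tau_\alpha(y)$, together with Step 1, turns this into $\int_\Lambda\tilde\nu(\Delta\mid w)\,d\mu^{\F^z}_{\omega^\alpha_z}(w)$.
\end{itemize}
Essential uniqueness then gives $\nu^{\R,\pi}_{\omega^\alpha_z}(\cdot\mid\tau_\alpha(y))=(L_\alpha)_*\nu^{\R,\pi}_{\omega_x}(\cdot\mid y)$ for $\mu^{\F^x}_{\omega_x}$-almost every $y$, again transporting null sets via $\tau_\alpha$.

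\emph{Step 3: the field identity.} Compute
\begin{equation*}
V_\alpha\hat\phi^{\R,\pi}_{x,\omega}(y)V_\alpha^\dagger=\int V_{\alpha\circ\beta}\,\phi_{s(\beta)}\,V_{\alpha\circ\beta}^\dagger\,d\nu^{\R,\pi}_{\omega_x}(\beta\mid y),
\end{equation*}
using functoriality $V_\alpha V_\beta=V_{\alpha\circ\beta}$ and normality of operator-valued integration so that $V_\alpha(\cdot)V_\alpha^\dagger$ passes inside the ultraweak integral. Since $s(\alpha\circ\beta)=s(\beta)$, the integrand is $g\circ L_\alpha$ with $g(\gamma)=V_\gamma\phi_{s(\gamma)}V_\gamma^\dagger$. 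The pushforward formula and Step 2 then yield $\int g\,d\nu^{\R,\pi}_{\omega^\alpha_z}(\cdot\mid\tau_\alpha(y))=\hat\phi^{\R,\pi}_{z,\omega^\alpha}(\tau_\alpha(y))$.

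\emph{Main obstacle.} The delicate points are the almost-everywhere bookkeeping in Step 2 (tracking null sets through $\tau_\alpha$) and justifying the ultraweak change of variables in Step 3. The latter I would reduce to scalar identities by pairing with trace-class operators.
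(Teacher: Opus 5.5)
Your proposal is correct and follows essentially the same route as the paper. You transport the Born measure by $L_\alpha$ and its $\pi$-marginal by $\tau_\alpha$, define the candidate kernel $(L_\alpha)_*\nu^{\R,\pi}_{\omega_x}(\cdot\mid\tau_\alpha^{-1}(w))$, invoke essential uniqueness of disintegrations, and obtain the field identity from functoriality together with $s(\alpha\circ\beta)=s(\beta)$. Your kernel is exactly the paper's $\widetilde\nu_z$, since $L_\alpha^{-1}=L_{\alpha^{-1}}$, and your explicit checks of measurability and of the null-set bookkeeping through $\tau_\alpha$ are slightly more careful than the paper's.
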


\begin{proof}
    See App.~\ref{app:proof covariance local rqf}
\end{proof}

\begin{remark}
    For a $\mathscr{B}$-compatible $\pi$, $\tau_{b(x)} = \varphi_b$ for any $b \in \mathscr{B}$, while for the source projection, $\tau_\alpha = \text{id}_M$.
\end{remark}

\begin{remark}
    In Minkowski spacetime, we see that this recovers the covariance of Prop.~\ref{prop:relational-covariance-rlo}: the groupoid action reduces to the action of bisections which is just the action of the proper orthochronous Poincaré group, and since the projection is Poincaré-covariant we have $y \mapsto \Lambda y + a$. Moreover, all different fibers are identified, so $z=x$. Finally, $\omega^\alpha = U_\R(a,\Lambda) \omega U_\R(a,\Lambda)^\dagger$, which proves the identification.
\end{remark}

Thus, relational covariance can be understood at three levels: for relational local quantum fields, for relational semilocal quantum fields, and for relational quantum fields.

\subsection{Localization limit} The relational description reduces to the ordinary one when the frame becomes sharply
localized at the identity arrow.

\begin{theorem}
    \label{thm:localization rqf}
    Let $\R$ be a localizable general relativistic QRF. Let $x \in \M$ and suppose that $1_x \in \supp \E_\R^x$. Let $(\omega_n^x) \subset \framestatex$ be a localizing sequence centred around $1_x$, with $\omega_n = (\omega_n^x) \in \framestate$. Then for any $\phi \in \A(\scrhis)_{\mathrm{uw}}$,
    \begin{equation}
        \lim_{n \to \infty} \hat{\Phi}^\R_x(\omega_n) = \hat{\phi}(x)
    \end{equation}
    ultraweakly in $\bhsx$. Moreover, if for $\mu_g$-almost every $x \in \M$, $1_x \in \supp \E_\R^x$, we have that
    \begin{equation}
        \lim_{n \to \infty} \hat{\Phi}^\R(\omega_n) = \phi
    \end{equation}
    ultraweakly in $\A(\scrhis)$.
\end{theorem}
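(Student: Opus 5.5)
This is essentially Theorem~\ref{thm:euro localisation limit} specialised to $\Gamma=\mathrm{Poin}(\M,g)$ with $\mu=\mu_g$, so the plan is to check its hypotheses and then rewrite its conclusion.

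\emph{Checking the hypotheses.}
\begin{enumerate}
\item A general relativistic QRF is by definition a $\mu_g$-admissible principal groupoid QRF.
\item Localizability is assumed in the statement.
\item $\mathrm{Poin}(\M,g)$ is a Lie groupoid (Prop.~\ref{prop:poincare-groupoid-intrinsic}). Hence every target fibre $\mathrm{Poin}(\M,g)^x$ is a second-countable smooth manifold and therefore metrizable, as noted in the footnote to Prop.~\ref{prop:norm 1 property povm}.
\item The condition $1_x\in\supp\E_\R^x$ is exactly what guarantees that a localizing sequence centred at $1_x$ exists (Prop.~\ref{prop:norm 1 groupoid sequence}). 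Conversely, Prop.~\ref{prop:beta not in supp E} shows this condition cannot be dropped.
\end{enumerate}
So the fibre-wise part of Thm.~\ref{thm:euro localisation limit} applies.

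\emph{Fibre-wise limit.} By definition, $\hat{\Phi}^\R_x(\omega_n)=\euro^\R_{\omega_n}(\phi)_x$, and $\hat\phi(x)=\phi_x$ by \eqref{eqn:absolute quantum field def}. Equation~\eqref{eqn:localization euro fiber-wise} then gives $\hat{\Phi}^\R_x(\omega_n)\to\phi_x$ ultraweakly in $\bhsx$. It helps to keep the underlying mechanism in mind, since this is where ultraweak continuity enters. Pair with $\chi\in\thsx$ to get the scalar function
\[
\beta\mapsto \Tr[\chi V_\beta\phi_{s(\beta)}V_\beta^\dagger]
\]
on $\mathrm{Poin}(\M,g)^x$. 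This function is bounded by $\|\chi\|_1\|\phi\|$. It is also continuous, using fibre-wise ultraweak continuity of $V$ (equivalently strong continuity, Thm.~\ref{thm:equivalences continuity}) together with $\phi\in\A(\scrhis)_{\mathrm{uw}}$. Weak convergence $\mu^{\E_\R^x}_{\omega_n^x}\to\delta_{1_x}$ then yields convergence of the integrals to the value at $1_x$, namely $\Tr[\chi\phi_x]$.

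\emph{Global limit.} Here we need $1_x\in\supp\E_\R^x$ for $\mu_g$-almost every $x$. We also need the sequence $\omega_n$ to be a genuine element of $\framestate$, i.e.\ measurable in $x$, with $\mu^{\E_\R^x}_{\omega_{n,x}}\to\delta_{1_x}$ for $\mu_g$-a.e.\ $x$. I will read the statement as including this: the hypothesis $\omega_n=(\omega_n^x)\in\framestate$ carries the measurability. Under it, the second part of Thm.~\ref{thm:euro localisation limit} gives $\euro^\R_{\omega_n}(\phi)\to\phi$ ultraweakly in $\A(\scrhis)$, which is exactly $\hat{\Phi}^\R(\omega_n)\to\phi$.

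If one wants the argument spelled out, pair with $\rho=\int^\oplus\rho_x\,d\mu_g\in\A(\scrhis)_*$. The fibre-wise integrands $\Tr[\rho_x\hat\Phi^\R_x(\omega_n)]$ converge $\mu_g$-a.e.\ by the previous step. They are dominated by $\|\phi\|\,\|\rho_x\|_1$, because $\euro^\R_\omega|_x$ is contractive (Thm.~\ref{thm:groupoid-relativization-invariant}), and this bound is $\mu_g$-integrable. Dominated convergence then concludes.

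\emph{Main obstacle.} The delicate point is this measurable-selection issue. The fibre-wise localizing sequences supplied by Prop.~\ref{prop:norm 1 groupoid sequence} are chosen pointwise and need not assemble into a measurable field. I would handle it by invoking the hypothesis of Thm.~\ref{thm:euro localisation limit} directly, with Prop.~\ref{prop:canonical-localizing-sequence} as the guarantee that such measurable fields exist at least for the canonical frame. The general case would need a measurable selection argument, which I would leave to the stated assumption.
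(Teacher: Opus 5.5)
Your proposal is correct and follows the paper's proof: the paper also obtains this theorem directly from Thm.~\ref{thm:euro localisation limit}, using the identification of $\hat{\Phi}^\R_x(\omega_n)$ and $\hat{\Phi}^\R(\omega_n)$ with the restricted relativization of $\phi$ (fibre-wise and as a direct integral), and the pairing, weak-convergence and dominated-convergence steps you sketch are what that theorem's appendix proof does. Your reading of the global part is also the right one: the given $\omega_n$ must be a measurable field that localizes at $1_x$ for $\mu_g$-almost every $x$, not just at one point, which is exactly the hypothesis of the second half of Thm.~\ref{thm:euro localisation limit} that the present statement leaves implicit.
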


\begin{proof}
    This follows immediately from Thm.~\ref{thm:euro localisation limit} and the fact that $\hat{\Phi}^\R(\omega_n) = \euro^\R_{\omega_n}(\phi)$ and $\hat{\Phi}^\R_x(\omega_n) = \euro^\R_{\omega_n}(\phi)_x$.
\end{proof}

Thus, absolute quantum fields are recovered in the localization limit of the relational description, just as for relational quantum field theory in Minkowski spacetime.

\subsection{Relational causality}
\label{subsec:relational causality}

The groupoid framework also admits a natural curved-spacetime version of finite-precision
causality. 

\begin{definition}
    We say that two points $x,y\in\mathcal M$ are \emph{locally spacelike separated}, written $x \indep y$, if they lie in a common geodesically convex normal neighbourhood (CNN) and the unique geodesic segment connecting them in that neighbourhood has spacelike tangent. For $\U,\V \subset \mathcal{M}$, we write 
    \begin{equation}
        CNN(\U,\V) := \{(x,y) \in \U \times \V \mid x, y \text{ lie in a common geodesically CNN}\}.
    \end{equation}
\end{definition}

We use Synge's world function $\Omega(x,y) = \frac{1}{2} g_x(\exp_x^{-1}(y),\exp_x^{-1}(y))$ to characterise local spacelike separation in curved spacetimes. Recall that we work in mostly minus signature.

\begin{definition}
   Let $\sigma\geq0$ and $\U,\V \subset \mathcal{M}$. We say that $\U$ and $\V$ are $\sigma$-locally spacelike separated if $\U\subseteq\V'$ and
\begin{equation}
\label{eqn:condition sigma causality}
 \sup_{(x,y)\in CNN(\U,\V)}
 g_x(\exp_x^{-1}(y),\exp_x^{-1}(y))<-\sigma.
\end{equation}
If $CNN(\U,\V) = \varnothing$, we work with the convention that $\sup \varnothing = - \infty$ i.e. \ \eqref{eqn:condition sigma causality} is satisfied. 

We write $x \indepsigma y$ if $\{x\} \indepsigma \{y\}$. Moreover, let $\R = (\U_\R,\E_\R,\hir)$ be a general relativistic QRF and $\omega^{(1)}_x \in \framestatex,\omega^{(2)}_x \in \mathscr D(\hir^y)$. We say that $\omega_x^{(1)}$ and $\omega^{(2)}_y$ are $(\R,\sigma,\pi)$-locally spacelike separated, written $\omega^{(1)}_x \indepERsigma \omega^{(2)}_y$, if $\supp \mu^{\F^x_{\R,\pi}}_{\omega^{(1)}_x} \indepsigma \supp \mu^{\F^y_{\R,\pi}}_{\omega^{(2)}_y}$. We write $\omega^{(1)}_x \indepER \omega^{(2)}_y$ if $\omega^{(1)}_x \indepERsigma \omega^{(2)}_y$ for $\sigma=0$. For $\omega^{(1)} = (\omega^{(1)}_x)$ and $\omega^{(2)} = (\omega^{(2)}_x)$, we write $\omega^{(1)} \indepERsigma \omega^{(2)}$ if $\text{Loc}_\pi(\R,\omega^{(1)}) \indepsigma \text{Loc}_\pi(\R,\omega^{(2)})$.
\end{definition}

As in the flat spacetime case, the distinguished family of states $\operationalstate$ of the frame $\R$ naturally defines a \enquote{cut-off scale} for spacelike separation as follows.

\begin{definition}
    \label{def:spacelike resolution of R curved}
    Let $\R$ be a general relativistic QRF, $\pi = (\pi_x : \mathrm{Poin}(\M,g)^x \to \M)$ be a measurable projection, and $\operationalstate \subseteq \framestate$ be convex. The quantity
    \begin{equation}
        \sigma(\operationalstate,\pi) := \inf_{\omega \in \operationalstate} \inf_{z \in M}  \sup \{\abs{g_x(\exp_x^{-1}(y),\exp_x^{-1}(y))} \mid x \indep y \in \supp \mu^{\F^{z}_{\R,\pi}}_{\omega_z}\}
    \end{equation}
    is the \emph{$(\operationalstate,\pi)$-spacelike resolution of $\R$}.
\end{definition}
Any local spacelike separation $\abs{g_x(\exp_x^{-1}(y),\exp_x^{-1}(y))} \geq \sigma(\operationalstate,\pi)$ is, from the \enquote{blurry} point of view of the QRF, potentially causally separated. It is only if the frame can, in principle, perfectly distinguish spacelike-separated points that $\sigma(\operationalstate,\pi)=0$. Def. \ref{def:spacelike resolution of R curved} is the curved spacetime version of the same notion introduced in Def.~\ref{def:spacelike resolution of R} \cite{fedida_foundations_2025}. A pictorial representation of this idea is given in Fig.~\ref{fig:approximate causality}.

\begin{remark}
    Note that this notion is now dependent on the choice of $\pi$. This is not an issue in the flat spacetime case, since there is a canonical $\Poincup$-compatible projection which is simply the trivial projection $\pi : O_+^\uparrow(\mink,\eta) \to \mink$ of the trivial Lorentz frame bundle over Minkowski, which determines a canonical Poincaré-covariant spacetime marginal POVM. There is no such global trivialisation for general curved spacetime, so we now have to specify a choice of such $\pi$.
\end{remark}

\begin{figure}
    \centering
    \begin{subfigure}[t]{0.3\textwidth}
        \begin{tikzpicture}[x=0.75pt,y=0.75pt,yscale=-1,xscale=1]

\draw   (353.6,97.1) .. controls (374.6,89.1) and (383.6,99.1) .. (389.6,104.1) .. controls (395.6,109.1) and (418.93,122.43) .. (423.6,153.1) .. controls (428.27,183.77) and (367.6,221.1) .. (355.67,223.93) .. controls (343.73,226.77) and (314.27,183.77) .. (317.6,153.1) .. controls (320.93,122.43) and (332.6,105.1) .. (353.6,97.1) -- cycle ;
\draw    (260.67,175.67) -- (260.03,137.27) ;
\draw [shift={(260,135.27)}, rotate = 89.05] [color={rgb, 255:red, 0; green, 0; blue, 0 }  ][line width=0.75]    (10.93,-3.29) .. controls (6.95,-1.4) and (3.31,-0.3) .. (0,0) .. controls (3.31,0.3) and (6.95,1.4) .. (10.93,3.29)   ;
\draw    (260.67,175.67) -- (295.33,175.29) ;
\draw [shift={(297.33,175.27)}, rotate = 179.37] [color={rgb, 255:red, 0; green, 0; blue, 0 }  ][line width=0.75]    (10.93,-3.29) .. controls (6.95,-1.4) and (3.31,-0.3) .. (0,0) .. controls (3.31,0.3) and (6.95,1.4) .. (10.93,3.29)   ;
\draw    (320.6,153.1) -- (420.6,153.1) ;
\draw [shift={(423.6,153.1)}, rotate = 180] [fill={rgb, 255:red, 0; green, 0; blue, 0 }  ][line width=0.08]  [draw opacity=0] (8.93,-4.29) -- (0,0) -- (8.93,4.29) -- cycle    ;
\draw [shift={(317.6,153.1)}, rotate = 0] [fill={rgb, 255:red, 0; green, 0; blue, 0 }  ][line width=0.08]  [draw opacity=0] (8.93,-4.29) -- (0,0) -- (8.93,4.29) -- cycle    ;

\draw (256,123.4) node [anchor=north west][inner sep=0.75pt]  [font=\scriptsize]  {$t$};
\draw (300,168.07) node [anchor=north west][inner sep=0.75pt]  [font=\scriptsize]  {$\vec{x}$};
\draw (350.67,158.73) node [anchor=north west][inner sep=0.75pt]  [font=\footnotesize]  {$\sigma ( \operationalstate,\pi)$};
\draw (330,111.4) node [anchor=north west][inner sep=0.75pt]    {$\supp \mu _{\omega_x^\text{min}}^{\F^{x_\text{min}}_{\R,\pi}}$};
\end{tikzpicture}

    \caption{Spacetime support (given the measurable projection $\pi$) of the state preparation $\omega_x^\text{min} \in \operationalstate$ which provides the ``smallest`` spacelike resolution $\sigma(\operationalstate,\pi)$ for the frame $\R$ out of all ``operationally meaningful'' states $\omega \in \operationalstate$.}
    \end{subfigure}
    \qquad
    \begin{subfigure}[t]{0.6\textwidth}
        \begin{tikzpicture}[x=0.75pt,y=0.75pt,yscale=-0.85,xscale=0.85]

\draw   (118,116.27) .. controls (138,106.27) and (144,112.93) .. (156.67,118.93) .. controls (169.33,124.93) and (173.33,135.6) .. (180,160.93) .. controls (186.67,186.27) and (153.33,208.93) .. (126,204.93) .. controls (98.67,200.93) and (98,126.27) .. (118,116.27) -- cycle ;
\draw  [dash pattern={on 0.84pt off 2.51pt}]  (172,189.67) -- (245.33,88.93) ;
\draw   (290.67,118) .. controls (310.67,108) and (352,112.4) .. (380.67,118) .. controls (409.33,123.6) and (418,152.93) .. (380.67,178) .. controls (343.33,203.07) and (310.67,208) .. (290.67,178) .. controls (270.67,148) and (270.67,128) .. (290.67,118) -- cycle ;
\draw  [dash pattern={on 0.84pt off 2.51pt}]  (280,126) -- (225.33,208.27) ;
\draw    (189.6,170.62) -- (247,170.92) ;
\draw [shift={(250,170.93)}, rotate = 180.3] [fill={rgb, 255:red, 0; green, 0; blue, 0 }  ][line width=0.08]  [draw opacity=0] (8.93,-4.29) -- (0,0) -- (8.93,4.29) -- cycle    ;
\draw [shift={(186.6,170.6)}, rotate = 0.3] [fill={rgb, 255:red, 0; green, 0; blue, 0 }  ][line width=0.08]  [draw opacity=0] (8.93,-4.29) -- (0,0) -- (8.93,4.29) -- cycle    ;
\draw    (224.6,145.64) -- (263.67,146.22) ;
\draw [shift={(266.67,146.27)}, rotate = 180.85] [fill={rgb, 255:red, 0; green, 0; blue, 0 }  ][line width=0.08]  [draw opacity=0] (8.93,-4.29) -- (0,0) -- (8.93,4.29) -- cycle    ;
\draw [shift={(221.6,145.6)}, rotate = 0.85] [fill={rgb, 255:red, 0; green, 0; blue, 0 }  ][line width=0.08]  [draw opacity=0] (8.93,-4.29) -- (0,0) -- (8.93,4.29) -- cycle    ;
\draw   (486,84.4) .. controls (506,74.4) and (537.33,89.2) .. (554.67,97.87) .. controls (572,106.53) and (603.33,115.87) .. (576,144.4) .. controls (548.67,172.93) and (506,174.4) .. (486,144.4) .. controls (466,114.4) and (466,94.4) .. (486,84.4) -- cycle ;
\draw  [dash pattern={on 0.84pt off 2.51pt}]  (396.67,164.33) -- (470,63.6) ;
\draw  [dash pattern={on 0.84pt off 2.51pt}]  (402.67,193) -- (476,92.27) ;
\draw    (416.6,142.16) -- (436.33,142.57) ;
\draw [shift={(439.33,142.63)}, rotate = 181.19] [fill={rgb, 255:red, 0; green, 0; blue, 0 }  ][line width=0.08]  [draw opacity=0] (8.93,-4.29) -- (0,0) -- (8.93,4.29) -- cycle    ;
\draw [shift={(413.6,142.1)}, rotate = 1.19] [fill={rgb, 255:red, 0; green, 0; blue, 0 }  ][line width=0.08]  [draw opacity=0] (8.93,-4.29) -- (0,0) -- (8.93,4.29) -- cycle    ;
\draw    (458.47,85.98) -- (497.53,86.56) ;
\draw [shift={(500.53,86.6)}, rotate = 180.85] [fill={rgb, 255:red, 0; green, 0; blue, 0 }  ][line width=0.08]  [draw opacity=0] (8.93,-4.29) -- (0,0) -- (8.93,4.29) -- cycle    ;
\draw [shift={(455.47,85.93)}, rotate = 0.85] [fill={rgb, 255:red, 0; green, 0; blue, 0 }  ][line width=0.08]  [draw opacity=0] (8.93,-4.29) -- (0,0) -- (8.93,4.29) -- cycle    ;

\draw (207.67,175.4) node [anchor=north west][inner sep=0.75pt]  [font=\scriptsize]  {$\sigma _{12}$};
\draw (220.33,129.07) node [anchor=north west][inner sep=0.75pt]  [font=\scriptsize]  {$\sigma ( \operationalstate,\pi)$};
\draw (102.33,148.73) node [anchor=north west][inner sep=0.75pt]  [font=\small]  {$\text{Loc}_\pi(\R,\omega_1)$};
\draw (305.33,136.73) node [anchor=north west][inner sep=0.75pt]  [font=\small]  {$\text{Loc}_\pi(\R,\omega_2)$};
\draw (498.67,113.4) node [anchor=north west][inner sep=0.75pt]  [font=\small]  {$\text{Loc}_\pi(\R,\omega_3)$};
\draw (415.6,145.5) node [anchor=north west][inner sep=0.75pt]  [font=\scriptsize]  {$\sigma _{23}$};
\draw (467.33,69.07) node [anchor=north west][inner sep=0.75pt]  [font=\scriptsize]  {$\sigma ( \operationalstate,\pi)$};

\end{tikzpicture}

    \caption{Examples of $\sigma$-spacelike separations for different state preparations of the frame $\R$ with $(\operationalstate,\pi)$-spacelike resolution $\sigma(\operationalstate,\pi)$. We see that $\sigma_{12} > \sigma(\operationalstate,\pi)$ i.e. the preparations $\omega_1$ and $\omega_2$ are ``operationally spacelike separated'': the frame can resolve this spacelike separation. On the other hand, $\sigma_{23} < \sigma(\operationalstate,\pi)$ i.e. the frame's spacelike resolution is too small to determine whether $\omega_2$ and $\omega_3$ should be understood as yielding causally disconnected supports: relational quantum fields generated by these preparations may not commute.}
    \end{subfigure}
    \caption{Pictorial representations of the frame resolutions and of $\sigma(\operationalstate,\pi)$-spacelike separation of regions on curved spacetime.}
    \label{fig:approximate causality}
\end{figure}

\begin{definition}
    Let $\mathcal{O}_\S \subset \A(\scrhis)$ be a set of measurable operator fields closed under taking adjoints, $\sigma \geq 0$, $\pi = (\pi_x : \mathrm{Poin}(\M,g)^x \to \M)$ be a measurable projection, $\R$ be a general relativistic QRF and $\operationalstate \subset \framestate$ be convex. We say that $\mathcal{O}_\S$ is
    \begin{enumerate}
        \item \emph{absolutely $(\sigma,\pi)$-microcausal} if for all $\phi,\psi \in \mathcal{O}_\S$ and $\mu_g$-almost every $z\in \mathcal{M}$ and $\forall \beta,\gamma \in \mathrm{Poin}(\M,g)^z$,
        \begin{equation}
            \pi_z(\beta) \indepsigma \pi_z(\gamma) \Longrightarrow \comm{V_\beta \phi_{s(\beta)} V_\beta^\dagger}{V_\gamma \psi_{s(\gamma)} V_\gamma^\dagger} = \comm{V_\beta \phi_{s(\beta)}^\dagger V_\beta^\dagger}{V_\gamma \psi_{s(\gamma)} V_\gamma^\dagger} = 0.
        \end{equation}
        \item \emph{strongly $(\operationalstate,\sigma,\pi)$-microcausal} if for all $\phi,\psi \in \mathcal{O}_\S$ and $\mu_g$-almost every $z\in \mathcal{M}$ and every pair $\omega^{(1)} = (\omega^{(1)}_x)$ and $\omega^{(2)} = (\omega^{(2)}_x)$ of frame states in $\operationalstate$ and $\forall x \in \supp \mu^{\F_{\R,\pi}^z}_{\omega^{(1)}_z}$ and $\forall y \in \mu^{\F_{\R,\pi}^z}_{\omega^{(2)}_z}$,
        \begin{equation}
            x \indepsigma y \Longrightarrow\comm{\hat{\phi}^{\R,\pi}_{z,\omega^{(1)}}(x)}{\hat{\psi}^{\R,\pi}_{z,\omega^{(2)}}(y)} = \comm{\hat{\phi}^{\R,\pi}_{z,\omega^{(1)}}(x)^\dagger}{\hat{\psi}^{\R,\pi}_{z,\omega^{(2)}}(y)} = 0,
        \end{equation}
        \item \emph{weakly $(\operationalstate,\sigma,\pi)$-microcausal} if for all $\phi,\psi \in \mathcal{O}_\S$ and every pair $\omega^{(1)} = (\omega^{(1)}_x)$ and $\omega^{(2)} = (\omega^{(2)}_x)$ of frame states in $\operationalstate$ and $\mu_g$-almost every $z \in M$,
        \begin{multline}
            \omega^{(1)}_z \indepERsigma \omega^{(2)}_z \Longrightarrow \comm{\hat{\phi}^{\R,\pi}_{z,\omega^{(1)}}(x)}{\hat{\psi}^{\R,\pi}_{z,\omega^{(2)}}(y)} = \comm{\hat{\phi}^{\R,\pi}_{z,\omega^{(1)}}(x)^\dagger}{\hat{\psi}^{\R,\pi}_{z,\omega^{(2)}}(y)} = 0 \\ \forall x \in \supp \mu^{\F_{\R,\pi}^{z}}_{\omega_z^{(1)}}, y \in \supp \mu^{\F_{\R,\pi}^{z}}_{\omega_z^{(2)}}.
        \end{multline}
        \item \emph{strongly semi $(\operationalstate,\sigma)$-causal} if for all $\phi,\psi \in \mathcal{O}_\S$ and every arrow $\alpha:y\to x$ for $\mu_g$-almost every $x, y \in \M$ such that $x \indepsigma y$ and every pair $\omega^{(1)} = (\omega^{(1)}_z)$ and $\omega^{(2)} = (\omega^{(2)}_z)$ of frame states in $\operationalstate$,
        \begin{equation}
            \comm{\hat{\Phi}^\R_x(\omega^{(1)})}{V_\alpha \hat{\Psi}^\R_y(\omega^{(2)}) V_\alpha^\dagger} = \comm{\hat{\Phi}^\R_x(\omega^{(1)})^\dagger}{V_\alpha \hat{\Psi}^\R_y(\omega^{(2)}) V_\alpha^\dagger} = 0.
        \end{equation}
        \item \emph{weakly semi $(\operationalstate,\sigma,\pi)$-causal} if for all $\phi,\psi \in \mathcal{O}_\S$ and every arrow $\alpha:y\to x$ for $\mu_g$-almost every $x, y \in \M$ such that $x \indepsigma y$ and every pair $\omega^{(1)} = (\omega^{(1)}_z)$ and $\omega^{(2)} = (\omega^{(2)}_z)$ of frame states in $\operationalstate$,
        \begin{equation}
            \omega^{(1)}_x \indepERsigma \omega^{(2)}_y \Longrightarrow \comm{\hat{\Phi}^\R_x(\omega^{(1)})}{V_\alpha \hat{\Psi}^\R_y(\omega^{(2)}) V_\alpha^\dagger} = \comm{\hat{\Phi}^\R_x(\omega^{(1)})^\dagger}{V_\alpha \hat{\Psi}^\R_y(\omega^{(2)}) V_\alpha^\dagger} = 0.
        \end{equation}
        \item $(\operationalstate,\sigma,\pi)$-causal if for all $\phi,\psi \in \mathcal{O}_\S$ and every pair $\omega^{(1)} = (\omega^{(1)}_x)$ and $\omega^{(2)} = (\omega^{(2)}_x)$ of frame states in $\operationalstate$,
        \begin{equation}
            \text{Loc}_\pi(\R,\omega^{(1)}) \indepsigma \text{Loc}_\pi(\R,\omega^{(2)}) \Longrightarrow \comm{\hat{\Phi}^\R(\omega^{(1)})}{\hat{\Psi}^\R(\omega^{(2)})} = \comm{\hat{\Phi}^\R(\omega^{(1)})^\dagger}{\hat{\Psi}^\R(\omega^{(2)})} = 0
        \end{equation}
    \end{enumerate}
\end{definition}

\begin{figure*}
    \centering

    \tikzset{every picture/.style={line width=0.75pt}} 

\begin{tikzpicture}[x=0.75pt,y=0.75pt,yscale=-0.9,xscale=0.9]

\draw    (322.6,202.8) -- (368.6,202.8)(322.6,205.8) -- (368.6,205.8) ;
\draw [shift={(376.6,204.3)}, rotate = 180] [color={rgb, 255:red, 0; green, 0; blue, 0 }  ][line width=0.75]    (10.93,-3.29) .. controls (6.95,-1.4) and (3.31,-0.3) .. (0,0) .. controls (3.31,0.3) and (6.95,1.4) .. (10.93,3.29)   ;
\draw    (506.6,203.8) -- (552.6,203.8)(506.6,206.8) -- (552.6,206.8) ;
\draw [shift={(560.6,205.3)}, rotate = 180] [color={rgb, 255:red, 0; green, 0; blue, 0 }  ][line width=0.75]    (10.93,-3.29) .. controls (6.95,-1.4) and (3.31,-0.3) .. (0,0) .. controls (3.31,0.3) and (6.95,1.4) .. (10.93,3.29)   ;
\draw    (135.6,202.8) -- (181.6,202.8)(135.6,205.8) -- (181.6,205.8) ;
\draw [shift={(189.6,204.3)}, rotate = 180] [color={rgb, 255:red, 0; green, 0; blue, 0 }  ][line width=0.75]    (10.93,-3.29) .. controls (6.95,-1.4) and (3.31,-0.3) .. (0,0) .. controls (3.31,0.3) and (6.95,1.4) .. (10.93,3.29)   ;
\draw    (320.6,105.8) -- (366.6,105.8)(320.6,108.8) -- (366.6,108.8) ;
\draw [shift={(374.6,107.3)}, rotate = 180] [color={rgb, 255:red, 0; green, 0; blue, 0 }  ][line width=0.75]    (10.93,-3.29) .. controls (6.95,-1.4) and (3.31,-0.3) .. (0,0) .. controls (3.31,0.3) and (6.95,1.4) .. (10.93,3.29)   ;

\draw    (377.18,178) -- (506.18,178) -- (506.18,232) -- (377.18,232) -- cycle  ;
\draw (441.68,205) node   [align=left] {\begin{minipage}[lt]{85.24pt}\setlength\topsep{0pt}
\begin{center}
Weak\\microcausality
\end{center}

\end{minipage}};
\draw    (561.18,178) -- (690.18,178) -- (690.18,232) -- (561.18,232) -- cycle  ;
\draw (625.68,205) node   [align=left] {\begin{minipage}[lt]{85.24pt}\setlength\topsep{0pt}
\begin{center}
Einstein\\causality\\
\end{center}

\end{minipage}};
\draw    (7.18,178) -- (136.18,178) -- (136.18,232) -- (7.18,232) -- cycle  ;
\draw (71.68,205) node   [align=left] {\begin{minipage}[lt]{85.24pt}\setlength\topsep{0pt}
\begin{center}
Absolute \\microcausality
\end{center}

\end{minipage}};
\draw    (192.18,179) -- (321.18,179) -- (321.18,233) -- (192.18,233) -- cycle  ;
\draw (256.68,206) node   [align=left] {\begin{minipage}[lt]{85.24pt}\setlength\topsep{0pt}
\begin{center}
Strong\\microcausality\\
\end{center}

\end{minipage}};
\draw    (375.18,81) -- (504.18,81) -- (504.18,135) -- (375.18,135) -- cycle  ;
\draw (439.68,108) node   [align=left] {\begin{minipage}[lt]{85.24pt}\setlength\topsep{0pt}
\begin{center}
Weak semi\\causality
\end{center}

\end{minipage}};
\draw    (190.18,82) -- (319.18,82) -- (319.18,136) -- (190.18,136) -- cycle  ;
\draw (254.68,109) node   [align=left] {\begin{minipage}[lt]{85.24pt}\setlength\topsep{0pt}
\begin{center}
Strong semi\\causality\\
\end{center}

\end{minipage}};

\end{tikzpicture}
    \caption{The relationship between various notions of absolute and relational causality conditions. Strong semi $(\operationalstate,\sigma)$-causality implies weak semi $(\operationalstate,\sigma,\pi)$-causality. Absolute $(\sigma,\pi)$-microcausality implies strong $(\operationalstate,\sigma,s)$-microcausality. Strong $(\operationalstate,\sigma,\pi)$-microcausality implies weak $(\operationalstate,\sigma,\pi)$-microcausality, which implies $(\operationalstate,\sigma,\pi)$-causality. }
    \label{fig:relations between causalities}
\end{figure*}

\begin{remark}
\label{rem:groupoid-microcausality}
In the group-based setting, where all operators act on a single Hilbert space, the absolute microcausality condition with $\sigma=0$ reduces to the familiar requirement $\comm{\hat{\phi}_{\lambda_1}(x_1)}{\hat{\phi}_{\lambda_2}(x_2)}=0$ for spacelike-separated points. On a curved spacetime, however, the field operators live in different fiber algebras, so their commutator is not directly defined. The groupoid microcausality condition transports them to a common fiber before imposing commutativity. Note however that, in the flat spacetime setting, Wightman's theorem \cite{wightman_fields_1965,fedida_no-go_2026} prevents exact ($\sigma=0$) absolute pointwise microcausality to hold under mild conditions (the spectrum condition and the existence of a pure vacuum).
\end{remark}

\begin{theorem}
    \label{thm:causality conditions}
    Let $\mathcal{O}_\S \subseteq \A(\scrhis)$ be a set of measurable operator fields closed under taking adjoints, $\sigma \geq 0$, $\R$ be a general relativistic QRF, $\pi = (\pi_x : \mathrm{Poin}(\M,g)^x \to \M)$ be a measurable projection, and $\operationalstate \subset \framestate$ be convex. Then
    \begin{enumerate}
        \item if $\mathcal{O}_\S$ is absolutely $(\sigma,\pi)$-microcausal, then it is strongly $(\operationalstate,\sigma,\pi)$-microcausal (for any such $\operationalstate$),
        \item if $\mathcal{O}_\S$ is strongly $(\operationalstate,\sigma,\pi)$-microcausal, then it is weakly $(\operationalstate,\sigma,\pi)$-microcausal,
        \item if $\mathcal{O}_\S$ is strongly semi $(\operationalstate,\sigma)$-causal, then it is weakly semi $(\operationalstate,\sigma, \pi)$-causal (for any such $\pi$),
        \item if $\mathcal{O}_\S$ is weakly $(\operationalstate,\sigma,\pi)$-microcausal, then it is $(\operationalstate,\sigma,\pi)$-causal,
    \end{enumerate}
\end{theorem}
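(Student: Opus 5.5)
I would prove the four implications separately, each by unwinding the definitions. All of them follow the same pattern: commutation of operator-valued integrals whose integrands commute pointwise.

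\textbf{Item 1.} Fix $z$, $\omega^{(1)},\omega^{(2)}\in\operationalstate$, and points $x\in\supp\mu^{\F^z_{\R,\pi}}_{\omega^{(1)}_z}$, $y\in\supp\mu^{\F^z_{\R,\pi}}_{\omega^{(2)}_z}$ with $x\indepsigma y$. By Lem.~\ref{lem:disintegration exists Lie QRF}, the conditional measures $\nu^{\R,\pi}_{\omega^{(1)}_z}(\cdot\mid x)$ and $\nu^{\R,\pi}_{\omega^{(2)}_z}(\cdot\mid y)$ are concentrated on $\pi_z^{-1}(\{x\})$ and $\pi_z^{-1}(\{y\})$. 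So every pair $\beta,\gamma$ in their supports satisfies $\pi_z(\beta)=x\indepsigma y=\pi_z(\gamma)$. Absolute microcausality then gives
\[
\comm{V_\beta\phi_{s(\beta)}V_\beta^\dagger}{V_\gamma\psi_{s(\gamma)}V_\gamma^\dagger}=0
\]
pointwise. The commutator of the two kernels $\hat\phi^{\R,\pi}_{z,\omega^{(1)}}(x)$ and $\hat\psi^{\R,\pi}_{z,\omega^{(2)}}(y)$ equals the double integral of this pointwise commutator against the product measure $\nu\otimes\nu$. I justify this using the ultraweak definition of the Bochner-type integrals of bounded, uniformly norm-bounded integrands together with Fubini. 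Hence the commutator vanishes, and the adjoint version follows identically.

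\textbf{Item 2.} The hypothesis $\omega^{(1)}_z\indepERsigma\omega^{(2)}_z$ means the two supports are $\sigma$-locally spacelike separated. This forces $x\indepsigma y$ for every $x$ and $y$ in the respective supports, so strong microcausality applies directly.

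\textbf{Item 3.} This is immediate, since weak semi-causality only weakens the hypothesis by adding the premise $\omega^{(1)}_x\indepERsigma\omega^{(2)}_y$.

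\textbf{Item 4.} This is the main step. Assume $\text{Loc}_\pi(\R,\omega^{(1)})\indepsigma\text{Loc}_\pi(\R,\omega^{(2)})$. Since $\text{Loc}_\pi$ is the essential union of the fibrewise supports, for $\mu_g$-a.e.\ $z$ the supports $\supp\mu^{\F^z_{\R,\pi}}_{\omega^{(i)}_z}$ lie, up to null sets, inside the respective $\text{Loc}_\pi$. It follows that $\omega^{(1)}_z\indepERsigma\omega^{(2)}_z$, because $\sigma$-local spacelike separation is inherited by subsets: both the sup over $CNN$ and the causal-complement condition are monotone. Weak microcausality then makes the kernels commute for all $x,y$ in the supports.

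Next I write the disintegration identity \eqref{eqn:quantum field disintegration curved}:
\[
\hat\Phi^\R_z(\omega^{(1)})=\int\hat\phi^{\R,\pi}_{z,\omega^{(1)}}(x)\,d\mu^{\F^z_{\R,\pi}}_{\omega^{(1)}_z}(x),
\]
and similarly for $\hat\Psi^\R_z(\omega^{(2)})$. The commutator of these two integrals is then the product-measure integral of the vanishing kernel commutators, again by Fubini in the ultraweak sense. Since $\hat\Phi^\R(\omega)$ is the direct integral over $z$, and the product and commutator of decomposable operators are computed fibrewise, the commutator vanishes $\mu_g$-a.e., hence in $\A(\scrhis)$.

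\textbf{Main obstacle.} The delicate point is the measure-theoretic bookkeeping. The kernels are only defined $\mu^{\F}$-almost everywhere, while the definitions quantify over all support points. I would handle this by choosing representatives and noting that the commutation only needs to hold on a full-measure set. The ``essential support'' step in item 4 also needs care to ensure the null sets in $z$ are harmless.
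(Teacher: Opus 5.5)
Your proposal is correct and follows the paper's proof essentially step for step: pointwise commutation of the integrands on $\pi_z^{-1}(\{x\})\times\pi_z^{-1}(\{y\})$ plus Fubini for items 1 and 4, and monotonicity of $\indepsigma$ under passage to subsets (resp.\ simply dropping a hypothesis) for items 2 and 3. Your item 4 derives the fibrewise separation $\omega^{(1)}_z\indepERsigma\omega^{(2)}_z$ from $\text{Loc}_\pi(\R,\omega^{(1)})\indepsigma\text{Loc}_\pi(\R,\omega^{(2)})$ explicitly, a step the paper only uses implicitly.
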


\begin{proof}
    See App. \ref{app:proof causality}.
\end{proof}

We can therefore talk about causality in relational quantum field theory in curved spacetime. We keep as future work the task of constructing explicit nontrivial examples of causal relational quantum fields, as was done in Minkowski spacetime in \cite{fedida_foundations_2025} using similar notions of causality.

\begin{remark}
It is an interesting open question to understand whether given two $\mathscr{B}$-compatible measurable projections $\pi_1$ and $\pi_2$ for $\mathscr{B} \leq \mathrm{Bis}(\mathrm{Poin}(\M,g))$, $\mathcal{O}_\S \subset \A(\scrhis)$ is $(\operationalstate,\sigma,\pi_1)$-causal (resp. weakly microcausal, weakly semi-causal) if and only if $\mathcal{O}_\S$ is $(\operationalstate,\sigma,\pi_2)$-causal (resp. weakly microcausal, weakly semi-causal). That is, whether the notion of causality is stable under changes of notions of the frame's localization in spacetime. It would also be interesting to understand the links between semi-causality and the other notions of causality; in particular, we conjecture that if $\R$ is localizable and there exists localizing sequences centred around the identity in $\operationalstate$, then strong semi causality implies absolute microcausality, and Einstein causality becomes equivalent to weak microcausality. We keep such considerations for future work.
\end{remark}

\subsection{From curved to Minkowski spacetime}
\label{subsec:curved-to-flat-reduction}

Suppose now that $(\mathcal{M},g) = (\mink,\eta)$, i.e. $O_+^\uparrow(\mathcal{M},g) = O_+^\uparrow(\mink,\eta) \stackrel{q}{\longrightarrow} \mink$ is the torsor of oriented inertial reference frames on which the proper orthochronous Poincaré group $G=\Poincup$ acts freely and transitively by $(a,\Lambda)\cdot(q,\lambda) =(\Lambda q+a,\Lambda\lambda)$. Choose $o=(0,e) \in O_+^\uparrow(\mink,\eta)$. The right $SO_+^\uparrow(1,d-1)$-action induced by this choice is the usual principal action $(q,\lambda)\cdot \Lambda=(q,\lambda \Lambda)$, so $O_+^\uparrow(\mink,\eta)/SO_+^\uparrow(1,d-1)\cong\mink$. Consequently,
\begin{equation}
 \At(O_+^\uparrow(\mink,\eta)\to\mink)
 \cong \Poincup\ltimes\mink
 \cong \mathrm{Poin}(\mink,\eta).
\end{equation}
Write an arrow of $\mathrm{Poin}(\mink,\eta)$ as $(z,\Lambda,y)$, with source $y$ and target $z$, and corresponding Poincare transformation is $(z-\Lambda y,\Lambda)$. The target-fiber identification is
\begin{equation}
 \Pi_z^o(z,\Lambda,y)=(z-\Lambda y,\Lambda)\in O_+^\uparrow(\mink,\eta).
\end{equation}
Let $\R$ be a relativistic QRF on $O_+^\uparrow(\mink,\eta)$ and $\underline\R$ the associated general relativistic action groupoid QRF (see Sec.~\ref{subsec:reduction to group qrfs}). Its source marginal is
\begin{equation}
 \F_{\underline\R,s}^{z}(N)
 =\E_\R\!\left(\{(a,\Lambda)\mid
                 \Lambda^{-1}(z-a)\in N\}\right),
\end{equation}
which is generally different from the standard frame-bundle marginal $\F_\R(N)=\E_\R(N\times\Lup)$. The relevant projection on the groupoid target fibres is instead
\begin{equation}
\label{eq:pi-flat-standard}
    \pi_z^{\mathrm{std}} := q\circ\Pi_z^o, \qquad
    \pi_z^{\mathrm{std}}(z,\Lambda,y) = z-\Lambda y.
\end{equation}
It follows directly that
\begin{equation}
\label{eq:flat-marginals-identical}
    \F_{\underline{\R},\pi^{\mathrm{std}}}^{z}(N) =    \E_\R(N\times \Lup) = \F_\R(N)
\end{equation}
for $d^dx$-almost every $z\in\mink$ and every $N\in\Bor(\mink)$. The family $\pi^{\mathrm{std}}$ is compatible with the canonical Poincaré subgroup of bisections. Indeed, if
\begin{equation}
    b_{(a,\Lambda)}(z):=(\Lambda z+a,\Lambda,z),
\end{equation}
then $\varphi_{b_{(a,\Lambda)}}(z)=(a,\Lambda)\cdot z$ and
\begin{equation}
    \pi_{(a,\Lambda) \cdot z}^{\mathrm{std}}    \big(L_{b_{(a,\Lambda)}(z)}\beta\big) =   (a,\Lambda) \cdot\pi_z^{\mathrm{std}}(\beta).
\end{equation}
The adjective ``standard'' is relative to the chosen torsor origin $o$:
the bundle projection $q$ is intrinsic, whereas its transfer to every target
fibre through $\Pi_z^o$ uses that choice.

As was seen in Sec.~\ref{subsubsec:groupoid QRFs on torsors}, action groupoid QRFs and relativization on torsors reduce to torsor QRFs and standard torsor relativization when one considers constant fields of POVMs across fibers and constant operator fields of system observables. In such cases, where we build an action groupoid QRF $\underline{\R}$ from a torsor QRF $\R$ as in Thm.~\ref{thm:torsor reduction}, we indeed have
\begin{equation}
    \label{eqn:correspondance relativistic rel with gr rel}\hat{\underline{\Phi}}^{\underline{\R}}(\underline{\omega}) = \int^\oplus_\mink \hat{\Phi}^\R(\omega) \, d^dx
\end{equation}
where $\underline{\omega} = \iota_\R(\omega)$ and $\underline{\phi} = \iota_\S(\phi)$ with $\phi \in \bhs$, and since the volume measure on Minkowski is just $d^dx$. Equivalently, for $d^dx$-almost every $x \in \mink$,
\begin{equation}
    \label{eqn:correspondance relativistic rel with gr rel pointwise}\hat{\underline{\Phi}}^{\underline{\R}}(\underline{\omega})_x = \hat{\Phi}^\R(\omega).
\end{equation}
We can further disintegrate the flat Born measure as
\begin{equation}
 d\mu_\omega^{\E_\R}(a,\lambda)
   =d\nu_\omega^\R(\lambda\mid a)\,
      d\mu_\omega^{\F_\R}(a).
\end{equation}
For fixed $z$ and $x$, define
\begin{equation}
    j_{z,x}: \Lup\to
    (\pi_z^{\mathrm{std}})^{-1}(\{x\}),
    \qquad
    j_{z,x}(\Lambda)
    =
    \bigl(z,\Lambda,\Lambda^{-1}(z-x)\bigr).
\end{equation}
Compatible versions of the conditional measures for the groupoid disintegration are
\begin{equation}
    \nu^{\underline{\R},\pi^{\mathrm{std}}}_{\underline{\omega}_z}
    (\cdot\mid x)
    =
    (j_{z,x})_*
    \nu^\R_\omega(\cdot\mid x)
\end{equation}
for $\mu^{\F_\R}_\omega$-almost every $x$. Consequently, for a general operator field $A=(A_y) \in \A(\scrhis)$, its disintegrated kernel is
\begin{equation}
\label{eq:flat-local-kernel-general}
    \hat{A}^{\underline{\R},\pi^{\mathrm{std}}}
    _{z,\underline{\omega}}(x) = \int_{\Lup}    U_\S(x,\Lambda) A_{\Lambda^{-1}(z-x)}  U_\S(x,\Lambda)^\dagger d\nu^{\E_\R}_\omega(\Lambda\mid x).
\end{equation}
so that for a constant operator field $\underline{\phi} = \iota_\S(\phi)$, we have that for $d^dx$-almost every $z \in \mink$ and $\mu^{\F_\R}_{\omega}$-almost every $x \in \mink$,
\begin{equation}
    \hat{\underline{\phi}}^{\underline{\R},\pi^\text{std}}_{z,\underline{\omega}}(x) = \int_{SO_+^\uparrow(1,d-1)} U_\S(x,\lambda) \phi U_\S(x,\lambda)^\dagger \, d\nu^{\E_\R}_{\omega}(\lambda \mid x) = \hat{\phi}^\R_\omega(x).
\end{equation}
Thus, for such constant operator seeds $\underline{\phi} \in \A(\scrhis)$, the groupoidal relational local quantum field is independent of the target label $z$ and coincides with the flat spacetime relational local quantum field.

As seen in Sec.~\ref{subsubsec:groupoid QRFs on torsors}, we have
\begin{equation}
    \hat{\underline{\Phi}}^{\underline{\R}}(\underline{\omega}) = J_\S(\mathbb{1}_{L^2(\mink)} \otimes \hat{\Phi}^\R(\omega)) J_\S^\dagger
\end{equation}
where 
\begin{align}
    J_\S : L^2(\mink) \otimes \his &\to L^2(\mink;\his) \\
    J_\S(f \otimes \psi)(x) &= f(x) \psi.
\end{align}

We finally discuss the reduction of the causality conditions. Since Minkowski spacetime itself is a geodesically convex normal neighbourhood, the
curved-spacetime relation $x\indepsigma y$ becomes $x \perp_\sigma y \Leftrightarrow (x-y)^2 < - \sigma$. Let $\mathcal{O}_\S\subseteq\bhs$ be closed under adjoints, $\operationalstate \subseteq \homframestate$ be convex, and define
\begin{equation}
    \underline{\mathcal{O}}_\S :=    \iota_\S(\mathcal{O}_\S),  \qquad
    \underline{\operationalstate} := \left\{    \iota_\R(\omega) \mid \omega\in\operationalstate    \right\}.
\end{equation}
In the constant-field sector and with $\pi=\pi^{\mathrm{std}}$, the strong and weak microcausality conditions, as well as the Einstein causality and absolute microcausality conditions, reduce to the corresponding conditions of Sec.~\ref{sec:RQFT Minkowski}. 

\begin{proposition}
\label{prop:flat causality reduction}
Let $\R$ be a relativistic QRF and $\underline{\R}$ be the general relativistic QRF on $O_+^\uparrow(\mink,\eta)$ induced by $\R$. The following equivalences hold:
\begin{enumerate}
    \item $\underline{\mathcal{O}}_\S$ is absolutely
    $(\sigma,\pi^{\mathrm{std}})$-microcausal if and only if
    $\mathcal{O}_\S$ is absolutely $\sigma$-microcausal in the sense of
    Sec.~\ref{sec:RQFT Minkowski};
    \item $\underline{\mathcal{O}}_\S$ is strongly
    $(\underline{\operationalstate},\sigma,\pi^{\mathrm{std}})$-microcausal
    if and only if $\mathcal{O}_\S$ is strongly
    $(\operationalstate,\sigma)$-microcausal;
    \item $\underline{\mathcal{O}}_\S$ is weakly
    $(\underline{\operationalstate},\sigma,\pi^{\mathrm{std}})$-microcausal
    if and only if $\mathcal{O}_\S$ is weakly
    $(\operationalstate,\sigma)$-microcausal;
    \item $\underline{\mathcal{O}}_\S$ is
    $(\underline{\operationalstate},\sigma,\pi^{\mathrm{std}})$-causal
    if and only if $\mathcal{O}_\S$ is
    $(\operationalstate,\sigma)$-causal.
\end{enumerate}
\end{proposition}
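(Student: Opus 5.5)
The plan is to reduce each of the four groupoid conditions to its flat counterpart using the dictionary established just before the statement. That dictionary has four entries. First, target fibres are identified with $O_+^\uparrow(\mink,\eta)$ via $\Pi^o_z$. Second, \eqref{eq:flat-marginals-identical} gives $\F^z_{\underline\R,\pi^{\mathrm{std}}}=\F_\R$ for every $z$. Third, the conditional measures are the pushforwards $(j_{z,x})_*\nu^\R_\omega(\cdot\mid x)$. Fourth, for constant seeds $\underline\phi=\iota_\S(\phi)$, the local kernel $\hat{\underline\phi}^{\underline\R,\pi^{\mathrm{std}}}_{z,\underline\omega}(x)=\hat\phi^\R_\omega(x)$ does not depend on $z$. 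I would also record that Minkowski spacetime is a single convex normal neighbourhood, so $x\indepsigma y$ iff $(x-y)^2<-\sigma$, i.e. $x\perp_\sigma y$. Since $\underline\omega_z=\omega$ for all $z$, one has $\supp\mu^{\F^z_{\underline\R,\pi^{\mathrm{std}}}}_{\underline\omega_z}=\text{Loc}(\R,\omega)$ for every $z$, and hence $\text{Loc}_{\pi^{\mathrm{std}}}(\underline\R,\underline\omega)=\text{Loc}(\R,\omega)$. With these identifications in hand, the four items become essentially syntactic, and I treat them in the order 2, 3, 4, 1.

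For items 2 and 3, both sides quantify over the same data after substitution: pairs $\omega^{(1)},\omega^{(2)}\in\operationalstate$, supports equal to $\text{Loc}(\R,\omega^{(i)})$, and kernels equal to $\hat\phi^\R_{\omega^{(i)}}$. The extra quantifier ``$\mu_g$-almost every $z$'' is harmless because nothing depends on $z$, and $\mu_g(\mink)>0$. Item 3 additionally uses that $\underline\omega^{(1)}_z\indepERsigma\underline\omega^{(2)}_z$ is equivalent to $\text{Loc}(\R,\omega^{(1)})\perp_\sigma\text{Loc}(\R,\omega^{(2)})$. I would comment on one subtlety: the kernels are only defined almost everywhere. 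I would fix compatible representatives via the pushforward formula, as the paper implicitly does.

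For item 4, \eqref{eqn:correspondance relativistic rel with gr rel pointwise} gives $\hat{\underline\Phi}^{\underline\R}(\underline\omega)=J_\S(\mathbb 1\otimes\hat\Phi^\R(\omega))J_\S^\dagger$. Commutators therefore satisfy $[\hat{\underline\Phi}(\underline\omega^{(1)}),\hat{\underline\Psi}(\underline\omega^{(2)})]=J_\S(\mathbb 1\otimes[\hat\Phi^\R(\omega^{(1)}),\hat\Psi^\R(\omega^{(2)})])J_\S^\dagger$. This vanishes iff the flat commutator does, since $\mathbb 1\otimes C=0\Leftrightarrow C=0$. The same argument applies to the adjoint commutators. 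Combined with the identification of $\text{Loc}$ above, item 4 follows.

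Item 1 is the main obstacle. The groupoid condition quantifies over arrows $\beta,\gamma\in\Gamma^z$ with $V_\beta\phi V_\beta^\dagger=U_\S(\kappa(\beta))\phi U_\S(\kappa(\beta))^\dagger$. Via $\Pi^o_z$, the element $\kappa(\beta)$ ranges over all of $\Poincup$, written $(x,\lambda)$ with $\pi^{\mathrm{std}}_z(\beta)=x$. The transported operator is then exactly $\hat\phi_\lambda(x)$. The step needing care is to check that $\kappa(\beta)$ agrees with the Poincaré element $(z-\Lambda y,\Lambda)=\Pi^o_z(\beta)$ under the choice $o=(0,e)$. Given that agreement, the map $\beta\mapsto(\pi_z^{\mathrm{std}}(\beta),\Lambda)$ is a bijection $\Gamma^z\to\mink\times\Lup$, and both conditions become the same statement about all pairs $(x,\lambda_1),(y,\lambda_2)$ with $x\perp_\sigma y$. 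The almost-every-$z$ qualifier is again vacuous.
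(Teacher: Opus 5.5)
Your proposal is correct and follows essentially the same route as the paper: item 1 via the bijection between the target fibre $\mathrm{Poin}(\mink,\eta)^z$ and $O_+^\uparrow(\mink,\eta)$ (the paper parametrizes it by $j_{z,\cdot}$, you by its inverse $\beta\mapsto\Pi^o_z(\beta)=\kappa(\beta)\cdot o$, and the two agree for $o=(0,e)$); items 2 and 3 from the previously established equality of marginals, conditional measures, kernels and localizations; and item 4 from conjugation by $J_\S$. Your explicit treatment of the $\mu_g$-almost-every-$z$ quantifier and of the choice of kernel representatives only spells out points the paper leaves implicit.
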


\begin{proof}
For $\beta=j_{z,x}(\Lambda)$ and
$\gamma=j_{z,y}(\lambda)$, one has
\begin{align}
    \pi_z^{\mathrm{std}}(\beta)&=x, \qquad  \pi_z^{\mathrm{std}}(\gamma)=y, \\  
    V_\beta\underline{\phi}_{s(\beta)}V_\beta^\dagger &= U_\S(x,\Lambda)\phi U_\S(x,\Lambda)^\dagger = \hat{\phi}_\Lambda(x), \\    V_\gamma\underline{\psi}_{s(\gamma)}V_\gamma^\dagger &= U_\S(y,\lambda)\psi U_\S(y,\lambda)^\dagger = \hat{\psi}_\lambda(y).
\end{align}
Since $j_{z,\cdot}: O_+^\uparrow(\mink,\eta)\to\mathrm{Poin}(\mink,\eta)^z$ is a bijection, the groupoid absolute microcausality condition quantifies over exactly the same $(x,\Lambda)$ and $(y,\lambda)$ as the flat absolute condition. This proves the first equivalence. The equality of the local kernels, the marginals, and the localizations was proved above, from which the second and third equivalences straightforwardly follow.

For the last equivalence, we have that 
\begin{equation}
    J_\S^\dagger \comm{\hat{\underline{\Phi}}^{\underline{\R}}    (\underline{\omega_1})}{    \hat{\underline{\Psi}}^{\underline{\R}}    (\underline{\omega_2})} J_\S =  \comm{J_\S^\dagger\hat{\underline{\Phi}}^{\underline{\R}}    (\underline{\omega_1})J_\S}{ J_\S^\dagger   \hat{\underline{\Psi}}^{\underline{\R}}    (\underline{\omega_2})J_\S}  = \mathbb{1}_{L^2(\mink)} \otimes
    \comm{\hat{\Phi}^{\R}(\omega_1)}{    \hat{\Psi}^{\R}(\omega_2)},
\end{equation}
and likewise for the adjoint commutator. Together with Eqns.~\eqref{eqn:correspondance relativistic rel with gr rel} and \eqref{eqn:correspondance relativistic rel with gr rel pointwise}, this proves the equivalence of
the Einstein causality conditions.
\end{proof}

\begin{corollary}
    Let $\R$ be a relativistic QRF and $\underline{\R}$ be the general relativistic QRF on $O_+^\uparrow(\mink,\eta)$ induced by $\R$. Let $K$ be any compact subset of $O_+^\uparrow(\mink,\eta)$. Then $\forall \sigma > 0$, $\exists f_{\sigma,K} \in C^\infty_c(\mink)$ such that $\iota_\S(W_{\hat{\phi}}(f_{\sigma,K}))$ is weakly $(\underline{\operationalstate^K},\sigma,\pi^{\mathrm{std}})$-microcausal.
\end{corollary}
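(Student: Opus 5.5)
The corollary is obtained by transferring Example~\ref{ex:example causal rqf} into the groupoid setting through Proposition~\ref{prop:flat causality reduction}, item 3. First we fix the data in flat spacetime. With $\operationalstate^K = \{\omega\in\homframestate \mid \supp\mu^{\E_\R}_\omega\subset K\}$, Example~\ref{ex:example causal rqf} gives, for every $\sigma>0$, a test function $f_{\sigma,K}\in C^\infty_c(\mink)$ such that the singleton $\{W_{\hat\phi}(f_{\sigma,K})\}$ is weakly $(\operationalstate^K,\sigma)$-microcausal. Closure under adjoints is automatic, because $W_{\hat\phi}(f)^\dagger = W_{\hat\phi}(-f)$ for real $f$. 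If one prefers, $\mathcal{O}_\S$ can be taken to be $\{W_{\hat\phi}(\pm f_{\sigma,K})\}$, and the adjoint commutators appearing in the definition already account for this.

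Next we lift to the groupoid. We set $\underline{\mathcal{O}}_\S = \iota_\S(\mathcal{O}_\S)$, which is still closed under adjoints since $\iota_\S$ is a $*$-map on constant fields. We set $\underline{\operationalstate^K} = \{\iota_\R(\omega)\mid \omega\in\operationalstate^K\}$, which is convex because $\iota_\R$ is affine. Proposition~\ref{prop:flat causality reduction}, item 3, says that $\underline{\mathcal{O}}_\S$ is weakly $(\underline{\operationalstate^K},\sigma,\pi^{\mathrm{std}})$-microcausal if and only if $\mathcal{O}_\S$ is weakly $(\operationalstate^K,\sigma)$-microcausal. The flat side holds by the previous paragraph, so the groupoid side follows.

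The main point to check is that the hypotheses of Proposition~\ref{prop:flat causality reduction} are really met. In particular, $\underline{\R}$ must be the QRF built from $\R$ via Theorem~\ref{thm:torsor reduction}; this supplies $\mu_g$-admissibility, using Theorem~\ref{thm:covariance implies mu continuity}. We must also confirm that the identifications of marginals \eqref{eq:flat-marginals-identical} and of local kernels used in that proof hold for every $z$. This requires that the supports $\supp\mu^{\F^z_{\underline\R,\pi^{\mathrm{std}}}}_{\underline\omega_z}$ coincide with $\text{Loc}(\R,\omega)$ for $d^dx$-almost every $z$, so that the premise $\omega^{(1)}_z \indepERsigma \omega^{(2)}_z$ reduces exactly to $\text{Loc}(\R,\omega_1)\perp_\sigma\text{Loc}(\R,\omega_2)$. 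This follows because $\underline\omega_z=\omega$ is constant and $\F^z_{\underline\R,\pi^{\mathrm{std}}}=\F_\R$. Finally, $\indepsigma$ coincides with $\perp_\sigma$ on Minkowski spacetime, since $\mink$ is itself a single convex normal neighbourhood.
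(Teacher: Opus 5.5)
Your argument is correct and follows the paper's proof exactly: the paper also obtains the corollary by feeding Example~\ref{ex:example causal rqf} into item~3 of Proposition~\ref{prop:flat causality reduction}, and your extra checks (convexity of $\underline{\operationalstate^K}$, $\mu$-admissibility of $\underline{\R}$ via Theorem~\ref{thm:torsor reduction}, and $\supp\mu^{\F^z_{\underline\R,\pi^{\mathrm{std}}}}_{\underline\omega_z}=\text{Loc}(\R,\omega)$) spell out what the paper leaves implicit. One wording fix: the singleton $\{W_{\hat\phi}(f_{\sigma,K})\}$ is not itself closed under adjoints, so take $\{W_{\hat\phi}(f_{\sigma,K}),W_{\hat\phi}(f_{\sigma,K})^\dagger\}$; as you note, this set is microcausal exactly when $W_{\hat\phi}(f_{\sigma,K})$ is, because the definition already includes the adjoint commutators.
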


\begin{proof}
    This follows from Example \ref{ex:example causal rqf} and Prop.~\ref{prop:flat causality reduction}.
\end{proof}

The semi-causality conditions do not reduce to any of the four flat conditions in Prop.~\ref{prop:flat causality reduction}. The flat spacetime form of strong semi $(\underline{\operationalstate},\sigma)$-causality is the following: for every $\phi,\psi\in\mathcal{O}_\S$ and every $\omega_1,\omega_2\in\operationalstate$, one has, for $(d^d x,d^d y)$-almost every pair $(x,y)$ satisfying $(x-y)^2< -\sigma$ and for every $\Lambda\in \Lup$,
\begin{equation}
\label{eqn:flat strong semi causality}
    \comm{\hat{\Phi}^{\R}(\omega_1)}{U_\S(x-\Lambda y,\Lambda)\hat{\Psi}^{\R}(\omega_2)    U_\S(x-\Lambda y,\Lambda)^\dagger} =
    \comm{\hat{\Phi}^{\R}(\omega_1)^\dagger}{    U_\S(x-\Lambda y,\Lambda)\hat{\Psi}^{\R}(\omega_2)U_\S(x-\Lambda y,\Lambda)^\dagger}    = 0.
\end{equation}
For weak semi-causality, Eqn.~\eqref{eqn:flat strong semi causality} is required only when $\operatorname{Loc}(\R,\omega_1) \indepsigma    \operatorname{Loc}(\R,\omega_2)$. By relational covariance, the transformed observable in Eqn.~\eqref{eqn:flat strong semi causality} may equivalently be written with $\hat{\Psi}^{\R}\left(   U_\R(x-\Lambda y,\Lambda)\omega_2 U_\R(x-\Lambda y,\Lambda)^\dagger\right)$.

If one wishes to strengthen the definition of semi-causality by quantifying over every arrow with $\sigma$-spacelike endpoints (rather than just $(d^dx,d^dy)$-almost every endpoint), define
\begin{equation}
    \Poinc_\sigma := \left\{(a,\Lambda)\in \Poincup \mid \exists x\in\mink \text{ such that } \big((\Lambda-\id)x+a\big)^2< -\sigma \right\}.
\end{equation}
This is precisely the set of Poincaré transformations associated to the arrows whose source and target are $\sigma$-spacelike separated. Indeed, let 
\begin{equation}
    \Gamma_\sigma := \{\alpha \in \mathrm{Poin}(\mink,\eta) \mid s(\alpha) \perp_\sigma t(\alpha)\}.
\end{equation}
Since $y=\Lambda x + a$, $y-x = (\Lambda - \id)x + a$. But $x \perp_\sigma y \Leftrightarrow (y-x)^2 < -\sigma$ so $\alpha \in \Gamma_\sigma$ if and only if $\big((\Lambda - \id)x + a\big)^2 < - \sigma$. For this stronger pointwise definition, strong semi-causality is equivalent to
\begin{equation}
\label{eq:flat-strong-semi-pointwise}
    \left[\hat{\Phi}^{\R}(\omega_1),    U_\S(a,\Lambda)\hat{\Psi}^{\R}(\omega_2)U_\S(a,\Lambda)^\dagger\right]=\left[    \hat{\Phi}^{\R}(\omega_1)^\dagger,    U_\S(a,\Lambda)\hat{\Psi}^{\R}(\omega_2)U_\S(a,\Lambda)^\dagger    \right] = 0
\end{equation}
for every $(a,\Lambda)\in \Poinc_\sigma$ and every $\omega_1,\omega_2\in\operationalstate$. In either formulations, both versions of relational semi-causality remain distinct conditions to the other notions of relational causality in the flat constant-field case. In summary, relational quantum field theory in Minkowski spacetime developed in \cite{fedida_foundations_2025} and Sec.~\ref{sec:RQFT Minkowski} is the constant operator field case of the more general groupoid-based theory of Sec.~\ref{sec:RQFT curved}.

\subsection{Towards a locally covariant viewpoint}

The construction above is compatible with the general philosophy of algebraic QFT and locally covariant QFT \cite{dimock_algebras_1980,brunetti_generally_2003,fewster_algebraic_2015}: to a spacetime region $\U\subseteq \M$ one may associate the algebra generated by operators of the form
\begin{equation}
\hat{\Phi}^\R(\omega), \qquad \text{Loc}_\pi(\R,\omega) \subseteq \U.
\end{equation}
The groupoid covariance law of Thm.~\ref{thm:covariance rqf} indicates that embeddings and local frame changes should act by transporting both the system field and the frame statistics. A fully functorial formulation of the resulting net of relational algebras will require a
separate treatment, but the present section already identifies its basic kinematical building blocks.

\begin{definition}
    \label{def:algebras rqf curved}
    Let $\R$ be a general relativistic QRF, $\pi = (\pi_x : \mathrm{Poin}(\M,g)^x \to \M)$ be $\mathrm{Bis}_{\mathrm{Isom}}(\mathrm{Poin}(\M,g))$-compatible, $\operationalstate \subseteq \framestate$ be convex, $\sigma \geq 0$ and $\mathcal{O}_\S \subseteq \A(\scrhis)$ be $(\operationalstate,\sigma,\pi)$-causal. Given a subset $\U \subseteq \mathcal{M}$, we call
    \begin{equation}
        \A^{(\operationalstate,\sigma,\pi)}_{\mathcal{O}_\S}(\U) := \{\hat{\Phi}^\R(\omega) \mid \phi \in \mathcal{O}_\S, \omega \in \operationalstate \text{ s.t. } \text{Loc}_\pi(\R,\omega) \subset \U\}''
    \end{equation}
    a \emph{relational local algebra for $\pi$}.
\end{definition}

This construction generalises that of relational local algebras found in Minkowski spacetime (see Def.~\ref{def:relational local algebras minkowski} and \cite{fedida_foundations_2025}) to curved spacetimes. 

\begin{proposition}
    \label{prop:operationalstate convex under bisections}
    Let $\R$ be a general relativistic QRF and $\operationalstate \subseteq \framestate$ be convex. Then for any $b \in \mathrm{Bis}(\mathrm{Poin}(\M,g))$, $b \cdot \operationalstate := \{\omega^b \mid \omega \in \operationalstate\}$ is convex. 
\end{proposition}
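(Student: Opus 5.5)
The plan is to show that the map $\omega\mapsto\omega^b$ is affine on $\framestate$, because the image of a convex set under an affine map is again convex. Fix $b\in\mathrm{Bis}(\mathrm{Poin}(\M,g))$ and write $\varphi:=\varphi_b$. For $\omega=(\omega_x)\in\framestate$, the transformed state is
\[
\omega^b_x=U_{b(\varphi^{-1}(x))}\,\omega_{\varphi^{-1}(x)}\,U_{b(\varphi^{-1}(x))}^\dagger .
\]
So the map $T_b:\omega\mapsto\omega^b$ is built from two operations: a reindexing of the field by the diffeomorphism $\varphi^{-1}$, followed by fiberwise conjugation by a unitary that depends only on $x$ and $b$, not on $\omega$.

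The first step is to check that $T_b$ really sends $\framestate$ into $\framestate$. For each $x$, conjugation by a unitary sends a density operator on $\hir^{\varphi^{-1}(x)}$ to a density operator on $\hirx$. Measurability of $x\mapsto\omega^b_x$ follows from three facts: $\varphi$ is a diffeomorphism (hence Borel), $b$ is smooth, and $U$ is a measurable (indeed fiber-wise ultraweakly continuous) representation. The ``for $\mu_g$-almost every $x$'' clause is preserved because $\varphi$ preserves $\mu_g$-null sets, which is Lem.~\ref{lem:general relativistic QRF is bisection admissible}.

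The second step is affinity. Take $\omega^{(1)},\omega^{(2)}\in\operationalstate$ and $t\in[0,1]$. Convexity of $\operationalstate$ gives $t\omega^{(1)}+(1-t)\omega^{(2)}\in\operationalstate$, and this combination is taken fiberwise, $\bigl(t\omega^{(1)}+(1-t)\omega^{(2)}\bigr)_y=t\omega^{(1)}_y+(1-t)\omega^{(2)}_y$. Set $y=\varphi^{-1}(x)$. Since reindexing and conjugation are both linear, we get
\[
\bigl(t\omega^{(1)}+(1-t)\omega^{(2)}\bigr)^b_x=t\,(\omega^{(1)})^b_x+(1-t)\,(\omega^{(2)})^b_x .
\]
Hence $t(\omega^{(1)})^b+(1-t)(\omega^{(2)})^b=\bigl(t\omega^{(1)}+(1-t)\omega^{(2)}\bigr)^b\in b\cdot\operationalstate$, which is the required convexity.

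The main point to watch is the well-definedness in the first step, in particular that the almost-everywhere qualifications survive the reindexing by $\varphi$. This is exactly where bisection-admissibility (Lem.~\ref{lem:general relativistic QRF is bisection admissible}) is used; the affinity computation itself is routine.
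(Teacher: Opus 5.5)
Your proof is correct and is essentially the paper's argument: the paper notes only that $\omega\mapsto\omega^b$ is conjugation by the unitary $\mathbf{U}_b$, and hence preserves convex mixtures. Fiberwise, that conjugation is exactly your reindex-then-conjugate formula, since the Radon--Nikodym factors cancel as in the proof of Lem.~\ref{lem:Gamma invariance implies bisection invariance}; your check that $\omega^b\in\framestate$, using Lem.~\ref{lem:general relativistic QRF is bisection admissible}, is extra detail the paper leaves implicit.
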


\begin{proof}
    This follows immediately from the fact that $\mathbf{U}_b$ is unitary so it preserves convex mixtures.
\end{proof}

\begin{proposition}
    \label{prop:Einstein causality stable under isometries}
    Let $\R$ be a general relativistic QRF, $\pi$ be $\mathrm{Bis}_{\mathrm{Isom}}(\mathrm{Poin}(\M,g))$-compatible, $\sigma \geq 0$, $\operationalstate \subseteq \framestate$ be convex and $\mathcal{O}_\S \subseteq \A(\scrhis)$ be closed under adjoints. Then for all $b \in \mathrm{Bis}_{\mathrm{Isom}}(\mathrm{Poin}(\M,g))$, $\mathcal{O}_\S$ is $(\operationalstate,\sigma,\pi)$-causal if and only if it is $(b \cdot \operationalstate,\sigma,\pi)$-causal.
\end{proposition}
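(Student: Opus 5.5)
The plan is to exploit three earlier results, each of which transports one ingredient of the causality condition under a bisection $b\in\mathrm{Bis}_{\mathrm{Isom}}(\mathrm{Poin}(\M,g))$. Thm.~\ref{thm:covariance rqf} gives $\mathbf V_b\hat\Phi^\R(\omega)\mathbf V_b^\dagger=\hat\Phi^\R(\omega^b)$. Prop.~\ref{prop:covariance localization} gives $\text{Loc}_\pi(\R,\omega^b)=\varphi_b(\text{Loc}_\pi(\R,\omega))$, which uses $\mathrm{Bis}_{\mathrm{Isom}}$-compatibility of $\pi$. Finally, $\varphi_b$ is an isometry of $(\M,g)$, so it preserves $\sigma$-local spacelike separation.

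Since $b^{-1}\in\mathrm{Bis}_{\mathrm{Isom}}$ and $(\omega^b)^{b^{-1}}=\omega$ (by functoriality of $U$ and $b\mapsto\mathbf U_b$ being a representation), we have $b^{-1}\cdot(b\cdot\operationalstate)=\operationalstate$. It therefore suffices to prove one direction: if $\mathcal O_\S$ is $(\operationalstate,\sigma,\pi)$-causal, then it is $(b\cdot\operationalstate,\sigma,\pi)$-causal. Prop.~\ref{prop:operationalstate convex under bisections} ensures $b\cdot\operationalstate$ is again convex, so the hypothesis class is respected.

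\textbf{Step 1 (isometries preserve $\indepsigma$).} Let $\varphi$ be an isometry. Then $\varphi$ maps geodesically convex normal neighbourhoods to geodesically convex normal neighbourhoods, so $CNN(\varphi(\U),\varphi(\V))=(\varphi\times\varphi)(CNN(\U,\V))$. It commutes with the exponential map via $d\varphi$, giving $g_{\varphi(x)}(\exp^{-1}_{\varphi(x)}\varphi(y),\exp^{-1}_{\varphi(x)}\varphi(y))=g_x(\exp_x^{-1}y,\exp_x^{-1}y)$. It preserves causal complements, so $\varphi(\V')=\varphi(\V)'$. Hence $\U\indepsigma\V$ if and only if $\varphi(\U)\indepsigma\varphi(\V)$.

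\textbf{Step 2 (the main argument).} Take $\phi,\psi\in\mathcal O_\S$ and $\omega^{(1)b},\omega^{(2)b}\in b\cdot\operationalstate$ with $\text{Loc}_\pi(\R,\omega^{(1)b})\indepsigma\text{Loc}_\pi(\R,\omega^{(2)b})$. By Prop.~\ref{prop:covariance localization} these sets are $\varphi_b$-images of $\text{Loc}_\pi(\R,\omega^{(i)})$, so Step 1 gives $\text{Loc}_\pi(\R,\omega^{(1)})\indepsigma\text{Loc}_\pi(\R,\omega^{(2)})$. The hypothesis then gives $[\hat\Phi^\R(\omega^{(1)}),\hat\Psi^\R(\omega^{(2)})]=0$, and likewise with the adjoint. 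Conjugating by the unitary $\mathbf V_b$ and applying Thm.~\ref{thm:covariance rqf} yields $[\hat\Phi^\R(\omega^{(1)b}),\hat\Psi^\R(\omega^{(2)b})]=0$. For the adjoint commutator we also use $\mathbf V_b X^\dagger\mathbf V_b^\dagger=(\mathbf V_bX\mathbf V_b^\dagger)^\dagger$.

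\textbf{Main obstacle.} I expect Step 1 to be the delicate point, because the definition of $\sigma$-local separation involves the essential-support sets $\text{Loc}_\pi$ (defined only up to $\mu_g$-null sets), the causal complement, and the $\sup$ over $CNN$ (with the $\sup\varnothing$ convention). Each must be checked to be $\varphi_b$-natural. For the essential-support sets, I would use that isometries preserve $\mu_g$, so essential supports transform exactly. This is already implicit in Prop.~\ref{prop:covariance localization}, which I will invoke rather than reprove.
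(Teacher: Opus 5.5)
Your proposal is correct and follows the paper's route: transport the localizations with Prop.~\ref{prop:covariance localization}, use that the isometry $\varphi_b$ preserves $\indepsigma$, and conjugate the commutators by $\mathbf V_b$ using Thm.~\ref{thm:covariance rqf}. The only difference is the converse, which you get by passing to $b^{-1}$; the paper's argument gives it directly, because the separation step is an equivalence and the commutator identity holds under conjugation by the unitary $\mathbf V_b$, so the commutator vanishes for the transformed pair exactly when it vanishes for the original one.
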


\begin{proof}
    Let $\omega_1,\omega_2 \in \operationalstate$ be such that $\text{Loc}_\pi(\R,\omega_1)
    \indepsigma
    \text{Loc}_\pi(\R,\omega_2)$. Write $\omega_1^b,\omega_2^b\in b\cdot\operationalstate$. Covariance of $\mathrm{Bis}_{\mathrm{Isom}}(\mathrm{Poin}(\M,g))$-compatible marginals gives $\text{Loc}_\pi(\R,\omega_i^b) =\varphi_b\left(\text{Loc}_\pi(\R,\omega_i)\right)$. But since $\varphi_b$ is an isometry, it preserves the relation $\indepsigma$. Hence, $\text{Loc}_\pi(\R,\omega_1)
    \indepsigma
    \text{Loc}_\pi(\R,\omega_2)$ if and only if $\varphi_b\left(\text{Loc}_\pi(\R,\omega_1)\right)
    \indepsigma
    \varphi_b\left(\text{Loc}_\pi(\R,\omega_2)\right)$ if and only if    $\text{Loc}_\pi(\R,\omega_1^b)     \indepsigma     \text{Loc}_\pi(\R,\omega_2^b)$. Moreover, causality for $\operationalstate$ and bisection covariance yields, for any $\phi,\psi \in \mathcal{O}_\S$,
\begin{equation}
    \comm{\hat\Phi^\R(\omega_1^b)}{
    \hat\Psi^\R(\omega_2^b)}=\mathbf V_b
    \comm{\hat\Phi^\R(\omega_1)}{\hat\Psi^\R(\omega_2)}
    \mathbf V_b^\dagger
    =0.
\end{equation}
The adjoint commutator is identical, which finishes the proof.
\end{proof}

\begin{theorem}
    \label{thm:RAQFT}
    Let $\R$ be a general relativistic QRF, $\pi = (\pi_x : \mathrm{Poin}(\M,g)^x \to \M)$ be $\mathrm{Bis}_{\mathrm{Isom}}(\mathrm{Poin}(\M,g))$-compatible, $\operationalstate \subseteq \framestate$ be convex, $\sigma \geq 0$ and $\mathcal{O}_\S \subseteq \A(\scrhis)$ be closed under adjoints and $(\operationalstate,\sigma,\pi)$-causal. Then
    \begin{enumerate}
        \item (Isotony) For all $\U \subseteq \V \subseteq \mathcal{M}$, $\A^{(\operationalstate,\sigma,\pi)}_{\mathcal{O}_\S}(\U) \subseteq \A^{(\operationalstate,\sigma,\pi)}_{\mathcal{O}_\S}(\V)$.
        \item (Covariance) For all $b \in \mathrm{Bis}_{\mathrm{Isom}}(\mathrm{Poin}(\M,g))$ and all $\U \subset \M$,
        \begin{equation}
            \mathbf{V}_b \A^{(\operationalstate,\sigma,\pi)}_{\mathcal{O}_\S}(\U) \mathbf{V}_b^\dagger = \A^{( b \cdot \operationalstate,\sigma,\pi)}_{\mathcal{O}_\S}(\varphi_b(\U)).
        \end{equation}
        \item (Causality) For all $\U \indepsigma \V \subset \mathcal{M}$, $\comm{\A^{(\operationalstate,\sigma,\pi)}_{\mathcal{O}_\S}(\U)}{\A^{(\operationalstate,\sigma,\pi)}_{\mathcal{O}_\S}(\V)} = \{0\}$.
    \end{enumerate}
\end{theorem}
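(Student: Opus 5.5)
The plan follows the Minkowski template: isotony and causality come straight from the definition of the relational local algebras, and covariance comes from the bisection covariance of relational quantum fields (Thm.~\ref{thm:covariance rqf}) combined with covariance of localization (Prop.~\ref{prop:covariance localization}).

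\emph{Isotony.} Write $G(\U)$ for the generating set $\{\hat\Phi^\R(\omega)\mid \phi\in\mathcal{O}_\S,\ \omega\in\operationalstate,\ \text{Loc}_\pi(\R,\omega)\subset\U\}$. If $\U\subseteq\V$, then $\text{Loc}_\pi(\R,\omega)\subset\U$ implies $\text{Loc}_\pi(\R,\omega)\subset\V$, so $G(\U)\subseteq G(\V)$. Taking commutants reverses inclusion twice, so $G(\U)''\subseteq G(\V)''$.

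\emph{Covariance.} Fix $b\in\mathrm{Bis}_{\mathrm{Isom}}(\mathrm{Poin}(\M,g))$; a general relativistic QRF is bisection-admissible by Lem.~\ref{lem:general relativistic QRF is bisection admissible}, so $\mathbf V_b$ is a unitary on $\his$. The first step is to show $\mathbf V_b G(\U)\mathbf V_b^\dagger = G_{b\cdot\operationalstate}(\varphi_b(\U))$. For a generator, Thm.~\ref{thm:covariance rqf} gives $\mathbf V_b\hat\Phi^\R(\omega)\mathbf V_b^\dagger=\hat\Phi^\R(\omega^b)$, and Prop.~\ref{prop:covariance localization} gives $\text{Loc}_\pi(\R,\omega^b)=\varphi_b(\text{Loc}_\pi(\R,\omega))$. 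Since $\varphi_b$ is a bijection, $\text{Loc}_\pi(\R,\omega)\subset\U$ if and only if $\text{Loc}_\pi(\R,\omega^b)\subset\varphi_b(\U)$. The reverse inclusion uses $b^{-1}$ and $(\omega^b)^{b^{-1}}=\omega$, which follows from $b\mapsto\mathbf U_b$ being a representation. The second step is that unitary conjugation commutes with taking the bicommutant, i.e.\ $(\mathbf V_b S\mathbf V_b^\dagger)''=\mathbf V_b S''\mathbf V_b^\dagger$.

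To make the right-hand side a legitimate relational local algebra, $b\cdot\operationalstate$ must satisfy the hypotheses of Def.~\ref{def:algebras rqf curved}. It is convex by Prop.~\ref{prop:operationalstate convex under bisections}, and $\mathcal{O}_\S$ is $(b\cdot\operationalstate,\sigma,\pi)$-causal by Prop.~\ref{prop:Einstein causality stable under isometries}. One must also check that $\omega^b\in\framestate$, i.e.\ that it is still a measurable field of fibre states; this follows from measurability of $U$ and of $\varphi_b$.

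\emph{Causality.} Let $\U\indepsigma\V$. For generators $\hat\Phi^\R(\omega_1)\in G(\U)$ and $\hat\Psi^\R(\omega_2)\in G(\V)$, the localizations satisfy $\text{Loc}_\pi(\R,\omega_1)\subset\U$ and $\text{Loc}_\pi(\R,\omega_2)\subset\V$. The key claim is that $\indepsigma$ is inherited by subsets. Three facts give this:
\begin{itemize}
\item the condition $\U\subseteq\V'$ is monotone under shrinking;
\item $CNN(\U_1,\V_1)\subseteq CNN(\U,\V)$ for $\U_1\subseteq\U$, $\V_1\subseteq\V$;
\item the supremum over a smaller set is no larger, with $\sup\varnothing=-\infty$ covering the empty case.
\end{itemize}
Then $(\operationalstate,\sigma,\pi)$-causality gives $[\hat\Phi^\R(\omega_1),\hat\Psi^\R(\omega_2)]=0$ and $[\hat\Phi^\R(\omega_1)^\dagger,\hat\Psi^\R(\omega_2)]=0$.

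So $G(\V)\subseteq (G(\U)\cup G(\U)^\dagger)'=G(\U)'''=G(\U)'$. Hence $G(\V)''\subseteq G(\U)'$, which says $\A(\V)$ commutes with $\A(\U)$. Using the adjoint commutator here ensures the generating sets are effectively self-adjoint, which is what makes the bicommutant argument work.

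The main obstacle is the covariance step. Prop.~\ref{prop:covariance localization} is stated with essential supports, so one must check that the pushed-forward localization is exactly $\varphi_b(\text{Loc}_\pi(\R,\omega))$ as a set. This requires $\varphi_b$ to preserve $\mu_g$-null sets; it does so as an isometry, which preserves $\mu_g$. If this were only up to null sets, the plan would be to replace $\U$ by its $\mu_g$-essential version.
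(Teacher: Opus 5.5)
Your proposal is correct and follows essentially the same route as the paper: isotony comes from inclusion of the generating sets, and causality from the elementary commutant manipulations. Covariance comes from Thm.~\ref{thm:covariance rqf} and Prop.~\ref{prop:covariance localization}, plus the fact that unitary conjugation commutes with bicommutants, with Props.~\ref{prop:operationalstate convex under bisections} and \ref{prop:Einstein causality stable under isometries} supplying the hypotheses on $b\cdot\operationalstate$. You are in fact more explicit than the paper in checking that $\sigma$-local spacelike separation of $\U,\V$ passes to the subsets $\text{Loc}_\pi(\R,\omega_i)$, which the paper's causality step uses tacitly; just note that from $G(\V)''\subseteq G(\U)'$ one further commutant step ($G(\U)\subseteq\A(\V)'$, hence $G(\U)''\subseteq\A(\V)'$) is needed to conclude that the two algebras commute.
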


\begin{proof}
    See App. \ref{app:proof AQFT}.
\end{proof}

\begin{remark}
    We do not need to work on spacetimes with non-trivial isometries. However, when the spacetime does present isometries, then this notion of covariance  recovers that of algebraic QFT in curved spacetime \cite{dimock_algebras_1980}. Note that groupoids were approached in quantum field theory from a categorical perspective in \cite{benini_quantum_2017}, where the authors studied the conditions under which their approach satisfy the axioms of locally covariant QFT. It is an interesting prospect to further link such different approaches.
\end{remark}

\begin{remark}
    Thm.~\ref{thm:RAQFT} defines an isometry-covariant net on one fixed spacetime. It is not yet a locally covariant quantum field theory in the functorial sense: that would require assigning compatible algebras and QRF data to a category of spacetimes and causality-preserving embeddings. We keep this for future work.
\end{remark}

\section{Towards foundations of relational quantum gauge field theory}
\label{sec:towards RQGFT}

The curved-spacetime construction of Sec.~\ref{sec:RQFT curved} naturally extends to situations with internal gauge symmetry. This extension is already suggested in the recent RQFT literature \cite{fedida_foundations_2025}, where one proposes enlarging the space of frames from the Lorentz bundle to a principal bundle with structure group of the form
\begin{equation}
K:=\Lup\times G,
\end{equation}
with $G$ a locally compact Lie group describing the internal gauge symmetry. In local trivializations this leads to frame spaces of the form
\begin{equation}
F_G|_U \cong \U\times \mathcal{SO}_+^\uparrow(1,d-1)\times G
\end{equation}
and to relational observables obtained by averaging over both Lorentz and gauge variables. The purpose of the present subsection is to recast that idea in the groupoid language developed in the previous sections. The resulting object is no longer the Poincar\'e groupoid, but the Atiyah groupoid of the relevant principal bundle, \emph{cf.} Example \ref{ex:gauge-groupoid}. This is the natural groupoid encoding local gauge changes of frame. 

Let $\pi_P:P\to \M$ be a smooth principal $G$-bundle over the spacetime $\M$. If one wishes to treat spacetime and internal symmetries simultaneously, one may either work with the product bundle
\begin{equation}
Q:=F\times_\M P \to \M,
\end{equation}
which is a principal $K=\Lup\times G$-bundle, or, more generally, with any principal $K$-bundle
\begin{equation}
\pi_Q:Q\to \M
\end{equation}
encoding both Lorentz and gauge structure, that is such that its structural group $K$ is an extension of $SO_+^\uparrow(1,d-1)$ by $G$. We now consider the \emph{Atiyah groupoid of $Q$}
\begin{equation}
    \At(Q) := (Q \times Q)/K \rightrightarrows \M
\end{equation}
whose arrows are classes $[q_x,q_y]$, $q_x \in Q_x$, $q_y \in Q_y$, under the diagonal right action
\begin{equation}
    (q_x,q_y) \cdot k := (q_x \cdot k,q_y \cdot k), \qquad k \in K.
\end{equation}

In particular, the bisections of $\At(Q)$ covering the identity on $\M$ are gauge transformations. 

\begin{definition}
The group of \emph{internal gauge bisections} is
\begin{equation}
\label{eqn:internal-gauge-bisections-corrected}
    \operatorname{Bis}_G(\At(Q)):=    \left\{b\in\operatorname{Bis}(\At(Q)) \mid \varphi_b = \id_\M \; \& \; b\text{ induces the identity on }Q/G\right\}.
\end{equation}
This is the group of internal $G$-gauge transformations.
\end{definition}

Hence the passage from the Poincar\'e groupoid to $\At(Q)$ is the natural step from pure spacetime frames to gauge-covariant frames.

\begin{remark}
    The transformations of the fields below are implemented by unitary conjugation. Gauge connections and gauge potentials transform affinely and require additional connection or holonomy data; they are not included in the present construction.
\end{remark}

\subsection{Gauge quantum reference frames and relational gauge-covariant quantum fields}

A straightforward generalisation of general relativistic QRFs can be undertaken here, by simply replacing $\mathrm{Poin}(\M,g)$ by $\At(Q)$. $(\mathcal{M},g)$ is still taken to be an oriented and time-oriented $d$-dimensional Lorentzian spacetime, and $\mu_g$ the volume measure associated to the metric $g$. The quantum system of interest is described by a continuous Hilbert bundle $\scrhis=\bigsqcup_{x\in \M}\hisx \to \M$ carrying a fiber-wise ultraweakly continuous unitary representation $V:\At(Q) \curvearrowright \scrhis$.

\begin{definition}
\label{def:gauge qrf}
A \emph{gauge quantum reference frame} for $Q$ on $(\M,g)$ is a $\mu_g$-admissible principal Lie groupoid QRF $\R_G$ for $\At(Q)$.
\end{definition}

\begin{lemma}
    \label{lem:gauge QRF is bisection admissible}
    A gauge QRF is bisection-admissible.
\end{lemma}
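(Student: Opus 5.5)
The plan is to mirror the proof of Lemma~\ref{lem:general relativistic QRF is bisection admissible}. Recall the definitions: a gauge QRF $\R_G$ is bisection-admissible if $(\At(Q)\rightrightarrows\M,\nu,\mu_g)$ is a Lie groupoid and is bisection-admissible. The latter means that $(\varphi_b)_*\mu_g\sim\mu_g$ for every $b\in\mathrm{Bis}(\At(Q))$. The Lie condition is immediate. $\At(Q)=(Q\times Q)/K$ is the Atiyah groupoid of a smooth principal $K$-bundle, hence a Lie groupoid (Example~\ref{ex:gauge-groupoid}). Moreover, Def.~\ref{def:gauge qrf} already requires $\R_G$ to be a Lie groupoid QRF. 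So the only content is the mutual absolute continuity of $\mu_g$ with its pushforwards under $\varphi_b$.

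For that step, fix $b\in\mathrm{Bis}(\At(Q))$. By definition of a bisection, $\varphi_b=t\circ b$ is a diffeomorphism of $\M$. Since $\mu_g$ is the Riemannian–Lorentzian volume measure, in any chart it has a smooth, strictly positive density with respect to Lebesgue measure. A diffeomorphism maps Lebesgue-null sets to Lebesgue-null sets, because it is locally Lipschitz and its inverse is locally Lipschitz as well. Covering $\M$ by countably many charts, which is possible by second countability, then gives $\mu_g(N)=0 \iff \mu_g(\varphi_b^{-1}(N))=0$ for every $N\in\Bor(\M)$. This is exactly $(\varphi_b)_*\mu_g\sim\mu_g$.

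One can go further and write the Radon–Nikodym derivative explicitly as $\abs{\det d\varphi_b^{-1}}$ times the ratio of metric densities, which is smooth and positive. That is more than is needed, but it confirms that the unitaries $\mathbf{V}_b$ are well defined.

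I do not expect a real obstacle. The only point needing care is that the bisections of $\At(Q)$ here are all of $\mathrm{Bis}(\At(Q))$, which corresponds to $\mathrm{Aut}(Q)$ covering arbitrary diffeomorphisms, and not just the gauge subgroup $\mathrm{Bis}_G(\At(Q))$. For the gauge subgroup $\varphi_b=\mathrm{id}_\M$, so the condition is trivial. For the full group, the diffeomorphism argument above covers everything, and nothing about the internal structure group $K$ enters.
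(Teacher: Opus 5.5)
Your proposal is correct and follows the paper's own argument. Bisection-admissibility reduces to $(\varphi_b)_*\mu_g\sim\mu_g$ for every $b\in\mathrm{Bis}(\At(Q))$, which holds because each $\varphi_b$ is a diffeomorphism of $\M$ and diffeomorphisms preserve the null sets of the volume measure; you simply spell out the chart-wise justification that the paper leaves implicit.
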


\begin{proof}
    This is immediate from the fact that diffeomorphisms of a manifold preserve the null sets of the volume measure.
\end{proof}

We can then define relational (semi(local)) quantum fields in full analogy to Sec.~\ref{sec:RQFT curved}. Note that since the base space is still $\M$, so the operator fields $\phi = (\phi_x)_{x \in \M} \in \A(\scrhis)$ can be seen as defining absolute quantum gauge fields of the form of Eqn.~\eqref{eqn:absolute quantum field def}. We still want to work in a picture where we relativise operators rather than fields however.

\begin{definition}
    Let $\phi \in \A(\scrhis)$ and $\R_G$ be a gauge QRF for $Q$ on $(\M,g)$. For $\mu_g$-almost every $x \in \mathcal{M}$, the \emph{relational semilocal quantum gauge field at $x$} is the map
\begin{align}
    \hat{\Phi}^{\R_G}_x : \framestategauge &\to \bhsx \\
    \label{eqn:relational semilocal observable gauge}
    \hat{\Phi}^{\R_G}_x(\omega) &:= \euro^{\R_G}_{\omega}(\phi)_x = \iint_{\At(Q)^x} V_\beta \phi_{s(\beta)} V_\beta^\dagger \, d\mu^{\E_{\R_G}^x}_{\omega_x}(\beta).
\end{align}
    We call the right-hand side of Eqn.\eqref{eqn:relational semilocal observable gauge} a \emph{relational semilocal gauge observable}.
\end{definition}

Again, by disintegrating the measure $\mu^{\E_{\R_G}^x}_{\omega_x}$ with respect to its source marginal $\mu^{\F_{\R,\pi}^x}_{\omega_x}$ (Lem.~\ref{lem:disintegration exists Lie QRF}), we get a family of conditional measures $\{\nu^{\R,\pi}_{\omega_x}(\cdot \mid y)\}_{y \in \supp \mu^{\F_{\R,\pi}^x}_{\omega_x}}$.

\begin{definition}
    Let $\R$ be a general relativistic QRF, $\omega \in \framestategauge$, $\pi = (\pi_x : \At(Q)^x \to \M)$ be a family of measurable projections, and $x \in \mathcal{M}$. We call
    \begin{align}
    \hat{\phi}^{\R_G,\pi}_{x,\omega} : \supp \mu^{\F_{\R_G,\pi}^x}_{\omega_x} &\to \bhsx \\
    \hat{\phi}^{\R_G,\pi}_{x,\omega}(y) &:= \int_{\pi_x^{-1}(\{y\})} V_\beta \phi_{s(\beta)} V_\beta^\dagger d\nu^{\R_G,\pi}_{\omega_x}(\beta \mid y)
\end{align}
    a \emph{relational local quantum gauge field at $x$ given $\pi$}.
\end{definition}

Again, we obtain the local expression
\begin{equation}
    \hat{\Phi}^{\R_G}_x(\omega) = \int_{\mathcal{M}} \hat{\phi}^{\R_G,\pi}_{x,\omega}(y) \, d\mu^{\F_{\R_G,\pi}^x}_{\omega_x}(y),
\end{equation}
since
\begin{equation}
    \hat{\Phi}^{\R_G}_x(\omega)=\iint_{\At(Q)^x} V_\beta \phi_{s(\beta)} V_\beta^\dagger \, d\mu^{\E_{\R_G}^x}_{\omega_x}(\beta) = \int_{\mathcal{M}} \underbrace{\left(\int_{\pi_x^{-1}(\{y\})} V_\beta \phi_{s(\beta)} V_\beta^\dagger \, d\nu^{\R_G,\pi}_{\omega_x}(\beta \mid y)\right)}_{\hat{\phi}^{\R_G,\pi}_{x,\omega}(y)} \, d\mu^{\F_{\R_G,\pi}^x}_{\omega_x}(y).
\end{equation}

From there, we can define relational gauge-covariant quantum fields in curved spacetimes.

\begin{definition}
    Let $\R_G$ be a gauge QRF for $Q$ on $(\M,g)$, $\mu_g$ be the spacetime volume measure determined by the metric $g$, and $\phi \in \A(\scrhis)$. The \emph{relational gauge-covariant quantum field} is the map
    \begin{align}
        \hat{\Phi}^{\R_G} : \framestategauge &\to \A(\scrhis) \\
        \hat{\Phi}^{\R_G}(\omega) &:= \euro^{\R_G}_\omega(\phi)=
        \label{eqn:relational gauge observable} \int_M^\oplus \hat{\Phi}^{\R_G}_x(\omega) \, d\mu_g(x).
    \end{align}
    We call the right-hand side of Eqn.~\eqref{eqn:relational gauge observable} a \emph{relational gauge-covariant observable}. 
\end{definition}

The construction applies to bounded operator fields whose gauge action is implemented by unitary conjugation. It therefore describes gauge-covariant matter observables and, when the seed is chosen in an adjoint associated bundle, may also describe curvature or field-strength observables. It does not by itself define a gauge potential or a connection, whose transformation law is affine. The term ``relational quantum field strength" should be used only after the seed operator field has been identified with the curvature of a connection.

\subsection{Relational gauge covariance}

We can now talk about relational covariance with both spacetime symmetries and gauge symmetries.

\begin{theorem}
    \label{thm:covariance semilocal gauge}
    Let $\R_G$ be a gauge QRF, $x,z \in \mathcal{M}$, $\alpha : x \to z$ be an arrow of $\At(Q)$ and $\omega = (\omega_x) \in \framestate$. Write the transformed frame state as $\omega_z^\alpha := U_\alpha \omega_x U_\alpha^\dagger$ so that for all $x \in \mathcal{M}$, $\omega^\alpha_x = \begin{cases}
        \omega_x \text{ if } x \neq z \\
        \omega_z^\alpha \text{ if } x=z
    \end{cases}$. Then for all $\omega \in \framestate$,
    \begin{equation}
        V_\alpha \hat{\Phi}^{\R_G}_x(\omega)V_\alpha^\dagger = \hat{\Phi}^{\R_G}_z(\omega^\alpha).
    \end{equation}
\end{theorem}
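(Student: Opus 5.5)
The plan is to reduce the statement to the general groupoid covariance result, Thm.~\ref{thm:covariance euro}, exactly as Thm.~\ref{thm:covariance semilocal} was obtained for the Poincar\'e groupoid. By definition, $\hat{\Phi}^{\R_G}_x(\omega) = \euro^{\R_G}_\omega(\phi)_x$. A gauge QRF is by Def.~\ref{def:gauge qrf} a $\mu_g$-admissible principal (Lie) groupoid QRF for the continuous groupoid $\At(Q)\rightrightarrows\M$. This is precisely the setting of Thm.~\ref{thm:covariance euro}, with $\Gamma=\At(Q)$, $\mu=\mu_g$, and the system $\S=(\scrhis,V)$ carrying a fiber-wise ultraweakly continuous representation of $\At(Q)$. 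Hence
\begin{equation}
V_\alpha \hat{\Phi}^{\R_G}_x(\omega) V_\alpha^\dagger = V_\alpha \euro^{\R_G}_\omega(\phi)_x V_\alpha^\dagger = \euro^{\R_G}_{\omega^\alpha}(\phi)_z = \hat{\Phi}^{\R_G}_z(\omega^\alpha).
\end{equation}

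If one wants the fiber computation to be self-contained, I would argue as follows. Write $\euro^{\R_G}_\omega(\phi)_x=\int_{\At(Q)^x} V_\beta\phi_{s(\beta)}V_\beta^\dagger\,d\mu^{\E^x_{\R_G}}_{\omega_x}(\beta)$ and conjugate by $V_\alpha$. Normality of conjugation lets it pass under the operator-valued integral, and functoriality gives $V_\alpha V_\beta=V_{\alpha\circ\beta}$ with $s(\alpha\circ\beta)=s(\beta)$. Changing variables $\gamma=L_\alpha\beta$ then turns the integrand into $V_\gamma\phi_{s(\gamma)}V_\gamma^\dagger$ on $\At(Q)^z$. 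The measure becomes $(L_\alpha)_*\mu^{\E^x_{\R_G}}_{\omega_x}$, and by covariance of the frame observable,
\begin{equation}
(L_\alpha)_*\mu^{\E^x_{\R_G}}_{\omega_x}(\Delta)=\Tr[\omega_x U_\alpha^\dagger \E^z_{\R_G}(\Delta)U_\alpha]=\mu^{\E^z_{\R_G}}_{\omega^\alpha_z}(\Delta).
\end{equation}

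The main obstacle is a quantifier mismatch. Thm.~\ref{thm:covariance euro} only holds for $\mu$-almost every $x$, since $\euro^{\R_G}_\omega(\phi)$ is a direct-integral object defined up to null sets. The present statement, however, is phrased for all $x,z$. I would handle this by working at the fiber level. For fixed $x$ and $z$, the integral defining $\euro^{\R_G}_\omega(\phi)_x$ is well-defined for the chosen representatives $\phi_x$ and $\omega_x$, and the change-of-variables computation above uses no almost-everywhere statement. The identity therefore holds pointwise once representatives are fixed. Only the well-definedness of the fields in $\A(\scrhis)$ (Prop.~\ref{def:groupoid-relativization-section4}) needs $\mu_g$-admissibility.

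Finally, I would note that $\omega^\alpha$, which modifies $\omega$ only at the single point $z$, is still in $\framestate$, since this is a $\mu_g$-null modification. This matters because a point is $\mu_g$-null; it makes the right-hand side a meaningful fiber evaluation rather than an object defined only up to null sets.
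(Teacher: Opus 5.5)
Your proposal is correct and matches the paper's proof, which is exactly the one-line reduction $V_\alpha \euro^{\R_G}_\omega(\phi)_x V_\alpha^\dagger = \euro^{\R_G}_{\omega^\alpha}(\phi)_z$ via Thm.~\ref{thm:covariance euro}; your self-contained fibre computation simply reproduces that theorem's appendix proof. Your observation that the identity holds for every $x,z$ once representatives are fixed is a sensible clarification of a point the paper glosses over, since it states ``for all $x,z$'' while citing a $\mu$-almost-everywhere result.
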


\begin{proof}
    We have
    \begin{equation}
        V_\alpha \hat{\Phi}^{\R_G}_x(\omega) V_\alpha^\dagger = V_\alpha \euro^{\R_G}_{\omega}(\phi)_x V_\alpha^\dagger \stackrel{\ref{thm:covariance euro}}{=} \euro^{\R_G}_{\omega^{\alpha}}(\phi)_z = \hat{\Phi}^{\R_G}_z(\omega^\alpha).
    \end{equation}
\end{proof}

Likewise, we recover a notion of covariance at the level of the whole relational quantum field.

\begin{theorem}
    \label{thm:covariance rqf gauge}
    Let $\R_G$ be a gauge QRF, $\phi \in \A(\scrhis)$ and $\omega = (\omega_x) \in \framestategauge$. Then for all $b \in \mathrm{Bis}(\At(Q))$,
    \begin{equation}
        \mathbf{V}_{b} \hat{\Phi}^{\R_G}(\omega) \mathbf{V}_{b}^\dagger = \hat{\Phi}^{\R_G}(\omega^b)
    \end{equation}
    where $\omega^b = \left(\omega^b_x = U_{b(\varphi_b^{-1}(x))} \omega_{\varphi_b^{-1}(x)} U_{b(\varphi_b^{-1}(x))}^\dagger\right)$.
\end{theorem}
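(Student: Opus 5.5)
The plan is to reproduce, for the Atiyah groupoid, the argument that gave Theorem~\ref{thm:covariance rqf} for the Poincar\'e groupoid. The statement is an instance of Corollary~\ref{cor:invariance and covariance bisections euro} once we know that a gauge QRF satisfies its hypotheses. So the proof reduces to two checks, followed by rewriting the conclusion in the notation of relational gauge-covariant quantum fields.

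\emph{First check: admissibility.} By Definition~\ref{def:gauge qrf}, $\R_G$ is a $\mu_g$-admissible principal Lie groupoid QRF for $\At(Q)\rightrightarrows\M$. Lemma~\ref{lem:gauge QRF is bisection admissible} shows it is bisection-admissible. The reason is that every bisection $b$ induces $\varphi_b\in\mathrm{Diff}(\M)$, and diffeomorphisms preserve $\mu_g$-null sets, so $(\varphi_b)_*\mu_g\sim\mu_g$. Consequently the unitaries $\mathbf V_b$ on $\his=\int^\oplus_\M\hisx\,d\mu_g$ are well defined for every $b\in\mathrm{Bis}(\At(Q))$. Their Radon--Nikodym prefactors cancel under conjugation, exactly as in the proof of Lemma~\ref{lem:Gamma invariance implies bisection invariance}. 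This gives
\[
(\mathbf V_b B\mathbf V_b^\dagger)_x=V_{b(y)}B_yV_{b(y)}^\dagger,\qquad y=\varphi_b^{-1}(x),
\]
for every decomposable $B\in\A(\scrhis)$.

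\emph{Second check: applying the corollary.} With $\hat\Phi^{\R_G}(\omega)=\euro^{\R_G}_\omega(\phi)$, Corollary~\ref{cor:invariance and covariance bisections euro} yields $\mathbf V_b\euro^{\R_G}_\omega(\phi)\mathbf V_b^\dagger=\euro^{\R_G}_{\omega^b}(\phi)$, which is the claim. Fiberwise, this is Theorem~\ref{thm:covariance semilocal gauge} applied to the arrow $\alpha=b(y):y\to x$ at $\mu_g$-almost every $x$. The transformed state $\omega^\alpha$ coincides at the fibre $x$ with $\omega^b_x=U_{b(y)}\omega_yU_{b(y)}^\dagger$.

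The step I expect to need the most care is showing that $\omega^b$ is again a measurable field of fibre states, i.e.\ $\omega^b\in\framestategauge$, and that the fibrewise identities assemble into an equality of direct integrals. Measurability of $x\mapsto U_{b(\varphi_b^{-1}(x))}\omega_{\varphi_b^{-1}(x)}U^\dagger_{b(\varphi_b^{-1}(x))}$ follows from three facts: measurability of the representation $U$, smoothness of $b$ and $\varphi_b^{-1}$, and composition of measurable fields. Each $\omega^b_x$ is a density operator because $U_{b(y)}$ is unitary. A further point is that the fibrewise equality holds only off $\mu_g$-null sets. Since $\varphi_b$ preserves null sets, the exceptional sets remain null after transport. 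Hence the two decomposable operators agree as elements of $\A(\scrhis)$, and the proof closes by this appeal to the corollary.
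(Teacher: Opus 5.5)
Your proposal is correct and follows the same route as the paper. The paper's proof consists of the appeal to Lemma~\ref{lem:gauge QRF is bisection admissible} and Corollary~\ref{cor:invariance and covariance bisections euro}. Your additional remarks, on the measurability of $\omega^b$ and on the transport of null sets under $\varphi_b$, spell out details that the paper leaves implicit.
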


\begin{proof}
    This follows immediately from above and Lem.~\ref{lem:gauge QRF is bisection admissible} and Cor.~\ref{cor:invariance and covariance bisections euro}.
\end{proof}

This bisection covariance includes both spacetime transformations as well as gauge transformations. Finally, we can discuss relational causality at the level of the relational local quantum fields, as follows.

\begin{theorem}
    \label{thm:covariance local rqf gauge}
    Let $\R_G$ be a gauge QRF, $\pi = (\pi_x : \At(Q)^x \to \M)$ be a family of measurable projections, $x,z \in \mathcal{M}$, $\alpha : x \to z$ be an arrow of $\At(Q)$ and $\omega = (\omega_x) \in \framestate$. Let $\tau_\alpha : M \to M$ be a Borel isomorphism such that $\pi_z \circ L_\alpha = \tau_\alpha \circ \pi_x$, where $L_\alpha : \At(Q)^{s(\alpha)} \to \At(Q)^{t(\alpha)}$ is such that $L_\alpha(\beta) = \alpha \circ \beta$. Write transformed frame states as $\omega_z^\alpha := U_\alpha \omega_x U_\alpha^\dagger$ so that for all $x \in \mathcal{M}$, $\omega^\alpha_x = \begin{cases}
        \omega_x \text{ if } x \neq z \\
        \omega_z^\alpha \text{ if } x=z
    \end{cases}$. Then for all $\omega \in \framestategauge$ and all $y \in \supp \mu^{\F_{\R_G,\pi}^x}_{\omega_x}$, 
    \begin{equation}
    \label{eqn:conditional-probability-covariance gauge}
        \nu_{\omega_z^\alpha}^{\R_G,\pi}
        (\,\cdot\mid\tau_\alpha(y))
        =
        (L_\alpha)_*
        \nu_{\omega_x}^{\R_G,\pi}(\,\cdot\mid y).
    \end{equation}
     Consequently, for all $\phi \in \A(\scrhis)$,
    \begin{equation}
        V_\alpha \hat{\phi}^{\R_G,\pi}_{\omega,x}(y) V_{\alpha}^\dagger = \hat{\phi}^{\R_G,\pi}_{\omega^\alpha,z}(\tau_\alpha(y)).
    \end{equation}
    
\end{theorem}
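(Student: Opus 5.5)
The plan is to prove this exactly as Theorem~\ref{thm:covariance local rqf} (the Poincar\'e groupoid case), since nothing there used more than the general groupoid QRF axioms: covariance of $\E_{\R_G}$, essential uniqueness of disintegrations (Lem.~\ref{lem:disintegration exists Lie QRF}), and functoriality of $V$. The statement applies to $\At(Q)$, a Lie groupoid over $\M$, and $\R_G$ is a $\mu_g$-admissible principal Lie groupoid QRF, so all those ingredients are in place.

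\emph{Transport of the full Born measure and of its marginal.} Covariance, $U_\alpha \E^x_{\R_G}(\Delta)U_\alpha^\dagger=\E^z_{\R_G}(L_\alpha\Delta)$, gives
\begin{equation}
\mu^{\E^z_{\R_G}}_{\omega^\alpha_z}(L_\alpha\Delta)=\Tr[U_\alpha\omega_xU_\alpha^\dagger\,U_\alpha\E^x_{\R_G}(\Delta)U_\alpha^\dagger]=\mu^{\E^x_{\R_G}}_{\omega_x}(\Delta).
\end{equation}
In other words, $\mu^{\E^z_{\R_G}}_{\omega^\alpha_z}=(L_\alpha)_*\mu^{\E^x_{\R_G}}_{\omega_x}$. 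Push this forward by $\pi_z$ and use $\pi_z\circ L_\alpha=\tau_\alpha\circ\pi_x$. The marginal then satisfies $\mu^{\F^z_{\R_G,\pi}}_{\omega^\alpha_z}=(\tau_\alpha)_*\mu^{\F^x_{\R_G,\pi}}_{\omega_x}$. Hence the supports correspond under the Borel isomorphism $\tau_\alpha$.

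\emph{Transport of the disintegration.} For $y\in\M$, define the candidate kernel $\tilde\nu(\cdot\mid\tau_\alpha(y)):=(L_\alpha)_*\nu^{\R_G,\pi}_{\omega_x}(\cdot\mid y)$. We check the three conditions of Lem.~\ref{lem:disintegration exists Lie QRF} for the pair $(\mu^{\E^z}_{\omega^\alpha_z},\pi_z)$.
\begin{itemize}
\item Measurability follows because $\tau_\alpha^{-1}$ and $L_\alpha$ are Borel.
\item Concentration holds because $L_\alpha(\pi_x^{-1}\{y\})=\pi_z^{-1}\{\tau_\alpha(y)\}$.
\item For the disintegration identity, take $\Delta'=L_\alpha\Delta$ and $\Lambda'=\tau_\alpha\Lambda$. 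Then $L_\alpha^{-1}(\Delta'\cap\pi_z^{-1}\Lambda')=\Delta\cap\pi_x^{-1}\Lambda$. A change of variables $y'=\tau_\alpha(y)$, using the marginal identity above, reproduces the integral formula.
\end{itemize}
Essential uniqueness then yields \eqref{eqn:conditional-probability-covariance gauge} for $\mu^{\F^x_{\R_G,\pi}}_{\omega_x}$-a.e.\ $y$. The statement says ``all $y$ in the support'', but I expect it only holds in the a.e.\ sense, or for the particular compatible version just constructed. I would state it up to choice of representatives, as in the footnote to the definition of local fields.

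\emph{Transport of the local field.} Change variables $\beta'=\alpha\circ\beta$ in the integral over $\pi_z^{-1}\{\tau_\alpha(y)\}$:
\begin{equation}
\hat\phi^{\R_G,\pi}_{\omega^\alpha,z}(\tau_\alpha(y))=\int V_{\alpha\circ\beta}\,\phi_{s(\alpha\circ\beta)}\,V_{\alpha\circ\beta}^\dagger\,d\nu^{\R_G,\pi}_{\omega_x}(\beta\mid y).
\end{equation}
Since $s(\alpha\circ\beta)=s(\beta)$ and $V_{\alpha\circ\beta}=V_\alpha V_\beta$, the integrand is $V_\alpha(V_\beta\phi_{s(\beta)}V_\beta^\dagger)V_\alpha^\dagger$. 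Pulling $V_\alpha\cdot V_\alpha^\dagger$ out of the ultraweak operator-valued integral is allowed because conjugation is normal. The result is $V_\alpha\hat\phi^{\R_G,\pi}_{x,\omega}(y)V_\alpha^\dagger$.

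The main obstacle is the second step. There I must make sure the essential-uniqueness argument applies on the correct null sets, since $\tau_\alpha$ maps the marginal null sets correctly by the first step. I also need to check that $\tau_\alpha$ being a Borel isomorphism suffices for the change of variables.
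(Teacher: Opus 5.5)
Your proposal is correct and follows the paper's proof essentially step for step. The paper likewise transports the Born measure and its $\pi$-marginal via covariance and $\pi_z\circ L_\alpha=\tau_\alpha\circ\pi_x$, shows that the kernel $\Delta\mapsto\nu^{\R_G,\pi}_{\omega_x}(L_{\alpha^{-1}}\Delta\mid\tau_\alpha^{-1}(y'))$ disintegrates the transformed measure, invokes essential uniqueness, and finishes by changing variables using functoriality of $V$ and $s(\alpha\circ\beta)=s(\beta)$. Your caveat is also well placed: this argument (yours and the paper's) only gives the kernel identity for $\mu^{\F^x_{\R_G,\pi}}_{\omega_x}$-almost every $y$, as stated in the Poincar\'e-groupoid version, Thm.~\ref{thm:covariance local rqf}, not for every $y$ in the support.
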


\begin{proof}
    See App.~\ref{app:proof covariance local rqf gauge}
\end{proof}

Thus, relational gauge covariance can once again be understood at three levels: for relational local quantum gauge fields, for relational semilocal quantum gauge fields, and for relational gauge-covariant quantum fields.

\subsection{Localization limit} The relational description reduces to the ordinary one when the frame becomes sharply localized at the identity arrow.

\begin{theorem}
    \label{thm:localization rqf gauge}
    Let $\R_G$ be a localizable gauge QRF. Let $x \in \M$ and suppose that $1_x \in \supp \E_{\R_G}^x$. Let $(\omega_n^x) \subset \mathscr{D}(\hi_{\R_G}^x)$ be a localizing sequence centred around $1_x$, with $\omega_n = (\omega_n^x) \in \framestategauge$. Then for any $\phi \in \A(\scrhis)_{\mathrm{uw}}$,
    \begin{equation}
        \lim_{n \to \infty} \hat{\Phi}^{\R_G}_x(\omega_n) = \hat{\phi}(x)
    \end{equation}
    ultraweakly in $\bhsx$. Moreover, if for $\mu_g$-almost every $x \in \M$, $1_x \in \supp \E_{\R_G}^x$, we have that
    \begin{equation}
        \lim_{n \to \infty} \hat{\Phi}^{\R_G}(\omega_n) = \phi
    \end{equation}
    ultraweakly in $\A(\scrhis)$.
\end{theorem}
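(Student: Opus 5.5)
This is the gauge analogue of Thm.~\ref{thm:localization rqf}, and I would prove it the same way: reduce it to the fundamental theorem of operational groupoid quantum reference frames (Thm.~\ref{thm:euro localisation limit}), applied to the Lie groupoid $\Gamma=\At(Q)\rightrightarrows\M$ with base measure $\mu_g$. First I check that the hypotheses of that theorem hold. By Def.~\ref{def:gauge qrf}, $\R_G$ is a $\mu_g$-admissible principal Lie groupoid QRF for $\At(Q)$, and it is localizable by assumption. Since $\At(Q)$ is a Lie groupoid, each target fibre $\At(Q)^x$ is a second-countable smooth manifold and hence metrizable, as in the footnote to Prop.~\ref{prop:norm 1 property povm}. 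The point $x$ satisfies $1_x\in\supp\E^x_{\R_G}$, and $(\omega_n^x)$ is a localizing sequence centred at $1_x$ with $\omega_n\in\framestategauge$. Finally, the system $\S=(\scrhis,V)$ carries a fibre-wise ultraweakly continuous representation of $\At(Q)$, which is the standing assumption on systems in Sec.~\ref{subsec:groupoid-relativization}.

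For the pointwise statement I would then use the definitional identity $\hat{\Phi}^{\R_G}_x(\omega_n)=\euro^{\R_G}_{\omega_n}(\phi)_x$, together with $\hat{\phi}(x)=\phi_x$ from Eqn.~\eqref{eqn:absolute quantum field def}. The first conclusion of Thm.~\ref{thm:euro localisation limit}, Eqn.~\eqref{eqn:localization euro fiber-wise}, then gives $\euro^{\R_G}_{\omega_n}(\phi)_x\to\phi_x$ ultraweakly in $\bhsx$ for every $\phi\in\A(\scrhis)_{\mathrm{uw}}$.

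For the global statement I would use $\hat{\Phi}^{\R_G}(\omega_n)=\euro^{\R_G}_{\omega_n}(\phi)$ and apply the second part of Thm.~\ref{thm:euro localisation limit}. This is the only delicate step. That part requires a measurable sequence of fibre-state sections $\omega_n\in\framestategauge$ with $\mu^{\E^x_{\R_G}}_{\omega_n^x}\to\delta_{1_x}$ weakly for $\mu_g$-almost every $x$. The hypothesis that $1_x\in\supp\E^x_{\R_G}$ for almost every $x$ guarantees, by Prop.~\ref{prop:norm 1 groupoid sequence}, that localizing sequences exist fibre by fibre. I would read the statement, exactly as in Thm.~\ref{thm:localization rqf}, as saying that the given $\omega_n=(\omega_n^x)$ is such a measurable field localizing at $1_x$ for almost every $x$. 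With that reading the global limit $\euro^{\R_G}_{\omega_n}(\phi)\to\phi$ ultraweakly follows directly.

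The main obstacle would be any attempt to build the measurable field oneself from the fibre-wise sequences, which would require a measurable selection argument. I would avoid this by taking measurability as part of the hypothesis that $\omega_n\in\framestategauge$, and by noting that for the canonical frame such a field exists by Prop.~\ref{prop:canonical-localizing-sequence}. No gauge-specific input is needed: the Atiyah groupoid enters only as a Lie groupoid with Haar system.
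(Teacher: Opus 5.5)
Your proposal is correct and matches the paper's proof, which simply invokes Thm.~\ref{thm:euro localisation limit} through the identities $\hat{\Phi}^{\R_G}_x(\omega_n)=\euro^{\R_G}_{\omega_n}(\phi)_x$ and $\hat{\Phi}^{\R_G}(\omega_n)=\euro^{\R_G}_{\omega_n}(\phi)$. You also note that the global statement needs the given measurable field $\omega_n$ to localize at $1_x$ for $\mu_g$-almost every $x$, not merely $1_x\in\supp\E^x_{\R_G}$ almost everywhere; this is exactly the hypothesis the paper uses tacitly when it appeals to the second part of that theorem.
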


\begin{proof}
    This follows immediately from Thm.~\ref{thm:euro localisation limit} and the fact that $\hat{\Phi}^{\R_G}(\omega_n) = \euro^{\R_G}_{\omega_n}(\phi)$ and $\hat{\Phi}^{\R_G}_x(\omega_n) = \euro^{\R_G}_{\omega_n}(\phi)_x$.
\end{proof}

Thus, absolute quantum gauge fields are recovered in the localization limit of the relational description, just as for relational quantum field theory in Minkowski spacetime.

\begin{remark}
    The notions of localization and relational causality might be importable from Sec.~\ref{subsec:relational causality}, though the nonlocal nature of gauge theories makes it probable that some conditions might need modifications, or that causality conditions of another nature altogether are required. We keep this for future work.
\end{remark}

\begin{remark}
    The definition of algebras of observables for relational gauge-covariant quantum fields can be defined just as in Def.~\ref{def:algebras rqf curved} modulo causality conditions, with isotony and bisection covariance holding straightforwardly. We keep a more detailed discussion of those for when causality in relational quantum gauge field theories will have been understood more thoroughly.
\end{remark}

\begin{remark}
    Further note that we do not consider any spin structure here. Extending the formalism of relational quantum field theory to spinors is a work in progress. Likewise, the construction above treats bounded operator fields whose gauge transformations are implemented by unitary conjugation. It does not yet define relational gauge potentials or connections, which transform affinely. A relational quantum Yang-Mills theory therefore requires additional connection, path-groupoid, or holonomy-groupoid data.
\end{remark}

The analysis conducted in this section can be summarised by stating that the Atiyah groupoid of a principal fiber bundle extending the orthonormal frame provides the natural groupoid-valued kinematics for relational gauge-covariant quantum field theory.  It simultaneously incorporates spacetime frame data and internal gauge data, its bisections encode gauge transformations, and it yields a direct generalization of the curved-spacetime relational field construction of Section \ref{sec:RQFT curved}.  In local trivializations, the resulting formulas reduce to the bundle-theoretic expressions involving averages over $\Lup\times G$, while globally they are formulated intrinsically on the gauge groupoid.

\section{Conclusions and Outlook}
\label{sec:conclusions}

In this paper we propose a new formulation of quantum reference frames in which the kinematical role usually played by a symmetry group is taken instead by a (continuous) groupoid. The guiding motivation is clear: while the standard operational formalism of quantum reference frames works naturally on homogeneous spaces of locally compact groups, generic curved spacetimes do not admit a sufficiently rich global symmetry group. Nevertheless, they do possess canonical local kinematical structures, most notably the orthonormal frame bundle and its associated gauge groupoid. Our main claim has been that this groupoid, and more specifically the Poincar\'e groupoid of a Lorentzian spacetime, provides the natural geometric object on which an operational and relational theory of quantum reference frames should be based.

The first part of the paper established the geometric and operator-theoretic framework required for this extension. After reviewing the relevant background on continuous groupoids, action groupoids, gauge groupoids, and the Poincar\'e and Wigner groupoids associated with a spacetime, we showed how the standard group-based structures of the QRF formalism admit natural groupoid analogues. In place of a single Hilbert space carrying an ultraweakly continuous unitary representation of a group, one is led to a continuous Hilbert bundle carrying a fiber-wise ultraweakly continuous unitary representation of a continuous groupoid. In place of a single covariant POVM on a homogeneous
space, one is led to a fiber-wise covariant family of POVMs on the target fibers of the groupoid.

We also showed that, on homogeneous spaces of locally compact groups with quasi-invariant measure $\mu$, normalized covariant POVMs induce $\mu$-continuous Born probability measures, thus leading to $L^1$ smearing functions without any additional hypothesis. This is useful not only to make contact with Wightmannian quantum field theory for relational quantum field theory in Minkowski spacetime, where the frame smearing function plays the same role as the Schwartz test function, but is also important to prove $\mu$-admissibility (i.e. well-definedness) of the groupoid relativization map in the principal homogeneous space case.

The central conceptual result of the paper is that the operational notion of quantum reference frame extends naturally from groups to continuous groupoids. Concretely, we defined a ($\mu$-admissible, principal) groupoid quantum reference frame as a triple consisting of a Hilbert bundle, a (fiber-wise ultraweakly continuous) unitary representation of the groupoid, and a fiber-wise covariant frame observable. From these data we constructed a groupoid relativization map which produces invariant observables on the joint system--frame theory. This construction is the direct groupoid analogue of the usual relativization map in the standard theory of quantum reference frames. It yields a well-defined notion of relative observables, and it makes precise the idea that the observable content of a theory should be formulated relative to a quantum frame rather than with respect to an external classical structure.

A crucial consistency check was also established: our construction can reduce to the ordinary operational QRF formalism for the group $G$ acting on either the group itself or a torsor for the group. Thus the groupoid formalism is not an alternative theory, but a genuine extension of the existing one. This reduction result is especially important in flat spacetime, where the Poincar\'e groupoid of Minkowski space is canonically identified with the action groupoid of the proper orthochronous Poincar\'e group. Hence the standard relativistic QRF formalism is recovered as a special case of the present framework.

The second part of the paper applied this groupoid-QRF formalism to relational quantum field theory. We first reviewed relational quantum field theory in Minkowski spacetime, and drew some close links with Wightman's as well as algebraic quantum field theory. We then expanded this formalism to curved spacetimes, and showed that the groupoid QRF formalism can be used to understand relational quantum fields beyond the flat spacetime limit. We generalised the notion of relational covariance and relational causality conditions, and showed that the localization limit can also be understood in this more general regime. We then argued that the same logic extends naturally to internal gauge symmetries. In that setting, both spacetime frame transformations and internal gauge transformations are treated on the same footing. This suggests that the groupoid formalism is flexible enough to accommodate not only relativistic quantum reference frames, but also gauge-covariant and, potentially, spinorial or gravitationally coupled versions of the theory.

Taken together, these results support the main thesis of the paper: \emph{a natural generalization of symmetry-based quantum reference frames to generic curved spacetimes is obtained by replacing global symmetry groups with local kinematical groupoids}. This move is mathematically natural, conceptually robust, and operationally meaningful. It preserves the essential relational content of the QRF programme while freeing it from the strong homogeneity assumptions that underlie the standard group-based setting.

At the same time, the present work is best viewed as a foundational and kinematical first step. Several important issues remain open.

First, our treatment of relational quantum fields has been formulated at the level of bounded operator fields and bounded seed observables. This is sufficient to make the groupoid construction precise and to exhibit its main structural features, but it is not yet the full language of quantum field theory. A major next step will be to extend the formalism to genuinely unbounded operators, and to understand the precise analytic conditions under which the corresponding groupoid-valued relativization maps are well defined. This should involve generalising the relativization procedure to $O^*$-families \cite{schmudgen_spaces_1990}.

Second, although the present framework naturally accommodates honest unitary groupoid representations, many physically relevant situations involve projective representations, multiplier cocycles, or central extensions. This issue is already familiar from Wigner's classification in the group and groupoidal case \cite{ibort_notion_2025}. Developing a systematic theory of projective representations of the Poincar\'e and Wigner groupoids, together with their relation to particle sectors and superselection structures, is an obvious and important direction for future work (see \cite{ibort_notion_2025} for first steps in this direction).

Third, more work on understanding relational causality is needed. In particular, generalising the notion of relational causality to different QRFs, understanding how to generalise internal and external QRF transformations \cite{glowacki_towards_2024,carette_operational_2025} to groupoid QRFs and verifying whether these preserve the causal properties of relational quantum fields \cite{fedida_foundations_2025} could prove insightful. Moreover, finding non-trivial examples of causal relational quantum fields on non-homogeneous curved Lorentzian spacetimes (e.g. a Schwarzschild black hole spacetime) would be a very useful application of the groupoidal machinery developed in this paper. Likewise, a treatment of QRFs and relational quantum fields in spacetimes with boundary (e.g. the exterior Schwarzschild spacetime) would be very interesting to undertake, and to relate to the study of edge modes \cite{balachandran_edge_1996,carrozza_edge_2022,kabel_quantum_2023,carrozza_edge_2024}.

Fourth, our use of the full Poincar\'e and Atiyah groupoids is kinematically natural, but dynamics may select preferred subgroupoids. For example, a Lorentzian or gauge connection determines a holonomy subgroupoid, and one may expect physically relevant relational observables to be associated not with all possible frame comparisons, but with those selected by parallel transport or by dynamical constraints. Understanding the relation between the full kinematical groupoid and such dynamically selected subgroupoids could lead to a more refined and physically richer version of the theory. Further note that relational quantum field dynamics, understood in the sense of partial differential equations, has not been explored yet. We currently understand dynamics of covariant relational quantum fields as arising from the fixing of the representation of the Poincaré group(oid) with respect to which the relational quantum field is covariant. For example, a relational quantum field on Minkowski spacetime covariant with respect to a mass $m$, spin $0$ (Klein-Gordon) representation of the Poincaré group should evolve according to the Klein-Gordon equation (perhaps not understood strongly, but rather mildly or weakly instead). More work needs to be done in that direction.

Fifth, the gauge-theoretic extension should be developed much further. In particular, one would like to understand the role of the gauge marginal of a quantum frame in concrete models, its possible effect on vacuum sectors, and its relation to standard notions of gauge-invariant or gauge-covariant observables. The same applies, \emph{a fortiori}, to spinor fields, Yang-Mills theories, and gravitational fields.

Finally, a particularly important long-term direction concerns the relation of the present framework with the broader groupoid-based programme in the foundations of quantum theory and elementary particle physics \cite{ciaglia_groupoidal_2024,ibort_notion_2025,ibort_introduction_2021}. The Poincar\'e and Wigner groupoids were introduced as kinematical structures for the description of particles on curved spacetimes; in the present work they have also been shown to provide the correct geometric setting for quantum reference frames and relational observables. This suggests that the groupoid point of view may offer a common language in which particles, frames, and fields can be treated in a single unified formalism. Exploring this possibility in detail is one of the most interesting perspectives opened by the present paper.

In summary, the paper has advanced three main claims. First, operational quantum reference frames admit a natural and nontrivial extension from groups to groupoids. Second, the Poincar\'e groupoid of a Lorentzian spacetime provides the appropriate geometric setting for relational quantum field theory on curved backgrounds. Third, the same logic extends naturally to gauge structures via Atiyah groupoids. We hope that these results clarify the mathematical foundations of relational quantum field theory and open the way to a more general, geometrically intrinsic treatment of quantum reference frames in physics.

\subsection*{Acknowledgments and funding}

S.F. thanks Jeremy Butterfield for interesting conversations related to quantum reference frames on principal bundles, and Jan G{\l}owacki for useful feedback on an earlier version of Sec.~\ref{subsec:localization and continuity}. S.F. is funded by a studentship from the Engineering and Physical Sciences Research Council, grant number 2882481. 
A.I. and A.M.-D. acknowledge financial support from the Spanish Ministry of
Economy and Competitiveness, through the Severo Ochoa Program for Centers of Excellence in RD (SEV-2015/0554), the MINECO research project PID2024-160539NB-I00, and the Comunidad de Madrid project TEC-2024/COM-84 QUITEMAD-CM.  Portions of Appendix \ref{app:proofs} benefited from suggestions from Claude (Anthropic) and ChatGPT (OpenAI). All statements were independently drafted and verified by the authors.

\printbibliography[title={References}]

\addtocontents{toc}{\protect\setcounter{tocdepth}{-1}}

\begin{appendix}

    \section{Proofs omitted from the main text}
    \label{app:proofs}
    \subsection{Proof of Thm. \ref{thm:covariance implies mu continuity}}
\label{proof mu continuity}
\begin{proof}
    By Prop. 1 of \cite{cattaneo_mackeys_1979}, there exists a system of imprimitivity $(U_P,P)$ for $G$ based on $\Sigma$ and acting on a separable Hilbert space $\hi'$ such that
    \begin{align}
        W U(g) &= U_P(g) W \qquad \forall g \in G \\
        \E(\Delta) &= W^\dagger P(\Delta) W \qquad \forall \Delta \in \Bor(\Sigma) 
    \end{align}
    where $W : \hi \to \hi'$ is an isometry (this is a covariant form of Naimark's dilation theorem), i.e. $P$ is $G$-covariant (with respect to $U_P$). Mackey's imprimitivity theorem \cite{mackey_imprimitivity_1949,mackey_unitary_1958,kaniuth_induced_2012} then guarantees a unitary operator $Y : \hi' \to \int_\Sigma^\oplus \hi'_x d\nu(x)$ where $\nu$ is a quasi-invariant $\sigma$-finite measure on $\Sigma$, and
    \begin{equation}
        Y P(\Delta) Y^\dagger = M_{1_\Delta} \qquad \forall \Delta \in \Bor(\Sigma) \, ,
    \end{equation}
    where $M_{1_\Delta}$ is the multiplication operator $(M_{1_\Delta} f)(x) = 1_{\Delta}(x) f(x)$ and $1_\Delta$ is the indicator function in $\Delta$. Note that all quasi-invariant measures on $G/H$ have the same null sets (Corollary 1.23 in \cite{kaniuth_induced_2012}). Then $\forall \Delta \in \Bor(\Sigma)$ such that $\mu(\Delta) = 0$, we have $\nu(\Delta) = 0$ and so
    \begin{equation}
        \norm{M_{1_\Delta}f}^2_{L^2} = \int_\Sigma \norm{1_\Delta(x) f(x)}^2 d\nu(x) = \int_\Delta \norm{f(x)}^2 d\nu(x) = 0 \, .
    \end{equation}
    Thus,
    \begin{equation}
        \mu(\Delta)=0 \Rightarrow Y P(\Delta) Y^\dagger = 0 \Rightarrow P(\Delta) = 0 
    \end{equation}
    as $Y$ is unitary. Thus
    \begin{equation}
        \mu(\Delta) = 0 \Rightarrow \E(\Delta) = 0 
    \end{equation}
    i.e. $\E \ll \mu$. Thus, for any $\chi \in \thi$, $\mu(\Delta) = 0 \Rightarrow \mu^\E_\chi(\Delta) = 0$. But since $\mu$ is $\sigma$-finite and $\mu^\E_\chi$ is a complex measure of finite total variation, the Radon-Nikodym theorem applies: $\mu^\E_\chi \ll \mu$ iff $\exists \dfrac{d\mu^\E_\chi}{d\mu} \in L^1(\Sigma,\mu)$ with $\mu^\E_\chi(\Delta) = \int_\Delta \dfrac{d\mu^\E_\chi}{d\mu} d\mu$. Furthermore, if $\chi \geq 0$ then $\mu^\E_\chi \geq 0$ so by the positivity of $\mu$ it follows that $\dfrac{d\mu^\E_\chi}{d\mu} \geq 0$. Finally, if $\Tr[\chi] = 1$ then since $\E(\Sigma) = \mathbb{1}$ we have $\mu^\E_\chi(\Sigma) = \int_\Sigma \dfrac{d\mu^\E_\chi}{d\mu} d\mu = 1$.
\end{proof}
    
    \subsection{Proof of Thm. \ref{thm: absolute continuity POVMs}}
    \label{proof absolute continuity}
    \begin{proof}
        Suppose $\E \ll_\text{ac} \mu$. Then $\forall \Delta \in \Bor(\Sigma)$,
    \begin{equation}
        \abs{\mu^\E_\chi(\Delta)} = \abs{\Tr[\chi \E(\Delta)]} \leq \norm{\chi}_1 \norm{\E(\Delta)} \leq c \norm{\chi}_1 \mu(\Delta) \, .
    \end{equation}
    Then for any measurable partition $\{\Delta_k\}$ of $\Delta$,
    \begin{equation}
        \sum_k \abs{\mu^\E_\chi(\Delta_k)} \leq c \norm{\chi}_1 \sum_k \mu(\Delta_k) = c \norm{\chi}_1 \mu(\Delta)
    \end{equation}
    hence the total variation satisfies $\abs{\mu^\E_\chi}(\Delta) \leq c \norm{\chi}_1 \mu(\Delta)$ hence $\abs{\mu^\E_\chi} \ll_\text{ac} \mu$ and thus $\mu^\E_\chi \ll_\text{ac} \mu$ so the Radon-Nikodym derivative $\dfrac{d\mu^\E_\chi}{d\mu} \in L^\infty(\Sigma,\mu)$. 

    Conversely, suppose that $\forall \chi \in \thi, \exists \dfrac{d\mu^\E_\chi}{d\mu} \in L^\infty(\Sigma,\mu)$ such that
    \begin{equation}
        \mu^\E_\chi(\Delta) = \int_\Delta \frac{d\mu^\E_\chi}{d\mu} d\mu \, .
    \end{equation}
    For all measurable $\Delta$ such that $\mu(\Delta)=0$,
    \begin{equation}
        \mu^{\E}_\chi(\Delta) = \int_\Delta f_\chi d\mu = 0
    \end{equation}
    so $\Tr[\chi \E(\Delta)] = 0$ for all $\chi \in \thi$ so since $\bh \cong \thi^*$ this implies $\E(\Delta) = 0$ so $\E \ll \mu$.  For all measurable $\Delta$ such that $0 < \mu(\Delta) < \infty$, let
    \begin{equation}
        L_\Delta : \thi \ni \chi \mapsto \frac{\mu^\E_\chi(\Delta)}{\mu(\Delta)} = \frac{1}{\mu(\Delta)} \int_\Delta \frac{d\mu^\E_\chi}{d\mu} d\mu \in \mathbb{C} \, .
    \end{equation}
    Then for each fixed $\chi$,
    \begin{equation}
        \abs{L_\Delta(\chi)} \leq \frac{1}{\mu(\Delta)} \int_\Delta \abs{f_\chi} d\mu \leq \norm{f_\chi}_\infty
    \end{equation}
    so the family $\{L_\Delta : 0 < \mu(\Delta) < \infty\}$ is pointwise bounded on $\thi$. By Banach-Steinhaus, this implies that
    \begin{equation}
        c := \sup_{0 < \mu(\Delta) < \infty} \norm{L_\Delta} < \infty \, .
    \end{equation}
    But
    \begin{equation}
        \norm{L_\Delta} = \sup_{\norm{\chi}_1 \leq 1} \frac{\abs{\Tr[\chi \E(\Delta)]}}{\mu(\Delta)} = \frac{\norm{\E(\Delta)}}{\mu(\Delta)} \, .
    \end{equation}
    Hence, $\norm{\E(\Delta)} \leq c \mu(\Delta)$ for all $\Delta$ such that $0 < \mu(\Delta) < \infty$. Hence, $\E \ll_\text{ac} \mu$.

    Furthermore, if $\chi \geq 0$ then $\mu^\E_\chi \geq 0$ so by the positivity of $\mu$ it follows that $\dfrac{d\mu^\E_\chi}{d\mu} \geq 0$. Finally, if $\Tr[\chi] = 1$ then since $\E(\Sigma) = 1$ we have $\mu^\E_\chi(\Sigma) = \int_\Sigma \dfrac{d\mu^\E_\chi}{d\mu} d\mu = 1$.
    \end{proof}

    \subsection{Proof of Lem.~\ref{lem:projection marginal covariance}}
    \label{app:proof projection marginal covariance}

    \begin{proof}
        Fix $b\in\mathscr{B}$, $x\in M$ and $N\in\Bor(M)$ and write $L_{b(x)}:\Gamma^x\to\Gamma^{\varphi_b(x)}$, $\beta\mapsto b(x)\circ\beta$. The arrow $b(x):x\to\varphi_b(x)$ gives, by covariance of the frame observable, 
    \begin{equation}
    \label{eqn:proof cov step}
    U_{b(x)}\,\F_{\R,\pi}^{x}(N)\,U_{b(x)}^\dagger
    =U_{b(x)}\,\E_\R^{x}\bigl(\pi_x^{-1}(N)\bigr)\,U_{b(x)}^\dagger
    =\E_\R^{\varphi_b(x)}\bigl(L_{b(x)}\pi_x^{-1}(N)\bigr).
    \end{equation}
    We claim
    \begin{equation}
    \label{eqn:set identity}
    L_{b(x)}\bigl(\pi_x^{-1}(N)\bigr)=\pi_{\varphi_b(x)}^{-1}\bigl(\varphi_b(N)\bigr).
    \end{equation}
    Here $L_{b(x)}$ is a Borel isomorphism of target fibers, with inverse $L_{b(x)}^{-1}=L_{b(x)^{-1}}$, and $\varphi_b\in\mathrm{Diff}(M)$, so $\varphi_b(N)\in\Bor(M)$ and both sides of
    \eqref{eqn:set identity} lie in $\Bor(\Gamma^{\varphi_b(x)})$. $\mathscr{B}$-compatibility, in pointwise form, reads
    \begin{equation}
    \label{eqn:BC pointwise}
    \pi_{\varphi_b(x)}\circ L_{b(x)}=\varphi_b\circ\pi_x\qquad\text{on }\Gamma^x.
    \end{equation}
    For $(\subseteq)$: if $\gamma=L_{b(x)}\beta$ with $\beta\in\pi_x^{-1}(N)$, then by
    \eqref{eqn:BC pointwise}, $\pi_{\varphi_b(x)}(\gamma)=\varphi_b(\pi_x(\beta))\in\varphi_b(N)$,
    so $\gamma\in\pi_{\varphi_b(x)}^{-1}(\varphi_b(N))$.    
    For $(\supseteq)$: if
    $\pi_{\varphi_b(x)}(\gamma)\in\varphi_b(N)$, set $\beta:=L_{b(x)}^{-1}\gamma\in\Gamma^x$; then
    $\varphi_b(\pi_x(\beta))=\pi_{\varphi_b(x)}(\gamma)\in\varphi_b(N)$, and since $\varphi_b$ is
    injective, $\pi_x(\beta)\in N$, whence $\gamma=L_{b(x)}\beta\in L_{b(x)}(\pi_x^{-1}(N))$. This proves \eqref{eqn:set identity}.

Substituting \eqref{eqn:set identity} into \eqref{eqn:proof cov step},
\begin{equation}
U_{b(x)}\,\F_{\R,\pi}^{x}(N)\,U_{b(x)}^\dagger
=\E_\R^{\varphi_b(x)}\bigl(\pi_{\varphi_b(x)}^{-1}(\varphi_b(N))\bigr)
=\F_{\R,\pi}^{\varphi_b(x)}\bigl(\varphi_b(N)\bigr).
\end{equation}
    \end{proof}

    \subsection{Proof of Lem. \ref{lem:supp E fiber closed}}
    \label{app:proof supp E fiber closed}

    \begin{proof}
    
    For a state $\omega_x$, $\mu^{\E_\R^x}_{\omega_x}(\cdot)=\Tr[\omega_x\E_\R^x(\cdot)]$ is a Borel probability measure on $\Gamma^x$, and $\mu^{\E_\R^x}_{\ketbra{\psi}}(\cdot)=\expval{\E_\R^x(\cdot)}{\psi}$. The set $O^x_{\max}$ is open. As a topological subspace of the second-countable $\Gamma^x$ it is Lindel\"of, so the open cover $\{O \subseteq O^x_{\max}\mid\E_\R^x(O)=0\}$ of
$O^x_{\max}$ has a countable subcover $\{O^x_k\}_{k\in\mathbb{N}}$ with $O^x_{\max}=\bigcup_kO^x_k$ and $\E_\R^x(O^x_k)=0$. For every unit $\ket{\psi}$, countable subadditivity of $\mu^{\E_\R^x}_{\ketbra{\psi}}$ gives
\begin{equation}
    \expval{\E_\R^x(O^x_{\max})}{\psi}\le\sum_k\expval{\E_\R^x(O^x_k)}{\psi}=0\,.
\end{equation}
Since $\E_\R^x(O^x_{\max})\ge0$ and this holds for all $\ket{\psi}$, we get $\E_\R^x(O^x_{\max})=0$.

For any $\omega_x \in \framestatex$, we get $\mu^{\E_\R^x}_{\omega_x}(O^x_{\max})=\Tr[\omega_x\E_\R^x(O^x_{\max})]=0$; as $O^x_{\max}$ is open, $\supp\mu^{\E_\R^x}_{\omega_x}\subseteq\Gamma^x\smallsetminus O^x_{\max}$. Taking the union over $\omega_x$ yields the inclusion $\supp \E_\R^x \subseteq \Gamma^x \smallsetminus O^x_{\max}$.

Fix a countable base $\{U^x_k\}_{k\in\mathbb{N}}$ of $\Gamma^x$
and let $K:=\{k\mid\E_\R^x(U^x_k)\neq0\}$. By normalization $\E_\R^x(\Gamma^x)=\mathbb{1}_{\bhrx}\neq0$, so the
$U^x_k$ are not all null and $K\neq\varnothing$. For $k\in K$ set $A^x_k:=\E_\R^x(U^x_k)\ge0$, $A^x_k\neq0$.
Since $A^x_k\ge0$, $\expval{A^x_k}{\eta}=0\Leftrightarrow A^x_k\ket{\eta}=0$, so $\ker A^x_k$ is a proper
closed subspace, hence nowhere dense. Thus
\begin{equation}
    D^x_k:=\{\ket{\eta}\in\hirx\mid\expval{A^x_k}{\eta}>0\}=\hirx\smallsetminus\ker A^x_k
\end{equation}
is open and dense. The family $\{D^x_k\}_{k\in K}$ is countable, so by the Baire category theorem in the complete space $\hirx$ the intersection $\bigcap_{k\in K}D^x_k$ is dense, in particular nonempty; as $\bigcap_{k\in K}D^x_k$ is invariant under nonzero scaling and contains no zero vector as $0 \in \ker A_k^x$. Then $\mu^{\E_\R^x}_{\ketbra{\psi}}(U^x_k)=\expval{A^x_k}{\psi}>0$ for every $k\in K$.

We now show that $\supp\mu^{\E_\R^x}_{\ketbra{\psi}}=\Gamma^x\smallsetminus O^x_{\max}$. The inclusion $\subseteq$ has been verified above. For $\supseteq$, let $\beta\in\Gamma^x\smallsetminus O^x_{\max}$ and let $O\ni\beta$ be open; we then know that $\E_\R^x(O)\neq0$. Covering $O$ by the base sets it contains, the previous countable-subadditivity argument shows that they are not all null, so some $U^x_{k_0}\subseteq O$
has $\E_\R^x(U^x_{k_0})\neq0$, i.e. $k_0\in K$. Hence
$\mu^{\E_\R^x}_{\ketbra{\psi}}(O)\ge\mu^{\E_\R^x}_{\ketbra{\psi}}(U^x_{k_0})=\expval{A^x_{k_0}}{\psi}>0$.
As $O$ was an arbitrary neighbourhood of $\beta$, $\beta\in\supp\mu^{\E_\R^x}_{\ketbra{\psi}}$.

Thus, $\Gamma^x\smallsetminus O^x_{\max}=\supp\mu^{\E_\R^x}_{\ketbra{\psi}}\subseteq\bigcup_{\omega_x}\supp\mu^{\E_\R^x}_{\omega_x} = \supp \E_\R^x$, hence using the other inclusion this is an equality. Thus $\supp\E_\R^x=\Gamma^x\smallsetminus O^x_{\max}$, closed as the complement of the open $O^x_{\max}$, and attained by the single state $\ketbra{\psi}$.

\end{proof}

    \subsection{Proof of Prop. \ref{prop:norm 1 groupoid sequence}}
    \label{app:proof norm 1 groupoid sequence}

    \begin{proof}

    Here we adapt the proof of Prop. 2.13 of \cite{carette_operational_2025} to the case of groupoids. $\Gamma^x$ is metrizable so we choose a compatible metric $d_x$ on $\Gamma^x$. Let $B^x_n := \{\beta \in \Gamma^x \mid d_x(\beta,\beta_x) < \frac{1}{n}\}$ be a sequence of open balls of radius $1/n$. Since $B_n^x$ is an open neighbourhood of $\beta_x \in \supp \E_\R^x$, every such neighbourhood is non-null, so $\E_\R^x(B_n^x) \neq 0$ for all $n$, so by Prop. \ref{prop:norm 1 property povm} we can choose unit vectors $\ket{\phi_n^x}$ such that $\expval{\E_\R^x(B^x_n)}{\phi_n^x} > 1-1/n$. Denoting by $\omega_n^x$ the associated pure state $\omega^x_n = \ketbra{\phi_n^x}$, we have $\Tr[\omega_n^x \E_\R^x(B_n^x)] = \expval{\E_\R^x(B^x_n)}{\phi_n^x} > 1-1/n$ and thus $\mu^{\E_\R^x}_{\omega_n^x}(B_n^x) > 1-1/n$.

    We can now show convergence using the portmanteau theorem \cite{billingsley_convergence_1999}. It suffices to show $\lim_{n \to \infty}\mu^{\E_\R^x}_{\omega_n^x}(X) =\delta_{\beta_x}(X)$ for every $X \in \Bor(\Gamma^x)$ with $\beta_x \notin \partial X$. Decompose
    \begin{equation}
        \mu^{\E_\R^x}_{\omega_n^x}(X) = \mu^{\E_\R^x}_{\omega_n^x}(X \smallsetminus B_n^x) + \mu^{\E_\R^x}_{\omega_n^x}(X \cap B_n^x) \, .
    \end{equation}
    The first term obeys $\mu^{\E_\R^x}_{\omega_n^x}(X \smallsetminus B_n^x) \leq \mu^{\E_\R^x}_{\omega_n^x}(\Gamma^x \smallsetminus B_n^x) = 1 - \mu^{\E_\R^x}_{\omega_n^x}(B_n^x) < \frac{1}{n} \to 0$. For the second term, since $\beta_x \notin \partial X$,
    \begin{itemize}
        \item if $\beta_x \in X$, then $\beta_x \in \mathring{X}$, the interior of $X$, which is open, so $B_n^x \subseteq \mathring{X} \subseteq X$ for $n$ sufficiently large, so $X \cap B_n^x = B_n^x$ and $\mu^{\E_\R^x}_{\omega_n^x}(X \cap B_n^x) = \mu^{\E_\R^x}_{\omega_n^x}(B_n^x) \to 1 = \delta_{\beta_x}(X)$.
        \item if $\beta_x \notin X$, then $\beta_x \in \Gamma^x \smallsetminus \overline{X}$, which is open, so $B_n^x \subseteq \Gamma^x \smallsetminus \overline{X} \subseteq \Gamma^x \smallsetminus X$ for $n$ sufficiently large; hence $X \cap B_n^x = \varnothing$ and $\mu^{\E_\R^x}_{\omega_n^x}(X \cap B_n^x) = 0 = \delta_{\beta_x}(X)$.
    \end{itemize}
    In both cases, $\mu^{\E_\R^x}_{\omega_n^x}(X) \to \delta_{\beta_x}(X)$, so by portmanteau $\mu^{\E_\R^x}_{\omega_n^x} \to \delta_{\beta_x}$ weakly.

    \end{proof}

    \subsection{Proof of Prop.~\ref{def:groupoid-relativization-section4}}
    \label{app:proof groupoid-relativization-section4}

    \begin{proof}
        First, we start with showing that $(\euro^\R(A)_x)$ defines an essentially bounded measurable field in $\A(\scrhis\boxtimes\scrhir)$. We have
    \begin{equation}
        \norm{\euro^\R(A)_x}_{\bhsrx} \leq \norm{A}_{\A(\scrhis)} \cdot \norm{\E_\R^x(\Gamma^x)}_{\bhrx} < \infty
    \end{equation}
    with 
    \begin{equation}
        \esssup_{x\in M} \norm{\euro^\R(A)_x}_{\bhsrx} \leq \esssup_{x\in M} \left(\norm{A}_{\A(\scrhis)} \cdot \underbrace{\norm{\E_\R^x(\Gamma^x)}_{\bhrx}}_{=1}\right) < \infty
    \end{equation}
    Moreover, $\E_\R = (\E_\R^x)$ is a measurable field of normalized POVMs, $A = (A_x) \in \A(\scrhis)$ is a measurable operator field and $\beta \mapsto V_\beta$ is measurable, so $(\euro^\R(A)_x)$ defines a measurable field of operators. We continue now by showing that $(\euro^\R_\omega(A)_x)$ defines an essentially bounded measurable field in $\A(\scrhis)$. We have
    \begin{equation}
        \norm{\euro^\R_{\omega}(A)_x}_{\bhsx} \leq \left(\norm{A}_{\A(\scrhis)}\right) \cdot \Tr[\omega_x \E_\R^x(\Gamma^x)] < \infty
    \end{equation}
    with 
    \begin{equation}
        \esssup_{x\in M} \norm{\euro^\R_{\omega}(A)_x}_{\bhsx} \leq \esssup_{x\in M} \left(\norm{A}_{\A(\scrhis)} \cdot \underbrace{\Tr[\omega_x \E_\R^x(\Gamma^x)]}_{=1}\right) < \infty.
    \end{equation}
    Moreover, $\E_\R = (\E_\R^x)$ is a measurable field of normalized POVMs, $A = (A_x) \in \A(\scrhis)$ is a measurable operator field, $\omega = (\omega_x) \in \framestate$ is a measurable field of state and $\beta \mapsto V_\beta$ is measurable so $(\euro^\R_{\omega}(A)_x)$ indeed defines a measurable field of operators as above.
    \end{proof}

    \subsection{Proof of Thm.~\ref{thm:groupoid-relativization-invariant}}
    \label{app:proof properties euro}

    \begin{proof}

    Throughout the proof, all fiber-wise identities are understood for $\mu$-almost every $x\in M$.
\begin{enumerate}
\item Let $A,A'\in\A(\scrhis)$ be two representatives of the same decomposable operator, so that $A_y=A'_y$ for $\mu$-a.e. $y \in M$. Set $D:=A-A'$. There is a $\mu$-null Borel set $N\subseteq M$ such that $D_y=0$ for $y \notin N$. Fix $x$ such that $\F_{\R,s}^x\ll\mu$. Then $\E_\R^x(s^{-1}(N)\cap\Gamma^x) = \F_{\R,s}^x(N) = 0$. For $\beta\notin s^{-1}(N)\cap\Gamma^x$, one has $V_\beta D_{s(\beta)}V_\beta^\dagger=0$. 

Define $\Delta_x := \euro^\R(A)_x-\euro^\R(A')_x = \int_{\Gamma^x} \left(V_\beta D_{s(\beta)}V_\beta^\dagger\right)
\otimes d\E_\R^x(\beta)$. Let $\rho_x\in\systemstatex$ and $\omega_x\in\framestatex$. Then
\begin{equation}
    \Tr[\rho_x \otimes \omega_x \Delta_x] = \int_{\Gamma^x} \Tr[\rho_x V_\beta D_{s(\beta)}V_\beta^\dagger] \,  d\mu^{\E_\R^x}_{\omega_x}(\beta).
\end{equation}

The scalar integrand vanishes outside $s^{-1}(N)\cap\Gamma^x$. Moreover,
the Born probability measure $\mu^{\E_\R^x}_{\omega_x}$ vanishes on $s^{-1}(N)\cap\Gamma^x$, since $\E_\R^x(s^{-1}(N)\cap\Gamma^x)=0$. Therefore $\Tr[\rho_x \otimes \omega_x \Delta_x] = 0$ for any $\rho_x \in \systemstatex$ and $\omega_x \in \framestatex$. By ultraweak density of linear combinations of product states in $\thsrx$ which separates $\bhsrx$, we get $\Delta_x=0$. The fiber-wise equalities hold at every $x$ satisfying $\F_{\R,s}^x\ll\mu$,
hence for $\mu$-almost every $x$. Taking direct integrals gives the
global equalities.
\item We prove the properties of $\euro^\R|_x$. Linearity follows immediately from the linearity of $A\mapsto K_x^A(\beta)$ and from the linearity of the operator-valued integral. For unitality, let $\id_{\A(\scrhis)}$ denote the identity field, so that
\begin{equation}
(\id_{\A(\scrhis)})_y=\id_{\his^y}
\end{equation}
for every $y\in M$. Then
\begin{equation}
V_\beta \id_{\his^{s(\beta)}}V_\beta^\dagger=\id_{\hisx}.
\end{equation}
Therefore
\begin{equation}
\euro^\R(\id_{\A(\scrhis)})_x
=
\int_{\Gamma^x}\id_{\hisx}\otimes d\E_\R^x(\beta)  
=
\id_{\hisx}\otimes \E_\R^x(\Gamma^x) 
=
\id_{\hisx}\otimes \id_{\hirx} 
=
\id_{\hisx\otimes\hirx}.
\end{equation}
Thus $\euro^\R|_x$ is unital. Complete positivity and normality follow from \cite[Corollary 7.4 and Theorem 5.3]{glowacki_w-algebraic_2026}. Since $\euro^\R|_x$ is unital and completely positive, it is contractive. Since every positive linear map between $C^*$-algebras is $*$-preserving, $\euro^\R|_x$ is $*$-preserving. Finally, if $A$ is an effect in $\A(\scrhis)$, i.e.
\begin{equation}
0\leq A\leq \id_{\A(\scrhis)},
\end{equation}
then positivity and unitality imply
\begin{equation}
0
\leq
\euro^\R(A)_x
\leq
\euro^\R(\id_{\A(\scrhis)})_x
=
\id_{\hisx\otimes\hirx}.
\end{equation}
Thus $\euro^\R|_x$ is effect-preserving.

\item These properties are given in \cite{carette_operational_2025} and refs.~therein.

\item These properties follow from the identity
\begin{equation}
\euro^\R_\omega|_x
=
\Gamma_{\omega_x}^x\circ \euro^\R|_x .
\end{equation}
Indeed, the composition of two linear, normal, unital, completely positive, contractive, $*$-preserving and effect-preserving maps has the same properties. 

\item Linearity, unitality, complete positivity, $*$-preservation and effect preservation follow fiber-wise from item $1$. Contractivity follows from the fiber-wise estimate
\begin{equation}
\norm{\euro^\R(A)_x}_{\bhsx}
\leq
\norm{A}_\infty \norm{\E_\R^x(\Gamma^x)} = \norm{A}_\infty,
\end{equation}
hence
\begin{equation}
\norm{\euro^\R(A)} = \esssup_{x\in M}
\norm{\euro^\R(A)_x} \leq \norm{A}_\infty.
\end{equation}

It remains only to spell out normality. Let $(A_i)_{i\in I}$ be a bounded increasing net in $\A(\scrhis)_+$ with supremum $A$. By the fiber-wise normality already proved,
\begin{equation}
\euro^\R(A_i)_x \uparrow \euro^\R(A)_x
\end{equation}
for $\mu$-almost every $x$. Let
\begin{equation}
\chi=\int_M^\oplus \chi_x\,d\mu(x) \in \A(\scrhis\boxtimes\scrhir)_*
\end{equation}
be positive. Then
\begin{equation}
\Tr[\chi\euro^\R(A_i)] = \int_M \Tr_{\hisx\otimes\hirx} [\chi_x\euro^\R(A_i)_x] \,d\mu(x).
\end{equation}
The integrand increases pointwise to $\Tr_{\hisx\otimes\hirx}[\chi_x\euro^\R(A)_x]$
and is dominated by $\norm{A}_\infty\norm{\chi}_1$. 
Since $\chi\in\A(\scrhis\boxtimes\scrhir)_*$, the function $x\mapsto \|\chi_x\|_1$ is $L^1(M,\mu)$. The monotone convergence theorem therefore gives
\begin{equation}
\Tr[\chi\euro^\R(A_i)] \uparrow \Tr[\chi\euro^\R(A)].
\end{equation}
Hence $\euro^\R$ is normal. 

\item The same direct-integral argument as above applies to the fiber-wise slice maps. Namely,
\begin{equation}
(\Gamma_\omega B)_x = \Gamma_{\omega_x}^x(B_x).
\end{equation}
Linearity, unitality, complete positivity, contractivity, $*$-preservation and effect preservation follow fiber-wise from item $2$. For normality, if $(B_i)_{i\in I}$ is a bounded increasing net in $\A(\scrhis\boxtimes\scrhir)_+$ with supremum $B$, then
\begin{equation}
\Gamma_{\omega_x}^x(B_{i,x})
\uparrow
\Gamma_{\omega_x}^x(B_x)
\end{equation}
for $\mu$-almost every $x$. Pairing with an arbitrary positive element of $\A(\scrhis)_*$ and integrating over $M$ gives
\begin{equation}
\Gamma_\omega(B_i)
\uparrow
\Gamma_\omega(B)
\end{equation}
ultraweakly. Thus $\Gamma_\omega$ is normal. 

\item These properties follow from
\begin{equation}
\euro^\R_\omega = \Gamma_\omega\circ\euro^\R.
\end{equation}
Since both $\Gamma_\omega$ and $\euro^\R$ are linear, normal, unital, completely positive, contractive, $*$-preserving and effect-preserving, their composition have the same properties.

\item Let $\alpha:x\to y$ be an arrow of $\Gamma$ and write
\begin{equation}
W_\alpha:=V_\alpha\otimes U_\alpha:
\hisx\otimes\hirx
\to
\his^y\otimes\hir^y.
\end{equation}
Using functoriality of $V$ and covariance of $\E_\R$, we compute
\begin{equation}
\begin{aligned}
W_\alpha\euro^\R(A)_xW_\alpha^\dagger
&=
\int_{\Gamma^x}
\left(
V_\alpha V_\beta A_{s(\beta)}V_\beta^\dagger V_\alpha^\dagger
\right)
\otimes
U_\alpha\,d\E_\R^x(\beta)\,U_\alpha^\dagger \\
&=
\int_{\Gamma^x}
\left(
V_{\alpha\circ\beta}A_{s(\beta)}
V_{\alpha\circ\beta}^\dagger
\right)
\otimes
d\E_\R^y(L_\alpha\beta).
\end{aligned}
\end{equation}
The map
\begin{equation}
L_\alpha:\Gamma^x\to\Gamma^y,
\qquad
\beta\mapsto \alpha\circ\beta,
\end{equation}
is a Borel isomorphism. If
\begin{equation}
\gamma=L_\alpha\beta=\alpha\circ\beta,
\end{equation}
then
\begin{equation}
s(\gamma)=s(\beta).
\end{equation}
Changing variables in the operator-valued integral gives
\begin{equation}
W_\alpha\euro^\R(A)_xW_\alpha^\dagger = \int_{\Gamma^y} \left(V_\gamma A_{s(\gamma)}V_\gamma^\dagger \right) \otimes d\E_\R^y(\gamma) = \euro^\R(A)_y.
\end{equation}
Thus
\begin{equation}
\euro^\R(A)\in\A(\scrhis\boxtimes\scrhir)^\Gamma,
\end{equation}
and therefore
\begin{equation}
\euro^\R\bigl(\A(\scrhis)\bigr)
\subseteq
\A(\scrhis\boxtimes\scrhir)^\Gamma.
\end{equation}

\item Suppose that $\R$ is sharp. We prove multiplicativity fiber-wise. For $A\in\A(\scrhis)$ and $\mu$-almost every $x \in M$, define the function $f_x(\beta) := V_\beta A_{s(\beta)} V_\beta^\dagger$ where $\beta \in \Gamma^x$. Since $f \in L^\infty(\Gamma^x,\bhsx)$ and $\E_\R^x$ is sharp, the integration map $\int_{\Gamma^x} d\E_\R^x : L^\infty(\Gamma^x,\bhsx) \to \bhsrx$ given by $[f_x] \mapsto \int_{\Gamma^x} f_x \otimes d\E_\R^x$ is multiplicative \cite[Theorem 5.3]{glowacki_w-algebraic_2026}. This proves that $\euro^\R|_x$ is multiplicative for $\mu$-almost every $x$. Taking the direct integral gives
\begin{equation}
\euro^\R(A)\euro^\R(B)
=
\euro^\R(AB).
\end{equation}
Since $\euro^\R|_x$ and $\euro^\R$ are multiplicative, linear and $*$-preserving, they are normal unital $*$-homomorphisms for $\R$ sharp.
\item We adapt the proof of this property given in \cite{glowacki_operational_2023}, elevated from the $\yen^\R$ map to the $\euro^\R$ map. We will first show that if $\R$ is localizable then $\euro^\R$ is isometric. Given $A = (A_x) \in \A(\scrhis)_{\mathrm{uw}}$, we know from above that $\norm{\euro^\R(A)} \leq \norm{A}$, so it remains to show that $\norm{A} \leq \norm{\euro^\R(A)}$. By Thm.~\ref{thm:euro localisation limit}, if $\forall x \in M, 1_x \in \supp \E_\R^x$ then there is a localising sequence of states $\omega_n=(\omega_n^x) \in \framestate$ such that for all $\rho \in \normalsystemstate$,
\begin{align}
    \abs{\Tr[\rho A]} &= \lim_{n \to \infty} \abs{\Tr[\rho (\Gamma_{\omega_n} \circ \euro^\R)(A)]} \\
    &= \lim_{n \to \infty} \abs{\Tr[(\rho \boxtimes \omega_n)(\euro^\R(A))]} \\
    &\leq \sup_{\Omega \in \normalsystemframestate} \abs{\Tr[\Omega \euro^\R(A)]} = \norm{\euro^\R(A)}
\end{align}
where the fact that $\rho \boxtimes \omega \in \normalsystemframestate$ is shown in Prop.~\ref{prop:joint normal state}. Since $\rho \in \normalsystemstate$ above is arbitrary, we can conclude
\begin{equation}
    \norm{A} = \sup_{\rho \in \normalsystemstate} \abs{\Tr[\rho A]} \leq \norm{\euro^\R(A)}.
\end{equation}
Thus, if $\R$ is localizable then $\euro^\R$ is an isometry, and is thus injective, since for any $A \in \A(\scrhis)$ such that $\norm{\euro^\R(A)}=0$, we have $\norm{A}=\norm{\euro^\R(A)}=0 \Rightarrow A=0$ so $\ker \euro^\R = \{0\}$ which concludes the proof.
\end{enumerate}

\end{proof}

    \subsection{Proof of Thm.~\ref{thm:covariance euro}}
    \label{app:proof of covariance euro}

\begin{proof}

    By the functoriality of the representation and the covariance of the frame observable,
    \begin{equation}
        V_\alpha \euro^\R_\omega(A)_x V_\alpha^\dagger = \int_{\Gamma^x} V_\alpha V_\beta A_{s(\beta)} V_\beta^\dagger V_{\alpha}^\dagger \, d\mu^{\E_\R^x}_{\omega_x}(\beta) = \int_{\Gamma^x} V_{\alpha \circ \beta} A_{s(\beta)} V_{\alpha \circ \beta}^\dagger \, d\mu^{\E_\R^x}_{\omega_x}(\beta).
    \end{equation}
    The covariance of the frame observable implies that the measure $\mu^{\E_\R^x}_{\omega_x}$ is pushed forward by left translation $L_\alpha$ to the measure $\mu^{\E_\R^z}_{\omega_z^\alpha}$ on $\Gamma^z$, since
    \begin{multline}
        d\Tr[\omega_x \E_\R^x(\beta)] = d\Tr[\omega_x \E_\R^x(L_{\alpha^{-1}} \circ L_\alpha \circ \beta)] = d\Tr[\omega_x U_{\alpha^{-1}} \E_\R^z(L_\alpha \circ \beta) U_\alpha] = d\Tr[U_\alpha \omega_x U_\alpha^\dagger \E_\R^z(L_\alpha \circ \beta)] \\ = d\Tr[\omega_z^\alpha \E_\R^z(L_\alpha \circ \beta)].
    \end{multline}
    Thus,
    \begin{equation}
        V_\alpha \euro^\R_\omega(A)_x V_\alpha^\dagger = \int_{\Gamma^z} V_{\alpha \circ \beta} A_{s(\beta)} V_{\alpha \circ \beta}^\dagger d\mu^{\E_\R^z}_{\omega^\alpha_z}(L_\alpha \circ \beta) \stackrel{\gamma = L_\alpha \circ \beta}{=} \int_{\Gamma^z} V_\gamma A_{s(\gamma)} V_\gamma^\dagger \, d\mu^{\E_\R^z}_{\omega_z^\alpha}(\gamma) = \euro^\R_{\omega^\alpha}(A)_z.
    \end{equation}
\end{proof}

    \subsection{Proof of Thm. \ref{thm:euro localisation limit}}
    \label{app:proof yen localisation limit}

    \begin{proof}

    Let $\rho_x \in \systemstatex$. We have
    \begin{equation}
        \Tr[\rho_x \euro_{\omega_n}^\R(A)_x] = \Tr\left[\rho_x \int_{\Gamma^x} V_{\beta} A_{s(\beta)} V_\beta^\dagger \, d\mu^{\E_\R^x}_{\omega_n^x}(\beta)\right] = \int_{\Gamma^x} \Tr[\rho_x V_{\beta} A_{s(\beta)} V_\beta^\dagger] \, d\mu^{\E_\R^x}_{\omega_n^x}(\beta) \, .
    \end{equation}
    The function $\beta \mapsto \Tr_{\hisx}[\rho_x V_\beta A_{s(\beta)} V_{\beta}^\dagger]$ is continuous (by fiber-wise ultraweak continuity of $V_\beta$ and of $A$) and bounded, and by Prop. \ref{prop:norm 1 groupoid sequence} the sequence of measures $(\mu^{\E_\R^x}_{\omega_n^x})$ converges weakly to $\delta_{1_x}$, so
    \begin{equation}
        \lim_{n\to \infty} \int_{\Gamma^x}\Tr[\rho_x V_{\beta} A_{s(\beta)} V_\beta^\dagger] \, d\mu^{\E_\R^x}_{\omega_n^x}(\beta) = \Tr[\rho A_x]
    \end{equation}
    hence the sequence of operators $\left(\euro_{\omega_n}^\R(A)_x\right)$ converges to $A_x$ in the ultraweak topology of $\bhsx$ since states span the trace class operators. More generally, for any $\chi_x \in \thsx$, $\lim_{n \to \infty} \Tr[\chi_x \euro_{\omega_n}^\R(A)_x] = \Tr[\chi_x A_x]$.

    Moreover, if $\forall x \in M$, $1_x \in \supp \E_\R^x$, we have that for any $\chi \in \A(\scrhis)_*$,
    \begin{multline}
        \Tr[\chi \euro_{\omega_n}^\R(A)] =  \Tr\left[\int^\oplus_{M} \chi_x \int_{\Gamma^x} V_{\beta} A_{s(\beta)} V_\beta^\dagger \, d\mu^{\E_\R^x}_{\omega_n^x}(\beta) d\mu(x)\right] = \int_M \Tr\left[\chi_x \int_{\Gamma^x} V_{\beta} A_{s(\beta)} V_\beta^\dagger \, d\mu^{\E_\R^x}_{\omega_n^x}(\beta)\right] d\mu(x) \\ = \int_M \Tr[\chi_x \euro_{\omega_n^x}^{\R}(A)_x] d\mu(x) \, . 
    \end{multline}
    But 
    \begin{equation}
        \abs{\Tr[\chi_x \euro_{\omega_n}^{\R}(A)_x]} \leq \norm{\chi_x}_{1} \cdot \norm{\euro_{\omega_n}^{\R}(A)_x}_{\bhsx} \leq \norm{\chi_x}_{1} \cdot \norm{A} < \infty 
    \end{equation}
    so $x \mapsto \Tr[\chi_x \euro_{\omega_n}^{\R}(A)_x]$ is measurable and so
    \begin{equation}
        \lim_{n \to \infty} \Tr[\chi \euro_{\omega_n}^\R(A)] = \int_M \left(\lim_{n \to \infty}\Tr[\chi_x \euro_{\omega_n}^{\R}(A)_x]\right) d\mu(x) = \int_M \Tr[\chi_x A_x] \, d\mu(x) = \Tr[\chi A]
    \end{equation}
    which concludes the proof.

    \end{proof}

    \subsection{Proof of Thm.~\ref{thm:canonical-qrf}}
    \label{app:proof canonical qrf}
    
\begin{proof} 
We verify the three clauses of Def.~\ref{def:groupoid-qrf-section4} and then the additional properties.

\emph{(i) The field of Hilbert spaces.} Each $L^2(\Gamma^x,\nu^x)$ is a separable Hilbert space, since $\Gamma^x$ is second countable and $\nu^x$ is a Radon measure. That the family \eqref{eqn:canonical-hilbert-field} is a continuous field of Hilbert spaces, with continuous sections generated by $\{f|_{\Gamma^x}\mid f\in C_c(\Gamma)\}$, is the standard construction underlying the regular representation of a continuous groupoid with continuous Haar system; see \cite[Sec.~II.1]{renault_locally_1980} and \cite[Ch.~3]{landsman_mathematical_1998}. Continuity of $x\mapsto\norm{f|_{\Gamma^x}}^2_{L^2(\Gamma^x,\nu^x)} =\int_{\Gamma^x}\abs{f}^2\,d\nu^x$ for $f\in C_c(\Gamma)$ is precisely the continuity of the Haar system.

\emph{(ii) $U^{\mathrm{can}}$ is a fibre-wise ultraweakly continuous unitary representation.} Let $\alpha:x\to y$. If $\gamma\in\Gamma^{y}$ then $t(\gamma)=y=t(\alpha)=s(\alpha^{-1})$, so $\alpha^{-1}\circ\gamma$ is defined and $t(\alpha^{-1}\circ\gamma)=t(\alpha^{-1})=s(\alpha)=x$; hence $\alpha^{-1}\circ\gamma\in\Gamma^{x}$ and \eqref{eqn:canonical-representation} makes sense. Equivalently $U^{\mathrm{can}}_\alpha f=f\circ L_\alpha^{-1}$.

For isometry, apply the left-invariance $\alpha\circ\nu^{s(\alpha)}=\nu^{t(\alpha)}$ of the Haar system to the function $\gamma\mapsto\abs{f(\alpha^{-1}\circ\gamma)}^2$:
\begin{equation}
    \norm{U^{\mathrm{can}}_\alpha f}^2 = \int_{\Gamma^{y}}\abs{f(\alpha^{-1}\circ\gamma)}^2\,d\nu^{y}(\gamma) =  \int_{\Gamma^{x}}\abs{f(\alpha^{-1}\circ\alpha\circ\beta)}^2\,d\nu^{x}(\beta) =    \int_{\Gamma^{x}}\abs{f(\beta)}^2\,d\nu^{x}(\beta) =
    \norm{f}^2 .
\end{equation}
The same computation applied to $\alpha^{-1}$ shows that $U^{\mathrm{can}}_{\alpha^{-1}}$ is a two-sided inverse, so $U^{\mathrm{can}}_\alpha$ is unitary. Functoriality is immediate: for a composable pair $(\alpha,\beta)$ and $\gamma\in\Gamma^{t(\alpha)}$,
\begin{equation}
    \bigl(U^{\mathrm{can}}_\alpha U^{\mathrm{can}}_\beta f\bigr)(\gamma) =    \bigl(U^{\mathrm{can}}_\beta f\bigr)(\alpha^{-1}\circ\gamma) =    f\bigl(\beta^{-1}\circ\alpha^{-1}\circ\gamma\bigr) = f\bigl((\alpha\circ\beta)^{-1}\circ\gamma\bigr) = \bigl(U^{\mathrm{can}}_{\alpha\circ\beta}f\bigr)(\gamma),
\end{equation}
and $U^{\mathrm{can}}_{1_x}=\id_{L^2(\Gamma^x,\nu^x)}$. For fibre-wise ultraweak continuity it suffices, by Thm.~\ref{thm:equivalences continuity}, to check weak continuity. Let $\xi,\eta$ be continuous sections of $\scrhir^{\,\mathrm{can}}$; by density we may take $\xi=f|_{\Gamma^{\bullet}}$ and $\eta=h|_{\Gamma^{\bullet}}$ with $f,h\in C_c(\Gamma)$. Then
\begin{equation}
    \braket{U^{\mathrm{can}}_\alpha\xi_{s(\alpha)}}{\eta_{t(\alpha)}} = \int_{\Gamma^{t(\alpha)}}    \overline{f\bigl(\alpha^{-1}\circ\gamma\bigr)}\,h(\gamma)\, d\nu^{t(\alpha)}(\gamma),
\end{equation}
which is continuous in $\alpha$ because the groupoid operations are continuous, $f$ and $h$ are continuous with compact support, and $\nu$ is a continuous Haar
system. This is the usual continuity statement for the regular representation \cite[Sec.~II.1]{renault_locally_1980}.

\emph{(iii) $\E_{\R_{\mathrm{can}}}$ is a measurable field of covariant normalized POVMs.} For each $x$ the map $\Delta\mapsto M_{\mathbf 1_\Delta}$ is a projection-valued measure on $\Bor(\Gamma^x)$: idempotency and self-adjointness are clear from $\mathbf 1_\Delta^2=\mathbf 1_\Delta=\overline{\mathbf 1_\Delta}$, normalization is $M_{\mathbf 1_{\Gamma^x}}=\id$, and countable additivity in the ultraweak topology is also easily verified. It is therefore a normalized POVM with values in $\Eff(\hir^{x,\mathrm{can}})$.

Measurability of the field: for $\Delta\in\Bor(\Gamma)$ one has $\E^x_{\R_{\mathrm{can}}}(\Delta\cap\Gamma^x)=M_{\mathbf 1_\Delta|_{\Gamma^x}}$, which is multiplication by the restriction to $\Gamma^x$ of one fixed Borel function on $\Gamma$; measurability of $x\mapsto M_{\mathbf 1_\Delta|_{\Gamma^x}}$ in the sense of measurable operator fields follows, for $\Delta$ open, from continuity of $x\mapsto\nu^x(\Delta\cap\Gamma^x)$ against $C_c(\Gamma)$-sections.

Covariance: let $\alpha:x\to y$, $\Delta\in\Bor(\Gamma^{x})$ and $f\in L^2(\Gamma^{y},\nu^{y})$. Using $\bigl(U^{\mathrm{can}}_\alpha\bigr)^\dagger=U^{\mathrm{can}}_{\alpha^{-1}}$,
\begin{equation}
    \Bigl(U^{\mathrm{can}}_\alpha\,M_{\mathbf 1_\Delta}\,    \bigl(U^{\mathrm{can}}_\alpha\bigr)^\dagger f\Bigr)(\gamma) = \mathbf{1}_\Delta\bigl(\alpha^{-1}\circ\gamma\bigr)\,    \bigl(U^{\mathrm{can}}_{\alpha^{-1}}f\bigr)\bigl(\alpha^{-1}\circ\gamma\bigr) = \mathbf{1}_\Delta\bigl(\alpha^{-1}\circ\gamma\bigr)\,f(\gamma).
\end{equation}
Since $\alpha^{-1}\circ\gamma\in\Delta$ if and only if $\gamma\in\alpha\circ\Delta=L_\alpha\Delta$, we get $\mathbf 1_\Delta(\alpha^{-1}\circ\gamma)=\mathbf 1_{L_\alpha\Delta}(\gamma)$ and therefore
\begin{equation}
    U^{\mathrm{can}}_\alpha\,\E^{s(\alpha)}_{\R_{\mathrm{can}}}(\Delta)\,    \bigl(U^{\mathrm{can}}_\alpha\bigr)^\dagger =
    \E^{t(\alpha)}_{\R_{\mathrm{can}}}(L_\alpha\Delta),
\end{equation}
which is the covariance condition of Def.~\ref{def:groupoid-qrf-section4}.

\emph{Additional properties.} The frame is based on $\Gamma$ itself, hence \emph{principal}. Each $\E^x_{\R_{\mathrm{can}}}$ is projection-valued, hence $\R_{\mathrm{can}}$ is \emph{sharp}, and being principal and sharp it is \emph{ideal}. For \emph{localizability}, note that a projection $P$ satisfies $\norm{P}\in\{0,1\}$, so $\E^x_{\R_{\mathrm{can}}}(\Delta)\neq0$ forces $\norm{\E^x_{\R_{\mathrm{can}}}(\Delta)}=1$: every sharp frame is localizable. For \emph{$\mu$-admissibility}, the source marginal is
\begin{equation}
    \label{eqn:canonical-source-marginal}
    \F^x_{\R_{\mathrm{can}},s}(N) =    \E^x_{\R_{\mathrm{can}}}\bigl(s^{-1}(N)\cap\Gamma^x\bigr) = M_{\mathbf 1_{s^{-1}(N)\cap\Gamma^x}}, \qquad N\in\Bor(M),
\end{equation}
and a multiplication operator $M_{\mathbf 1_B}$ on $L^2(\Gamma^x,\nu^x)$ vanishes precisely when $\nu^x(B)=0$. Hence $\F^x_{\R_{\mathrm{can}},s}\ll\mu$ if and only if $\mu(N)=0\Rightarrow\nu^x(s^{-1}(N)\cap\Gamma^x)=0$, which is exactly source-compatibility of Def.~\ref{def:nu source compatible mu}. The last two assertions are immediate from the definitions.
\end{proof}

    \subsection{Proof of Prop.~\ref{prop:canonical-localizing-sequence}}
    \label{app:proof canonical localizing sequence}

\begin{proof}
Fix a complete Riemannian metric $d$ on the manifold $\Gamma$, which exists since $\Gamma$ is second countable and Hausdorff, and set
\begin{equation}
    B^x_n:=\bigl\{\beta\in\Gamma^x \;\big|\; d(\beta,1_x)<\tfrac1n\bigr\},
    \qquad n\in\Nn .
\end{equation}
Each $B^x_n$ is open in $\Gamma^x$ and nonempty, since $1_x\in B^x_n$, so $0<\nu^x(B^x_n)<\infty$: positivity because $\supp\nu^x=\Gamma^x$, finiteness because $\nu^x$ is Radon and $\overline{B^x_n}$ is compact for $n$ large by completeness of $d$. Define
\begin{equation}
    \psi^x_n :=    \nu^x\bigl(B^x_n\bigr)^{-1/2}\,\mathbf 1_{B^x_n}    \in L^2\bigl(\Gamma^x,\nu^x\bigr), \qquad    \omega^x_n:=\ketbra{\psi^x_n}{\psi^x_n} .
\end{equation}
Then $\norm{\psi^x_n}=1$, so $\omega^x_n\in\framestatex$. The map $(x,\beta)\mapsto\mathbf 1_{B^x_n}(\beta)$ is Borel on $\Gamma$, because the unit map $x\mapsto 1_x$ is continuous and $d$ is continuous, and $x\mapsto\nu^x(B^x_n)$ is Borel and strictly positive by continuity of the Haar system; hence $x\mapsto\psi^x_n$ is a measurable section and $\omega_n=(\omega^x_n)\in\framestate$.

Finally, for $\Delta\in\Bor(\Gamma^x)$,
\begin{equation}
    \mu^{\E^x_{\R_{\mathrm{can}}}}_{\omega^x_n}(\Delta) = \Tr\bigl[\omega^x_n\,M_{\mathbf 1_\Delta}\bigr] =
    \braket{\psi^x_n}{M_{\mathbf 1_\Delta}\psi^x_n} =   \frac{\nu^x\bigl(\Delta\cap B^x_n\bigr)}{\nu^x\bigl(B^x_n\bigr)} ,
\end{equation}
so $\mu^{\E^x_{\R_{\mathrm{can}}}}_{\omega^x_n}$ is the normalized restriction of $\nu^x$ to $B^x_n$. Since the sets $B^x_n$ shrink to $\{1_x\}$, for every bounded continuous $F$ on $\Gamma^x$ we get
\begin{equation}
    \abs*{  \int_{\Gamma^x}F\,d\mu^{\E^x_{\R_{\mathrm{can}}}}_{\omega^x_n} - F(1_x)} \leq \sup_{\beta\in B^x_n}\abs{F(\beta)-F(1_x)}    \xrightarrow[n\to\infty]{} 0
\end{equation}
by continuity of $F$ at $1_x$. Hence $\mu^{\E^x_{\R_{\mathrm{can}}}}_{\omega^x_n}\rightarrow\delta_{1_x}$ weakly.
\end{proof}

    \subsection{Proof of Thm.~\ref{thm:torsor reduction}}
    \label{app:proof torsor reduction}

    \begin{proof}
        Let $\R = (U_{\R},\E_\R,\hi_{\R})$ be a torsor QRF on $\Sigma$, with $U_{\R} : G \to \U(\hi_{\R})$ and $\E_{\R} : \Bor(\Sigma) \to \Eff(\hi_{\R})$ satisfying
\begin{equation}
    U_{\R}(g) \E_{\R}(B) U_{\R}(g)^\dagger = \E_{\R}(g \cdot B) \qquad \forall g \in G, B \in \Bor(\Sigma).
\end{equation}
Define a groupoid QRF on $\Gamma = \At(\Sigma \to M)$ by taking the trivial Hilbert bundle $\scrhi_{\underline{\R}} = M \times \hir \to M$. For each $x \in M$, define the target-fiber POVM by
\begin{equation}
    \E_{\underline{\R}}^x(\Delta) := \E_{\R}(\Pi_x^o(\Delta)), \qquad \Delta \in \Bor(\Gamma^x).
\end{equation}
This is a normalized POVM because $\Pi^o_x$ is a Borel isomorphism, and yields a measurable field of POVMs since $(x,g) \mapsto (g,g^{-1}x)$ is Borel. Let $\alpha = [p_y,p_x] : x \to y$ and $\beta \in \Gamma^x$. Then
\begin{equation}
    \Pi^o_y(L_\alpha \beta) = \kappa(\alpha) \cdot \Pi^o_x(\beta).
\end{equation}
Thus,
\begin{equation}
    U_\alpha \E_{\underline{\R}}^x(\Delta) U_\alpha^\dagger = U_{\R}(\kappa(\alpha)) \E_{\R}(\Pi^o_x(\Delta)) U_{\R}(\kappa(\alpha))^\dagger = \E_{\R}(\kappa(\alpha) \cdot \Pi^o_x(\Delta)) = \E_{\R}(\Pi^o_y(L_\alpha \Delta)) = \E_{\underline{\R}}^y(L_\alpha \Delta).
\end{equation}
The map
\begin{equation}
  \kappa:\At(\Sigma\to M)\longrightarrow G,
  \qquad [p_t,p_s]\longmapsto g_{p_t}g_{p_s}^{-1},
\end{equation}
is continuous and multiplicative. Hence the ultraweak continuity of $U_\R$ implies the fiber-wise ultraweak continuity of $U_\alpha=U_\R(\kappa(\alpha))$. In action-groupoid coordinates $\Gamma\cong G\ltimes G/H$, we have that for every $\chi \in \thr$,
\begin{equation}
 \Tr\left[\chi\,\E_{\underline\R}^{x}(\Delta\cap\Gamma^x)\right]  =\int_G \mathbf 1_\Delta(g,g^{-1}x) \,d\left((\theta_o^{-1})_*\mu_\chi^{\E_\R}\right)(g),
\end{equation}
is Borel in $x$, i.e. $(\E_{\underline\R}^{x})_{x\in M}$ is a measurable field of POVMs.

For $x\in M$ define $r_x:\Sigma\to M$ by $r_x(g\cdot o)=g^{-1}x$. Then
\begin{equation}
  \F_{\underline\R,s}^{x}(N)=\E_\R(r_x^{-1}(N)).
\end{equation}
Let $N \in \Bor(M)$ be such that $\mu_M(N)=0$ and, for $x \in M$, write
\begin{equation}
    I(x) := \int_G \mathbf{1}_N(g^{-1}x) dm_G(g).
\end{equation}
Then \begin{multline}
    \int_M I(x) \, d\mu_M(x) = \int_M \left(\int_G \mathbf{1}_N(g^{-1}x) dm_G(g) \right) \, d\mu_M(x) = \int_G\left(\int_M \mathbf{1}_N(g^{-1}x) d\mu_M(x) \right) dm_G(g) \\ = \int_G \mu_M(gN) d\lambda(g) = 0
\end{multline}
where the second equality holds by Tonelli and the last equality by quasi-invariance ($\mu_M(N)=0 \Rightarrow \mu_M(gN)=0$ for all $g \in G$). Since $I \geq 0$, $I(x) = 0$ for $\mu_M$-almost every $x \in M$. Choose $x_0 \in M$ such that $I(x_0)=0$. For arbitrary $x = a \cdot  x_0 \in M$, 
\begin{equation}
    \{g \mid g^{-1} x \in N\} = a \{h \mid h^{-1}x_0 \in N\},
\end{equation}
and left invariance gives
\begin{equation}
    m_G(\{g \mid g^{-1}x \in N\}) = 0.
\end{equation}
Thus, with $r_x(g \cdot o) = g^{-1}x$, one has
\begin{equation}
    \nu_\Sigma(r_x^{-1}(N)) = ((\theta_o)_* m_G)(\{g \cdot o \mid g^{-1}x \in N\}) = m_G(\{g \mid g^{-1} x \in N\}) = 0.
\end{equation}
By covariance, Thm.~\ref{thm:covariance implies mu continuity} gives $\E_\R\ll\nu_\Sigma$. Thus,
\begin{equation}
    \F_{\underline\R,s}^{x}(N)=\E_\R(r_x^{-1}(N))=0,
\end{equation}
proving $\mu_M$-admissibility. Thus, the ordinary torsor QRF $\R$ can be used to build a principal groupoid QRF $\underline{\R}$ on the Atiyah groupoid over $M=\Sigma/H$.

For the system, start with an ordinary $G$-system $\S=(U_{\S},\hi_{\S})$. Use the trivial system Hilbert bundle $\scrhis = M \times \his \to M$ with groupoid representation $V_\alpha := U_\S(\kappa(\alpha))$. Then the groupoid relativization recovers the ordinary torsor relativization as a constant decomposable field over $M$: we have, for all $A \in \bhs$,
\begin{equation}
    \euro^{\underline{\R}}(\iota_\S(A))_x = \int_{\Gamma^x} U_\S(\kappa(\beta)) A U_\S(\kappa(\beta))^\dagger \otimes d\E_{\underline{\R}}^x(\beta) = \int_\Sigma U_\S(g_{\eta,o}) A U_{\S}(g_{\eta,o})^\dagger \otimes d\E_\R(\eta) = \yen^\R(A)
\end{equation}
where $g_{\eta,o}$ is the unique element in $G$ satisfying $g_{\eta,o} \cdot o = \eta$ for $o,\eta \in \Sigma$. This is exactly the ordinary torsor relativization. The result is independent of $x$, so the direct-integral operator field is constant over $M$. Likewise, given $\omega \in \homframestate$, we can build the constant field of fiber states $\underline{\omega} := \iota_\R(\omega) \in \framestate$ (now an operator field of states). Then,
\begin{equation}
    \Gamma_{\underline{\omega}} \circ \iota_{\S\R} = \iota_\S \circ \Gamma_\omega
\end{equation}
and so, as above,
\begin{equation}
    \euro^{\underline{\R}}_{\underline{\omega}}(\iota_\S(A))_x = (\Gamma_{\underline{\omega}} \circ \euro^{\underline{\R}}(\iota_\S(A)))_x = \Gamma_{\omega} \circ \euro^{\underline{\R}}(\iota_\S(A))_x = \Gamma_\omega \circ \yen^\R(A) = \yen^\R_\omega(A)
\end{equation}
which is again independent of $x \in M$.
    \end{proof}

    \subsection{Proof of Prop.~\ref{prop:covariance localization}}
    \label{app:proof covariance localization}

\begin{proof}

    Fix $b\in\mathscr{B}$ and $\omega\in\framestate$. Recall $\mathrm{Loc}_\pi(\R,\omega)=\bigcup_{x\in\M}\supp\mu^{\F^x_{\R,\pi}}_{\omega_x}$ for
$\omega\in\framestate$, where $\mu^{\F^x_{\R,\pi}}_{\omega_x}(N)=\Tr[\omega_x\,\F^x_{\R,\pi}(N)]$,
$N\in\Bor(\M)$. Note first that $\omega^b\in\framestate$: for $\mu_g$-a.e.\ $x$, writing
$y:=\varphi_b^{-1}(x)$, the arrow $b(y):y\to\varphi_b(y)=x$ gives
$U_{b(y)}:\hir^{y}\to\hirx$, so $\omega^b_x=U_{b(y)}\omega_y U_{b(y)}^\dagger\in
\framestatex$; measurability is preserved by unitary conjugation and by the reindexing $x\mapsto\varphi_b^{-1}(x)$, which maps $\mu_g$-conull sets to $\mu_g$-conull sets by the bisection-admissibility of $\R$ (Lem.~\ref{lem:general relativistic QRF is bisection admissible}). We claim that for $\mu_g$-almost every $x\in\M$, with $y=\varphi_b^{-1}(x)$,
\begin{equation}
\label{eqn:born pushforward}
\mu^{\F^{x}_{\R,\pi}}_{\omega^b_{x}}
=(\varphi_b)_*\,\mu^{\F^{y}_{\R,\pi}}_{\omega_{y}} .
\end{equation}
For $N\in\Bor(\M)$, cyclicity of the trace gives
\begin{equation}
\mu^{\F^{x}_{\R,\pi}}_{\omega^b_{x}}(N)
=\Tr\!\bigl[U_{b(y)}\,\omega_y\,U_{b(y)}^\dagger\,\F^{x}_{\R,\pi}(N)\bigr]
=\Tr\!\bigl[\omega_y\;U_{b(y)}^\dagger\,\F^{x}_{\R,\pi}(N)\,U_{b(y)}\bigr].
\end{equation}
Since $\varphi_b$ is a diffeomorphism (hence a homeomorphism), $\varphi_b^{-1}(N)\in\Bor(\M)$, and the $\mathscr{B}$-compatibility
Lem.~\ref{lem:projection marginal covariance} applied at $y$ to the set
$\varphi_b^{-1}(N)$ yields
\begin{equation}
\F^{x}_{\R,\pi}(N)=U_{b(y)}\,\F^{y}_{\R,\pi}\bigl(\varphi_b^{-1}(N)\bigr)\,U_{b(y)}^\dagger
=\F^{\varphi_b(y)}_{\R,\pi}\bigl(\varphi_b(\varphi_b^{-1}(N))\bigr).
\end{equation}
Substituting,
\begin{equation}
\mu^{\F^{x}_{\R,\pi}}_{\omega^b_{x}}(N)
=\Tr\!\bigl[\omega_y\,\F^{y}_{\R,\pi}\bigl(\varphi_b^{-1}(N)\bigr)\bigr]
=\mu^{\F^{y}_{\R,\pi}}_{\omega_{y}}\bigl(\varphi_b^{-1}(N)\bigr)
=\bigl((\varphi_b)_*\mu^{\F^{y}_{\R,\pi}}_{\omega_{y}}\bigr)(N),
\end{equation}
which is \eqref{eqn:born pushforward}. For a Borel probability measure $\nu$ on the second-countable space $\M$ and a
homeomorphism $\psi:\M\to\M$,
\begin{equation}
\label{eqn:support pushforward}
\supp(\psi_*\nu)=\psi(\supp\nu).
\end{equation}
Applying \eqref{eqn:support pushforward} to
\eqref{eqn:born pushforward} with $\psi=\varphi_b$,
\begin{equation}
\label{eqn:support transformation}
\supp\mu^{\F^{x}_{\R,\pi}}_{\omega^b_{x}}
=\varphi_b\Bigl(\supp\mu^{\F^{\varphi_b^{-1}(x)}_{\R,\pi}}_{\omega_{\varphi_b^{-1}(x)}}\Bigr),
\qquad x\in\M .
\end{equation}

Since $x\mapsto\varphi_b^{-1}(x)$ is a bijection of $\M$, reindexing by
$y=\varphi_b^{-1}(x)$ and using that images under any map commute with arbitrary
unions,
\begin{equation}
\mathrm{Loc}_\pi(\R,\omega^b)
=\bigcup_{x\in\M}\supp\mu^{\F^{x}_{\R,\pi}}_{\omega^b_{x}}
=\bigcup_{y\in\M}\varphi_b\Bigl(\supp\mu^{\F^{y}_{\R,\pi}}_{\omega_{y}}\Bigr)
=\varphi_b\Bigl(\bigcup_{y\in\M}\supp\mu^{\F^{y}_{\R,\pi}}_{\omega_{y}}\Bigr)
=\varphi_b\bigl(\mathrm{Loc}_\pi(\R,\omega)\bigr).\qedhere
\end{equation}

\end{proof}

    \subsection{Proof of Thm.~\ref{thm:covariance local rqf}}
    \label{app:proof covariance local rqf}

    \begin{proof}
        For $\Delta\in\Bor(\mathrm{Poin}(\M,g)^z)$, covariance of the frame observables gives
    \begin{equation}
    \label{eqn:covariance arrow Born measure}
    \mu_{\omega_z^\alpha}^{\E_\R^z}(\Delta)
    = \Tr\left[    U_\alpha\omega_xU_\alpha^\dagger    \E_\R^z(\Delta)\right] 
    = \Tr\left[\omega_x   \E_\R^x(L_{\alpha^{-1}}\Delta) \right] 
    = \mu_{\omega_x}^{\E_\R^x}(L_{\alpha^{-1}}\Delta).
    \end{equation}
    Moreover, for all $N \in \Bor(\M)$,
    \begin{multline}
    \label{eqn:marginal-pushforward-arrow}
        \mu_{\omega^\alpha_z}^{\F_{\R,\pi}^z}(N) = \mu^{\E_\R^z}_{\omega^\alpha_z}(\pi_z^{-1}(N)) \stackrel{\eqref{eqn:covariance arrow Born measure}}{=} \mu^{\E_\R^x}_{\omega_x}\left(L_{\alpha^{-1}}\left(\pi_z^{-1}(N)\right)\right) = \mu^{\E_\R^x}_{\omega_x}((\pi_z \circ L_\alpha)^{-1}(N)) \\= \mu^{\E_\R^x}_{\omega_x}((\tau_\alpha \circ \pi_x)^{-1}(N)) = \mu^{\E_\R^x}_{\omega_x}(\pi_x^{-1}(\tau_\alpha^{-1}(N)))
        =
        (\tau_\alpha)_*
        \mu_{\omega_x}^{\F_{\R,\pi}^x}(N).
    \end{multline}

Let $y\mapsto\nu_{\omega_x}^{\R,\pi}(\,\cdot\mid y)$ disintegrate $\mu_{\omega_x}^{\E_\R^x}$ with
respect to $\pi_x$. Define, for $y'\in M$,
\begin{equation}
    \widetilde\nu_z(\Delta\mid y')
    :=
    \nu_{\omega_x}^{\R,\pi}
    \left(
    L_{\alpha^{-1}}\Delta
    \,\middle|\,
    \tau_\alpha^{-1}(y')
    \right).
\end{equation}
Since $\pi_z \circ L_\alpha = \tau_\alpha \circ \pi_x$, this kernel is concentrated on
$\pi_z^{-1}(\{y'\})$. If $N\in\Bor(M)$, then
\begin{multline}
    \int_N
    \widetilde\nu_z(\Delta\mid y')    \,d((\tau_\alpha)_*\mu^{\F_{\R,\pi}^x}_{\omega_x})(y')=
    \int_{\tau_\alpha^{-1}(N)}
    \nu_{\omega_x}^{\R,\pi}
    (L_{\alpha^{-1}}\Delta\mid y)
    \,d\mu^{\F_{\R,\pi}^x}_{\omega_x}(y)=
    \mu_{\omega_x}^{\E_\R^x}
    \left(
    L_{\alpha^{-1}}\Delta
    \cap
    \pi_x^{-1}(\tau_\alpha^{-1}(N))
    \right) \\ =
    \mu_{\omega^\alpha_z}^{\E_\R^z}
    (\Delta\cap\pi_z^{-1}(N)).
\end{multline}
Hence $\widetilde\nu_z$ is a disintegration of the transformed Born measure. Essential
uniqueness thus proves Eqn.~\eqref{eqn:conditional-probability-covariance}. Finally, using functoriality of $V$ and the fact that
$s(\alpha\circ\beta)=s(\beta)$,
\begin{multline}
    \hat\phi_{\omega^\alpha,z}^{\R,\pi}
    (\tau_\alpha(y))= \int_{\pi_{z}^{-1}(\tau_\alpha(y))} V_\beta \phi_{s(\beta)} V_\beta^\dagger d\nu^{\R,\pi}_{\omega^\alpha_z}(\beta \mid \tau_\alpha(y)) =\int_{\pi^{-1}_x(y)}    V_{\alpha\circ\beta} \phi_{s(\beta)}    V_{\alpha\circ\beta}^\dagger    \,d\nu_{\omega_x}^{\R,\pi}(\beta\mid y)\\ =
    V_\alpha\hat\phi_{\omega_x,x}^{\R,\pi}(y)    V_\alpha^\dagger.
\end{multline}
    \end{proof}
    
    \subsection{Proof of Thm.~ \ref{thm:causality conditions}}
    \label{app:proof causality}

\begin{proof}
    
    \begin{enumerate}
        \item Let us show that if $\mathcal{O}_\S$ is absolutely $(\sigma,\pi)$-microcausal then it is strongly $(\operationalstate,\sigma,\pi)$-microcausal for any $\operationalstate \subset \framestate$. 
        For $\mu_g$-almost every $z\in \M$, let $\omega^{(1)},\omega^{(2)}\in\operationalstate$,and let $x \in \supp \mu^{\F_{\R,\pi}^z}_{\omega^{(1)}_z}$, $y \in \supp \mu^{\F_{\R,\pi}^z}_{\omega^{(2)}_z}$  satisfy $x\indepsigma y$. We have, for any $\phi,\psi \in \mathcal{O}_\S$,
        \begin{equation} \hat\phi^{\R,\pi}_{z,\omega^{(1)}}(x) = \int_{\pi_z^{-1}(\{x\})} V_\beta\phi_{s(\beta)}V_\beta^\dagger\, d\nu^{\R,\pi}_{\omega^{(1)}_z}(\beta\mid x), 
        \end{equation} and 
        \begin{equation} \hat\psi^{\R,\pi}_{z,\omega^{(2)}}(y) = \int_{\pi_z^{-1}(\{y\})} V_\gamma \psi_{s(\gamma)}V_\gamma^\dagger\, d\nu^{\R,\pi}_{\omega^{(2)}_z}(\gamma\mid y). 
        \end{equation} 
        Take arbitrary arrows $\beta:x\to z$ and $\gamma:y\to z$. Then $\pi_z(\beta) \indepsigma \pi_z(\gamma)$, so from absolute $(\sigma,\pi)$-microcausality we have 
        \begin{equation}  \comm{V_\beta\phi_{s(\beta)}V_\beta^\dagger}{ V_\gamma\psi_{s(\gamma)}V_\gamma^\dagger}  = 0. \end{equation}
        Thus, 
        \begin{equation} \comm{\hat\phi^{\R,s}_{z,\omega^{(1)}}(x)}{ \hat\psi^{\R,s}_{z,\omega^{(2)}}(y)} = \iint_{\pi_z^{-1}(\{x\}) \times \pi_z^{-1}(\{y\})} \comm{V_\beta\phi_{s(\beta)}V_\beta^\dagger}{ V_\gamma\psi_{s(\gamma)}V_\gamma^\dagger} d\nu^{\R,\pi}_{\omega^{(1)}_z}(\beta\mid x) d\nu^{\R,\pi}_{\omega^{(2)}_z}(\gamma\mid y) =0. 
        \end{equation} 
        Similarly, absolute $(\sigma,\pi)$-microcausality implies
        \begin{equation} \comm{V_\beta\phi_{s(\beta)}^\dagger V_\beta^\dagger}{ V_\gamma\psi_{s(\gamma)}V_\gamma^\dagger} =  0, 
        \end{equation} 
        and therefore 
        \begin{equation}
            \comm{\hat{\phi}^{\R,\pi}_{z,\omega^{(1)}}(x)^\dagger}{ \hat{\psi}^{\R,\pi}_{z,\omega^{(2)}}(y)} = \iint_{\pi_z^{-1}(\{x\}) \times \pi_z^{-1}(\{y\})} \comm{V_\beta\phi_{s(\beta)}^\dagger V_\beta^\dagger}{ V_\gamma\psi_{s(\gamma)}V_\gamma^\dagger} d\nu^{\R,\pi}_{\omega^{(1)}_z}(\beta\mid x) d\nu^{\R,\pi}_{\omega^{(2)}_z}(\gamma\mid y)  =0. 
            \end{equation} 
            Thus $\mathcal{O}_\S$ is strongly $(\operationalstate,\sigma,\pi)$-microcausal.
        
        \item If $\omega_z^{(1)} \indepERsigma \omega_z^{(2)}$ then for any $z \in \text{Loc}_\pi(\R,\omega_z^{(1)})$ and $w \in \text{Loc}_\pi(\R,\omega_z^{(2)})$, $z \indepsigma w$ so strong $(\operationalstate,\sigma)$-microcausality implies weak $(\operationalstate,\sigma,\pi)$-microcausality.
        \item Strong semi $(\operationalstate,\sigma)$-causality satisfies the same assumptions as weak semi $(\operationalstate,\sigma,\pi)$-causality apart from the fact that it is not conditioned on whether $\omega_z^{(1)} \indepERsigma \omega_z^{(2)}$, so strong semi $(\operationalstate,\sigma)$-causality implies weak semi $(\operationalstate,\sigma,\pi)$-causality.
        \item We follow here the proof of Thm. 5.10 of \cite{fedida_foundations_2025}. Suppose $\R$ is a general relativistic QRF such that $\mathcal{O}_\S$ is weakly $(\operationalstate,\sigma,\pi)$-microcausal, and $\omega_1 \indepERsigma \omega_2$. Then for all $x \in M$ and $\phi,\psi \in \mathcal{O}_\S$,
    \begin{equation}\begin{aligned}
        \hat{\Phi}_x^\R(\omega_1) \hat{\Psi}_x^\R(\omega_2) &= \left(\int_{\supp \mu^{\F^x_{\R,\pi}}_{\omega_1}} \hat{\phi}^\R_{x,\omega_1}(x_1) d\mu^{\F^x_{\R,\pi}}_{\omega^x_1}(x_1) \right) \left( \int_{\supp \mu^{\F^x_\R}_{\omega_2}} \hat{\psi}^\R_{x,\omega_2}(x_2) d\mu^{\F^x_\R}_{\omega^x_2}(x_2)\right) \\
        &= \iint_{\supp \mu^{\F^x_{\R,\pi}}_{\omega^x_1} \times \supp \mu^{\F^x_{\R,\pi}}_{\omega^x_2}} \hat{\phi}^\R_{x,\omega_1}(x_1) \hat{\psi}^\R_{x,\omega_2}(x_2)  \, d(\mu^{\F^x_{\R,\pi}}_{\omega^x_1} \times \mu^{\F^x_{\R,\pi}}_{\omega^x_2})(x_1,x_2) \\
        &\stackrel{\omega_1 \indepERsigma \omega_2}{=} \iint_{\supp \mu^{\F^x_{\R,\pi}}_{\omega^x_1} \times \supp \mu^{\F^x_{\R,\pi}}_{\omega^x_2}} \hat{\psi}^\R_{x,\omega_2}(x_2) \hat{\phi}^\R_{x,\omega_1}(x_1) \, d(\mu^{\F^x_{\R,\pi}}_{\omega^x_1} \times \mu^{\F^x_{\R,\pi}}_{\omega^x_2})(x_1,x_2) \\
        &= \iint_{\supp \mu^{\F^x_{\R,\pi}}_{\omega^x_2} \times\supp \mu^{\F^x_{\R,\pi}}_{\omega^x_1}} \hat{\psi}^\R_{x,\omega_2}(x_2) \hat{\phi}^\R_{x,\omega_1}(x_1) \, d(\mu^{\F^x_{\R,\pi}}_{\omega^x_2} \times \mu^{\F^x_{\R,\pi}}_{\omega^x_1})(x_2,x_1) \\
        &= \left(\int_{\supp \mu^{\F^x_{\R,\pi}}_{\omega^x_2}} \hat{\psi}^\R_{x,\omega_2}(x_2) d\mu^{\F^x_{\R,\pi}}_{\omega^x_2}(x_2) \right) \left( \int_{\supp \mu^{\F^x_{\R,\pi}}_{\omega^x_1}} \hat{\phi}^\R_{x,\omega_1}(x_1) d\mu^{\F^x_{\R,\pi}}_{\omega^x_1}(x_1)\right) \\
        &= \hat{\Psi}_x^\R(\omega_2) \hat{\Phi}_x^\R(\omega_1) \, .
    \end{aligned}\end{equation}
    The same argument holds for $\hat{\Phi}^\R_x(\omega_1)^\dagger \hat{\Psi}^\R_x(\omega_2)$, and these hold fiber-wise in the direct integral, which concludes the proof.
    \end{enumerate}

\end{proof}

    \subsection{Proof of Thm. \ref{thm:RAQFT}}
    \label{app:proof AQFT}

\begin{proof}

    \begin{enumerate}
        \item Let $\C^{(\operationalstate,\sigma,\pi)}_{\mathcal{O}_\S}(\U) := \{\hat{\Phi}^\R(\omega) \mid \phi \in \mathcal{O}_\S, \,  \omega \in \operationalstate \text{ s.t. } 
 \text{Loc}_\pi(\R,\omega) \subseteq \U\}$.
If $\U \subseteq \V$, we have $\text{Loc}_\pi(\R,\omega) \subseteq \U$ $\Rightarrow$ $\text{Loc}_\pi(\R,\omega) \subseteq \V$. Thus, $\C^{(\operationalstate,\sigma,\pi)}_{\mathcal{O}_\S}(\U) \subseteq \C^{(\operationalstate,\sigma,\pi)}_{\mathcal{O}_\S}(\V)$ and since $\A^{(\operationalstate,\sigma,\pi)}_{\mathcal{O}_\S}(\U) = \C^{(\operationalstate,\sigma,\pi)}_{\mathcal{O}_\S}(\U)''$ the result follows from the isotony of the double commutant.
\item By Prop.~\ref{prop:covariance localization}, we have that for all $\omega \in \framestate$ and $b \in \mathrm{Bis}(\mathrm{Poin}(\M,g))$,
\begin{equation}
    \text{Loc}_\pi(\R,\omega^{b^{-1}}) = \varphi_b^{-1}(\text{Loc}(\R,\omega)).
\end{equation}
and hence we get
\begin{align}
    \mathbf{V}_b\C^{(\operationalstate,\sigma,\pi)}_{\mathcal{O}_\S}(\U)\mathbf{V}_b^\dagger &= \left\{\mathbf{V}_b \hat{\Phi}^\R(\omega) \mathbf{V}_b^\dagger \mid \phi \in \mathcal{O}_\S, \omega \in \operationalstate \text{ s.t. } \text{Loc}_\pi(\R,\omega) \subseteq \U\right\}\\
    &\stackrel{\ref{thm:covariance rqf}}{=} \left\{ \hat{\Phi}^\R(\omega^b)  \mid \phi \in \mathcal{O}_\S, \omega \in \operationalstate \text{ s.t. } \text{Loc}_\pi(\R,\omega) \subseteq \U\right\} \\ 
    &= \left\{\hat{\Phi}^\R(\omega) \mid \phi \in \mathcal{O}_\S, \omega^{b^{-1}} \in \operationalstate \text{ s.t. } \text{Loc}_\pi(\R,\omega^{b^{-1}}) \subseteq \U\right\} \\
    &\stackrel{\ref{prop:covariance localization}}{=} \left\{\hat{\Phi}^\R(\omega) \mid \phi \in \mathcal{O}_\S, \omega \in b \cdot \operationalstate \text{ s.t. } \varphi_{b^{-1}}\left(\text{Loc}_\pi(\R,\omega)\right) \subseteq \U\right\} \\
    &=  \left\{\hat{\Phi}^\R(\omega) \mid \phi \in \mathcal{O}_\S, \omega \in b \cdot \operationalstate \text{ s.t. } \text{Loc}_\pi(\R,\omega) \subseteq \underbrace{\varphi^{-1}_{b^{-1}}}_{=\varphi_b}(\U)\right\} \\
    &= \C^{(b \cdot \operationalstate,\sigma,\pi)}_{\mathcal{O}_\S}(\varphi_b(\U)).
\end{align}
Since unitary conjugation commutes with commutants, we get
\begin{multline}
    \mathbf{V}_b\A^{(\operationalstate,\sigma,\pi)}_{\mathcal{O}_\S}(\U)\mathbf{V}_b^\dagger = \mathbf{V}_b\C^{(\operationalstate,\sigma,\pi)}_{\mathcal{O}_\S}(\U)''\mathbf{V}_b^\dagger = \left(\mathbf{V}_b\C^{(\operationalstate,\sigma,\pi)}_{\mathcal{O}_\S}(\U)\mathbf{V}_b^\dagger\right)'' \\ = \C^{(b \cdot \operationalstate,\sigma,\pi)}_{\mathcal{O}_\S}(\varphi_b(\U))'' = \A^{(b \cdot \operationalstate,\sigma,\pi)}_{\mathcal{O}_\S}(\varphi_b(\U)).
\end{multline}
Prop.~\ref{prop:Einstein causality stable under isometries} and $\mathrm{Bis}_{\mathrm{Isom}}(\mathrm{Poin}(\M,g))$-compatibility then ensure that the elements of $\A^{(b \cdot \operationalstate,\sigma,\pi)}_{\mathcal{O}_\S}(\varphi_b(\U))$ are $(b \cdot \operationalstate,\sigma,\pi)$-causal.

\item If $\U \indepsigma \V$ then for all $A \in \C^{(\operationalstate,\sigma,\pi)}_{\mathcal{O}_\S}(\U)$ and $B \in \C^{(\operationalstate,\sigma,\pi)}_{\mathcal{O}_\S}(\V)$, the commutant vanishes $\comm{A}{B} = 0$. In particular, we have $\C^{(\operationalstate,\sigma,\pi)}_{\mathcal{O}_\S}(\V) \subseteq \C^{(\operationalstate,\sigma,\pi)}_{\mathcal{O}_\S}(\U)'$. 
        Since for two subsets $\mathfrak{A}$ and $\mathfrak{B}$ of $\bh$, $\mathfrak{A} \subseteq \mathfrak{B} \Rightarrow \mathfrak{B}' \subseteq \mathfrak{A}'$, taking the commutant of both sides gives $\A^{(\operationalstate,\sigma,\pi)}_{\mathcal{O}_\S}(\U) \subseteq \C^{(\operationalstate,\sigma,\pi)}_{\mathcal{O}_\S}(\V)'$, and applying it again yields $\A^{(\operationalstate,\sigma,\pi)}_{\mathcal{O}_\S}(\V) \subseteq \A^{(\operationalstate,\sigma,\pi)}_{\mathcal{O}_\S}(\U)'$.
    \end{enumerate}

    \end{proof}

\subsection{Proof of Thm.~\ref{thm:covariance local rqf gauge}}
    \label{app:proof covariance local rqf gauge}

    \begin{proof}
        We repeat what is essentially the same proof as Thm.~\ref{thm:covariance local rqf}. For $\Delta\in\Bor(\At(Q)^z)$, covariance of the frame observables gives
    \begin{equation}
    \label{eqn:covariance arrow Born measure gauge}
    \mu_{\omega_z^\alpha}^{\E_{\R_G}^z}(\Delta)
    = \Tr\left[    U_\alpha\omega_xU_\alpha^\dagger    \E_{\R_G}^z(\Delta)\right] 
    = \Tr\left[\omega_x   \E_{\R_G}^x(L_{\alpha^{-1}}\Delta) \right] 
    = \mu_{\omega_x}^{\E_{\R_G}^x}(L_{\alpha^{-1}}\Delta).
    \end{equation}
    Moreover, for all $N \in \Bor(\M)$,
    \begin{multline}
    \label{eqn:marginal-pushforward-arrow gauge}
        \mu_{\omega^\alpha_z}^{\F_{\R_G,\pi}^z}(N) = \mu^{\E_{\R_G}^z}_{\omega^\alpha_z}(\pi_z^{-1}(N)) \stackrel{\eqref{eqn:covariance arrow Born measure gauge}}{=} \mu^{\E_{\R_G}^x}_{\omega_x}\left(L_{\alpha^{-1}}\left(\pi_z^{-1}(N)\right)\right) = \mu^{\E_{\R_G}^x}_{\omega_x}((\pi_z \circ L_\alpha)^{-1}(N)) \\= \mu^{\E_{\R_G}^x}_{\omega_x}((\tau_\alpha \circ \pi_x)^{-1}(N)) = \mu^{\E_{\R_G}^x}_{\omega_x}(\pi_x^{-1}(\tau_\alpha^{-1}(N)))
        =
        (\tau_\alpha)_*
        \mu_{\omega_x}^{\F_{\R_G,\pi}^x}(N).
    \end{multline}

Let $y\mapsto\nu_{\omega_x}^{\R_G,\pi}(\,\cdot\mid y)$ disintegrate $\mu_{\omega_x}^{\E_{\R_G}^x}$ with
respect to $\pi_x$. Define, for $y'\in M$,
\begin{equation}
    \widetilde\nu_z(\Delta\mid y')
    :=
    \nu_{\omega_x}^{\R_G,\pi}
    \left(
    L_{\alpha^{-1}}\Delta
    \,\middle|\,
    \tau_\alpha^{-1}(y')
    \right).
\end{equation}
Since $\pi_z \circ L_\alpha = \tau_\alpha \circ \pi_x$, this kernel is concentrated on
$\pi_z^{-1}(\{y'\})$. If $N\in\Bor(M)$, then
\begin{multline}
    \int_N
    \widetilde\nu_z(\Delta\mid y')    \,d((\tau_\alpha)_*\mu^{\F_{\R_G,\pi}^x}_{\omega_x})(y')=
    \int_{\tau_\alpha^{-1}(N)}
    \nu_{\omega_x}^{\R_G,\pi}
    (L_{\alpha^{-1}}\Delta\mid y)
    \,d\mu^{\F_{\R_G,\pi}^x}_{\omega_x}(y)=
    \mu_{\omega_x}^{\E_{\R_G}^x}
    \left(
    L_{\alpha^{-1}}\Delta
    \cap
    \pi_x^{-1}(\tau_\alpha^{-1}(N))
    \right) \\ =
    \mu_{\omega^\alpha_z}^{\E_{\R_G}^z}
    (\Delta\cap\pi_z^{-1}(N)).
\end{multline}
Hence $\widetilde\nu_z$ is a disintegration of the transformed Born measure. Essential
uniqueness thus proves Eqn.~\eqref{eqn:conditional-probability-covariance gauge}. Finally, using functoriality of $V$ and the fact that
$s(\alpha\circ\beta)=s(\beta)$,
\begin{multline}
    \hat\phi_{\omega^\alpha,z}^{\R_G,\pi}
    (\tau_\alpha(y))= \int_{\pi_{z}^{-1}(\tau_\alpha(y))} V_\beta \phi_{s(\beta)} V_\beta^\dagger d\nu^{\R_G,\pi}_{\omega^\alpha_z}(\beta \mid \tau_\alpha(y)) =\int_{\pi^{-1}_x(y)}    V_{\alpha\circ\beta} \phi_{s(\beta)}    V_{\alpha\circ\beta}^\dagger    \,d\nu_{\omega_x}^{\R_G,\pi}(\beta\mid y)\\ =
    V_\alpha\hat\phi_{\omega_x,x}^{\R_G,\pi}(y)    V_\alpha^\dagger.
\end{multline}
    \end{proof}
    
\end{appendix}

\end{document}